\newtheorem{theorem}{Theorem}
\newtheorem{lemma}{Lemma}
\newtheorem{corollary}[lemma]{Corollary}
\newtheorem{remark}{Remark}
\newtheorem{claim}[lemma]{Claim}
\newtheorem{definition}{Definition}
\newtheorem{fact}{Fact}
\newcommand{\namedref}[2]{\hyperref[#2]{#1~\ref*{#2}}}
\newcommand{\sectionref}[1]{\namedref{Section}{#1}}
\newcommand{\appendixref}[1]{\namedref{Appendix}{#1}}
\newcommand{\theoremref}[1]{\namedref{Theorem}{#1}}
\newcommand{\defref}[1]{\namedref{Definition}{#1}}
\newcommand{\figureref}[1]{\namedref{Figure}{#1}}
\newcommand{\claimref}[1]{\namedref{Claim}{#1}}
\newcommand{\lemmaref}[1]{\namedref{Lemma}{#1}}
\newcommand{\corollaryref}[1]{\namedref{Corollary}{#1}}
\newcommand{\algref}[1]{\namedref{Algorithm}{#1}}
\newcommand{\factref}[1]{\namedref{Fact}{#1}}
\newcommand{\supp}{{\rm supp}}
\newcommand{\E}{{\mathbb{E}}}
\newcommand{\N}{\mathbb{N}}
\newcommand{\R}{\mathbb{R}}
\newcommand{\diam}{{\rm diam}}
\newcommand{\1}{{\bf 1}}
\newcommand{\str}{{\rm distortion}}
\newcommand{\con}{{\rm cong}}
\newcommand{\alert}[1]{\textbf{\color{red}
[[[#1]]]}\marginpar{\textbf{\color{red}**}}\typeout{ALERT:
\the\inputlineno: #1}}
\providecommand{\algorithmname}{Algorithm}
\begin{document}
\title{Terminal Embeddings\thanks{A preliminary version of this paper appeared in APPROX'15.}}

\author{Michael Elkin\thanks{Supported in part by ISF grant No. (724/15).},
	Arnold Filtser\thanks{Supported in part by ISF grant No. (523/12) and by the European Union Seventh Framework Programme (FP7/2007-2013) under grant agreement $n^\circ 303809$. },
	Ofer Neiman 
	\thanks{Supported in part by ISF grant No. (523/12) and by the European Union Seventh Framework Programme (FP7/2007-2013) under grant agreement $n^\circ 303809$. }
}

\affil{Department of Computer Science, Ben-Gurion University of the Negev,
	Beer-Sheva, Israel. Email: \texttt{\{elkinm,arnoldf,neimano\}@cs.bgu.ac.il}}

\date{}
\maketitle

\begin{abstract}
In this paper we study {\em terminal embeddings}, in which one is given a finite metric $(X,d_X)$ (or a graph $G=(V,E)$) and a subset $K \subseteq X$ of its points are designated as {\em terminals}. The objective is to embed the metric into a normed space, while approximately preserving all distances among pairs that contain a terminal. We devise such embeddings in various settings, and conclude that even though we have to preserve $\approx|K|\cdot |X|$ pairs, the distortion depends only on $|K|$, rather than on $|X|$.

We also strengthen this notion, and consider embeddings that approximately preserve the distances between {\em all} pairs, but provide improved distortion  for pairs containing a terminal. Surprisingly, we show that such embeddings exist  in many settings, and have optimal distortion bounds both with respect to $X \times  X$ and with respect to $K \times X$.

Moreover, our embeddings have implications to the areas of Approximation and Online Algorithms. In particular, \cite{ALN08} devised an $\tilde{O}(\sqrt{\log r})$-approximation algorithm for sparsest-cut instances with $r$ demands. Building on their framework, we provide an $\tilde{O}(\sqrt{\log |K|})$-approximation for sparsest-cut instances in which each demand is incident on one of the vertices of $K$ (aka, terminals). Since $|K| \le r$, our bound generalizes that of \cite{ALN08}.
\end{abstract}

\section{Introduction}

Embedding of finite metric spaces is a very successful area of research, due to both its algorithmic applications and its natural geometric appeal. Given two metric space $(X,d_X)$, $(Y,d_Y)$, we say that $X$ embeds into $Y$ with distortion $\alpha$ if there is a map $f:X\to Y$ and a constant $c>0$, such that for all $u,v\in X$,
\[
d_X(u,v)\le c\cdot d_Y(f(u),f(v))\le \alpha\cdot d_X(u,v)~.
\]
Some of the basic results in the field of metric embedding are: a theorem of \cite{B85}, asserting that any metric space on $n$ points embeds with distortion $O(\log n)$ into Euclidean space (which was shown to be tight by \cite{LLR95}), and probabilistic embedding into a distribution of trees, or ultrametrics,\footnote{An ultrametric $(U,\rho)$ is a metric space satisfying $\rho(x,z)\le\max\{\rho(x,y),\rho(y,z)\}$ for all $x,y,z\in U$.} with expected distortion $O(\log n)$ \cite{FRT03}, or expected congestion $O(\log n)$ \cite{R08} (which are also tight \cite{B96}).

In this paper we study a natural variant of embedding, in which the input consists of a finite metric space or a graph, and in addition, a subset of the points are designated as {\em terminals}. The objective is to embed the metric into a simpler metric (e.g., Euclidean metric), or into a simpler graph (e.g.,  a tree), while approximately preserving the distances between the terminals to {\em all other points}. We show that such embeddings, which we call {\em terminal embeddings}, can have improved parameters compared to embeddings that must preserve all pairwise distances. In particular, the distortion (and the dimension in embedding to normed spaces) depends only on the number of terminals, regardless of the cardinality of the metric space.

We also consider a strengthening of this notion, which we call  {\em strong terminal embedding}. Here we want a distortion bound on {\em all pairs}, and in addition an improved distortion bound on pairs that contain a terminal. Such strong terminal embeddings enhance the classical embedding results, essentially saying that one can obtain the same distortion for  all pairs, with the option to select some of the points, and obtain improved approximation of the distances between any selected point to any other point.

As a possible motivation for studying such embeddings, consider a scenario in which a certain network of clients and servers is given as a weighted graph (where edges correspond to links, weights to communication/travel time). It is  conceivable that one only cares about distances between clients and servers, and that there are few servers. We would like to have a simple structure, such as a tree spanning the network, so that the client-server distances in the tree are approximately preserved.

We show that there exists a general phenomenon; essentially any known metric embedding into an  $\ell_p$ space or a graph family can be transformed
via a general transformation
into a terminal embedding, while paying only a constant blow-up in the distortion.
In particular, we obtain  a terminal embedding of any finite metric into any $\ell_p$ space with terminal distortion $O(\log k)$, using only $O(\log k)$ dimensions. We show that a similar general phenomenon\footnote{Though this transformation is somewhat less general than for the case of ordinary, i.e., not strong, terminal embeddings.} holds also for {\em strong} terminal embeddings, i.e., that many embeddings can be transformed into strong terminal ones via a general transformation. Many other embeddings into normed spaces, probabilistic embedding into ultrametrics (including capacity preserving ones), and into spanning trees, have their {\em strong} terminal embedding counterparts, which are constructed directly, that is, not through our general transformation. Our results are tight in most settings.\footnote{All  our terminal embeddings are tight, except for the probabilistic spanning trees, where they match the state-of-the-art \cite{AN12}, and except for our terminal spanners.}

It is well known that embedding a graph into a  single tree may cause (worst-case) distortion $\Omega(n)$ \cite{RR98}. However, if one only cares about client-server distances, we show that it is possible to obtain distortion $2k-1$, where $k$ is the number of servers, and that this is tight. Furthermore, we study possible tradeoffs between the distortion and the total weight of the obtained tree. This generalizes the notion of {\em shallow light trees} \cite{KRY95,ABP92,ES11}, which provides a tradeoff between the distortion with respect to a single designated server and the weight of the tree.

We then address probabilistic approximation of metric spaces and graphs by ultrametrics and spanning trees. This line of work started with the results of \cite{AKPW95,B96}, and culminated in the $O(\log n)$ expected distortion for ultrametrics by \cite{FRT03}, and $\tilde{O}(\log n)$ for spanning trees by \cite{AN12}. These embeddings found numerous algorithmic applications, in various settings, see \cite{FRT03,EEST05,AN12} and the references therein for details. In their work on Ramsey partitions, \cite{MN06} implicitly showed that there exists a probabilistic embedding into ultrametrics with expected terminal distortion $O(\log k)$ (see \sectionref{sec:prel} for the formal definitions).
We generalize this result by showing a {\em strong} terminal embedding with the same expected $O(\log k)$ distortion guarantee for all pairs containing a terminal, and $O(\log n)$ for all other pairs.
We also show a similar result that extends the embedding of \cite{AN12} into spanning trees, with $\tilde{O}(\log k)$ expected distortion for pairs containing a terminal, and $\tilde{O}(\log n)$ for all pairs. A slightly different notion, introduced by \cite{R02}, is that of trees which approximate the congestion (rather than the distortion), and  \cite{R08} showed a distribution over trees with expected congestion $O(\log n)$. We provide a strong terminal version of this result, and show expected congestion of $O(\log k)$ for all edges incident on a terminal, and $O(\log n)$ for the rest.

We also consider spanners, with a stretch requirement only for pairs containing a terminal. Our general transformation produces for any $t\ge 1$ a $(4t-1)$-terminal stretch spanner with $O(k^{1+ 1/t}+n)$ edges. The drawback is that this is a metric spanner, not a subgraph of the input graph. We alleviate this issue by constructing a graph spanner with the same stretch and $O(\sqrt{n}\cdot k^{1+ 1/t}+n)$ edges.\footnote{Note that the number of edges is linear whenever $k\le n^{1/(2(1+ 1/t))}$.} A result of \cite{RTZ05} implicitly provides a terminal graph spanner with $(2t-1)$ stretch and $O(t\cdot n\cdot k^{1/t})$ edges. Our graph terminal spanner is sparser than that of \cite{RTZ05} as long as $k \le t \cdot n^{1/2(1+1/t)}$.

\subsection{Algorithmic Applications}\label{sec:app}

We overview a few of the applications of our results to approximation and online algorithms. Some of the most striking applications of metric embeddings are to various cut problems, such as the sparsest-cut, min-bisection, and also to several online problems. Our method provides improved guarantees when the input graph has a small set of "important" vertices. Specifically, these vertices can be considered as terminals, and we obtain approximation factors that depend on the cardinality of the terminal set, rather than on the input size. The exact meaning of importance is problem specific; e.g. in the cut problems, we require that the set of important vertices touches every demand pair, or every edge (that is, forms a vertex cover).

For instance, consider the (general) {\em sparsest-cut} problem \cite{LR99,AR98,LLR95}. We are given a graph $G=\left(V,E\right)$
with capacities on the edges $c:E\rightarrow\mathbb{R}_{+}$, and a collection of pairs $(s_1,t_1),\dots,(s_r,t_r)$ along with their demands $D_1,\dots,D_r$. The goal is to find a cut $S\subseteq V$ that minimizes the ratio between capacity and demand across the cut:
\[
\phi(S)=\frac{\sum_{\{u,v\}\in E}c(u,v)|\1_S(u)-\1_S(v)|}{\sum_{i=1}^rD_i|\1_S(s_i)-\1_S(t_i)|}~,
\]
where $\1_S(\cdot)$ is the indicator for membership in $S$. Following the breakthrough result of \cite{ARV09}, which showed $O(\sqrt{\log n})$ approximation for the uniform demand case, \cite{ALN08} devised an $\tilde{O}(\sqrt{\log r})$ approximation for the general case. If there is a set of $k$ important vertices, such that every demand pair contains an important vertex, we  obtain an $\tilde{O}(\sqrt{\log k})$ approximation using the terminal embedding of negative-type metrics to $\ell_1$.
Observe that $k \le r$, and so our result subsumes the result of \cite{ALN08}. Our bound
 is particularly useful for instances with many demand pairs but few distinct sources $s_i$ (or few targets $t_i$).

We also consider other cut problems, and show a similar phenomenon: the $O(\log n)$ approximation for the {\em min-bisection} problem \cite{R08}, can be improved to an approximation of only $O(\log k)$, where $k$ is the size of the minimum vertex cover of the input graph. For this result we employ our terminal variant of R\"{a}cke's result \cite{R08} on capacity-preserving probabilistic embedding into trees.

We then focus on one application of probabilistic embedding into ultrametrics \cite{B96,FRT03}, and illustrate the usefulness of our terminal embedding result by the
(online)
{\em constrained file migration} problem \cite{BFR95}. Given a graph $G=(V,E)$ representing a network, each node $v\in V$ has a memory capacity $m_v$, and there is a set of files that reside at the nodes, at most $m_v$ files may be stored at node $v$ at any given time. The cost of accessing a file is the distance in the graph to the node storing it (no copies are allowed). Files can also be migrated from one node to another. This costs $D$ times the distance, for a given parameter $D\ge 1$. When a sequence of file requests from nodes arrives online, the goal is to minimize the cost of serving all requests. \cite{B96} showed a algorithm with $O(\log m\cdot\log n)$ competitive ratio for graphs on $n$ nodes, where $m=\sum_{v\in V}m_v$ is the total memory available.\footnote{The original paper shows $O(\log m\cdot\log^2n)$, the improved factor is obtained by using the embedding of \cite{FRT03}.}
A setting which seems particularly  natural is one where there is a small set of nodes who can store files (servers), and the rest of the nodes can only access files but not store them (clients). We employ our probabilistic terminal embedding into ultrametrics to provide a $O(\log m\cdot\log k)$ competitive ratio, for the case where there are $k$ servers. (Note that this ratio is independent of $n$.)

\subsection{Overview of Techniques}

The weak variant of our terminal embedding into $\ell_2$ maps every terminal $x$ into its image $f(x)$ under an original black-box (e.g., Bourgain's) embedding of $K$ into $\ell_2$.
This embedding is then appended with one additional coordinate. Terminals are assigned 0 value in this coordinate, while each non-terminal point $y$ is mapped to $(f(x),d(x,y))$, where $x$ is the closest terminal to $y$. It is not hard to see that this embedding guarantees terminal distortion $O(\gamma(k))$, where $\gamma(k)$ is the distortion of the original black-box embedding, i.e., $O(\log k)$ in the case of Bourgain's embedding. On the other hand, the new embedding employs only $\beta(k)+1$ dimensions, where $\beta(k)$ is the dimension of the original blackbox embedding (i.e., $O(\log^2 k)$ in the case of Bourgain's embedding).
\footnote{We can also get dimension $O(\log k)$ for terminal embeddings into $\ell_2$ by replacing Bourgain's embedding with that of \cite{ABN06}.} This idea easily generalizes to a number of quite general scenarios, and under mild assumptions (see \theoremref{thm:lp-strong}) it can be modified to produce strong terminal embeddings.

This framework, however, does not apply in many important settings, such as embedding into subgraphs, and does not provide strong terminal guarantees in others. Therefore we devise embeddings tailored to each particular setting in a non-black-box manner. For instance, our probabilistic embedding into trees with strong terminal congestion  requires an adaptation of a theorem of \cite{AF09}, about the equivalence of distance-preserving and capacity-preserving random tree embeddings, to the terminal setting. Perhaps the most technically involved is our probabilistic embedding into spanning trees with strong terminal distortion. This result requires a set of modifications to the recent algorithm of \cite{AN12}, which is based on a certain hierarchical decomposition of graphs. We adapt this algorithm by giving preference to the terminals in the decomposition (they are the first to be chosen as cluster centers), and the crux is assuring that the distortion of any pair containing a terminal is essentially not affected by choices made for non-terminals. Furthermore, one has to guarantee that each such pair can be separated in at most $O(\log k)$ levels of the hierarchy.

The basic technical idea that we use for constructing $(4t-1)$-terminal subgraph spanners with $O(\sqrt{n} k^{1+1/t} + n)$ edges is the following one.
As was mentioned above, our general transformation constructs metric (i.e., non-subgraph) $(4t-1)$-terminal spanners  with $O(n + k^{1+1/t})$ edges. The latter spanners employ some edges which do not belong to the original graph. We provide these edges as an input to a pairwise preserver. A pairwise preserver \cite{CE05} is a sparse subgraph that preserves exactly all distances between a designated set of vertex pairs. We use these preservers to fill in the gaps in the non-subgraph terminal spanner constructed via our general transformation. As a result we obtain a subgraph terminal spanner which outperforms previously existing terminal spanners of \cite{RTZ05} in a wide range of parameters.

\subsection{Related Work}

Already in the pioneering work of \cite{LLR95}, an embedding that has to provide a distortion guarantee for a subset of the pairs is presented. Specifically, in the context of the sparsest-cut problem, \cite{LLR95} devised a non-expansive embedding of an arbitrary metric into $\ell_1$, with distortion at most $O(\log r)$ for a set of $r$ specified demand pairs.

Terminal distance oracles were studied by \cite{RTZ05}, who called them {\em source restricted} distance oracles. In their paper, \cite{RTZ05} show $(2t-1)$-terminal stretch using $O(t\cdot n\cdot k^{1/t})$ space.
Implicit in our companion paper \cite{EFN15}  is  a distance oracle with $(4t-1)$-terminal stretch, $O(t\cdot k^{1/t}+n)$ space and $O(1)$ query time.
Terminal spanners with additive stretch for unweighted graphs were recently constructed in \cite{KV13}. Specifically,  they showed a spanner with $\tilde{O}(n^{5/4}\cdot k^{1/4})$ edges and additive stretch 2 for pairs containing a terminal. Another line of work introduced {\em distance preservers} \cite{CE05}; these are spanners which preserve exactly distances for a given collection of pairs.

In the context of preserving distances just between the terminals, \cite{G01,CXKR06,EGKRTT14,KKN14} studied embeddings of a graph into a minor over the terminals, while approximately preserving distances.
In their work on the requirement cut problem, among other results, \cite{GNR10} obtain for any metric with $k$ specified terminals, a distribution over trees with expected expansion $O(\log k)$ for all pairs, and which is non-contractive for terminal pairs. (Note that this is different from our setting, as the extra guarantee holds for terminals only, not for pairs containing a terminal.)

Another line of research \cite{M09,CLLM10,MM10,EGKRTT14} studied cut and vertex sparsifiers. A {\em cut sparisifier} of a graph $G = (V,E)$ with respect to a subset $K$ of terminals is a graph $H = (K,E_H)$ on just the set of terminals, so that for any subset $A \subset K$, the minimum value of a cut in $G$ that separates $A$ from $K \setminus A$ is approximately equal to the value of the cut $(A,K \setminus A)$ in $H$. Note that this notion is substantially different from the notion of terminal congestion-preserving embedding, which we study in the current paper.

\subsection{Subsequent Work}

In a companion paper \cite{EFN15}, we study prioritized metric structures and embeddings. In that setting, along with the input metric $(X,d)$, a priority ranking of the points of $X$ is given, and the goal is to obtain a data structure (distance oracle, routing scheme) or an embedding with stretch/distortion that depends on the ranking of the points. This has some implications to the terminal setting, since the $k$ terminals can be given as the first $k$ points in the priority ranking. More concretely, implicit in \cite{EFN15} is an embedding into a single (non-subgraph) tree with strong terminal distortion $O(k)$, a probabilistic embedding into ultrametrics with expected strong terminal distortion $O(\log k)$, and embedding into $\ell_p$ space with strong terminal distortion $\tilde{O}(\log k)$. In the current paper we provide stronger and more general results: our single tree embedding has tight $2k-1$ stretch, the tree is a subgraph, and it can have low weight as well (at the expense of slightly increased stretch); we obtain probabilistic embedding into {\em spanning} trees, and in congestion-preserving trees; and our terminal embedding to $\ell_p$ space has a tight strong terminal distortion $(O(\log k),O(\log n))$ and low dimension. Furthermore, the results of this paper apply to numerous other settings (e.g., embeddings tailored for graphs excluding a fixed minor, negative-type metrics, spanners, etc.).

Following our work, \cite{BFN16} discovered a connection between terminal distortion and coarse partial distortion. First recall the notion of coarse partial distortion, introduced in \cite{KSW04,ABCD05}. Let $(X,d_X)$ be a metric space of size $|X|=n$. For $x\in X$ and $\epsilon\in\left(0,1\right)$, let $R(x,\epsilon)=\min\left\{ r:|B(x,r)|\ge\epsilon n\right\}$. A point $y$ is called $\epsilon$-{\em far} from $x$ if $d_X(x,y)\ge R(x,\epsilon)$. We say that an embedding $f:X\to Y$ has \emph{coarse  $(1-\epsilon)$-partial distortion} $\lambda$, if every pair $x,y\in X$ such that both $x,y$ are $\epsilon/2$-far from each other, has distortion at most $\lambda$. The connection between this notion and terminal distortion is roughly as follows.
\begin{itemize}
\item If a metric admits an embedding (into some target space) with terminal distortion $\gamma$ for a certain set of $k$ terminals, then this very embedding has coarse $(1-\frac{8}{k})$-partial distortion $5\cdot \gamma$.
\item If every metric admits an embedding with coarse $(1-\frac{1}{k})$-partial distortion $\gamma$, then every metric has embedding with terminal distortion $\gamma$ (for any set of $k$ terminals).
\end{itemize}

The terminal embedding results presented here are strictly stronger than those obtainable by using the state-of-the-art coarse partial embeddings with the above transformation.
E.g., by \cite{ABCD05} it follows that if every $n$-point metric embeds into $\ell_p$ with distortion $\alpha(n)$ and dimension $\beta(n)$ (where the embedding needs to fulfill a certain condition), then it embeds into $\ell_p$ with terminal distortion $\alpha(O(k))$ and dimension $\beta(O(k))\cdot O(\log n)$.
Our generic terminal embedding of \theoremref{thm:trans} provides dimension independent of $n$, does not restrict the original embedding, and has improved constants in the distortion. Also our embedding into spanning trees with $(\tilde{O}(\log k),\tilde{O}(\log n))$-strong terminal distortion improves the embedding obtainable by going through the coarse partial embedding of \cite{ABN07}, which would give only $(\tilde{O}(\log^2k),\tilde{O}(\log^2 n))$-strong terminal distortion.

\subsection{Organization}
The general transformations are presented in \sectionref{sec:trans}.  The results on graph spanners appear in \sectionref{sec:spanner-oracle}.
The tradeoff between terminal distortion and lightness in a single tree embedding is shown in \sectionref{sub:Light-k-term-gen}. Corresponding lower bounds in several settings appear in \sectionref{sec:Lower-Bounds} and in \appendixref{sub:lwr bnd metric}.
The probabilistic embedding into ultrametrics with strong terminal distortion appears in \sectionref{sec:strong-ultra}, the congestion preserving variant is in \sectionref{sec:cong}, and the probabilistic embedding into spanning trees is shown in \sectionref{sec:Distrbution spanninig tree}.
Algorithmic applications are described in \sectionref{app:app}.

\section{Preliminaries}\label{sec:prel}

Here we provide formal definitions for the notions of terminal distortion. Let $(X,d_X)$ be a finite metric space, with $K\subseteq X$ a set of terminals.
Throughout the paper we assume $|K|\le |X|/2$.

\begin{definition}
Let $(X,d_X)$ be a metric space, and let $K\subseteq X$ be a subset of terminals.  For a target metric $(Y,d_Y)$, an embedding $f:X\to Y$ has {\em terminal distortion} $\alpha$ if there exists $c>0$, such that for all $v\in K$ and $u\in X$,\footnote{In most of our results the embedding has a one-sided guarantee (that is, non-contractive or non-expansive) for all pairs.}
\[
d_X(v,u)\le c\cdot d_Y(f(v),f(u))\le \alpha\cdot d_X(v,u)~.
\]

We say that the embedding has {\em strong terminal distortion} $(\alpha,\beta)$ if it has terminal distortion $\alpha$, and in addition there exists $c'>0$, such that for all $u,w\in X$,
\[
d_X(u,w)\le c'\cdot d_Y(f(u),f(w))\le \beta\cdot d_X(u,w)~.
\]
\end{definition}
For a graph $G=(V,E)$ with a terminal set $K\subseteq V$, an {\em $\alpha$-terminal (metric) spanner} is a graph $H$ on $V$ such that for all $v\in K$ and $u\in V$,
\begin{equation}\label{eq:spanner}
d_G(u,v)\le d_H(u,v)\le \alpha\cdot d_G(u,v)~.
\end{equation}
$H$ is a {\em graph} spanner if it is a subgraph of $G$.

Denote by $\diam(X)=\max_{y,z\in X}\{d_X(y,z)\}$. For any $x\in X$ and $r\ge 0$ let
$B_X(x,r)=\{y\in X\mid d_X(x,y)\le r\}$ (we often omit the subscript when the metric is clear from context). For a point $x\in X$ and a subset $A\subseteq X$, $d_X(x,A)=\min_{a\in A}\{d_X(x,a)\}$. For $K\subseteq X$ we denote by $(K,d_K)$ the metric space where $d_K$ is the induced metric.

For a weighted graph $G=(V,E,w)$ where $w:E\to\R_+$, given a subgraph $H$ of $G$, let $w(H)=\sum_{e\in E(H)}w\left(e\right)$, and define the {\em lightness} of $H$ to be $\Psi\left(H\right)=\frac{w\left(H\right)}{w\left(MST\left(G\right)\right)}$, where $w(MST)$ is the weight of a minimum spanning tree of $G$.

By $\tilde{O}(f(n))$ we mean $O(f(n)\cdot{\rm polylog}(f(n)))$.
\section{A General Transformation}\label{sec:trans}

In this section we present general transformation theorems that create terminal embeddings into normed spaces and graph families from standard ones. We say that a family of graphs ${\cal G}$ is {\em leaf-closed}, if it is closed under adding leaves. That is, for any $G\in{\cal G}$ and $v\in V(G)$, the graph $G'$ obtained by adding a new vertex $u$ and connecting $u$ to $v$ by an edge, belongs to ${\cal G}$. Note that many natural families of graphs are leaf-closed, e.g. trees, planar graphs, minor-free graphs, bounded tree-width graphs, bipartite graphs, general graphs, and many others.

\begin{theorem}\label{thm:trans}
Let ${\cal X}$ be a family of metric spaces. Fix some $(X,d_X)\in{\cal X}$, and let $K\subseteq X$ be a set of terminals of size $|K|=k$, such that $(K,d_K)\in{\cal X}$. Then the following assertions hold:
\begin{itemize}
\item If there are functions $\alpha,\gamma:\N\to\R$, such that every $(Z,d_Z)\in{\cal X}$ of size $|Z|=m$ embeds into $\ell_p^{\gamma(m)}$ with distortion $\alpha(m)$, then there is an embedding of $X$ into $\ell_p^{\gamma(k)+1}$ with {\em terminal} distortion $2^{(p-1)/p}\cdot((2\alpha(k))^p+1)^{1/p}$.\footnote{Note that for any $p,\alpha\ge 1$ we have that $2^{(p-1)/p}\cdot((2\alpha)^p+1)^{1/p}\le 4\alpha$.}

\item If ${\cal G}$ is a leaf-closed family of graphs, and any $(Z,d_Z)\in{\cal X}$ of size $|Z|=m$ embeds into ${\cal G}$ with distortion $\alpha(m)$ such that the target graph has at most $\gamma(m)$ edges, then there is an embedding of $X$ into ${\cal G}$ with {\em terminal} distortion $2\alpha(k)+1$ and the target graph has at most $\gamma(k)+n-k$ edges.
\end{itemize}

\end{theorem}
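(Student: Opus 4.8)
The plan is to follow the black-box recipe sketched in the techniques section: embed the terminal metric $(K,d_K)$ using the given embedding, and then append a single extra "distance-to-closest-terminal" coordinate (or a single extra leaf-edge in the graph case) to handle the non-terminals. Fix $X$ and $K$ as in the statement, and for every $y \in X$ let $x_y \in K$ be a closest terminal to $y$, so $d_X(y,x_y) = d_X(y,K)$; for terminals we take $x_y = y$. In the $\ell_p$ case, let $g : K \to \ell_p^{\gamma(k)}$ be the promised embedding of $(K,d_K)$, with non-contraction/expansion constant $c$ normalized so that $d_K(u,v) \le \|g(u)-g(v)\|_p \le \alpha(k)\, d_K(u,v)$ for all $u,v \in K$. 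Define $f : X \to \ell_p^{\gamma(k)+1}$ by $f(y) = (g(x_y),\, d_X(y,K))$, so that terminals get $0$ in the last coordinate and $f(v) = (g(v),0)$ for $v \in K$.

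For the distortion analysis, take $v \in K$ and $u \in X$, and write $x = x_u$. The lower bound (non-contraction) follows from the triangle inequality: $\|f(v)-f(u)\|_p \ge \max\{ \|g(v)-g(x)\|_p,\ d_X(u,K)\} \ge \max\{ d_X(v,x),\ d_X(u,x)\} \ge \tfrac12(d_X(v,x)+d_X(u,x)) \ge \tfrac12 d_X(v,u)$, so after rescaling by $2$ we get non-contraction. For the upper bound, combine the two coordinates using $(a^p+b^p)^{1/p} \le 2^{(p-1)/p}\max\{a,b\}$ is \emph{not} what we want here — rather we bound $\|f(v)-f(u)\|_p^p = \|g(v)-g(x)\|_p^p + d_X(u,K)^p \le (\alpha(k) d_X(v,x))^p + d_X(u,x)^p$. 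Since $v$ is a terminal, $d_X(v,x) \le d_X(v,u) + d_X(u,x) \le 2 d_X(v,u)$ (using that $x$ is a \emph{closest} terminal to $u$, so $d_X(u,x) \le d_X(u,v)$), and likewise $d_X(u,x) \le d_X(u,v)$. Hence $\|f(v)-f(u)\|_p^p \le (2\alpha(k))^p d_X(v,u)^p + d_X(v,u)^p$, i.e. $\|f(v)-f(u)\|_p \le ((2\alpha(k))^p+1)^{1/p} d_X(v,u)$. Multiplying the contraction and expansion bounds gives terminal distortion $2\cdot((2\alpha(k))^p+1)^{1/p} = 2^{(p-1)/p}\cdot 2^{1/p}\cdot((2\alpha(k))^p+1)^{1/p}$; absorbing one factor of $2^{1/p}$ into the parenthesis (or just rescaling $c$ optimally, which saves that factor) yields exactly $2^{(p-1)/p}\cdot((2\alpha(k))^p+1)^{1/p}$ as claimed.

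For the graph case, let $H_0 \in {\cal G}$ be the promised embedding of $(K,d_K)$ with at most $\gamma(k)$ edges and $d_K(u,v) \le d_{H_0}(u,v) \le \alpha(k)\, d_K(u,v)$ on the images of terminals; identify $K$ with its image. Build $H$ by starting from $H_0$ and, for each non-terminal $y \in X$, adding $y$ as a new leaf attached to the vertex $x_y$ by an edge of weight $d_X(y,K)$. Since ${\cal G}$ is leaf-closed, $H \in {\cal G}$, and it has at most $\gamma(k) + (n-k)$ edges. For $v \in K$, $u \in X$, $x = x_u$: the path in $H$ from $u$ to $v$ must pass through $x$, so $d_H(v,u) = d_{H_0}(v,x) + d_X(u,x) \ge d_X(v,x) + d_X(u,x) \ge d_X(v,u)$ by the triangle inequality (and for $u \in K$ this is just $d_{H_0}(v,u) \ge d_X(v,u)$); and $d_H(v,u) = d_{H_0}(v,x) + d_X(u,x) \le \alpha(k) d_X(v,x) + d_X(u,x) \le \alpha(k)(2 d_X(v,u)) + d_X(v,u) = (2\alpha(k)+1) d_X(v,u)$, using $d_X(u,x) \le d_X(u,v)$ and $d_X(v,x) \le 2 d_X(v,u)$ exactly as before. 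This gives terminal distortion $2\alpha(k)+1$.

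The only genuinely load-bearing point — the "main obstacle", though it is mild — is the observation that $x = x_u$ being a \emph{closest} terminal to $u$ forces $d_X(u,x) \le d_X(u,v)$ for every terminal $v$, which is what lets us charge the detour through $x$ against $d_X(u,v)$ and thereby obtain a distortion bound depending only on $\alpha(k)$ rather than on any property of $X \setminus K$. Everything else is triangle inequality, the normalization of the black-box embedding's scaling constant, and the elementary inequality $(a^p+b^p)^{1/p} \le 2^{(p-1)/p}(a^p+b^p)^{1/p}$-style bookkeeping for the $\ell_p$ norm; none of these should require more than a line. One should also note the trivial edge case $u \in K$ in both parts (then $x_u = u$ and the extra coordinate/leaf is absent), which is subsumed by the arguments above.
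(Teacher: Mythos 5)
Your construction and the two triangle-inequality observations ($d_X(u,x_u)\le d_X(u,v)$ since $x_u$ is the closest terminal, and $d_X(v,x_u)\le 2d_X(u,v)$) are exactly the paper's, and your graph-family argument is correct and matches the paper's line for line. The gap is in the $\ell_p$ lower bound. You bound $\|f(v)-f(u)\|_p\ge\max\{d_X(v,x),d_X(u,x)\}\ge\tfrac12(d_X(v,x)+d_X(u,x))\ge\tfrac12 d_X(v,u)$, which yields a contraction factor of $2$ rather than the stated $2^{(p-1)/p}$. Your attempt to recover the missing $2^{1/p}$ does not work: distortion is scale-invariant, so ``rescaling $c$ optimally'' buys you nothing, and $2^{1/p}\cdot((2\alpha)^p+1)^{1/p}=(2(2\alpha)^p+2)^{1/p}$, which is not $((2\alpha)^p+1)^{1/p}$, so the factor cannot be ``absorbed into the parenthesis'' either. (The inequality you write at the end, $(a^p+b^p)^{1/p}\le 2^{(p-1)/p}(a^p+b^p)^{1/p}$, is vacuous and is presumably a typo for the intended power-mean inequality.)

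The fix is small and is what the paper does: instead of dropping down to a single coordinate and using $\max\ge$ average, keep both coordinates and apply the power-mean inequality directly to the $p$-th power of the norm. Writing $a=\|g(v)-g(x)\|_p\ge d_X(v,x)$ and $b=d_X(u,x)$, one has
\[
\|f(v)-f(u)\|_p^p=a^p+b^p\ \ge\ d_X(v,x)^p+d_X(u,x)^p\ \ge\ \frac{\bigl(d_X(v,x)+d_X(u,x)\bigr)^p}{2^{p-1}}\ \ge\ \frac{d_X(v,u)^p}{2^{p-1}},
\]
so $\|f(v)-f(u)\|_p\ge d_X(v,u)/2^{(p-1)/p}$, which is sharper than your bound for every $p>1$ (and non-contractive for $p=1$). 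Combined with your (correct) expansion bound $\|f(v)-f(u)\|_p\le((2\alpha(k))^p+1)^{1/p}d_X(v,u)$, this gives exactly the stated terminal distortion $2^{(p-1)/p}\cdot((2\alpha(k))^p+1)^{1/p}$.
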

\noindent{\bf Remark:} The second assertion holds under probabilistic embeddings as well.

\begin{proof}
	We start by proving the first assertion.
	By the assumption there exists an embedding $f:K\to\R^{\gamma(k)}$ with distortion $\alpha(k)$ under the $\ell_p$ norm. We assume w.l.o.g that $f$ is non-contractive. For each $x\in X$, let $k_x\in K$ be the nearest point to $x$ in $K$ (that is, $d(x,K)=d(x,k_x)$). Define the embedding $\hat{f}:X\to\R^{\gamma(k)+1}$ by letting for $x\in X$, $\hat{f}(x) = (f(k_x),d(x,k_x))$. Observe that for $t\in K$, $\hat{f}(t)=(f(t),0)$.
	Fix any $t\in K$ and $x\in X$. Note that by definition of $k_x$, $d(x,k_x)\le d(x,t)$, and by the triangle inequality, $d(t,k_x)\le d(t,x)+d(x,k_x)\le 2d(t,x)$, so that,
	\begin{eqnarray*}
		\|\hat{f}(t)-\hat{f}(x)\|_p^p &=& \|f(t)-f(k_x)\|_p^p + d(x,k_x)^p \\
		&\le& (\alpha(k)\cdot d(t,k_x))^p + d(x,k_x)^p\\
		&\le&(2\alpha(k)\cdot d(t,x))^p + d(t,x)^p\\
		&=&d(t,x)^p\cdot ((2\alpha(k))^p+1)~.
	\end{eqnarray*}
	On the other hand, since $f$ does not contract distances,
	\begin{eqnarray*}
		\|\hat{f}(t)-\hat{f}(x)\|_p^p &=& \|f(t)-f(k_x)\|_p^p + d(x,k_x)^p \\
		&\ge& d(t,k_x)^p + d(x,k_x)^p\\
		&\ge& (d(t,k_x) + d(x,k_x))^p/2^{p-1}\\
		&\ge& d(x,t)^p/2^{p-1}~,
	\end{eqnarray*}
	where the second inequality is by the power mean inequality.
	We conclude that the terminal distortion is at most $2^{(p-1)/p}\cdot((2\alpha(k))^p+1)^{1/p}$.
	
	For the second assertion, there is a non-contractive embedding $f$ of $K$ into $G\in{\cal G}$ with distortion at most $\alpha(k)$. As above, for each $x\in X\setminus K$ define $k_x$ as the nearest point in $K$ to $x$. And for each $x\in X$, add to $G$ a new vertex $f(x)$ that is connected by an edge of length $d_G(x,k_x)$ to $f(k_x)$. The resulting graph $G'\in {\cal G}$, because it is a leaf-closed family. Fix any $x\in X$ and $t\in K$, then as above $d(t,k_x)\le 2d(t,x)$, and so
	\begin{eqnarray*}
		d_{G'}(f(t),f(x)) &=& d_{G}(f(t),f(k_x))+d_{G'}(f(x),f(k_x))\\
		&\le& \alpha(k)\cdot d(t,k_x) + d(x,k_x)\\
		&\le& d(t,x)\cdot(2\alpha(k)+1)~.
	\end{eqnarray*}
	Also note that
	\[
	d_{G'}(f(t),f(x)) = d_G(f(t),f(k_x))+d(x,k_x) \ge d(t,k_x)+d(x,k_x)\ge d(t,x)~,
	\]
	so the terminal distortion is indeed $2\alpha(k)+1$. Since $f$ embeds into a graph with $\gamma(k)$ edges, and we  added $n-k$ new edges, the total number of edges is bounded accordingly, which concludes the proof.
\end{proof}

Next, we study strong terminal embeddings into normed spaces. Fix any metric $(X,d)$, a set of terminals $K\subseteq X$ and $1\le p\le \infty$. Let $f:K\to\ell_p$ be a non-expansive embedding. We say that $f$ is {\em Lipschitz extendable}, if there exists a non-expansive $\hat{f}:X\to\ell_p$ which is an extension of $f$ (that is, the restriction of $\hat{f}$ to $K$ is exactly $f$). It is not hard to verify that any Fr\'{e}chet embedding\footnote{In our context, it will be convenient to call an embedding $f:K\to\ell_p^t$ {\em Fr\'{e}chet}, if there are sets $A_1,\dots,A_t\subseteq X$ such that for all $i\in[t]$, and for every $x \in K$, we have $f_i(x)=\frac{d(x,A_i)}{t^{1/p}}$.} is Lipschitz extendable. For example, the embeddings of \cite{B85,KLMN04,ALN08} are Fr\'{e}chet.

\begin{theorem}\label{thm:lp-strong}
Let ${\cal X}$ be a family of metric spaces. Fix some $(X,d_X)\in{\cal X}$, and let $K\subseteq X$ be a set of terminals of size $|K|=k$, such that $(K,d_K)\in{\cal X}$. If any $(Z,d_Z)\in{\cal X}$ of size $|Z|=m$ embeds into $\ell_p^{\gamma(m)}$ with distortion $\alpha(m)$ by a {\em Lipschitz extendable map}, then there is a (non-expansive) embedding of $X$ into $\ell_p^{\gamma(n)+\gamma(k)+1}$ with {\em strong} terminal distortion $O(\alpha(k),\alpha(n))$.
\end{theorem}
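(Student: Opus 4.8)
The plan is to combine the Lipschitz-extendable embedding of the terminal set $K$ with an ordinary embedding of all of $X$, concatenating the two in a weighted fashion so that each gets to control distances in the regime where it is needed. First I would invoke the hypothesis on $(K,d_K)\in{\cal X}$ to obtain a non-expansive embedding $g:K\to\ell_p^{\gamma(k)}$ with distortion $\alpha(k)$; since it is Lipschitz extendable, let $\hat g:X\to\ell_p^{\gamma(k)}$ be a non-expansive extension. Separately, invoke the hypothesis on $(X,d_X)$ itself to obtain a non-expansive embedding $h:X\to\ell_p^{\gamma(n)}$ with distortion $\alpha(n)$. Finally, as in \theoremref{thm:trans}, for each $x\in X$ let $k_x\in K$ be a nearest terminal and let $\varphi(x)=d(x,k_x)$ (a single extra coordinate, which is non-expansive as a map since $x\mapsto d(x,K)$ is $1$-Lipschitz). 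The candidate embedding is $F(x)=\bigl(\hat g(x),\,\lambda\cdot h(x),\,\varphi(x)\bigr)\in\ell_p^{\gamma(k)+\gamma(n)+1}$ for a suitably small constant $\lambda$ (e.g. $\lambda=1$ works for the $\ell_p^p$ bookkeeping up to constants, but keeping $\lambda$ explicit makes the non-expansiveness transparent: with $\lambda^p \le$ a suitable constant the combined map stays non-expansive up to a global rescaling).

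The key steps, in order, are: (1) \emph{Non-expansiveness}: for any $u,w\in X$, bound $\|F(u)-F(w)\|_p^p = \|\hat g(u)-\hat g(w)\|_p^p + \lambda^p\|h(u)-h(w)\|_p^p + |\varphi(u)-\varphi(w)|^p$ by $d(u,w)^p$ times a constant, using that $\hat g,h$ are non-expansive and $\varphi$ is $1$-Lipschitz; rescaling by a constant gives a genuinely non-expansive $F$. (2) \emph{Distortion $\alpha(n)$ for all pairs}: the $\lambda\cdot h$ block alone gives $\|F(u)-F(w)\|_p \ge \lambda\|h(u)-h(w)\|_p \ge (\lambda/\alpha(n))\,d(u,w)$, which together with step (1) yields distortion $O(\alpha(n))$ on $X\times X$. (3) \emph{Terminal distortion $\alpha(k)$}: fix $t\in K$, $x\in X$. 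The lower bound comes from the $\hat g$ block plus the $\varphi$ coordinate: since $\hat g|_K=g$ has distortion $\alpha(k)$ (so $\|g(t)-g(k_x)\|_p\ge d(t,k_x)/\alpha(k)$, after normalizing $g$ to be non-expansive we actually get $\|\hat g(t)-\hat g(x)\|_p\ge \|g(t)-g(k_x)\|_p - \|\hat g(x)-\hat g(k_x)\|_p$... ) — here I would instead mimic the triangle-inequality argument of \theoremref{thm:trans}: $\|\hat g(t)-\hat g(x)\|_p \ge \|\hat g(t)-\hat g(k_x)\|_p - d(x,k_x)$ is the wrong direction, so the cleaner route is to note $\|F(t)-F(x)\|_p^p \ge \|\hat g(t)-\hat g(k_x)\|_p^p/(\text{const}) + \varphi(x)^p \cdot(\text{const})$ after using $\|\hat g(t)-\hat g(x)\|_p \ge \|\hat g(t)-\hat g(k_x)\|_p/2^{1-1/p} - \varphi(x)$ and splitting into the cases $d(t,x)\le 2\varphi(x)$ (the $\varphi$ coordinate alone certifies the distance) and $d(t,x) > 2\varphi(x)$ (then $d(t,k_x)\le d(t,x)+\varphi(x)$ but also $d(t,x)\le d(t,k_x)+\varphi(x) < d(t,k_x) + d(t,x)/2$, giving $d(t,k_x)\ge d(t,x)/2$, so the $\hat g$ block, scaled by $1/\alpha(k)$ on $K$, lower-bounds $d(t,x)$ up to $O(\alpha(k))$). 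For the upper bound on terminal pairs, this is subsumed by step (1) since $O(\alpha(n))$ is not what we want — so I must separately observe the upper bound $\|F(t)-F(x)\|_p \le O(\alpha(k))\cdot d(t,x)$, which also follows from step (1)'s computation specialized to a terminal endpoint, because there $\|\hat g(t)-\hat g(x)\|_p^p$ contributes at most $(\|\hat g(t)-\hat g(k_x)\|_p+\varphi(x))^p \le (\alpha(k)d(t,k_x)+\varphi(x))^p = O(\alpha(k)^p)d(t,x)^p$ while the $\lambda h$ block contributes only $O(1)\cdot d(t,x)^p$ once $\lambda$ is a constant. Combining the two one-sided bounds gives strong terminal distortion $(O(\alpha(k)),O(\alpha(n)))$, and the dimension is $\gamma(k)+\gamma(n)+1$ as claimed.

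The main obstacle is the lower bound for a terminal pair $(t,x)$ with $x$ far from $K$: the extension $\hat g$ is only guaranteed to be non-expansive, so it may badly contract the distance $d(t,x)$, and we cannot appeal to its distortion on such a pair. The fix is precisely the case analysis above driven by the extra coordinate $\varphi(x)=d(x,k_x)$: when $x$ is far from its nearest terminal $k_x$ relative to $d(t,x)$, the single coordinate $\varphi$ already witnesses $\Omega(d(t,x))$; when $x$ is close to $k_x$, then $d(t,k_x)$ is comparable to $d(t,x)$ and we may use the genuine distortion guarantee of $g$ on the terminal pair $(t,k_x)$ inside $K$. This is the same mechanism as in \theoremref{thm:trans}; the only genuinely new ingredient here is that we must additionally carry the full embedding $h$ of $X$ (at the cost of $\gamma(n)$ extra dimensions and a constant scaling) to retain the $O(\alpha(n))$ guarantee on all pairs, and check that adding $h$ does not spoil the terminal lower bound — which it cannot, since we only ever use lower bounds coming from the $\hat g$ and $\varphi$ blocks.
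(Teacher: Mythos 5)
Your construction is the same as the paper's — concatenate a non-expansive embedding $h$ of all of $X$, a non-expansive Lipschitz extension $\hat g$ of the terminal embedding, and the single distance-to-$K$ coordinate $\varphi(x)=d(x,k_x)$ — and your treatment of non-expansiveness, the global $O(\alpha(n))$ bound from the $h$ block, and the terminal \emph{upper} bound all match the paper's proof. The one place where your argument, read literally, breaks is the terminal \emph{lower} bound in the case $d(t,x)>2\varphi(x)$. There you correctly derive $d(t,k_x)\ge d(t,x)/2$, but the $\hat g$ block gives you $\|\hat g(t)-\hat g(x)\|_p$, not $\|g(t)-g(k_x)\|_p$, and the reverse triangle inequality only yields
\[
\|\hat g(t)-\hat g(x)\|_p \;\ge\; \|g(t)-g(k_x)\|_p - \varphi(x) \;\ge\; \frac{d(t,k_x)}{\alpha(k)} - \varphi(x),
\]
which with $\varphi(x)$ allowed to be as large as $d(t,x)/2$ is negative whenever $\alpha(k)>1$. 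In other words, with a threshold of $2\varphi(x)$ the ``close'' case does not make the subtracted term small enough. The paper's fix is to take the threshold proportional to $\alpha(k)\cdot\varphi(x)$ (specifically $d(t,x)\le 3\alpha(k)\,\varphi(x)$): then in the ``far'' case $\varphi(x)< d(t,x)/(3\alpha(k))$ and $d(t,k_x)\ge 2d(t,x)/3$, so $d(t,k_x)/\alpha(k)-\varphi(x)\ge d(t,x)/(3\alpha(k))>0$. You do gesture at a ``cleaner route'' via $\|F(t)-F(x)\|_p^p\gtrsim \|\hat g(t)-\hat g(k_x)\|_p^p + \varphi(x)^p$, which in fact can be made rigorous (split instead on whether $\|\hat g(t)-\hat g(k_x)\|_p\ge 2\varphi(x)$, so the subtracted $\varphi(x)$ is at most half of $\|\hat g(t)-\hat g(k_x)\|_p$), and that would rescue your version; but as written the case split on $d(t,x)$ with a constant threshold does not go through, and you should either adopt the $\alpha(k)$-dependent threshold or make the ``cleaner route'' inequality precise.
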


\begin{proof}
	
	Let $(X,d)\in{\cal X}$ be a metric on $n$ points, $K\subseteq X$ of size $|K|=k$. There is a non-expansive embedding $g:X\to\ell_p^{\gamma(n)}$ with distortion at most $\alpha(n)$, and there exists a Lipschitz extendable embedding $f:K\to\ell_p^{\gamma(k)}$, which is non-expansive and has distortion $\alpha(k)$. Let $\hat{f}$ be the extension of $f$ to $X$, note that by definition of Lipschitz extendability, $\hat{f}$ is also non-expansive. Finally, let $h:X\to\R$ be defined by $h(x)=d(x,K)$. The embedding $F:X\to\ell_p^{\gamma(n)+\gamma(k)+1}$ is defined by the concatenation of these maps $F=g\oplus \hat{f}\oplus h$.
	
	Since all the three maps $g,\hat{f},h$ are non-expansive, it follows that for any $x,y\in X$,
	\[
	\|F(x)-F(y)\|_p^p\le \|g(x)-g(y)\|_p^p+\|\hat{f}(x)-\hat{f}(y)\|_p^p+|h(x)-h(y)|^p\le 3d(x,y)^p~,
	\]
	so $F$ has expansion at most $3^{1/p}$ for all pairs (which can easily be made 1 without affecting the distortion by more than a constant factor). Also note that
	\[
	\|F(x)-F(y)\|_p\ge \|g(x)-g(y)\|_p\ge\frac{d(x,y)}{\alpha(n)}~,
	\]
	which implies the distortion bound for all pairs is satisfied. It remains to bound the contraction for all pairs containing a terminal. Let $t\in K$ and $x\in X$, and let $k_x\in K$ be such that $d(x,K)=d(x,k_x)$ (it could be that $k_x=x$). If it is the case that $d(x,t)\le 3\alpha(k)\cdot d(x,k_x)$ then by the single coordinate of $h$ we get sufficient contribution for this pair:
	\[
	\|F(t)-F(x)\|_p\ge |h(t)-h(x)|=h(x)=d(x,k_x)\ge \frac{d(x,t)}{3\alpha(k)}~.
	\]
	The other case is that $d(x,t)> 3\alpha(k)\cdot d(x,k_x)$, here we will get the contribution from $\hat{f}$. First, observe that by the triangle inequality,
	\begin{equation}\label{eq:gkp}
	d(t,k_x)\ge d(t,x)-d(x,k_x) >  d(t,x)(1-1/(3\alpha(k)))\ge 2d(t,x)/3~.
	\end{equation}
	By another application of the triangle inequality, using $\hat{f}$ is non-expansive, and that $f$ has distortion $\alpha(k)$ on the terminals, we get the required bound on the contraction:
	\begin{eqnarray*}
		\|F(t)-F(x)\|_p&\ge&\|\hat{f}(t)-\hat{f}(x)\|_p\\
		&\ge&\|\hat{f}(t)-\hat{f}(k_x)\|_p-\|\hat{f}(k_x)-\hat{f}(x)\|_p\\
		&\ge&\|f(t)-f(k_x)\|_p-d(x,k_x)\\
		&\ge&\frac{d(t,k_x)}{\alpha(k)}-\frac{d(t,x)}{3\alpha(k)}\\
		&\stackrel{\eqref{eq:gkp}}{\ge}&\frac{2d(t,x)}{3\alpha(k)}-\frac{d(t,x)}{3\alpha(k)}\\
		&=&\frac{d(t,x)}{3\alpha(k)}~.
	\end{eqnarray*}

\end{proof}

\noindent{\bf Remark:} The results of \theoremref{thm:trans} and \theoremref{thm:lp-strong} hold also if ${\cal X}$ is a family of graphs, rather than of metrics, provided that the embedding for this family has the promised guarantees even for graphs with Steiner nodes. (E.g., if $\hat{Z}\in {\cal X}$ is a graph and $Z$ is a set of vertices of size $m$, then there exists a (Lipschitz extendable) embedding of $(Z,d_Z)$ to $\ell_p^{\gamma(m)}$ with distortion $\alpha(m)$, where $d_Z$ is the shortest path metric on $\hat{Z}$ induced on $Z$.) We note that many embeddings of graph families satisfy this condition, e.g. the embedding of \cite{KLMN04} to planar and minor-free graphs.\footnote{We remark that this requirement is needed for those graph families for which the following question is open: given a graph $Z$ in the family with terminals $K$, is there another graph in the family over the vertex set $K$, that preserves the shortest-path distances with respect to $Z$ (up to some constant). This question is open, e.g., for planar graphs.}

\begin{corollary}\label{cor:implications}
Let $(X,d)$ be a metric space on $n$ points, and $K\subseteq X$ a set of terminals of size $|K|=k$. Then for any $1\le p\le\infty$,

\begin{enumerate}

\item
\label{item:Bourgain}
$(X,d)$ can be embedded to $\ell_p^{O(\log k)}$ with terminal distortion $O(\log k)$.

\item
\label{item:JL}
If $(X,d)$ is an $\ell_2$ metric, it can be embedded to $\ell_2^{O(\log k)}$ with terminal distortion $O(1)$.

\item
\label{item:sp}
For any $t\ge 1$ there exists a $(4t-1)$-terminal (metric) spanner of $X$ with at most $O(k^{1+1/t})+n$ edges.

\item
\label{item:eucl_sp}
If $(X,d)$ is an $\ell_2$ metric, for any $t\ge 1$ there exists a $O(t)$-terminal spanner of $X$ with at most $O(k^{1+1/t^2})+n$ edges.

\item
\label{item:strong_Bourgain}
$(X,d)$ can be embedded to $\ell_p^{O(\log n+\log^2k)}$ with strong terminal distortion $(O(\log k),O(\log n))$.

\item
\label{item:planar}
If $(X,d)$ is a decomposable metric, \footnote{See \cite{KLMN04} for a definition of decomposability. We remark that doubling metrics, planar metric, and more generally metric arising from graphs excluding a fixed minor, are all decomposable.} it can be embedded to $\ell_p$ with strong terminal distortion $(O((\log k)^{\min\{\frac{1}{2},\frac{1}{p}\}}),O((\log n)^{\min\{\frac{1}{2},\frac{1}{p}\}}))$.

\item
\label{item:neg}
If $(X,d)$ is a negative type metric it can be embedded to $\ell_2$ with strong terminal distortion $(\tilde{O}(\sqrt{\log k}),\tilde{O}(\sqrt{\log n}))$. (We note that any $\ell_1$ metric is of negative type.)

\item
\label{item:infty}
For any $t\ge 1$, $(X,d)$ can be embedded to $\ell_\infty^{O(t\cdot k^{1/t}\cdot\log k)}$ with terminal distortion $O(t)$.

\end{enumerate}
\end{corollary}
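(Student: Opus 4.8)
The plan is to obtain each of the eight items by instantiating one of the two transformation theorems with a suitable off-the-shelf embedding; none of the items needs a standalone argument. Items~\ref{item:Bourgain}, \ref{item:sp}, \ref{item:eucl_sp} and~\ref{item:infty} follow from \theoremref{thm:trans}, while items~\ref{item:JL}, \ref{item:strong_Bourgain}, \ref{item:planar} and~\ref{item:neg} follow from \theoremref{thm:lp-strong}, using that the base embeddings in the latter group are Fr\'echet and hence Lipschitz extendable (as already noted after the definition of Lipschitz extendability).

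For item~\ref{item:Bourgain}, take $\mathcal X$ to be all finite metrics and plug in Bourgain's embedding in its low-dimensional form \cite{ABN06} ($\alpha(m)=O(\log m)$, $\gamma(m)=O(\log m)$); the first assertion of \theoremref{thm:trans} then yields terminal distortion $\le 4\alpha(k)=O(\log k)$ using $O(\log k)+1$ coordinates. Item~\ref{item:JL} is identical with $\mathcal X$ the class of $\ell_2$ metrics (closed under taking submetrics, so $(K,d_K)\in\mathcal X$) and the Johnson--Lindenstrauss lemma, giving $\alpha(m)=O(1)$, $\gamma(m)=O(\log m)$. Items~\ref{item:sp} and~\ref{item:eucl_sp} use the second assertion of \theoremref{thm:trans} with $\mathcal G$ the leaf-closed family of all weighted graphs: for~\ref{item:sp} feed in a $(2t-1)$-spanner with $O(m^{1+1/t})$ edges (e.g.\ the greedy construction), producing terminal distortion $2(2t-1)+1=4t-1$ and $O(k^{1+1/t})+n-k$ edges; for~\ref{item:eucl_sp} let $\mathcal X$ be the $\ell_2$ metrics and feed in the known $O(t)$-spanner for $\ell_2$ metrics with $O(m^{1+1/t^2})$ edges. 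Item~\ref{item:infty} uses the first assertion with $p=\infty$ together with Matou\v{s}ek's embedding into $\ell_\infty^{O(t\,m^{1/t}\log m)}$ of distortion $O(t)$; the footnote bound of \theoremref{thm:trans} in the limit $p\to\infty$ gives terminal distortion $O(t)$ with dimension $O(t\,k^{1/t}\log k)+1$.

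For the strong-terminal items I would apply \theoremref{thm:lp-strong}. Item~\ref{item:strong_Bourgain}: use the Fr\'echet form of Bourgain's embedding ($\alpha=O(\log m)$, $O(\log^2 k)$ coordinates) as the extendable map $f$, but for the all-pairs map $g$ — which in the proof of \theoremref{thm:lp-strong} is never used through an extension — use \cite{ABN06} with only $O(\log n)$ coordinates and distortion $O(\log n)$; the concatenation then has strong terminal distortion $(O(\log k),O(\log n))$ and dimension $O(\log n+\log^2 k)+1$. Item~\ref{item:planar} feeds in the Fr\'echet embedding of \cite{KLMN04} for decomposable metrics, of distortion $O((\log m)^{\min\{1/2,1/p\}})$, and item~\ref{item:neg} feeds in the (Fr\'echet) embedding of \cite{ALN08} of negative-type metrics into $\ell_2$ with distortion $\tilde O(\sqrt{\log m})$; \theoremref{thm:lp-strong} then outputs the claimed $(\alpha(k),\alpha(n))$ pairs.

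The argument is almost entirely mechanical, so the two points I would actually spell out — and which constitute the only (minor) obstacle — are: (i) verifying that each cited base embedding genuinely lies in the class required by the theorem being invoked, namely a submetric-closed family of metrics for \theoremref{thm:trans} and a Lipschitz-extendable (Fr\'echet) map for \theoremref{thm:lp-strong}; and (ii) the small bookkeeping in item~\ref{item:strong_Bourgain} that permits two different base embeddings (one extendable, one not) so as to reduce the dimension from $O(\log^2 n)$ to $O(\log n+\log^2 k)$, which hinges on the observation that the proof of \theoremref{thm:lp-strong} only extends $f$, never $g$. Beyond this the content of the corollary is entirely carried by the two theorems it cites.
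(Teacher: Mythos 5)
Your proof is correct and follows the same route as the paper, which itself only gives a pointer to \theoremref{thm:trans}, \theoremref{thm:lp-strong}, and the cited base embeddings. One slip to fix: your opening sentence groups item~\ref{item:JL} with \theoremref{thm:lp-strong}, but your detailed argument (correctly, and as required to get dimension $O(\log k)$ rather than $O(\log n)$) uses the first assertion of \theoremref{thm:trans}, so the grouping sentence should move item~\ref{item:JL} to the \theoremref{thm:trans} list. Your observation about item~\ref{item:strong_Bourgain} is a genuine and worthwhile refinement that the paper leaves implicit: a literal blackbox invocation of \theoremref{thm:lp-strong} with the Fr\'echet form of Bourgain's embedding would give dimension $O(\log^2 n+\log^2 k)$, not the stated $O(\log n+\log^2 k)$, and it is precisely the fact that the proof of \theoremref{thm:lp-strong} only ever extends $f$ and never $g$ that lets one substitute the low-dimensional (possibly non-Fr\'echet) embedding of \cite{ABN06} for $g$ and obtain the claimed bound.
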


The first two items and the last one use the first assertion of \theoremref{thm:trans}, the next two use its second assertion, and the next three apply \theoremref{thm:lp-strong}. The corollary follows from known embedding results: (\ref{item:Bourgain}) and (\ref{item:strong_Bourgain}) are from \cite{B85}, with improved dimension due to \cite{ABN06}, (\ref{item:JL}) is from \cite{JL84}, (\ref{item:sp}) is from \cite{ADDJS93} and (\ref{item:eucl_sp}) from \cite{HIS13}, (\ref{item:planar}) from \cite{KLMN04}, (\ref{item:neg}) from \cite{ALN07,ALN08}, and  (\ref{item:infty}) from \cite{M96,ABN06}.

\section{Graph Terminal Spanners}\label{sec:spanner-oracle}

While \theoremref{thm:trans} provides a general approach to obtain terminal spanners, it cannot provide spanners which are subgraphs of the input graph. We devise a construction of such terminal spanners in this section, while somewhat increasing the number of edges. Specifically, we show the following.
\begin{theorem}\label{thm:spanner2}
For any parameter $t\ge 1$, a graph $G=(V,E)$ on $n$ vertices, and a set of terminals $K\subseteq V$ of size $k$, there exists a $(4t-1)$-terminal {\em graph} spanner with at most $O(n+\sqrt{n}\cdot k^{1+1/t})$ edges.
\end{theorem}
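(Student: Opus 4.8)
The plan is to combine the metric $(4t-1)$-terminal spanner of \corollaryref{cor:implications}(\ref{item:sp}) with a \emph{pairwise distance preserver} to turn the non-subgraph edges into actual subgraph paths, while keeping the edge count under control. First I would invoke \theoremref{thm:trans} (second assertion) with the family of all graphs: this gives a metric spanner $H$ on $V$ with terminal stretch $4t-1$ and at most $O(k^{1+1/t}+n)$ edges. Of these, the $n-k$ edges attaching each non-terminal $x$ to its nearest terminal $k_x$ are of the special form: an edge of length exactly $d_G(x,k_x)$. So I really only need to realize (i) the $\le O(k^{1+1/t})$ edges of the embedded spanner of $(K,d_K)$, and (ii) the $n-k$ ``leaf'' edges, each of which is a shortest path in $G$ from a non-terminal to its closest terminal.

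The key step is handling the $O(k^{1+1/t})$ edges of the $K$-spanner; each such edge $\{u,v\}$ with $u,v\in K$ must be realized by a shortest $u$--$v$ path in $G$ (of length $d_G(u,v)$), and the union of all these shortest paths is exactly what a pairwise preserver provides. By the result of Coppersmith--Elkin \cite{CE05}, for a set $P$ of $p$ vertex pairs in an $n$-vertex graph there is a subgraph preserving all $P$-distances with $O(n + \sqrt{n}\cdot p)$ edges (they actually get $O(n + \min\{\sqrt n\, p, \sqrt p\, n\})$, but the $O(n+\sqrt n\, p)$ bound suffices here). Taking $P$ to be the set of all pairs $\{u,v\}$ corresponding to edges of the $K$-spanner, we have $p = O(k^{1+1/t})$, so the preserver has $O(n + \sqrt{n}\cdot k^{1+1/t})$ edges. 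This subgraph then realizes all stretch-$(4t-1)$ requirements between terminals and, by concatenation, between terminals and non-terminals — though I still need the leaf edges.

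For the $n-k$ leaf edges I would observe that the union of the shortest paths from each non-terminal $x$ to its nearest terminal $k_x$ can be taken to form a forest (a ``shortest path forest'' rooted at $K$): run a single-source-shortest-path computation from the set $K$ (equivalently, add a virtual source connected to all terminals with zero-weight edges and take a shortest-path tree), and keep for each $x$ its tree edge to the parent. This contributes only $n-1$ edges, and guarantees $d(x,k_x)$ is preserved for every non-terminal via its root-path in the forest. Adding these $O(n)$ edges to the $O(n+\sqrt n\, k^{1+1/t})$-edge preserver keeps the total at $O(n+\sqrt n\, k^{1+1/t})$. Finally I would verify the stretch: for $v\in K$, $u\in V$, let $k_u$ be $u$'s nearest terminal; then $d_H(u,v)\le d(u,k_u) + d_G(k_u,v)\cdot(\text{stretch of the }K\text{-spanner})$, and since $d(u,k_u)\le d_G(u,v)$ and $d_G(k_u,v)\le 2 d_G(u,v)$, and the $K$-spanner has stretch $\le 2\cdot(2t-1)+1 = 4t-1$ (from the composition in \theoremref{thm:trans} applied to the $(2t-1)$-stretch spanner of $K$), the bound follows, with non-contraction immediate since we only keep subgraph edges.

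The main obstacle I expect is the bookkeeping in the stretch argument: \theoremref{thm:trans}'s metric spanner already has stretch $4t-1$ by running an internal $(2t-1)$-stretch spanner on $(K,d_K)$, but when I re-realize that internal spanner's edges inside $G$ via the preserver, I need those edges to still correspond to \emph{exact} $d_G$-distances (not $d_K$-distances — here $d_K$ is the induced metric so they coincide, which is the crucial point that makes the preserver applicable). I should double-check that the $(2t-1)$-spanner edges of $(K,d_K)$ have weight equal to the true $G$-distance of their endpoints (true, since $d_K = d_G|_K$), so that a $G$-pairwise-preserver on those endpoint pairs reconstructs each such edge by a path of exactly that length, and concatenation then reproduces the $(4t-1)$ terminal stretch of the metric spanner inside the subgraph.
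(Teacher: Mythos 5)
Your proposal is correct and follows essentially the same route as the paper's proof: build a $(2t-1)$-stretch metric spanner on $(K,d_K)$, feed its $O(k^{1+1/t})$ edge pairs into the Coppersmith--Elkin pairwise preserver to get a subgraph on $O(n+\sqrt{n}\,k^{1+1/t})$ edges preserving those distances exactly, add a shortest-path forest rooted at $K$ for the non-terminals, and compose the stretches via $d_H(u,v)\le d(u,k_u)+(2t-1)d_G(k_u,v)\le(4t-1)d_G(u,v)$. One small slip in your write-up: it is the \emph{inner} $K$-spanner that has stretch $2t-1$ (so that $d_G(u,v)+2\cdot(2t-1)d_G(u,v)=(4t-1)d_G(u,v)$), whereas the quantity $2(2t-1)+1=4t-1$ is the resulting \emph{terminal} stretch of the composed spanner, not the stretch of the $K$-spanner itself.
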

\noindent {\bf Remark:} Note that the number of edges is linear in $n$ whenever $k\le n^{1/(2(1+1/t))}$.

We shall use the following result:
\begin{theorem}[\cite{CE05}]\label{thm:preserver}
Given a weighted graph $G=(V,E)$ on $n$ vertices and a set $P\subseteq{V\choose 2}$ of size $p$, then there exists a subgraph $G'$ with $O(n+\sqrt{n}\cdot p)$ edges, such that for all $\{u,v\}\in P$, $d_G(u,v)=d_{G'}(u,v)$.
\end{theorem}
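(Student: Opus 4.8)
The plan is to follow the classical Coppersmith--Elkin strategy: take $G'$ to be the union of one carefully chosen shortest path for each of the $p$ pairs in $P$, and then bound the number of edges of such a union by a combinatorial argument. To make this work I would first perturb the edge weights by tiny generic quantities $w(e)\mapsto w(e)+\epsilon_e$, chosen small enough that every perturbed shortest path is still a shortest path in the original $G$, and generic enough that all path weights become distinct; let $\pi_{u,v}$ be the resulting \emph{unique} shortest $u$--$v$ path. Uniqueness yields the crucial \emph{consistency} property: if $a,b$ both lie on $\pi_{u,v}$, then the subpath of $\pi_{u,v}$ between $a$ and $b$ equals $\pi_{a,b}$ (otherwise splicing in a different equal-weight $a$--$b$ path would contradict uniqueness of $\pi_{u,v}$). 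Hence for any two such canonical paths $\pi,\pi'$ the set of shared vertices forms an interval on each of them, so $\pi\cap\pi'$ is a single contiguous subpath. Now set $G'=\bigcup_{\{s,t\}\in P}\pi_{s,t}$, a subgraph of $G$ with $|V(G')|\le n$. Exact preservation is immediate: $G'\subseteq G$ gives $d_{G'}\ge d_G$ on all pairs, while for $\{s,t\}\in P$ the path $\pi_{s,t}\subseteq G'$ witnesses $d_{G'}(s,t)\le d_G(s,t)$. It only remains to bound $|E(G')|$.

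The technical heart is the counting claim: a union $H=P_1\cup\dots\cup P_q$ of $q$ consistent shortest paths on at most $n$ vertices has $|E(H)|\le n+O(q^2)$ edges. Let $B=\sum_{v:\deg_H(v)\ge 3}(\deg_H(v)-2)$. The handshake identity gives $2|E(H)|=\sum_v\deg_H(v)\le 2|V(H)|+B$, hence $|E(H)|\le n+B/2$, so it suffices to prove $B=O(q^2)$. Fix $v$ with $\deg_H(v)\ge 3$. If no $P_i$ passes through $v$ as an interior vertex, then $\deg_H(v)$ is at most the number of $P_i$ with an endpoint at $v$, and these contributions sum to $\le 2q$ over all such $v$. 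Otherwise choose one $P_i$ interior at $v$; it contributes exactly two of the $H$-edges at $v$, and for every \emph{other} $H$-edge $f$ at $v$, say $f\in P_j$, consistency forces $v$ to be an endpoint of the contiguous subpath $P_i\cap P_j$ (the two $P_i$-edges at $v$ differ from $f$, so $f$ cannot lie in the common subpath, which is contiguous through $v$ only if $v$ is one of its ends). Charging $f$ to the ordered pair $(i,j)$ and observing that each such pair has at most two "divergence vertices'' (the ends of $P_i\cap P_j$) and at most two edges $f$ charged at each, the total number of charges, and hence $B$, is $O(q^2)$.

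With the claim in hand, the passage to the stated bound is the short "grouping'' trick: partition the $p$ canonical paths of $G'$ arbitrarily into $\lceil p/\sqrt n\rceil$ groups, each of size $q\le\sqrt n$. By the claim the union of the paths in each group has at most $n+O(q^2)\le n+O(n)=O(n)$ edges, and $E(G')$ is the union of these $\lceil p/\sqrt n\rceil$ group-subgraphs, so $|E(G')|\le\lceil p/\sqrt n\rceil\cdot O(n)=O(n+\sqrt n\cdot p)$, as required. I expect the main obstacles to be the two ingredients of the counting claim: setting up "consistent shortest paths'' cleanly (a perturbation/tie-breaking that simultaneously yields contiguous pairwise intersections \emph{and} keeps all chosen paths shortest in the original weighted graph), and making the excess-degree charge $B=O(q^2)$ fully rigorous using only the contiguity of pairwise intersections; the grouping step is easy once the $O(n+q^2)$ "clean regime'' bound is established.
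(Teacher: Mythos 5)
This theorem is not proved in the paper at all: it is imported verbatim from \cite{CE05} and used as a black box in the construction of \theoremref{thm:spanner2}, so there is no in-paper argument to compare against. Your proposal is a correct reconstruction of the Coppersmith--Elkin argument: the perturbation/tie-breaking giving unique, hence consistent, shortest paths is sound; the charging of each excess edge at a branching vertex to an ordered pair of paths at an endpoint of their (contiguous) common subpath correctly yields $B=O(q^2)$ and $|E(H)|\le n+O(q^2)$ for $q$ paths; and the grouping into $\lceil p/\sqrt n\rceil$ batches of size $\le\sqrt n$ then gives $O(n+\sqrt n\cdot p)$, exactly the bound claimed.
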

\begin{proof}[Proof of \theoremref{thm:spanner2}]
The construction of the subgraph spanner with terminal distortion will be as follows. Consider the metric induced on the terminals $K$ by the shortest path metric on $G$. Create a $(2t-1)$ (metric) spanner $H'$ of this metric, using \cite{ADDJS93}, and let $P\subseteq{K\choose 2}$ be the set of edges of $H'$. Note that $p=|P|\le O(k^{1+1/t})$. Now, apply \theoremref{thm:preserver} on the graph $G$ with the set of pairs $P$, and obtain a graph $G'$ that for every $\{u,v\}\in P$, has $d_{G'}(u,v)=d_G(u,v)$. This implies that $G'$ is a $(2t-1)$-spanner for each pair of vertices $u,w\in K$. Moreover, $G'$ has at most $O(n+\sqrt{n}\cdot p)$ edges. Finally, create $H$ out of $G'$ by adding a shortest path tree in $G$ with the set $K$ as its root. This will guarantee that the spanner $H$ will have for each non-terminal, a shortest path to its closest terminal in $G$.  This concludes the construction of $H$, and now we turn to bounding the distortion. Since $H$ is a subgraph clearly it is non-contracting.
Fix any $v\in K$ and $u\in V$, let $k_u$ be the closest terminal to $u$, then $d_G(k_u,v)\le d_G(k_u,u)+d_G(u,v)\le 2d_G(u,v)$, and thus
\[
d_H(u,v)\le d_H(u,k_u)+d_{G'}(k_u,v)\le d_G(u,v)+(2t-1)d_G(k_u,v)\le (4t-1)d_G(u,v)~.
\]
Finally observe that the total number of edges in $H$ is at most $O(n+\sqrt{n}\cdot p)=O(n+\sqrt{n}\cdot k^{1+1/t})$.
\end{proof}

\section{Light Terminal Trees for General Graphs}\label{sub:Light-k-term-gen}

In this section we find a single spanning tree of a given graph, that has both light weight, and approximately preserves distances from a set of specified terminals. \theoremref{thm:trans} can provide a tree with terminal distortion $2k-1$ (using that any graph has a tree with distortion $n-1$), but that tree may not  be a subgraph and may have large weight.
The result of this section is summarized as follows.
\begin{theorem}
\label{thm:k-trm-general-grph}
For any parameter $\alpha\ge 1$, given a weighted graph $G=\left(V,E,w\right)$,
and a subset of terminals $K\subseteq V$
of size $k$, there exists a spanning tree $T$ of $G$ with terminal
distortion at most $k\cdot\alpha+\left(k-1\right)\alpha^{2}$ and lightness at most
$2\alpha+1+\frac{2}{\alpha-1}$.
\end{theorem}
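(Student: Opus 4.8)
The plan is to build the tree greedily by iterating a shallow-light-tree (SLT) construction once per terminal. Recall that the classical SLT of \cite{KRY95,ABP92} takes a graph, a designated root $r$, a spanning tree (or MST), and a parameter $\alpha>1$, and produces a spanning tree in which every vertex $v$ has $d_T(r,v)\le\alpha\cdot d_G(r,v)$ while the weight is at most $(1+\tfrac{2}{\alpha-1})\cdot w(\mathrm{MST}(G))$. First I would order the terminals arbitrarily as $K=\{k_1,\dots,k_k\}$. I would construct an auxiliary graph $G'$ by adding a new ``virtual root'' $\rho$ connected to all terminals by zero-weight edges; then the SLT construction rooted at $\rho$ on $G'$, with parameter $\alpha$, yields a tree $T'$ in which every vertex $u$ satisfies $d_{T'}(\rho,u)\le\alpha\cdot d_{G'}(\rho,u)=\alpha\cdot d_G(u,K)$, and $w(T')\le(1+\tfrac{2}{\alpha-1})w(\mathrm{MST}(G'))=(1+\tfrac{2}{\alpha-1})w(\mathrm{MST}(G))$ (the last equality because the zero-weight star is an MST-neutral addition). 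Deleting $\rho$ splits $T'$ into a forest; I would reconnect it into a spanning tree $T$ of $G$ by adding, for each component beyond the first, a single graph edge of weight at most... — and here is where the bookkeeping has to be done carefully, so let me instead take the cleaner route below.

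The cleaner route: process terminals one at a time, maintaining a spanning tree $T_i$ of $G$. Start with $T_0=\mathrm{MST}(G)$. At step $i$, apply the SLT construction to $G$ with root $k_i$, input tree $T_{i-1}$, and parameter $\alpha$, obtaining $T_i$ with $d_{T_i}(k_i,u)\le\alpha\cdot d_G(k_i,u)$ for all $u$, and $w(T_i)\le w(T_{i-1})+\tfrac{2}{\alpha-1}w(\mathrm{MST}(G))$ — since each SLT step adds weight at most $\tfrac{2}{\alpha-1}w(\mathrm{MST})$ over its input tree when the input tree already has the root-distance property we need, or more crudely $w(T_i)\le 2\alpha\cdot w(\mathrm{MST})+\cdots$. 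The key point is that the final tree $T:=T_k$ must \emph{simultaneously} have small stretch from \emph{every} terminal, not just the last one; so I would prove by induction that after step $i$, for every $j\le i$ and every $u\in V$, $d_{T_i}(k_j,u)\le c_j\cdot d_G(k_j,u)$ where $c_j$ degrades by a factor $\alpha$ each time a later SLT pass is run on top. Concretely, a pass rooted at $k_i$ guarantees $d_{T_i}(k_i,\cdot)\le\alpha\,d_G(k_i,\cdot)$ exactly, but it can only stretch an already-established $d_{T_{i-1}}(k_j,\cdot)$ by going through $k_i$: $d_{T_i}(k_j,u)\le d_{T_i}(k_j,k_i)+d_{T_i}(k_i,u)\le\alpha\,d_{T_{i-1}}(k_j,k_i)\cdot(\text{something})+\alpha\,d_G(k_i,u)$, and iterating this across the at most $k-1$ later passes, together with a triangle-inequality bound $d_G(k_i,u)\le d_G(k_i,k_j)+d_G(k_j,u)\le 2\,d_G(k_j,u)$ when $u$ is closer to $k_j$, is what produces the stated $k\alpha+(k-1)\alpha^2$ bound. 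The stretch from a terminal $k_j$ to a point $u$ whose nearest terminal is some $k_\ell$ is then routed $k_j\rightsquigarrow k_\ell\rightsquigarrow u$, picking up one factor of $\alpha$ per intervening pass.

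For the lightness bound: the MST itself costs $1$, the first SLT pass (rooted at $k_1$) brings the weight to at most $(1+\tfrac{2}{\alpha-1})$ times the MST in the worst analysis, and I would argue that subsequent passes do \emph{not} keep adding $\tfrac{2}{\alpha-1}w(\mathrm{MST})$ each — rather, one re-runs the SLT analysis relative to a fixed Euler-tour/MST backbone, so the total additive overhead stays bounded by $2\alpha\cdot w(\mathrm{MST})$, giving total lightness $\le 2\alpha+1+\tfrac{2}{\alpha-1}$ as claimed. The main obstacle is precisely this interaction: ensuring that running $k$ successive SLT passes does not blow up the weight linearly in $k$, while the stretch from the earliest-processed terminals degrades in a controlled multiplicative-per-pass fashion rather than catastrophically. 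I expect the resolution is to not literally iterate $k$ independent SLT calls, but to run a \emph{single} SLT-type procedure with respect to the ``terminal set as a super-root'' (the virtual-root construction above), where the breakpoint/threshold rule of \cite{KRY95} is applied along a fixed tour of the MST and ``resets'' happen whenever the accumulated tour-distance exceeds $\alpha$ times the distance-to-nearest-terminal — this yields the weight bound in one shot, and the stretch bound then comes from the at most $k-1$ ``detours'' through terminals that a root-to-root-to-point path can incur, each contributing a factor $\alpha$.
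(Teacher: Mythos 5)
Your proposal correctly identifies the first building block used in the paper's proof --- the SLT forest $F$ obtained by attaching a virtual root $\rho$ to all terminals with zero-weight edges, running an $(\alpha,1+\tfrac{2}{\alpha-1})$-SLT rooted at $\rho$, and deleting $\rho$. You also correctly abandon the ``iterate one SLT pass per terminal'' route: each pass can stretch previously established root-distances multiplicatively, so the distortion degrades by a factor $\alpha$ per pass (not additively), and the weight would grow by $\Omega(\tfrac{1}{\alpha-1}\,w(\mathrm{MST}))$ per pass, giving lightness $\Omega(k/(\alpha-1))$ rather than $O(\alpha+\tfrac{1}{\alpha-1})$. So that detour is rightly discarded.

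The genuine gap is the reconnection step, which you acknowledge you do not know how to carry out (``here is where the bookkeeping has to be done carefully''). Returning to the super-root picture and asserting that ``the stretch bound then comes from the at most $k-1$ detours through terminals, each contributing a factor $\alpha$'' skips the argument that actually makes the theorem work. The paper's reconnection is not ``add a cheap edge per component'': it builds an auxiliary graph $G'$ on the terminal set $K$, with $\{v_i,v_j\}\in E'$ iff the components $V_i,V_j$ of $F$ are adjacent in $G$, and weight $w'(v_i,v_j)=\min_{e\in E}\,d_{F\cup\{e\}}(v_i,v_j)$, the length of the shortest $v_i$--$v_j$ path using exactly one non-forest edge (whose minimizing edge is the ``representative edge''). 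It then sets $T=F\cup R$, where $R$ is the set of representative edges of the edges of an MST $T'$ of $G'$. The crux you are missing is the bottleneck lemma: by walking along the shortest $v_i$--$v_j$ path in $G$ and, at every graph edge $\{u_a,u_{a+1}\}$ crossing between forest components, replacing it by the supergraph edge $\{v^{(a)},v^{(a+1)}\}$, one gets a $v_i$--$v_j$ path in $G'$ all of whose edges have weight at most $\alpha\cdot d_G(v_i,v_j)$; hence (MST cycle property) the unique $v_i$--$v_j$ path in $T'$ also has every edge of weight at most $\alpha\cdot d_G(v_i,v_j)$, and since it has at most $k-1$ edges, $d_T(v_i,v_j)\le\alpha(k-1)\,d_G(v_i,v_j)$. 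The terminal distortion bound then follows by one more triangle inequality, $d_T(v_i,u)\le d_T(v_i,v_j)+d_T(v_j,u)\le\alpha(k-1)(1+\alpha)\,d_G(v_i,u)+\alpha\,d_G(v_i,u)$, giving exactly the $k\alpha+(k-1)\alpha^2$ form --- which your ``$\alpha$ per detour'' heuristic does not produce. Similarly, the lightness bound on $R$ requires a separate argument comparing $T'$ to the MST of the terminal metric closure (Steiner-tree factor $2$), yielding $w(R)\le 2\alpha\,w(\mathrm{MST}(G))$; this is not covered by your sketch.
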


When substituting $\alpha=1$ in \theoremref{thm:k-trm-general-grph}
we obtain a single tree with terminal distortion exactly $2k-1$, which is optimal (see \theoremref{thm:LowerBouTermstretchGrph}),
but no guarantee  for lightness.
 More generally, for small $\epsilon>0$, we get terminal distortion $2k-1+\epsilon$ and
lightness $3+\frac{6k}{\epsilon}$.
Also, note that the bound $2\alpha+1+\frac{2}{\alpha-1}$ is minimized by setting
$\alpha=2$, so there is no point in using the theorem with $\alpha>2$.

Next we describe the algorithm for constructing a spanning tree that satisfies the assertion of \theoremref{thm:k-trm-general-grph}.
We shall assume w.l.o.g that all
edge weights are different, and every two different paths have different
lengths. If it is not the case, then one can break ties in an arbitrary
(but consistent) way.

 A spanning tree $T$ is an $(\alpha,\beta)$-SLT with respect to a root $v\in V$, if for all $u\in V$, $d_T(v,u)\le\alpha\cdot d_G(v,u)$, and $T$ has lightness $\beta$.
A small modification of an SLT-constructing algorithm produces for any subset $K\subset V$,
a forest $F$, such that every component of $F$ contains exactly one vertex of $K$.\footnote{To obtain such a forest $F$, one should add a new vertex $v_{K}$
to the graph and connect it to each of the vertices of $K$ with edges
of weight zero. Then we compute an $\left(\alpha,\beta\right)$-SLT
with respect to $v_{K}$ in the modified graph. Finally, we remove
$v_{K}$ from the SLT. The resulting forest is $F$.
}
The forest $F$ has distortion $\alpha$ with respect to $K$, and {\em lightness}
$1+\frac{2}{\alpha-1}$. (Such a forest $F$ is said to have distortion
$\alpha$ with respect to $K$, if for every vertex $u\in V$, $d_{F}\left(K,u\right)\le\alpha\cdot d_{G}\left(K,u\right)$.)

The algorithm starts by building the aforementioned SLT-forest $F$
from the terminal set $K$.
No two terminals belong to the same connected component of $F$.
Denote $K=\{v_1,\dots,v_k\}$, let $V_{i}$ be the unique connected component of $F$ containing $v_{i}$, and let $T_{i}\subseteq F$
be the edges of the forest $F$ induced by $V_{i}$. It follows that for every
$u\in V_{i},\ d_{F}\left(K,u\right)=d_{T_{i}}\left(v_{i},u\right)\le\alpha\cdot d_{G}\left(K,u\right)$.
Let $G'=\left(K,E',w'\right)$ be the super-graph in which two terminals
share an edge between them if and only if there is an edge between
the components $V_{i}$ to $V_{j}$ in $G$. Formally, $\mbox{\ensuremath{E'=\left\{ \left\{v_{i},v_{j}\right\}:\,\exists u_{i}\in V_{i},u_{j}\in V_{j}\mbox{ such that }\left\{ u_{i},u_{j}\right\} \in E\right\} }}$.
The weight $w'\left(v_{i},v_{j}\right)$ is defined to be the length
of the shortest path between $v_{i}$ and $v_{j}$ which uses $exactly\ one\ edge$
that does not belong to $F$. (In other words, among all the paths
between $v_{i}$ and $v_{j}$ in $G$ which use exactly one edge that
does not belong to $F$, let $P$ be the shortest one. Then $w'\left(v_{i},v_{j}\right)=w\left(P\right)$.)
Note also that $w'\left(v_{i},v_{j}\right)$ is given by $\mbox{\ensuremath{w'\left(v_{i},v_{j}\right)=\min{}_{e\in E}\left\{ d_{F\cup\left\{ e\right\} }\left(v_{i},v_{j}\right)\right\} }}$.
We call the edge $e_{i,j}=\left\{ u_{i},u_{j}\right\} $ that implements
this minimum ($w'\left(v_{i},v_{j}\right)=d_{F\cup\left\{ e_{i,j}\right\} }\left(v_{i},v_{j}\right)$)
the $representative\ edge$ of $\left\{ v_{i},v_{j}\right\} $. (Recall that w.l.o.g the shortest paths, and thus the representative edges, are unique.) Observe that $\left\{ v_{i},v_{j}\right\} \in E'$
implies that $w'\left(v_{i},v_{j}\right)<\infty$. Let $T'$ be the
$MST$ of $G'$. Define $R=\left\{ e_{i,j}\ |e_{i,j}\mbox{ is the representative edge of }e'_{i,j}=\left(v_{i},v_{j}\right)\in T'\right\} $.
Finally, set $T=F\cup R=\bigcup_{i=1}^{k}T_{i}\cup R$. Obviously,
$T$ is a spanning tree of $G$. This concludes the construction, next we turn to the analysis.

As an embedding of a graph into its spanning
tree is non-contractive, the tree $T$ will have terminal distortion
$\alpha$ if for all $v\in K$, $u\in V$, $d_{T}\left(v,u\right)\le\alpha\cdot d_{G}\left(v,u\right)$.

The next lemma shows that for every pair of terminals $v_{i},v_{j}$,
there is a path between them in $G'$ in which all edges have weight
(with respect to $w'$) at most $\alpha\cdot d_{G}\left(v_{i},v_{j}\right)$.
\begin{lemma}
	\label{lem:Bottle neck in G'}[The bottleneck lemma:] For every
	$v_{i},v_{j}\in K$, there exists a path $P:v_{i}=z_{0},z_{1},...,z_{r}=v_{j}$
	in $G'$ such that for every $s=0,1,\dots,r-1$, it holds that $\left\{z_{s,}z_{s+1}\right\}\in E'$
	and $w'\left(z_{s},z_{s+1}\right)\leq\alpha\cdot d_{G}\left(v_{i},v_{j}\right)$.\end{lemma}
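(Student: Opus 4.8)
The plan is to route between $v_i$ and $v_j$ in $G'$ by tracking which component of $F$ a shortest path of $G$ passes through. So first I would fix a shortest path $Q=(x_0,\ldots,x_m)$ in $G$ from $v_i=x_0$ to $v_j=x_m$, and for each vertex $x$ let $U(x)$ be the component of $F$ containing it. Since distinct terminals lie in distinct components of $F$, we may assume $v_i\ne v_j$ (otherwise the lemma is trivial), so the sequence $U(x_0),\ldots,U(x_m)$ changes at least once. Let $0\le T_0<\cdots<T_{s-1}<m$ be exactly the indices with $U(x_{T_t})\ne U(x_{T_t+1})$, let $U_t$ be the component of the $t$-th ``run'' of $Q$ (so $v_i\in U_0$ and $v_j\in U_s$), and let $u_t$ be the unique terminal of $U_t$, so $u_0=v_i$ and $u_s=v_j$. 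Each edge $e_t:=\{x_{T_t},x_{T_t+1}\}$ joins the two distinct $F$-components $U_t$ and $U_{t+1}$, hence $e_t\notin F$ and $\{u_t,u_{t+1}\}\in E'$, so $(u_0,\ldots,u_s)$ is a walk in $G'$ from $v_i$ to $v_j$; extracting a simple path $P$ from this walk will finish the proof once I show every edge $\{u_t,u_{t+1}\}$ satisfies $w'(u_t,u_{t+1})\le\alpha\cdot d_G(v_i,v_j)$.

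Next I would bound a single edge. Write $a_t:=x_{T_t}\in U_t$ and $b_t:=x_{T_t+1}\in U_{t+1}$. Using the identity $w'(u_t,u_{t+1})=\min_{e\in E}d_{F\cup\{e\}}(u_t,u_{t+1})$ together with the obvious path in $F\cup\{e_t\}$ that travels from $u_t$ to $a_t$ inside $U_t$, crosses $e_t$, and travels from $b_t$ to $u_{t+1}$ inside $U_{t+1}$, I get
\[
w'(u_t,u_{t+1})\ \le\ d_F(u_t,a_t)+w(e_t)+d_F(b_t,u_{t+1}).
\]
Because $a_t$ lies in the component $U_t$ whose only terminal is $u_t$, we have $d_F(u_t,a_t)=d_F(K,a_t)$, and the SLT-forest guarantee gives $d_F(K,a_t)\le\alpha\cdot d_G(K,a_t)\le\alpha\cdot d_G(v_i,a_t)$; symmetrically $d_F(b_t,u_{t+1})=d_F(K,b_t)\le\alpha\cdot d_G(v_j,b_t)$. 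Hence $w'(u_t,u_{t+1})\le\alpha\,d_G(v_i,a_t)+w(e_t)+\alpha\,d_G(v_j,b_t)$.

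Finally I would use that $Q$ is shortest: since $a_t$ and $b_t$ are consecutive on $Q$ with $a_t$ preceding $b_t$, the prefix of $Q$ up to $a_t$ has length $d_G(v_i,a_t)$, its suffix from $b_t$ has length $d_G(b_t,v_j)$, and these together with $w(e_t)$ sum to the length of $Q$, i.e. $d_G(v_i,a_t)+w(e_t)+d_G(v_j,b_t)=d_G(v_i,v_j)$. Since $\alpha\ge1$,
\[
w'(u_t,u_{t+1})\ \le\ \alpha\bigl(d_G(v_i,a_t)+w(e_t)+d_G(v_j,b_t)\bigr)\ =\ \alpha\cdot d_G(v_i,v_j),
\]
which is exactly the bound required on every edge of the walk, and therefore on every edge of $P$.

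I expect the only real subtlety to be the last step's bookkeeping: a crude estimate would bound $d_F(K,a_t)$ by $\alpha\cdot d_G(v_i,v_j)$ directly and likewise for $b_t$, yielding the weaker constant $2\alpha$. The point is to charge the in-component detour at $a_t$ against the $v_i$-to-$a_t$ portion of $Q$ and the detour at $b_t$ against the $b_t$-to-$v_j$ portion, so that after factoring out $\alpha$ the three pieces telescope to $d_G(v_i,v_j)$. One also has to be a little careful in setting up the ``runs'' of $Q$ and in noting that a crossing edge cannot belong to $F$.
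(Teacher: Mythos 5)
Your proof is correct and takes essentially the same approach as the paper: both follow a shortest $v_i$-$v_j$ path in $G$, map it to the sequence of terminals of the $F$-components visited, and bound each resulting $w'$-edge by charging the two in-component detours (via the SLT guarantee) against the prefix and suffix of the shortest path so that the three pieces telescope to $d_G(v_i,v_j)$. The only cosmetic difference is that you explicitly collapse each maximal "run" within a single component before forming the walk, whereas the paper allows the walk to contain self-loops and simply notes that those contribute weight $0$.
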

\begin{proof}
	Let $P_{i,j}:v_{i}=u_{0},u_{1},...,u_{s}=v_{j}$ be the shortest path
	from $v_{i}$ to $v_{j}$ in $G$, i.e., $w(P_{i,j})=d_{G}\left(v_{i},v_{j}\right)$.
	For each $0\le a\le s$, denote by $V^{(a)}$ the connected component of $F$ that contains $u_a$, and let $v^{(a)}$ be the unique terminal in that component. Consider the path $P=v^{\left(0\right)},v^{\left(1\right)},...,v^{\left(s\right)}$.
	(This path is not necessarily simple. In particular, it might contain
	self-loops. See \figureref{fig:bottleneck} for an illustration.) For every index $a<s$,
	\begin{eqnarray*}
		w'\left(v^{\left(a\right)},v^{\left(a+1\right)}\right) & \underset{\left(1\right)}{\le} & d_{F\cup\left\{ \{u_{a},u_{a+1}\}\right\} }\left(v^{\left(a\right)},v^{\left(a+1\right)}\right)\\
		 &=&d_{F}\left(v^{\left(a\right)},u_{a}\right)+d_{G}\left(u_{a},u_{a+1}\right)+d_{F}\left(u_{a+1},v^{\left(a+1\right)}\right)\\
		& \underset{\left(2\right)}{\le} & \alpha\cdot d_{G}\left(v_{i},u_{a}\right)+d_{G}\left(u_{a},u_{a+1}\right)+\alpha\cdot d_{G}\left(v_{j},u_{a+1}\right)\\
		& < & \alpha\cdot\left(d_{G}\left(v_{i},u_{a}\right)+d_{G}\left(u_{a},u_{a+1}\right)+d_{G}\left(v_{j},u_{a+1}\right)\right)\\
		&\underset{\left(3\right)}{=}&\alpha\cdot d_{G}\left(v_{i},v_{j}\right).
	\end{eqnarray*}
	Note that if for some index $a$ it holds that $v^{\left(a\right)}=v^{\left(a+1\right)}$
	then $w'\left(v^{\left(a\right)},v^{\left(a+1\right)}\right)=0$,
	and the inequality above holds trivially. Otherwise, if $v^{\left(a\right)}\ne v^{\left(a+1\right)}$, then
	inequality $\left(1\right)$ follows from the assumptions that $\left\{ u_{a},u_{a+1}\right\} \in E$,
	$u_{a}\in V^{\left(a\right)}$, $u_{a+1}\in V^{\left(a+1\right)}$.
	Inequality $\left(2\right)$ follows from the properties of the $SLT$
	tree $T$ (as $d_{F}\left(v^{\left(a\right)},u_{a}\right)=d_{F}\left(K,u_{a}\right)\le\alpha\cdot d_{G}\left(K,u_{a}\right)\le\alpha\cdot d_{G}\left(v_{i},u_{a}\right)$).
	Equality $\left(3\right)$ follows because the edge $\left\{ u_{a},u_{a+1}\right\} $
	is on the shortest path from $v_{i}$ to $v_{j}$ in $G$.
	
	In particular, one can remove cycles from $P$ and obtain a simple
	path with the desired properties. We get a simple path $P'$
	such that for every edge $v,v'$ on this path, we have $w'\left(v,v'\right)\le\alpha\cdot d_{G}\left(v_{i},v_{j}\right)$,
	as required.
\end{proof}

The following is a simple corollary.

\begin{corollary}
	\label{cor:imlemented edge alpha approx}For $\left\{ v_{i},v_{j}\right\} \in T'$,
	we have $w'\left(v_{i},v_{j}\right)=d_{T}\left(v_{i},v_{j}\right)\leq\alpha\cdot d_{G}\left(v_{i},v_{j}\right)$.
\end{corollary}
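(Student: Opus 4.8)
The plan is to prove \corollaryref{cor:imlemented edge alpha approx} by combining the bottleneck lemma with the standard cycle/cut property of minimum spanning trees. Fix an edge $\left\{v_i,v_j\right\}\in T'$. By \lemmaref{lem:Bottle neck in G'} there is a path $P'$ from $v_i$ to $v_j$ in $G'$ all of whose edges have $w'$-weight at most $\alpha\cdot d_G(v_i,v_j)$. Since $T'$ is the MST of $G'$, the edge $\left\{v_i,v_j\right\}$ is the \emph{maximum-weight} edge on the unique cycle formed by adding any other $v_i$--$v_j$ path to $T'$; in particular $w'(v_i,v_j)$ is at most the maximum $w'$-weight of any edge on $P'$, hence $w'(v_i,v_j)\le\alpha\cdot d_G(v_i,v_j)$. (Equivalently: if some edge of $P'$ had $w'$-weight smaller than $w'(v_i,v_j)$, one could swap it into $T'$ in place of $\left\{v_i,v_j\right\}$ and reduce the total weight, contradicting minimality; and all edges of $P'$ have weight $\le\alpha\cdot d_G(v_i,v_j)$ by the bottleneck lemma, so $w'(v_i,v_j)$ does too.)

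Next I would verify the identity $w'(v_i,v_j)=d_T(v_i,v_j)$. Recall $T=F\cup R$, where $R$ consists of the representative edges $e_{i',j'}$ of the MST edges $\left\{v_{i'},v_{j'}\right\}\in T'$. Since $\left\{v_i,v_j\right\}\in T'$, its representative edge $e_{i,j}=\left\{u_i,u_j\right\}$ lies in $R\subseteq T$, and by definition of the representative edge we have $w'(v_i,v_j)=d_{F\cup\{e_{i,j}\}}(v_i,v_j)$, which equals $d_{T_i}(v_i,u_i)+w(e_{i,j})+d_{T_j}(u_j,v_j)$. I need to argue this last quantity equals $d_T(v_i,u_j)$ within the \emph{full} tree $T$, i.e. that the shortest path between $v_i$ and $v_j$ in $T$ uses exactly the edge $e_{i,j}$ and stays inside $T_i\cup T_j$. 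This follows because $T$ is a tree, so there is a unique $v_i$--$v_j$ path in $T$; removing the edge $e_{i,j}$ from $T$ disconnects the vertices of $\bigcup_{i'}T_{i'}$-components according to the cut structure induced by $T'$ (contracting each $V_{i'}$ to $v_{i'}$ turns $T$ into $T'$), and since $\left\{v_i,v_j\right\}\in T'$, the edge $e_{i,j}$ is a bridge separating $v_i$ from $v_j$ in $T$; hence the unique $T$-path crosses it, and the portions inside $V_i$ and $V_j$ are exactly the tree paths $d_{T_i}(v_i,u_i)$ and $d_{T_j}(u_j,v_j)$ (as $T_i,T_j$ are themselves trees on $V_i,V_j$). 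Chaining these equalities gives $d_T(v_i,v_j)=w'(v_i,v_j)$.

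Combining the two parts yields $d_T(v_i,v_j)=w'(v_i,v_j)\le\alpha\cdot d_G(v_i,v_j)$, which is the claim. The main obstacle I anticipate is the bookkeeping in the second part: carefully justifying that contracting the components $V_{i'}$ turns $T$ into (an isomorphic copy of) $T'$, so that MST edges of $T'$ correspond to bridges of $T$, and hence that the $T$-distance between two terminals decomposes cleanly through the single representative edge. Once that correspondence is set up, both the MST-exchange argument and the distance decomposition are routine, so I would spend most of the write-up making the contraction/bridge correspondence precise rather than on the inequality itself.
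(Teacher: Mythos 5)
Your plan follows the same route as the paper: invoke \lemmaref{lem:Bottle neck in G'} to obtain a $v_i$--$v_j$ path $P'$ in $G'$ with all $w'$-weights at most $\alpha\cdot d_G(v_i,v_j)$, use the MST cycle property to bound $w'(v_i,v_j)$ by the maximum edge weight on $P'$, and then argue $w'(v_i,v_j)=d_T(v_i,v_j)$. The second half of your write-up --- the contraction/bridge correspondence showing that the unique $v_i$--$v_j$ path in $T$ consists of a path inside $T_i$, the single representative edge $e_{i,j}$, and a path inside $T_j$ --- is a correct and genuinely useful elaboration of a step the paper dispenses with in one sentence.

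There is a directional slip in the MST part, however. You write that since $T'$ is the MST, ``the edge $\{v_i,v_j\}$ is the \emph{maximum-weight} edge on the unique cycle formed by adding any other $v_i$--$v_j$ path to $T'$.'' That is the opposite of what the MST cycle property gives you (and from a maximum-weight statement your ``in particular $w'(v_i,v_j)$ is at most the maximum $w'$-weight on $P'$'' would not follow). What is true, and what the paper uses, is that $\{v_i,v_j\}$ is \emph{not strictly the heaviest} edge on the cycle $P'\cup\{v_i,v_j\}$; otherwise $\{v_i,v_j\}$ could not belong to any MST. Equivalently, remove $\{v_i,v_j\}$ from $T'$ to get a cut; $P'$ must contain an edge $e^*$ crossing that cut, and the cut property of MSTs gives $w'(v_i,v_j)\le w'(e^*)\le\max_{e\in P'}w'(e)\le\alpha\cdot d_G(v_i,v_j)$. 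Your parenthetical exchange argument has a related issue: swapping an arbitrary edge of $P'$ for $\{v_i,v_j\}$ is only legal if that edge reconnects the two sides of the cut, so ``if some edge of $P'$ is lighter, swap it'' overclaims --- you need the cut-crossing edge specifically, or the correctly stated cycle property. These are local fixes; the structure of the argument and the conclusion are sound.
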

\begin{proof}
	By \lemmaref{lem:Bottle neck in G'}, $w'\left(v_{i},v_{j}\right)\leq\alpha\cdot d_{G}\left(v_{i},v_{j}\right)$.
	(Indeed, otherwise the edge $\left\{ v_{i},v_{j}\right\} $ is strictly
	the heaviest edge in a cycle in $G'$, contradiction to the assumption
	that it belongs to the MST of $G'$.) Since $\left\{ v_{i},v_{j}\right\} \in E'$
	and the representative edge of $\left\{ v_{i},v_{j}\right\} $ was
	taken into $T$, it follows that $w'\left(v_{i},v_{j}\right)=d_{T}\left(v_{i},v_{j}\right)$.
\end{proof}

We conclude the following lemma, which bounds the distortion of terminal pairs.

\begin{lemma}\label{Lemma: terminal bottle neck}
	For $v_{i},v_{j}\in K$, we have
	$d_{T}\left(v_{i},v_{j}\right)\le d_{T'}\left(v_{i},v_{j}\right)\le\alpha\cdot\left(k-1\right)\cdot d_{G}\left(v_{i},v_{j}\right)$.
\end{lemma}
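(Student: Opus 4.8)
The plan is to prove the two inequalities separately. The left inequality $d_T(v_i,v_j)\le d_{T'}(v_i,v_j)$ follows from \corollaryref{cor:imlemented edge alpha approx}: for every edge $\{v_a,v_b\}$ of $T'$ we have $w'(v_a,v_b)=d_T(v_a,v_b)$, and moreover $T'\subseteq E'$ means the representative edges of all $T'$-edges were added to $T=F\cup R$. Hence the unique path in $T'$ between $v_i$ and $v_j$ lifts to a walk in $T$ of exactly the same total length (concatenating, for each $T'$-edge, the corresponding path realizing $w'$ inside $F\cup\{e\}$), so the $T$-distance — being the length of the \emph{shortest} path — is at most the length of this walk, which equals $d_{T'}(v_i,v_j)=\sum_{\text{edges of the }T'\text{-path}} w'(\cdot,\cdot)$.

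For the right inequality $d_{T'}(v_i,v_j)\le \alpha(k-1)\,d_G(v_i,v_j)$, I would invoke \lemmaref{lem:Bottle neck in G'}: there is a path $P$ in $G'$ from $v_i$ to $v_j$ all of whose edges have $w'$-weight at most $\alpha\cdot d_G(v_i,v_j)$. Now I want to pass from $P$ (a path in $G'$, not necessarily in the tree $T'$) to the tree path in $T'$. The standard MST/bottleneck argument does this: since $T'$ is the MST of $G'$, for \emph{any} edge $\{v_a,v_b\}\in E'$, the unique $T'$-path between $v_a$ and $v_b$ uses only edges of $w'$-weight at most $w'(v_a,v_b)$ (otherwise that $T'$-path plus the edge $\{v_a,v_b\}$ forms a cycle in which $\{v_a,v_b\}$ is not the heaviest edge — but then swapping in $\{v_a,v_b\}$ for the heaviest $T'$-edge on the cycle would yield a lighter spanning tree, contradicting minimality of $T'$). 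Applying this to each consecutive edge of $P$ and concatenating, the $T'$-path from $v_i$ to $v_j$ uses only edges of $w'$-weight at most $\alpha\cdot d_G(v_i,v_j)$. Since $T'$ is a tree on $k$ vertices, this $T'$-path has at most $k-1$ edges, so $d_{T'}(v_i,v_j)\le (k-1)\cdot\alpha\cdot d_G(v_i,v_j)$, as claimed.

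The main (minor) obstacle is being careful about the direction of the bottleneck/MST reasoning: \corollaryref{cor:imlemented edge alpha approx} already establishes that each \emph{individual} $T'$-edge satisfies $w'(v_a,v_b)\le\alpha\,d_G(v_a,v_b)$, but that bound involves $d_G(v_a,v_b)$, not $d_G(v_i,v_j)$, so it cannot be summed directly; I genuinely need the bottleneck statement of \lemmaref{lem:Bottle neck in G'} (a path with \emph{every} edge cheap relative to $d_G(v_i,v_j)$) together with the cycle-exchange property of the MST to conclude that the \emph{tree} path between $v_i$ and $v_j$ stays under the same bottleneck threshold. Once that is in place, bounding the hop-length of the $T'$-path by $k-1$ is immediate since $|K|=k$ and $T'$ is a spanning tree of $G'$. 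I would also note in passing that combining this lemma with the trivial non-contractiveness of subtree embeddings gives the terminal-pair part of the distortion bound in \theoremref{thm:k-trm-general-grph}, modulo the (easy) step that any $u\in V$ routes to its component's terminal with an extra $\alpha$ factor.
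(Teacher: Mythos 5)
Your proof is correct and follows essentially the same route as the paper: both invoke \lemmaref{lem:Bottle neck in G'} to get a path in $G'$ whose edges all have $w'$-weight at most $\alpha\cdot d_G(v_i,v_j)$, then use the MST cycle-exchange property to transfer that bottleneck bound to the unique $T'$-path $P'$ between $v_i$ and $v_j$, and finally sum over the at most $k-1$ edges of $P'$ (using \corollaryref{cor:imlemented edge alpha approx} to identify $w'$-weights with $d_T$-distances, which gives the left inequality). The only cosmetic difference is that you package the cycle-exchange step as the standard ``MST = minimum-bottleneck tree'' property applied edge-by-edge along the Lemma~\ref*{lem:Bottle neck in G'} path and then take the union of the resulting $T'$-subpaths, whereas the paper argues by contradiction directly about the heaviest edge of the cycle $P'\circ P_{j,i}$; these are interchangeable formulations of the same fact.
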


\begin{figure}
	\begin{centering}
		\includegraphics[scale=0.8]{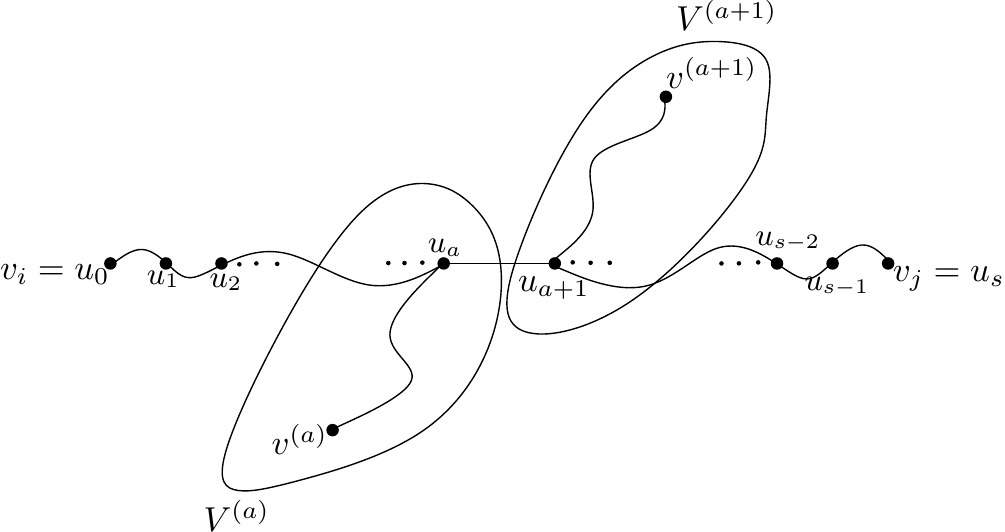}
		\par\end{centering}
	
	\caption{\label{fig:bottleneck}An illustration for the bottleneck lemma:{\small{}
			$v_{i}$ and $v_{j}$ are terminals. The edge $\left\{ u_{a},u_{a+1}\right\} $
			belongs to the shortest path from $v_{i}$ to $v_{j}$ in $G$. We
			conclude that for terminals $v^{\left(a\right)},v^{\left(a+1\right)}$
			such that $u_{a}\in V^{\left(a\right)}$ and $u_{a+1}\in V^{\left(a+1\right)}$,
			it holds that }$w'\left(v^{\left(a\right)},v^{\left(a+1\right)}\right)\leq\alpha\cdot d_{G}\left(v_{i},v_{j}\right)$.}
\end{figure}
\begin{figure}
	\begin{centering}
		\includegraphics[scale=0.8]{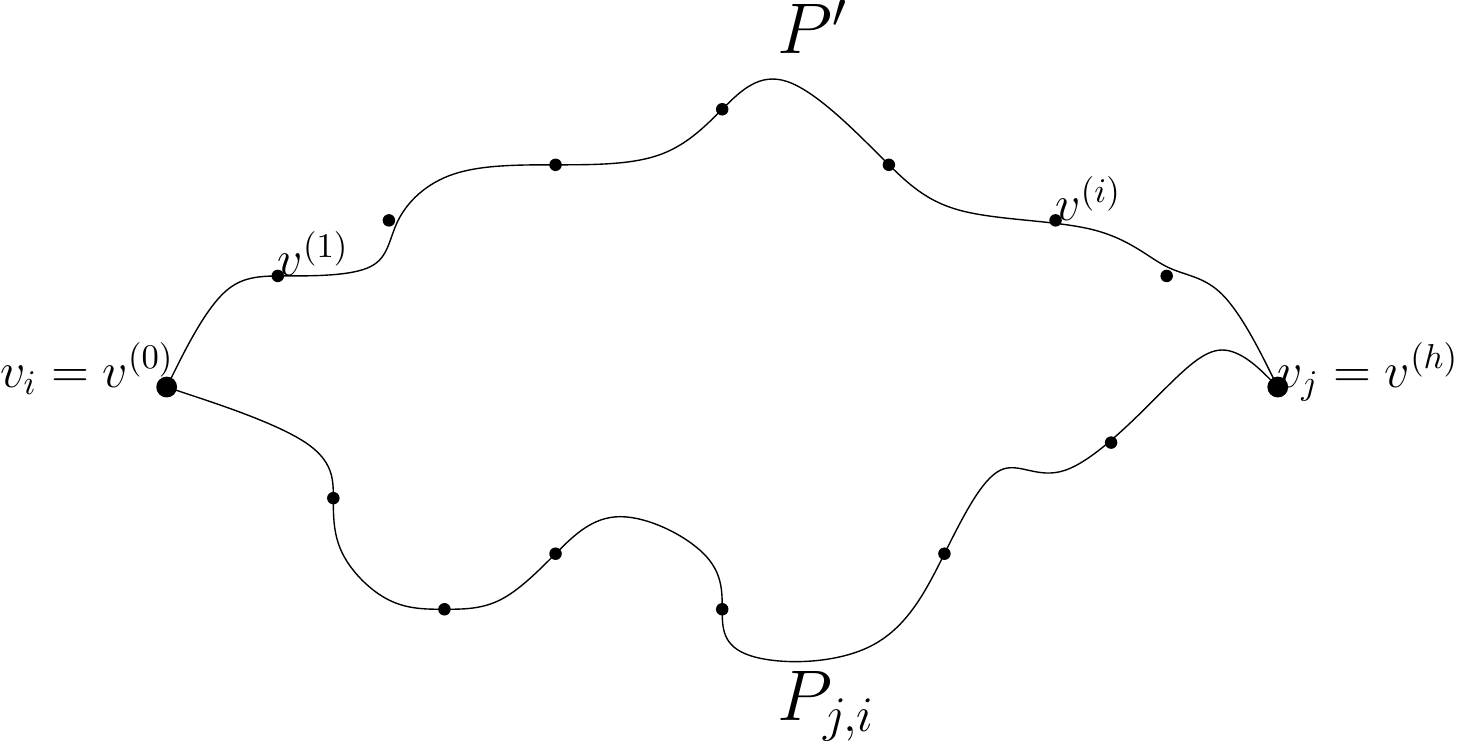}
		\par\end{centering}
	
	\caption{\label{fig:terminal bottle neck}The two paths $P'$ and $P_{j,i}$
		considered in the proof of \lemmaref{Lemma: terminal bottle neck}.
		The path $P'$ is contained in $T'$, while all edges of $P_{j,i}$
		have weight at most $\alpha\cdot d_{G}\left(v_{i},v_{j}\right)$. }
\end{figure}

\begin{proof}
	Let $P':v_{i}=v^{\left(0\right)},v^{\left(1\right)},\dots,v^{\left(h\right)}=v_{j}$
	be the (unique) path in $T'$ between $v_{i}$ and $v_{j}$. Since
	$T'$ is a spanning tree of the $k$-vertex graph $G'$, it follows
	that $h\le k-1$. Observe also that for every index $a\in\left[h-1\right]$,
	by \corollaryref{cor:imlemented edge alpha approx}, the edge $w'\left(v^{\left(a\right)},v^{\left(a+1\right)}\right)=d_{T}\left(v^{\left(a\right)},v^{\left(a+1\right)}\right)$.
	Also, we next argue that $w'\left(v^{\left(a\right)},v^{\left(a+1\right)}\right)\le\alpha\cdot d_{G}\left(v_{i},v_{j}\right)$.
	Indeed, suppose for contradiction that $w'\left(v^{\left(a\right)},v^{\left(a+1\right)}\right)>\alpha\cdot d_{G}\left(v_{i},v_{j}\right)$.
	Let $P_{j,i}$ be a path between $v_{j}$ and $v_{i}$ in $G'$ such
	that all its edges have weight at most $\alpha\cdot d_{G}\left(v_{i},v_{j}\right)$.
	The existence of this path is guaranteed by \lemmaref{lem:Bottle neck in G'}.
	In particular, since $w'\left(v^{\left(a\right)},v^{\left(a+1\right)}\right)>\alpha\cdot d_{G}\left(v_{i},v_{j}\right)$,
	it follows that $\left\{ v^{\left(a\right)},v^{\left(a+1\right)}\right\} \notin P_{j,i}$.
	Consider the cycle $P'\circ P_{j,i}$ in $G'$. It is not necessarily
	a simple cycle, but since $\left\{ v^{\left(a\right)},v^{\left(a+1\right)}\right\} \notin P_{j,i}$,
	the edge $\left\{ v^{\left(a\right)},v^{\left(a+1\right)}\right\} $
	belongs to a simple cycle $C$ contained in $P'\circ P_{j,i}$. The
	heaviest edge of $C$ clearly does not belong to $P_{j,i}$, because
	the edge $\left\{ v^{\left(a\right)},v^{\left(a+1\right)}\right\} $
	is heavier than each of them. Hence the heaviest edge belongs to $P'$,
	but $P'\subseteq T'$. This is a contradiction to the assumption that
	$T'$ is an MST of $G'$. (See \figureref{fig:terminal bottle neck}
	for an illustration). Hence $d_{T}\left(v^{\left(a\right)},v^{\left(a+1\right)}\right)=w'\left(v^{\left(a\right)},v^{\left(a+1\right)}\right)\le\alpha\cdot d_{G}\left(v_{i},v_{j}\right)$.
	Finally,
	\begin{eqnarray*}
		d_{T}\left(v_{i},v_{j}\right) & \le & \sum_{a=0}^{h-1}d_{T}\left(v^{\left(a\right)},v^{\left(a+1\right)}\right)=\sum_{a=0}^{h-1}w'\left(v^{\left(a\right)},v^{\left(a+1\right)}\right)\\
		& \le & \sum_{a=0}^{h-1}\alpha\cdot d_{G}\left(v_{i},v_{j}\right)\le h\cdot\alpha\cdot d_{G}\left(v_{i},v_{j}\right)\le\alpha\cdot\left(k-1\right)\cdot d_{G}\left(v_{i},v_{j}\right).
	\end{eqnarray*}
	
\end{proof}

Next, we analyze the terminal distortion of $T$.
\begin{lemma}
	\label{lem:Gen Grp distortion Bound} The terminal distortion of $T$ is at most $k\cdot\alpha+\left(k-1\right)\alpha^{2}$.
\end{lemma}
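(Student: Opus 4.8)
The goal is to bound $d_T(v,u)$ for an arbitrary terminal $v \in K$ and arbitrary vertex $u \in V$, using the machinery built up about the super-graph $G'$, its MST $T'$, and the SLT-forest $F$. The natural route is to reduce the terminal-to-arbitrary-vertex case to the terminal-to-terminal case (\lemmaref{Lemma: terminal bottle neck}) by first hopping from $u$ to its own terminal.

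First I would fix $v = v_i \in K$ and $u \in V$, and let $V_j$ be the connected component of $F$ containing $u$, with $v_j$ its unique terminal. Since the tree-path from $v_i$ to $u$ in $T$ can be routed through $v_j$, we have $d_T(v_i,u) \le d_T(v_i,v_j) + d_T(v_j,u)$. For the second term, $d_T(v_j,u) = d_{T_j}(v_j,u) = d_F(K,u) \le \alpha \cdot d_G(K,u) \le \alpha \cdot d_G(u,v_i)$, using that $F$ has distortion $\alpha$ with respect to $K$ and that $v_i \in K$. For the first term, \lemmaref{Lemma: terminal bottle neck} gives $d_T(v_i,v_j) \le \alpha(k-1) \cdot d_G(v_i,v_j)$. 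Now I would bound $d_G(v_i,v_j)$ by the triangle inequality: $d_G(v_i,v_j) \le d_G(v_i,u) + d_G(u,v_j) = d_G(v_i,u) + d_G(K,u) \le d_G(v_i,u) + \alpha \cdot d_G(v_i,u) = (1+\alpha)\, d_G(v_i,u)$, where I used $d_G(u,v_j) = d_G(K,u) \le \alpha \cdot d_G(K,u)$... actually more simply $d_G(u,v_j)=d_G(K,u)\le d_G(u,v_i)$ since $v_i\in K$, giving the cleaner bound $d_G(v_i,v_j) \le 2 d_G(v_i,u)$.

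Putting the pieces together: $d_T(v_i,u) \le \alpha(k-1)\, d_G(v_i,v_j) + \alpha\, d_G(v_i,u)$. With $d_G(v_i,v_j) \le 2 d_G(v_i,u)$ this yields $d_T(v_i,u) \le (2(k-1)\alpha^2 + \alpha)\, d_G(v_i,u)$, which is a bit worse than the claimed $k\alpha + (k-1)\alpha^2$. To get the sharper constant I would be more careful: route the bottleneck path of \lemmaref{Lemma: terminal bottle neck} not between $v_i$ and $v_j$ but argue directly that there is a path in $G'$ from $v_i$ to $v_j$ whose $w'$-edges are each at most $\alpha \cdot d_G(v_i,u)$ (rerun the bottleneck argument with the shortest $v_i$--$u$ path in $G$ plus the final hop inside $V_j$, bounding each intermediate $w'(v^{(a)},v^{(a+1)})$ by $\alpha \cdot d_G(v_i,u)$ rather than $\alpha \cdot d_G(v_i,v_j)$), so that $d_T(v_i,v_j) \le (k-1)\alpha \cdot d_G(v_i,u)$ directly; combined with $d_T(v_j,u) \le \alpha\, d_G(v_i,u)$ and noting that the $v_i$--$v_j$ tree-path in $T'$ has at most $k-1$ edges but the path passes through at most $k$ terminals, a slightly tighter accounting gives the $k\alpha + (k-1)\alpha^2$ bound.

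\textbf{Main obstacle.} The delicate point is extracting the exact constant $k\alpha + (k-1)\alpha^2$ rather than a loose $O(k\alpha^2)$: this requires re-deriving a bottleneck-type statement with $d_G(v_i,u)$ (not $d_G(v_i,v_j)$) as the reference length, and carefully charging the final in-component hop $d_F(v_j,u)$ — which costs a single factor $\alpha$, not $\alpha^2$ — separately from the $k-1$ inter-terminal hops each of which, via \corollaryref{cor:imlemented edge alpha approx}, costs an extra $\alpha$ on top of a length already bounded by $\alpha \cdot d_G(v_i,u)$, hence $\alpha^2$ per hop. Keeping the two types of hop distinct in the summation is what separates the clean bound from the sloppy one; everything else is triangle inequality and invoking the lemmas already proved.
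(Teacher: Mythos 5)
Your high-level plan is exactly the paper's: decompose $d_T(v_i,u) \le d_T(v_i,v_j) + d_T(v_j,u)$ where $v_j$ is the terminal of $u$'s SLT-forest component, bound the second term by $\alpha\, d_G(v_i,u)$ via the forest property, and bound the first term via \lemmaref{Lemma: terminal bottle neck}. However, there is a genuine gap in the crucial step where you control $d_G(u,v_j)$: you assert $d_G(u,v_j) = d_G(K,u)$, which is false. The terminal $v_j$ is determined by which SLT-component $u$ falls into, and the SLT guarantee only says $d_F(K,u) = d_{T_j}(v_j,u) \le \alpha\, d_G(K,u)$; it does \emph{not} say that $v_j$ is the nearest terminal to $u$ in $G$. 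The correct chain is $d_G(u,v_j) \le d_F(u,v_j) = d_F(K,u) \le \alpha\, d_G(K,u) \le \alpha\, d_G(v_i,u)$, so one picks up an extra factor $\alpha$ here, and the triangle inequality gives $d_G(v_i,v_j) \le (1+\alpha)\, d_G(v_i,u)$ rather than $2\,d_G(v_i,u)$.

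Ironically, the bound $(1+\alpha)\,d_G(v_i,u)$ is the one you wrote first and then abandoned as "a bit worse." It is in fact exactly right: substituting $d_G(v_i,v_j)\le(1+\alpha)\,d_G(v_i,u)$ into $d_T(v_i,u)\le \alpha(k-1)\,d_G(v_i,v_j)+\alpha\,d_G(v_i,u)$ yields
\[
d_T(v_i,u)\le \alpha(k-1)(1+\alpha)\,d_G(v_i,u)+\alpha\,d_G(v_i,u)=(k\alpha+(k-1)\alpha^2)\,d_G(v_i,u),
\]
precisely the claim. You did not need to re-derive the bottleneck lemma with $d_G(v_i,u)$ as the reference scale, and the whole "Main obstacle" paragraph is chasing a problem that does not arise once the $(1+\alpha)$ factor is kept. (Note also that your "factor 2" route, even granting the false equality, would give $2\alpha(k-1)+\alpha$, not $2(k-1)\alpha^2+\alpha$; you introduced a spurious $\alpha$ in the arithmetic.) So: right skeleton, one incorrect equality in the key step, and an unforced retreat from the correct computation.
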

\begin{proof}
	For each terminal $v_{i}\in K$ and any vertex $u\in V_{j}$, for some $j \in \{1,2,\ldots,k\}$, it holds that
	\begin{eqnarray*}
		d_{T}\left(v_{i},u\right) & \leq & d_{T}\left(v_{i},v_{j}\right)+d_{T}\left(v_{j},u\right)\le\alpha\cdot\left(k-1\right)\cdot d_{G}\left(v_{i},v_{j}\right)+\alpha\cdot d_{G}\left(v_{i},u\right)\\
		& \le & \alpha\cdot\left(k-1\right)\cdot\left(d_{G}\left(v_{i},u\right)+d_{G}\left(u,v_{j}\right)\right)+\alpha\cdot d_{G}\left(v_{i},u\right)\\
		& \le & \alpha\cdot\left(k-1\right)\cdot\left(d_{G}\left(v_{i},u\right)+\alpha\cdot d_{G}\left(v_{i},u\right)\right)+\alpha\cdot d_{G}\left(v_{i},u\right)\\
		&=&\left(k\cdot\alpha+\left(k-1\right)\alpha^{2}\right)\cdot d_{G}\left(v_{i},u\right).
	\end{eqnarray*}
	The last inequality is because $d_{G}\left(v_{j},u\right)\le d_{F}\left(v_{j},u\right)=d_{F}\left(K,u\right)\le\alpha\cdot d_{G}\left(K,u\right)\le\alpha\cdot d_{G}\left(v_{i},u\right)$.

\end{proof}
Next, we analyze the lightness of $T$. A tree $T=\left(K,E',w'\right)$ is called a $Steiner\mbox{ }tree$
for a graph $G=\left(V,E,w\right)$ if $\left(1\right)$ $V\subseteq K$,
$\left(2\right)$ for any edge $e\in E\cap E'$, the edge has the
same weight in both $G$ and $T$, i.e. $w\left(e\right)=w'\left(e\right)$,
and $\left(3\right)$ for any pair of vertices $u,v\in V$ it holds
that $d_{T}\left(u,v\right)\ge d_{G}\left(u,v\right)$. The $minimum\ Steiner\ tree$
$T$ of $G$, denoted $SMT\left(G\right)$, is a Steiner tree of $G$
with minimum weight. It is well-known that for any graph $G$, $w\left(SMT\left(G\right)\right)\ge\frac{1}{2}MST\left(G\right)$.
(See, e.g., \cite{GilbertP68}, Section 10.) The next
lemma bounds the lightness of the tree $T$.
\begin{lemma}
	\label{lem:Gen Grp lighness Bound}The $lightness$ of $T$ is bounded
	by $\Psi\left(T\right)\le2\alpha+1+\frac{2}{\alpha-1}$.\end{lemma}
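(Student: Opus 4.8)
The plan is to decompose the tree $T$ into two parts, $F=\bigcup_i T_i$ and $R$, and bound the weight of each separately relative to $w(MST(G))$.

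\medskip

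First I would handle the forest $F$. Recall that $F$ is the SLT-forest obtained by adding the auxiliary vertex $v_K$ connected to all terminals by zero-weight edges, computing an $(\alpha,\beta)$-SLT rooted at $v_K$ in the modified graph $\hat{G}$, and deleting $v_K$. Since deleting $v_K$ only removes zero-weight edges, $w(F)$ equals the weight of that SLT, which by the SLT lightness guarantee is at most $\beta\cdot w(MST(\hat G))$ with $\beta = 1+\tfrac{2}{\alpha-1}$. Moreover $w(MST(\hat G)) = w(MST(G))$: adding $v_K$ with zero-weight edges to all of $K$ lets the MST of $\hat G$ connect $v_K$ and all of $K$ for free, and then it must span the remaining vertices exactly as $MST(G)$ does (here one uses that $G$ is connected, or argues componentwise). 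Hence $w(F) \le \left(1+\tfrac{2}{\alpha-1}\right)\cdot w(MST(G))$.

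\medskip

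Next I would bound $w(R)$. The set $R$ consists of the representative edges of the MST edges of $G'$, so $w(R) \le \sum_{\{v_i,v_j\}\in T'} w(e_{i,j})$. The key point is that for each MST edge $\{v_i,v_j\}$ of $G'$, the representative edge $e_{i,j}$ is a single original edge of $G$, and $w(e_{i,j}) \le w'(v_i,v_j) = d_{F\cup\{e_{i,j}\}}(v_i,v_j)$. So it suffices to bound $\sum_{\{v_i,v_j\}\in T'} w'(v_i,v_j) = w(T')$, the weight of the MST of $G'$. To do this I would compare $T'$ against a natural spanning tree of $G'$: contract each component $V_i$ of $F$ to its terminal $v_i$ in the Steiner/minimum tree structure. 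Concretely, take $SMT(G)$ (or even just $MST(G)$); since $w(SMT(G))\ge \tfrac12 w(MST(G))$ runs the wrong way, I would instead build a spanning tree $\tau$ of $G'$ directly: start from $MST(G)$, and for it consider the multigraph on $K$ obtained by contracting each $V_i$; its spanning tree $\tau$ uses at most $k-1$ edges of $MST(G)$ that cross between distinct components, and each such crossing edge $e$ of $MST(G)$ gives an edge of $G'$ whose $w'$-weight is at most $d_{F\cup\{e\}}(v_i,v_j) \le d_F(v_i,u_a) + w(e) + d_F(u_{a+1},v_j)$. Summing over the (at most $k-1$) crossing MST-edges, the $w(e)$ terms contribute at most $w(MST(G))$, and the $d_F$ terms are bounded using that $F$'s total weight is at most $\left(1+\tfrac{2}{\alpha-1}\right) w(MST(G))$ — more carefully, each $d_F(v_i,u_a)$ is a path inside one component $T_i$ and one should argue these paths, across all crossing edges, sum to at most $O(\alpha)\cdot w(F)$ or directly to $O(\alpha)\cdot w(MST(G))$ by charging to the edges of $F$. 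Since $T'$ is the \emph{minimum} spanning tree of $G'$, $w(T')\le w(\tau)$, giving $w(R)\le w(T') \le (2\alpha)\cdot w(MST(G))$ after collecting constants.

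\medskip

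Finally I would add the two bounds: $w(T) = w(F) + w(R) \le \left(1+\tfrac{2}{\alpha-1}\right)w(MST(G)) + 2\alpha\cdot w(MST(G))$, which rearranges to $\Psi(T) \le 2\alpha + 1 + \tfrac{2}{\alpha-1}$ as claimed. The main obstacle I anticipate is the bound on $w(T')$: getting the clean coefficient $2\alpha$ (rather than something like $\alpha\beta + \alpha$) requires being careful about how the intra-component $F$-paths attached to the $k-1$ crossing edges are charged — the natural move is to observe that each crossing MST-edge's representative path enters and leaves a component along the unique $F$-path to the component's terminal, and that these $F$-paths, summed over all crossing edges, can be charged to $w(F)$ with a constant (or even $\alpha$) factor, possibly using that the SLT structure bounds each $d_F(K,u)$ by $\alpha\cdot d_G(K,u)$ and that the crossing edges lie on disjoint portions. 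Alternatively, one may bypass this by directly invoking $w(SMT(G'))$ and the $\tfrac12 MST$ relation on $G'$ together with the observation $w(SMT(G')) \le w(F) + w(MST(G))$; I would try both routes and keep whichever yields the stated constant.
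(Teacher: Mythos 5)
Your bound on $w(F)$ is fine. (A small nit: $w(MST(\hat G))$ need not equal $w(MST(G))$ — adding the zero-weight star at $v_K$ can only \emph{decrease} the MST weight — but the inequality $w(MST(\hat G))\le w(MST(G))$ is all that is needed, so the conclusion $w(F)\le\bigl(1+\tfrac{2}{\alpha-1}\bigr)w(MST(G))$ stands.)

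The gap is in the bound on $w(R)$, and it is exactly the obstacle you flag. In your main route you bound $w'(v_i,v_j)$ for a crossing $MST(G)$-edge $e=\{u_a,u_{a+1}\}$ by $d_F(v_i,u_a)+w(e)+d_F(u_{a+1},v_j)$ and then try to charge the $d_F$ terms to $w(F)$. This does not work: a single terminal $v_i$ can be an endpoint of up to $k-1$ edges in the contracted spanning tree $\tau$, and the corresponding $F$-paths out of $V_i$ can all run over the same long path in $T_i$, so $\sum d_F(\cdot)$ can exceed $w(F)$ by a $\Theta(k)$ factor. Your fallback route also fails: $F$ together with crossing edges is a subgraph of $G$, so it only dominates $d_G$-distances; but $w'(v_i,v_j)=\min_e d_{F\cup\{e\}}(v_i,v_j)$ is defined via a \emph{single} non-$F$ edge, so $d_{G'}$ can be strictly larger than $d_G$ on $K\times K$, and that subgraph is therefore not a Steiner tree for $G'$. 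Consequently $w(SMT(G'))\le w(F)+w(MST(G))$ is not justified.

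The paper takes a genuinely different route. It introduces the complete graph $\tilde G$ on $K$ with weights $\tilde w=d_G$, takes $\tilde T=MST(\tilde G)$, and uses the bottleneck lemma (\lemmaref{lem:Bottle neck in G'}) to replace each edge $\tilde e=\{v_i,v_j\}$ of $\tilde T$ by a \emph{single} edge of $G'$ of $w'$-weight at most $\alpha\cdot\tilde w(\tilde e)=\alpha\cdot d_G(v_i,v_j)$, via an MST exchange that preserves connectivity; this gives a spanning tree $\hat T$ of $G'$ with $w'(\hat T)\le\alpha\,\tilde w(\tilde T)$, hence $w(R)\le w'(T')\le w'(\hat T)\le\alpha\,\tilde w(\tilde T)$. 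The factor $2$ then comes from the standard Steiner-tree bound on $\tilde G$: $\tilde w(\tilde T)\le 2\,\tilde w(SMT(\tilde G))\le 2\,w(MST(G))$. The crucial structural point you are missing is how the bottleneck lemma handles the $d_F$ contributions: for an edge $\{u_a,u_{a+1}\}$ on the shortest $v_i$--$v_j$ path in $G$, it bounds $d_F(v^{(a)},u_a)\le\alpha\,d_G(v_i,u_a)$ and $d_F(u_{a+1},v^{(a+1)})\le\alpha\,d_G(u_{a+1},v_j)$, so that the three pieces telescope to $\alpha\,d_G(v_i,v_j)$ — a per-pair bound on $w'$ that never requires summing $F$-paths across different crossing edges. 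That per-pair bound, not a global charge to $w(F)$, is what yields the clean $2\alpha$ coefficient.
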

\begin{proof}
	The main challenge is to bound $w\left(R\right)$. (Recall that $R$
	is the set of the $representative\ edges$ of $T'$.) Consider an
	edge $\left\{ v_{i},v_{j}\right\} \in T'$, and let $\left\{ u_{i},u_{j}\right\} $
	be its representative edge. Then $d_{G}\left(u_{i},u_{j}\right)\le w'\left(v_{i},v_{j}\right)$.
	Also, since $\left\{ v_{i},v_{j}\right\} \in T'\subseteq E'$, it
	follows that $w'\left(v_{i},v_{j}\right)=d_{G'}\left(v_{i},v_{j}\right)$.
	Hence $d_{G}\left(u_{i},u_{j}\right)\le d_{G'}\left(v_{i},v_{j}\right)$.
	Therefore, $w\left(R\right)\le w'\left(T'\right)$. Next we provide
	an upper bound for $w'\left(T'\right)$. Define the graph $\widetilde{G}$ as the complete graph on the vertex set $K$, with weights $\tilde{w}$ induced by $d_G$ (the shortest path distances in $G$). Let $\widetilde{T}$ be the $MST$ of $\widetilde{G}$. We
	build a new tree $\hat{T}$ by the following process:
	
	\begin{enumerate}
		\item Let $\hat{T}\leftarrow\widetilde{T}$;
		\item For each $\left\{ v_{i},v_{j}\right\} =\tilde{e}\in\tilde{T}$ :
		\begin{enumerate}
			\item Let $P_{\tilde{e}}$ be a path from $v_{i}$ to $v_{j}$ which consists
			of edges in $E'$, such that for each edge $e$ in $P_{\tilde{e}}$,
			$w'\left(e\right)\le\alpha\cdot d_{G}\left(v_{i},v_{j}\right)=\alpha\cdot\tilde{w}\left(\tilde{e}\right)$;
			(By \lemmaref{lem:Bottle neck in G'}, such a path exists);
			\item Let $e'\in P_{\tilde{e}}$ be an edge such that $(\hat{T}\setminus\left\{ \tilde{e}\right\} )\cup\left\{ e'\right\}$ is connected;
			\item Set $\hat{T}\leftarrow(\hat{T}\setminus\left\{ \tilde{e}\right\})\cup\left\{ e'\right\} $;
		\end{enumerate}
	\end{enumerate}

	In each step in the loop we replace an edge $\tilde{e}=\left\{ v_{i},v_{j}\right\} $
	from $\widetilde{T}$ by an edge $e'$ from $G'$ of weight $w'\left(e\right)\le\alpha\cdot\tilde{w}\left(\tilde{e}\right)$.
	Hence, the resulting tree $\hat{T}$ is a spanning tree of $G'$, and
	$w'\left(\hat{T}\right)\le\alpha\cdot\tilde{w}\left(\widetilde{T}\right)$.
	Since $T'$ is the $MST$ of $G'$, it follows that $w'\left(T'\right)\le w'\left(\hat{T}\right)$.
	The $MST$ of $G$ is a Steiner tree for $\tilde{G}$, so that $\tilde{w}\left(SMT\left(\tilde{G}\right)\right)\le w\left(MST\left(G\right)\right)$.
	Also, $\tilde{w}\left(MST\left(\tilde{G}\right)\right)=\tilde{w}\left(\tilde{T}\right)\le2\cdot\tilde{w}\left(SMT\left(\tilde{G}\right)\right)\le2\cdot w\left(MST\left(G\right)\right)$.
	We obtain that
	\begin{eqnarray*}
		w\left(R\right) & \le & w'\left(T'\right)\le w'\left(\hat{T}\right)\le\alpha\cdot\tilde{w}\left(\widetilde{T}\right)\le2\cdot\alpha\cdot w\left(MST\left(G\right)\right).
	\end{eqnarray*}

	Since $w\left(F\right)\le\left(1+\frac{2}{\alpha-1}\right)\cdot w\left(MST\left(G\right)\right)$,
	we conclude that
	\begin{eqnarray*}
		w\left(T\right) & = & w\left(R\cup F\right)=w\left(R\right)+w\left(F\right)\le\left(2\alpha+1+\frac{2}{\alpha-1}\right)\cdot w\left(MST\left(G\right)\right).
	\end{eqnarray*}
	
\end{proof}

\section{Lower Bounds on Distortion-Lightness Tradeoffs for a Single Tree}\label{sec:Lower-Bounds}

In this section we provide lower bounds on the terminal distortion and
lightness of $k$-terminal trees. For each lower bound we start with
showing a lower bound for graphs and proceed to showing a lower bound
for metric spaces. (Observe that the latter is more general.) The
lower bounds exhibit similar tradeoffs in both cases, while the analysis is significantly
simpler for graphs than for metric spaces.

\subsection{A lower bound for terminal distortion}

\begin{theorem}
	\label{thm:LowerBouTermstretchGrph}For any $k$ there is a weighted
	graph $G$ with $n=2k$ vertices and $k$ terminals such that any
	spanning tree has terminal distortion at least $2k-1$.\end{theorem}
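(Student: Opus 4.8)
The plan is to build an explicit tight instance with $2k$ vertices, $k$ of them terminals, and argue that no spanning tree can do better than stretch $2k-1$ on some terminal–vertex pair. First I would construct the graph as follows: take $k$ terminals $v_1,\dots,v_k$ forming a ``core'', and $k$ non-terminals $u_1,\dots,u_k$, where each $u_i$ is a private ``twin'' attached to $v_i$. The natural choice is to make the $v_i$'s pairwise adjacent with unit-weight edges (so the terminals form a clique $K_k$, with all terminal–terminal distances equal to $1$), and attach each $u_i$ to \emph{all} terminals by edges — to $v_i$ by a very short edge of weight $\epsilon$, and to every other $v_j$ ($j\neq i$) by an edge of weight slightly more than $1$, say $1+\epsilon$ (chosen so that $d_G(u_i,v_j)=1+\epsilon$ for $j\neq i$ and $d_G(u_i,v_i)=\epsilon$, while $d_G(u_i,u_j)$ stays bounded). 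The point of this gadget is that in $G$ every non-terminal $u_i$ is at distance $\Theta(1)$ from every terminal, so the all-pairs distances among $\{v_i\}\cup\{u_i\}$ are all $\Theta(1)$ up to the $\epsilon$ perturbation.

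Next I would analyze an arbitrary spanning tree $T$ of this $2k$-vertex graph. The key structural observation is that $T$ has $2k-1$ edges, and among them the edges actually available are: the $\binom{k}{2}$ clique edges of weight $1$, the $k$ ``short'' edges $\{u_i,v_i\}$ of weight $\epsilon$, and the $k(k-1)$ ``long'' edges $\{u_i,v_j\}$ of weight $1+\epsilon$. Consider the edges of $T$ incident to the leaves-or-near-leaves $u_i$: each $u_i$ must be connected, and if $T$ uses the long edge $\{u_i,v_j\}$ for some $j\neq i$ this already costs roughly $1$ versus the true distance $\epsilon$, giving stretch $\approx 1/\epsilon$, which is $\gg 2k-1$ for small $\epsilon$. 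So in any near-optimal tree every $u_i$ is attached via its short edge to $v_i$ — meaning the $u_i$'s are forced pendant vertices and the ``real'' combinatorial object is the induced spanning tree $T'$ on the terminal clique $K_k$, which is a spanning tree of $K_k$ with all edge weights $1$. Now pick two terminals $v_a, v_b$ that are at tree-distance exactly the diameter of $T'$; in a spanning tree of $K_k$ this distance is at least... well, I want it to be forced large, so here is where the construction needs care.

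\textbf{The main obstacle}, and the place I expect to spend the real effort, is forcing a \emph{single} terminal–vertex pair to suffer stretch $2k-1$ rather than merely forcing \emph{some} pair to have large stretch on average. A spanning tree of the terminal clique $K_k$ can have diameter as small as $2$ (a star), so the clique-of-terminals idea alone does not work — a star tree on the terminals gives terminal distortion only $O(1)$. The fix is to make the terminal metric itself a \emph{path-like} metric that no low-stretch tree can shortcut: I would instead take the terminals to lie on a path $v_1 - v_2 - \cdots - v_k$ with unit edges (so $d_G(v_i,v_j)=|i-j|$), add the non-terminal twins $u_i$ pendant to $v_i$ via weight-$\epsilon$ edges, and — crucially — add a few extra edges (or set weights) so that $G$ itself still has small diameter-to-stretch ratio but every spanning tree is forced to keep the path essentially intact. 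Then any spanning tree restricted to the $v_i$'s is a subtree of whatever terminal-edges survive; by choosing the non-path edges to be just slightly too heavy to be useful, the unique cheap way to connect $v_1$ to $v_k$ in $T$ runs through the whole path, costing $k-1$, while the pair $(v_1, u_k)$ — or $(u_1, u_k)$ after accounting that $u$'s are pendant — has true distance $\approx 1$ (via a direct long edge present in $G$ but absent from $T$), yielding stretch $\approx(k-1)+\text{(two pendant hops)}$. Balancing the pendant-edge contributions and the $\epsilon$'s so the ratio comes out to exactly $2k-1$ (and not $2k-3$ or $2k$) is the delicate bookkeeping; I would set the pendant weights and the ``long'' chord weights so that $d_T(u_1,u_k)\ge (k-1)+2\cdot 1 - o(1) = 2k-1-o(1)$ against $d_G(u_1,u_k)=1$, then take $\epsilon\to 0$. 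Once the right weights are pinned down, the proof is a short case analysis: any tree edge that is a ``chord'' or a ``wrong pendant'' already blows the stretch past $2k-1$, so the optimal tree is the path-plus-correct-pendants, for which the stretch of $(u_1,u_k)$ is computed directly.
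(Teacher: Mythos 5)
Your proposal does not reach a complete proof, and the gap is more than bookkeeping. You correctly reject the terminal-clique idea (a star spanning tree gives $O(1)$ distortion), but the replacement construction you sketch---a terminal path $v_1-\cdots-v_k$ with pendant twins $u_i$ and a few ``long'' chords of weight $\approx 1$ to collapse $d_G(u_1,u_k)$ to $\approx 1$---does not force distortion $2k-1$ even in its most natural completion. If a chord $\{u_1,u_k\}$ of weight $\approx 1$ is added, a spanning tree is free to \emph{include} that chord and drop one middle path edge $\{v_a,v_{a+1}\}$; the dropped pair then has tree distance $\approx (a-1)+1+(k-a-1)=k-1$ against graph distance $1$, and no other pair is worse, so the adversary achieves terminal distortion roughly $k-1$, not $2k-1$. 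You also never supply the promised ``short case analysis'' that \emph{every} spanning tree fails, which is the entire content of the theorem: a lower bound must rule out all trees, not just exhibit that the obvious one is bad. You acknowledge this yourself (``balancing \dots is the delicate bookkeeping'', ``I expect to spend the real effort''), so what you have is a research direction, not a proof, and the direction has a quantitative shortfall.

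The paper's construction is both much simpler and decisively different in spirit: take the $2k$-cycle $C_{2k}$ with unit edges, placing terminals and non-terminals alternately. A cycle has exactly one more edge than a tree, so \emph{every} spanning tree is obtained by deleting exactly one edge $\{v_i,v_{i+1}\}$, and since terminals alternate, one of $v_i,v_{i+1}$ is always a terminal. The deleted pair then has tree distance $2k-1$ against graph distance $1$, giving terminal distortion $\ge 2k-1$ for every spanning tree, with no case analysis needed. The structural fact you were missing is that a cycle makes the ``which edge gets dropped'' argument automatic and exhaustive; a path of terminals with pendants has no forced deletion and so leaves the adversary too much freedom. The matching metric-space lower bound (\theoremref{thmLwrBndTrmStchMet}) is then derived from the same $C_{2k}$ via a lemma of Gupta on spanning trees of cycle metrics, another reason the cycle is the right gadget here.
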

\begin{proof}
	Consider the cycle graph $C_{2k}$ such that the terminals and the
	non-terminal vertices alternate (as usual $V'$ stands for the set
	of terminals). Any spanning tree $T$ of $C_{2k}$ is obtained by
	removing a single edge $\left\{ v_{i},v_{i+1}\right\} $. Observe
	that either $v_{i}$ or $v_{i+1}$ is a terminal. Hence the terminal distortion is at least
	\[
	\frac{d_{C_{2k}\setminus\left\{ v_{i},v_{i+1}\right\} }\left(v_{i},v_{i+1}\right)}{d_{C_{2k}}\left(v_{i},v_{i+1}\right)}=2k-1.\]\end{proof}
\begin{remark}
	The same result for any $n>2k$ follows by adding additional $n-2k$
	non-terminal vertices to the graph, and connecting them all to an
	arbitrary vertex of $C_{2k}$. In the resulting graph every spanning
	tree has terminal distortion at least $2k-1$.
\end{remark}
Next we extend \theoremref{thm:LowerBouTermstretchGrph} to metric
spaces.
\begin{theorem}
	\label{thmLwrBndTrmStchMet}For any $k$ there is a metric space
	$\left(M,d\right)$ with $n=2k$ vertices and $k$ terminals such
	that any spanning tree for $\left(M,d\right)$ has terminal distortion
	at least $2k-1$.\end{theorem}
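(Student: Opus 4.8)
The plan is to transfer the cycle-graph lower bound of \theoremref{thm:LowerBouTermstretchGrph} to the metric setting by taking $(M,d)$ to be the shortest-path metric of the cycle $C_{2k}$ with alternating terminals and non-terminals, but with carefully chosen edge weights so that \emph{no} spanning tree on the point set $M$ can do better than distortion $2k-1$. The subtlety compared to the graph case is that a spanning tree for a metric space may use \emph{any} pair of points as an edge (with length equal to the metric distance), not only the edges of $C_{2k}$; so I must rule out shortcuts that exploit the triangle inequality.

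First I would fix the construction: let $M=\{v_0,v_1,\dots,v_{2k-1}\}$ arranged cyclically, with $v_i$ a terminal iff $i$ is even, and assign the cycle edge $\{v_i,v_{i+1}\}$ a weight $w_i$, choosing the $w_i$ to grow rapidly around the cycle (e.g.\ geometrically, $w_i\approx\Lambda^i$ for a huge parameter $\Lambda$), so that for any two points the unique ``short arc'' in $C_{2k}$ is overwhelmingly the geodesic and the metric distance $d(v_i,v_j)$ is essentially the weight of the heaviest edge on that arc. This makes the metric behave, up to lower-order terms, like an ultrametric-ish object in which cutting any single cycle edge forces a long detour. Then I would argue that any spanning tree $T$ of $(M,d)$, being a tree on $2k$ vertices, must ``omit'' connectivity across some cycle edge $\{v_i,v_{i+1}\}$ in the sense that the $T$-path between $v_i$ and $v_{i+1}$ must leave the short arc; since at least one of $v_i,v_{i+1}$ is a terminal, this already gives a pair containing a terminal whose $T$-distance is large.

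The key step is the counting/combinatorial argument that forces a bad terminal pair regardless of which edges $T$ uses. I would consider the $2k$ cyclically consecutive pairs $\{v_i,v_{i+1}\}$; since $T$ has only $2k-1$ edges and is acyclic, a standard argument (analogous to the graph case, but now applied to an auxiliary graph recording which consecutive pairs are ``spanned directly'' by $T$) shows there is some index $i$ for which the $T$-distance between $v_i$ and $v_{i+1}$ is at least the total weight of the remaining long arc, i.e.\ roughly $\sum_{j\ne i} w_j$, whereas $d(v_i,v_{i+1})$ is only $w_i$ (or the max of $w_i$ with a neighbor, up to the triangle inequality, which is still $O(w_i)$). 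With the geometric choice of weights and $\Lambda\to\infty$, the ratio $\frac{\sum_{j\ne i}w_j}{w_i}$ can be made to approach, but I want exactly $2k-1$, so I would instead take all $w_i$ \emph{equal} and then handle shortcuts separately, or use a perturbed-equal-weights argument: pick all cycle edges of weight $1$ plus tiny generic perturbations to make geodesics unique, observe $d(v_i,v_{i+1})=1$ (no shortcut helps since every other path has length $\ge 2$), and note that cutting the short arc between $v_i$ and $v_{i+1}$ in $T$ leaves a detour of $T$-length $\ge 2k-1$ along (a path through) the remaining vertices.

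The main obstacle I anticipate is exactly this ruling-out of metric shortcuts: in the graph lower bound one removes a single cycle edge and the detour is unambiguous, but here $T$ could, say, connect $v_i$ directly to $v_{i+3}$, and I must show such chords cannot simultaneously shorten all the ``dangerous'' terminal pairs. I expect the clean way around it is a potential/averaging argument over the $2k$ consecutive pairs: assign to each consecutive pair the unique $T$-path between its endpoints, observe these $2k$ paths cannot all be ``short'' (of $T$-length $<2k-1$) because their union would then contain a cycle in $T$ or would force $T$ to have too many edges, and conclude that the worst such pair — which contains a terminal — witnesses terminal distortion $\ge 2k-1$. Making this averaging/cycle-counting precise for an arbitrary metric tree, rather than a subgraph, is where the real work lies, and I would expect the authors to do it via a careful case analysis on the structure of $T$ restricted to $\{v_0,\dots,v_{2k-1}\}$ together with the (perturbed) unit-weight metric so that $d(v_i,v_j)$ equals the number of cycle-steps on the short arc.
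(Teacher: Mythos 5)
Your construction is exactly the paper's: the shortest-path metric of the unit-weight cycle $C_{2k}$ with terminals and non-terminals alternating. Where you diverge from the paper is in how the core lemma is discharged: the paper does not do the ``careful case analysis'' you anticipate, but simply cites Gupta (SODA 2001, Lemma 7.1), which shows that \emph{every} non-contractive spanning tree on the points of the $n$-cycle metric has distortion at least $n-1$, and moreover that the worst-distorted pair is a cycle edge of $C_n$. Since alternation makes every cycle edge terminal-incident, the theorem follows immediately. So the real content of the theorem is outsourced, and your proposal would need either to cite that lemma or to prove it.

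The gap in your proposal is that you sketch but do not give that proof, and the sketch you offer would not close it. You propose averaging over the $2k$ consecutive pairs $\{v_i,v_{i+1}\}$ and assert that ``these $2k$ paths cannot all be short $(<2k-1)$ because their union would then contain a cycle in $T$ or would force $T$ to have too many edges.'' Neither branch holds: the union of finitely many paths in a tree is again a subtree (hence acyclic), and a spanning tree has exactly $2k-1$ edges regardless of how long the paths are, so no edge count is violated. A straightforward averaging in this spirit (each tree edge is used by the $T$-paths of at least two cycle edges, since every $T$-cut is crossed by $C_{2k}$ an even number of times) only yields $\sum_i d_T(v_i,v_{i+1})\ge 2w(T)\ge 2(2k-1)$ and hence an \emph{average} distortion of $\Theta(1)$, far short of the $2k-1$ \emph{maximum} you need. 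Pinning down a single bad cycle edge with distortion $2k-1$ requires a genuinely more structural argument (and the additional claim that the worst pair is a cycle edge is not free either); this is precisely what Gupta's Lemma 7.1 supplies and what is missing from your writeup.
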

\begin{proof}
	Let $M$ be the metric generated by the cycle graph $C_{2k}$, where
	there are $k$ terminals and the terminals and the non-terminal vertices
	alternate. In \cite{DBLP:conf/soda/Gupta01}, Lemma 7.1, Gupta showed
	that the distortion of every spanning tree of $M$ is at least $n-1=2k-1$.
	Moreover, the maximal distortion is achieved by an original edge $e$
	of $C_{2k}$. One of the two endpoints of $e$ is a terminal. Hence
	the terminal distortion of any spanning tree for $M$ is at least $2k-1$.\end{proof}
\begin{remark}
	One can extend \theoremref{thmLwrBndTrmStchMet} to $n>2k$ by adding
	additional $n-2k$ non-terminal vertices to $M$ at a large distance
	from all the terminals. For a spanning tree $T$ of $M$, if some
	shortest path between a terminal $v$ and a vertex $u$ which is belongs
	to $C_{2k}$ uses at least one of the new vertices then the terminal
	distortion is too high. Hence if the terminal distortion is small, the tree
	restricted to $C_{2k}$ is connected. Hence by \theoremref{thmLwrBndTrmStchMet}
	it follows that the terminal distortion is at least $2k-1$.
\end{remark}

\subsection{A lower bound on the lightness in graphs}\label{sub:lwr bnd grph}
\label{sec:lightness}

In this section we prove a lower bound on the tradeoff between terminal
distortion and lightness of $k$-terminal trees.
\begin{theorem}
	\label{thm:Lwr Bnd Wght Grph}For any positive integer parameters
	$k$, $n$ and $\epsilon>0$ such that $k\le\frac{\epsilon}{2}n$,
	there exists a graph $G$ with $N=\left(n+1\right)k$ vertices, such
	that any spanning tree $T$ for $G$ with terminal distortion at most
	$\left(2k-1\right)\left(1+\frac{\epsilon}{k^{2}}\right)$ has lightness
	at least $\Omega\left(\frac{1}{\epsilon}\right)$.\end{theorem}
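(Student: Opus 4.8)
The plan is to construct $G$ as a union of $k$ gadgets, one centered at each terminal, where the gadgets are arranged so that a spanning tree with low terminal distortion is forced to keep almost all the ``expensive'' structure. The natural building block is the lower-bound construction of \theoremref{thm:LowerBouTermstretchGrph}: a cycle $C_{2k}$ on the $k$ terminals and $k$ auxiliary vertices, in which any spanning tree (i.e., any removal of a single cycle edge) produces terminal distortion exactly $2k-1$. To get the refined statement, I would take this cycle and ``inflate'' each of the $k$ edges (or each of the $k$ terminals) into a bundle of $n$ parallel-ish light paths, so that there are roughly $k$ places where the tree can afford to cut, but cutting in any of them costs a factor very close to $2k-1$. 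Concretely, attach to the cycle $C_{2k}$, at each terminal $v_i$, a set of $n$ additional vertices connected by cheap edges forming a structure whose MST weight is negligible but whose ``detour cost'' is large; this yields $N=(n+1)k$ vertices, matching the statement. The MST of $G$ will consist of the cheap edges plus one heavy cycle edge, of total weight $\Theta(1)$ after normalization, while the full graph has weight $\Theta(1/\epsilon)$ concentrated in the $n$ parallel connectors at each terminal.

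The key computation is the following tension. Suppose $T$ is a spanning tree of $G$ with terminal distortion at most $(2k-1)(1+\epsilon/k^2)$. I would argue that $T$ must contain the entire cycle $C_{2k}$ except for exactly one edge: if $T$ omitted two cycle edges, some terminal pair adjacent on the cycle would be routed the ``long way around'' through an additional detour, pushing the distortion strictly above $(2k-1)(1+\epsilon/k^2)$ — here the slack $\epsilon/k^2$ is calibrated precisely so that one skipped edge is tolerable but two are not. Similarly, within each terminal's bundle of $n$ connectors, $T$ can drop at most a bounded number (in fact it must keep a spanning subtree of each bundle, but the point is it is forced to keep $\Omega(n)$ of the heavy connectors, or else a non-terminal in that bundle would be far from its terminal). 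Summing the weight $T$ is forced to retain over all $k$ bundles gives $w(T)=\Omega(k\cdot n\cdot w_{\text{connector}})$, while $w(MST(G))=O(k\cdot w_{\text{connector}} + \text{heavy cycle edge})$. Choosing the connector weights and the heavy edge weight so that $n\cdot w_{\text{connector}} \approx (1/\epsilon)\cdot(\text{MST weight contribution per bundle})$ yields lightness $\Omega(1/\epsilon)$.

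The main obstacle I anticipate is the simultaneous calibration of three scales: the heavy cycle edges (which force terminal distortion $\approx 2k-1$ and whose sole surviving copy dominates or does not dominate the MST), the light cycle edges, and the $n$ connectors per terminal whose retained weight must be $\Omega(1/\epsilon)$ times the MST. In particular I must ensure that (a) keeping $\Omega(n)$ connectors per terminal is genuinely forced by the distortion bound on pairs (non-terminal in bundle $i$, terminal $v_i$) — so the connectors cannot all be short-circuited through the cycle — and (b) the MST of $G$ does not itself already include many connectors, i.e. the connectors must be individually heavier than the cheap tree edges so that a clever MST avoids them but the distortion constraint does not. The condition $k\le (\epsilon/2)n$ in the statement is presumably exactly what makes room for this: it guarantees $n$ is large enough (relative to $k$) that the per-bundle redundancy $n$ swamps the $O(k)$ cycle contribution by a factor $\Omega(1/\epsilon)$. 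After fixing the weights I would verify the two directions — a low-distortion tree has large weight, and the MST has small weight — by the routing/detour arguments sketched above, paralleling the metric-space extension in the spirit of \theoremref{thmLwrBndTrmStchMet}.
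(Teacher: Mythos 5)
Your high-level plan (a cyclic arrangement of terminals, inflated by bundles of heavy edges, and then a weight-vs-distortion calibration) is in the right family, but the core of the argument as you've sketched it has a gap, and the construction differs from the paper's in a way that matters.

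The paper's graph has no explicit cycle among the terminals: there are $k$ paths $P_0,\dots,P_{k-1}$, each of $n$ vertices joined by unit-weight edges, and for each $i$ both $v_i$ and $v_{i+1\,(\mathrm{mod}\,k)}$ are attached to \emph{every} vertex of $P_i$ by an edge of weight $w$. With $w=k/\epsilon$, the MST has weight $k(n-1)+(2k-1)w$. The proof then does \emph{not} argue that all $k$ bundles must retain $\Omega(n)$ heavy edges; it argues about one bundle only. Because $T$ is a tree, there is some index $i$ for which the unique $T$-path from $v_i$ to $v_{i+1}$ avoids $P_i$ (otherwise $T$ would contain a cycle through $v_0,\dots,v_{k-1}$); for that single index, $d_T(v_i,v_{i+1})\ge(k-1)\cdot 2w$, and a short calculation (\claimref{Clm: Grp Lwe Bwnd must edge}) shows that for \emph{every} $u\in P_i$ at least one of $(v_i,u)$, $(v_{i+1},u)$ must lie in $T$, or else some terminal pair has distortion strictly above $(2k-1)(1+\frac{1}{(2k-1)w})>(2k-1)(1+\epsilon/k^2)$. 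This alone forces $w(T)\ge nw$, and dividing by $w(\mathrm{MST})\le k(n-1)+(2k-1)w$ with $w=k/\epsilon$ and $k\le \epsilon n/2$ gives $\Psi(T)\ge\frac{1}{2\epsilon}$.

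The step in your proposal that does not hold up is ``summing the weight $T$ is forced to retain over all $k$ bundles gives $w(T)=\Omega(k\cdot n\cdot w_{\text{connector}})$.'' That would prove lightness $\Omega(k/\epsilon)$, which is more than the theorem claims, and it is in general false: for a bundle $P_j$ whose incident terminals $v_j,v_{j+1}$ are \emph{not} the split pair, the tree can keep a single heavy edge into $P_j$ together with the unit-weight internal path, giving distortion roughly $1+n/w$ for (terminal, bundle-vertex) pairs — which, when $n$ is near its minimum $2k/\epsilon$ and $w=k/\epsilon$, is $O(1)$ and well within the budget $(2k-1)(1+\epsilon/k^2)$. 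So at most one bundle is provably forced to be heavy, and the whole argument must hinge on isolating that bundle via the ``no cycle in $T$'' observation — a step your sketch does not have. Your ``one skipped cycle edge is tolerable but two are not'' calibration also does not correspond to what the slack $\epsilon/k^2$ is doing: it is tuned exactly so that missing a \emph{single} heavy connector $(v_i,u)$ or $(v_{i+1},u)$ to the bad bundle already overshoots. You correctly flagged the calibration as the main obstacle; the resolution is precisely the one-bundle argument above, and once you have it, the arithmetic closes with $w=k/\epsilon$.
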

\begin{proof}
	Consider the following graph $G$ on $N=k\cdot\left(n+1\right)$ vertices.
	There are $k$ terminals $V'=\left\{ v_{0},v_{1},...,v_{k-1}\right\} $
	in $G$. For every index $i\in\left[0,k-1\right]$, there is a $n$-vertex
	path $P_{i}$. All edges in these paths have unit weight. Also, for
	each index $i\in\left[0,k-1\right]$, both $v_{i}$ and $v_{i+1\left(\mbox{mod }k\right)}$
	are connected to every vertex in $P_{i}$ by edges of weight $w$,
	where $w>1$ is a parameter that will be determined later. (To simplify
	the notation we will henceforward write $v_{i+1}$ instead $v_{i+1\left(\mbox{mod }k\right)}$.
	Generally, all the arithmetic operations on indexes of vertices $v_{0},...,v_{k-1}$
	and paths $P_{0},...,P_{k-1}$ are performed modulo $k$.) See
	\figureref{fig:LowBndGrph} for an illustration.
	
	\begin{figure}[h]
		\begin{centering}
			\includegraphics[width=343pt]{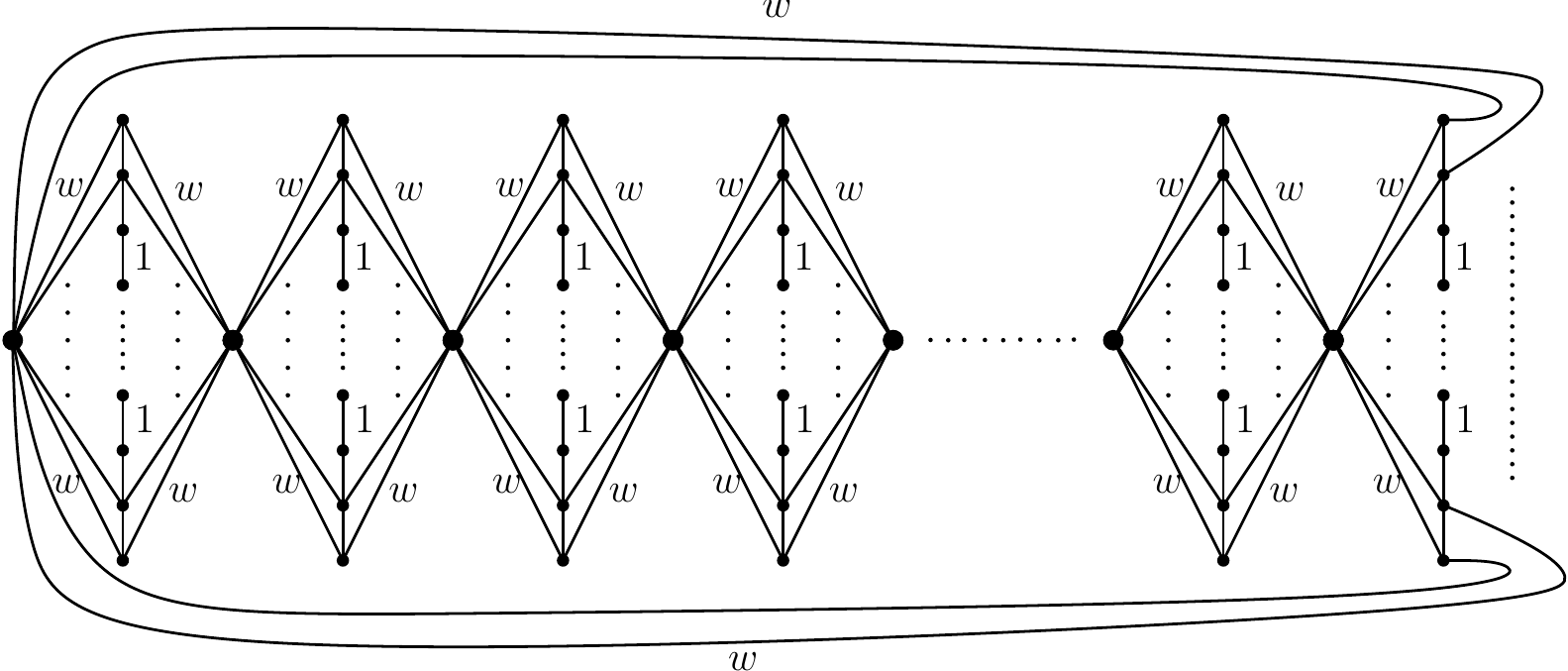}
			\par\end{centering}
		
		\caption{\label{fig:LowBndGrph}An illustration of the graph used in the proof
			of \theoremref{thm:Lwr Bnd Wght Grph}. The $k$ terminals are depicted
			by the big dots. The vertices $v_{i}$ and $v_{i+1}$ are connected
			to each vertex of an $n$-vertex path $P_{i}$ by edges of weight
			$w$.}
	\end{figure}

	Each spanning tree of $G$ contains $N-1=k+kn-1$ edges. There are
	$k\cdot\left(n-1\right)$ edges of unit weight, and all the other
	edges have weight $w$. Hence the weight of the MST is at least $k\left(n-1\right)\cdot1+\left(2k-1\right)\cdot w$.
	It is easy to verify that there actually exists a spanning tree of
	that weight.
	We will show that for any $\beta<\frac{1}{w(2k-1)}$, every tree with terminal distortion at most $\left(2k-1\right)\left(1+\beta\right)$ has weight at least $\Omega(nw)$.

	Let $T$ be a spanning tree for $G$ with terminal distortion at most
	$\left(2k-1\right)\left(1+\beta\right)$, for some $\beta>0$. There
	exists an index $i$ such that the path between $v_{i}$ and $v_{i+1}$
	in $T$ does not use vertices from the path $P_{i}$. (Otherwise there
	is a cycle in $T$ that passes through $v_{0},\dots,v_{k-1}$.) Without
	loss of generality assume that $i=0$. Therefore, $d_{T}\left(v_{0},v_{1}\right)\ge\left(k-1\right)\cdot2w$.
	\begin{claim}
		\label{Clm: Grp Lwe Bwnd must edge}For every vertex $u$ in $P_{0}$,
		if $\beta<\frac{1}{\left(2k-1\right)w}$ then either $\left(v_{0},u\right)$
		or $\left(v_{1},u\right)$ is an edge of $T$.\end{claim}
	\begin{proof}
		Without loss of generality the shortest path from $v_{0}$ to $u$
		goes through  $v_{1}$. Assume for contradiction that 	the edge $\left(v_{1},u\right)$
		does not belong to $T$. Then the terminal distortion is at least
		\begin{eqnarray*}
			\frac{d_{T}\left(v_{0},u\right)}{d_{G}\left(v_{0},u\right)} & = & \frac{d_{T}\left(v_{0},v_{1}\right)+d_{T}\left(v_{1},u\right)}{d_{G}\left(v_{0},u\right)}\ge\frac{\left(k-1\right)2w+w+1}{w}\\
			& = & 2k-1+\frac{1}{w}=\left(2k-1\right)\left(1+\frac{1}{\left(2k-1\right)w}\right)\\
			&>&\left(2k-1\right)\left(1+\beta\right),
		\end{eqnarray*}
		contradiction.
	\end{proof}
	By \claimref{Clm: Grp Lwe Bwnd must edge}, $\beta<\frac{1}{\left(2k-1\right)w}$
	implies that $w\left(T\right)\ge n\cdot w$. Hence for every $k$-terminal
	tree $T$ with terminal distortion at most $\left(2k-1\right)\left(1+\beta\right)$,
	with $\beta<\frac{1}{\left(2k-1\right)w}$, it holds that $\Psi\left(T\right)=\frac{w\left(T\right)}{w\left(MST\right)}\ge\frac{nw}{k(n-1)+\left(2k-1\right)w}$.
	We set $w=\frac{k}{\epsilon}$. Then the condition $\beta<\frac{1}{\left(2k-1\right)w}$
	translates to $\beta<\frac{\epsilon}{\left(2k-1\right)k}$. This condition
	implies that
	\[
	\Psi\left(T\right)=\frac{n\frac{k}{\epsilon}}{k(n-1)+\left(2k-1\right)\frac{k}{\epsilon}}=\frac{nk}{\epsilon k(n-1)+\left(2k-1\right)k}\ge\frac{nk}{\epsilon nk+2k^{2}}=\frac{n}{\epsilon n+2k}.
	\]
	As $k\le\frac{\epsilon}{2}n$ we obtain $\Psi\left(T\right)\ge\frac{1}{2\epsilon}$.
\end{proof}
Our algorithm from \theoremref{thm:k-trm-general-grph} guarantees
terminal distortion $\left(2k-1\right)\left(1+O\left(\epsilon\right)\right)$
and lightness $O\left(\frac{1}{\epsilon}\right)$. In the graph $G$
from the above proof lightness smaller than $\frac{1}{\epsilon}$
implies terminal distortion at least $\left(2k-1\right)\left(1+\frac{\epsilon}{\left(2k-1\right)k}\right)$.
Hence our bounds are tight for $k=O\left(1\right)$, but generally
there is a gap of $O\left(k^{2}\right)$ between the upper and lower
bounds.

In \appendixref{sub:lwr bnd metric} we extend this lower bound to metric spaces.

\def\APPENDMETRICLB
{
\section{A lower bound on the tradeoff between lightness and stretch for terminal trees in metric
	spaces}
\label{sub:lwr bnd metric}

In this section we extend our lower bound from \sectionref{sec:lightness} to the metric case.

The metric space case is similar to the graph case. We will use the
metric closure of the graph from the previous section. The main difficulty
is however to show that the non-graph edges do not help at all.

For a positive integer parameter $k$ and a $k-$sequence $n_{0},n_{1},\dots,n_{k-1}$
of positive integer numbers, we define a graph $G_{k,n_{0},n_{1},\dots,n_{k-1}}$.
The graph has $n=k+\sum_{i}n_{i}$ vertices and $k$ terminals. The
$k$ terminals are $V'=\left\{ v_{0},\dots,v_{k-1}\right\} $. For
every index $i\in\left[0,k-1\right]$ there is an $n_{i}$-vertex
path $P_{i}$. (We will use $P_{i}$ to denote both the path, and
the set of vertices in the path.) All edges in these paths have unit
weight. Also, for each $i\in\left[0,k-1\right]$, both $v_{i}$ and
$v_{i+1}$ (the index arithmetic is modulo $k$) are connected to
every vertex in $P_{i}$ by edges of weight $w$, for a parameter
$w>1$. Observe that the graph $G$ from \sectionref{sub:lwr bnd grph}
satisfies $G=G_{k,n_{0},n_{1},\dots,n_{k-1}}$ with $n_{0}=n_{1}=\cdots=n_{k-1}=n$.
Let $\overline{G}_{k,n_{0},\dots,n_{k-1}}$ denote the metric closure
of $G_{k,n_{0},\dots,n_{k-1}}$. We also write $G=G_{k,n_{0},\dots,n_{k-1}}$
and $\overline{G}=\overline{G}_{k,n_{0},\dots,n_{k-1}}$.
\begin{lemma}
	\label{lem: Gknnn}For any spanning tree $T$ of $\overline{G}_{k,n_{0},\dots,n_{k-1}}$
	there exists an index $i$ such that for any vertex $u\in P_{i}$
	either $\frac{d_{T}\left(v_{i},u\right)}{d_{\overline{G}_{k,n_{0},\dots,n_{k-1}}}\left(v_{i},u\right)}\ge2k-1$
	or $\frac{d_{T}\left(v_{i+1},u\right)}{d_{\overline{G}_{k,n_{0},\dots,n_{k-1}}}\left(v_{i},u\right)}\ge2k-1$. \end{lemma}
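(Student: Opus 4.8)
The plan is to mimic, in the metric-closure setting, the argument used for graphs in \theoremref{thm:Lwr Bnd Wght Grph}, while taking extra care that the additional edges present in $\overline{G}$ (the ones not in $G$, i.e.\ realized by non-trivial shortest paths) cannot be exploited to shortcut the relevant pairs. First I would recall that in $G=G_{k,n_0,\dots,n_{k-1}}$, the terminals form, in a certain sense, a cycle structure: the only ``long-distance'' connections among the $v_i$'s go through the paths $P_i$, and each $P_i$ connects only $v_i$ and $v_{i+1}$. I would make precise the key distance facts in $\overline{G}$: for a vertex $u\in P_i$ at unit-distance $a$ from the $v_i$-side endpoint of $P_i$, one has $d_{\overline G}(v_i,u)=\min\{w+a,\, w+(n_i-1-a),\,\dots\}$ — essentially $d_{\overline G}(v_i,u)$ and $d_{\overline G}(v_{i+1},u)$ are each $\Theta(w)$ (since all path-to-terminal edges have weight $w>1$ and $w$ will be chosen large compared to $n_i$), and for $j\notin\{i,i+1\}$, $d_{\overline G}(v_j,u)\ge d_{\overline G}(v_j,\{v_i,v_{i+1}\})+\Theta(w)$. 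The point is that every shortest path from a terminal $v_j$ to a vertex $u\in P_i$ either enters $P_i$ through $v_i$ or through $v_{i+1}$.

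Next I would set up the pigeonhole exactly as in the graph case: given a spanning tree $T$ of $\overline G$, consider the $k$ tree-paths connecting $v_i$ to $v_{i+1}$ for $i=0,\dots,k-1$ (indices mod $k$). If for \emph{every} $i$ this tree-path used some vertex of $P_i$, then gluing these $k$ tree-paths together would produce a closed walk through all the $v_i$'s that, combined with the path structure, forces a cycle in $T$ — contradiction. Hence there is an index $i$ (WLOG $i=0$) such that the $T$-path between $v_0$ and $v_1$ avoids $P_0$ entirely; consequently that $T$-path must route ``the long way around,'' giving $d_T(v_0,v_1)\ge (k-1)\cdot 2w$ — here I would need a small argument that any $v_0$–$v_1$ path in $\overline G$ avoiding $P_0$ has length at least $(k-1)\cdot 2w$ (go around through $P_1,\dots,P_{k-1}$, each leg costing $\ge 2w$; this is where the metric closure needs checking, since in $\overline G$ there are direct $v_j$–$v_{j'}$ edges, but each such edge has weight equal to the $G$-distance, which is still $\ge 2w$ for non-adjacent terminals). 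Then for any $u\in P_0$, look at whichever of $v_0,v_1$ is ``on the far side'' of $u$ along $P_0$, say $v_1$ (so $d_{\overline G}(v_0,u)=d_{\overline G}(v_0,v_1)$-ish through $u$, i.e.\ $d_{\overline G}(v_0,u)\le 2w + $ small, comparable to a single leg): the tree-distance $d_T(v_0,u)\ge d_T(v_0,v_1)-d_T(v_1,u)$ or, more directly, $d_T(v_0,u)=d_T(v_0,v_1)+d_T(v_1,u)$ if the tree-path to $u$ goes through $v_1$, and in any case one of the ratios $d_T(v_0,u)/d_{\overline G}(v_0,u)$ or $d_T(v_1,u)/d_{\overline G}(v_0,u)$ must be large. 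I would choose $w$ large enough relative to $\max_i n_i$ so that $d_{\overline G}(v_0,u)\le (1+o(1))\cdot 2w$ and $d_T(v_0,v_1)\ge (k-1)2w$ together force the ratio to be at least $2k-1$ up to lower-order terms, matching the statement.

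The cleanest way to nail the claim is probably a case split on the location of $u$ in $T$'s structure and on which of $v_0,v_1$ is the ``near'' terminal: if the unique $T$-path from $v_0$ to $u$ does not use the edge (or near-edge) from $v_1$ to $u$, then it must go all the way around, yielding ratio $\ge \frac{(k-1)2w+w+1}{w}= 2k-1+\frac1w$ exactly as in \claimref{Clm: Grp Lwe Bwnd must edge}; symmetrically for $v_1$. So the dichotomy ``one of the two ratios is $\ge 2k-1$'' follows because at least one of the two terminals cannot be ``short-cut'' to $u$ in $T$. I expect the main obstacle to be exactly the metric-closure bookkeeping flagged in the paper's own remark (``The main difficulty is however to show that the non-graph edges do not help at all''): I must verify that in $\overline G$ no shortcut edge lets a $v_0$–$v_1$ path avoiding $P_0$ be shorter than $(k-1)\cdot 2w$, and that $d_{\overline G}(v_j,u)$ for $u\in P_i$ and $j\ne i,i+1$ is genuinely $\ge 2k-1$ times $d_{\overline G}(v_i,u)$-scale (so that the chosen index $i$ really ``isolates'' $u$). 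Both reduce to computing shortest paths in $G_{k,n_0,\dots,n_{k-1}}$ carefully with $w$ chosen large, and to observing that the metric closure adds only edges whose weights already equal $G$-distances, hence cannot create new short routes.
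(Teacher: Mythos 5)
The central problem is that your argument tries to transplant the graph proof of \theoremref{thm:Lwr Bnd Wght Grph} into the metric closure, and the step you lean on — ``the $T$-path from $v_i$ to $v_{i+1}$ avoids $P_i$, hence $d_T(v_i,v_{i+1})\ge(k-1)\cdot 2w$'' — is simply false once the metric-closure edges are available. In $\overline{G}$ the single edge $\{v_0,v_1\}$ has weight $d_G(v_0,v_1)=2w$, so a tree containing that edge has $d_T(v_0,v_1)=2w$ while avoiding $P_0$ entirely. More dramatically, take $T$ to be the star centered at $v_0$: for any $i$, the $T$-path $v_i\to v_0\to v_{i+1}$ has length $O(w\cdot k)$ only when $v_i, v_{i+1}$ are antipodal, but for $i=1$ it is $6w$ regardless of $k$, far below $(k-1)\cdot 2w$ for $k>4$. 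So the intended pigeonhole-plus-detour argument does not yield the required lower bound, and the subsequent case split on which of $v_i,v_{i+1}$ is ``near'' $u$ never gets off the ground. (Incidentally you also misstate the basic distances: for $u\in P_0$ one has $d_{\overline G}(v_0,u)=w$ exactly — there is a direct edge — not $\approx 2w$.)

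You correctly anticipate that the hard part is ``showing the non-graph edges do not help,'' but you treat it as shortest-path bookkeeping, whereas it is the entire content of the lemma. The paper's proof proceeds by induction on $\sum_i n_i$: it fixes a spanning tree $T$ that is weight-minimal among those maximizing $\gamma(T)=\min_i(n_i-t_{T,i})$; shows (using weight-minimality) that from any vertex there is at most one edge type per hemisphere; and then does a three-way case analysis — (1) path-internal edges exist, (2) no path-internal edges but two vertices of some $P_i$ share a same-path neighbor, (3) neither, so every vertex has degree $\le 2$ and all $T$-paths are ``one-sided.'' Cases (1) and (2) contract a vertex of some $P_j$ and invoke the induction hypothesis on the smaller instance $\overline{G}_{k,\dots,n_j-1,\dots}$; Case (3) finds, via the one-sided-path structure, a pair $v_i,v_{i+1}$ with no length-$2w$ $T$-path, whence every $u\in P_i$ has distortion $\ge 2k-1$ from one of them. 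None of this machinery — the extremal choice of $T$, the hemisphere/one-sided structure, the induction that peels off one vertex at a time — appears in your outline, and there does not seem to be a way to avoid something like it: without it, trees like the star or a single heavy shortcut edge defeat the naive estimate. To repair the proposal you would need to replace the ``$T$-path avoids $P_i$ $\Rightarrow$ long detour'' step by an argument robust to arbitrary metric-closure chords; that is exactly the missing idea.
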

\begin{remark}
	Observe that for a graph spanning tree $T$ this lemma follows directly
	from the observation that there exists an index $i$ such that the
	path in $T$ from $v_{i}$ to $v_{i+1}$ does not contain vertices
	of $P_{i}$. Indeed, for this index $i$ and a vertex $u\in P_{i}$,
	either $\left(v_{i},u\right)\notin T$ or $\left(v_{i+1},u\right)\notin T$.
	In the former case $\frac{d_{T}\left(v_{i},u\right)}{d_{G_{k,n_{0},\dots,n_{k-1}}}\left(v_{i},u\right)}\ge2k-1$
	and in the latter $\frac{d_{T}\left(v_{i+1},u\right)}{d_{G_{k,n_{0},\dots,n_{k-1}}}\left(v_{i+1},u\right)}\ge2k-1$.
	The lemma proves this statement in a much greater generality, specifically,
	for $T$ being a spanning tree of the metric closure $\overline{G}_{k,n_{0},\dots,n_{k-1}}$
	of $G_{k,n_{0},\dots,n_{k-1}}$.\end{remark}
\begin{proof}
	\sloppy For a spanning tree $T$ of $\overline{G}_{k,n_{0},\dots,n_{k-1}}$
	and an index $i\in\left[0,k-1\right]$, denote by $t_{T,i}$ the number
	of vertices $u$ in $P_{i}$ such that $\frac{d_{T}\left(v_{i},u\right)}{d_{\overline{G}_{k,n_{0},\dots,n_{k-1}}}\left(v_{i},u\right)}\ge2k-1$
	or $\frac{d_{T}\left(v_{i+1},u\right)}{d_{\overline{G}_{k,n_{0},\dots,n_{k-1}}}\left(v_{i+1},u\right)}\ge2k-1$.
	We will show that for any spanning tree $T$ there exists an index
	$i$ such that $t_{T,i}=n_{i}$. For a tree $T$, define $\gamma\left(T\right)=\mbox{min}_{i}\left\{ n_{i}-t_{T,i}\right\} $.
	Observe that $\gamma\left(T\right)\ge0$. It suffices to prove that
	for every tree $T$, $\gamma\left(T\right)=0$. Also, let $\mu=\mbox{max}_{T}\left\{ \gamma\left(T\right)\right\} $,
	where the maximum is taken over all spanning trees of $\overline{G}_{k,n_{0},\dots,n_{k-1}}$.
	It is enough to show that $\mu=0$, i.e., that for every spanning
	tree $T$, $\gamma\left(T\right)=0$.
	
	For each vertex, we define the right and the left \emph{hemisphere}
	with respect to this vertex. Consider a supergraph, where we replace
	each path $P_{i}$ by a supernode $p_{i}$. We obtain the $2k-$cycle
	$C_{2k}$. The \emph{right hemisphere} of $v_{i}$ consists of all
	the vertices between $v_{i}$ and its antipodal vertex (in the path
	$p_{i},v_{i+1},p_{i+1},\dots$), while the \emph{left hemisphere}
	of $v_{i}$ consists of all the vertices in the other shortest path
	from $v_{i}$ to its antipodal vertex ($p_{i-1},v_{i-1},p_{i-2},\dots$).
	Formally, for a terminal $v_{i}$, if $k$ is even, let $R(v_{i})=\bigcup\left\{ P_{i},\left\{ v_{i+1}\right\} ,P_{i+1},\ldots,\left\{ v_{i+\frac{k}{2}}\right\} \right\} $
	and $L\left(v_{i}\right)=\bigcup\left\{ \left\{ v_{i+\frac{k}{2}}\right\} ,P_{i+\frac{k}{2}},\left\{ v_{i+\frac{k}{2}+1}\right\} ,\dots,P_{i-1}\right\} $
	denote the right and the left hemispheres with respect to $v_{i}$,
	respectively. Similarly, if $k$ is odd then $R(v_{i})=\bigcup\left\{ P_{i},\left\{ v_{i+1}\right\} ,\ldots,P_{i+\frac{k-1}{2}}\right\} $
	and $L\left(v_{i}\right)=\bigcup\left\{ P_{i+\frac{k-1}{2}},\left\{ v_{i+\frac{k+1}{2}}\right\} ,\dots,P_{i-1}\right\} $.
	In addition we define hemispheres for non-terminal vertices. For an
	index $i$, all the vertices in $P_{i}$ have the same hemispheres.
	For a vertex $u\in P_{i}$, if $k$ is even, then $R(u)=\bigcup\left\{ \left\{ v_{i+1}\right\} ,P_{i+1},\left\{ v_{i+2}\right\} ,\ldots,P_{i+\frac{k}{2}}\right\} $
	and $L\left(u\right)=\bigcup\left\{ P_{i+\frac{k}{2}},\left\{ v_{i+\frac{k}{2}+1}\right\} ,P_{i+\frac{k}{2}+1},\dots,\left\{ v_{i}\right\} \right\} $.
	If $k$ is odd then $R(u)=\bigcup\left\{ \left\{ v_{i+1}\right\} ,P_{i+1},\left\{ v_{i+2}\right\} ,\ldots,\left\{ v_{i+\frac{k+1}{2}}\right\} \right\} $
	and $L\left(u\right)=\bigcup\left\{ \left\{ v_{i+\frac{k+1}{2}}\right\} ,P_{i+\frac{k+1}{2}},\left\{ v_{i+\frac{k+1}{2}+1}\right\} ,\dots,\left\{ v_{i}\right\} \right\} $.
	(See  \figureref{fig: hemisphere ilus} for an illustration.)
	
	\begin{figure}
		\begin{centering}
			\includegraphics[width=343pt]{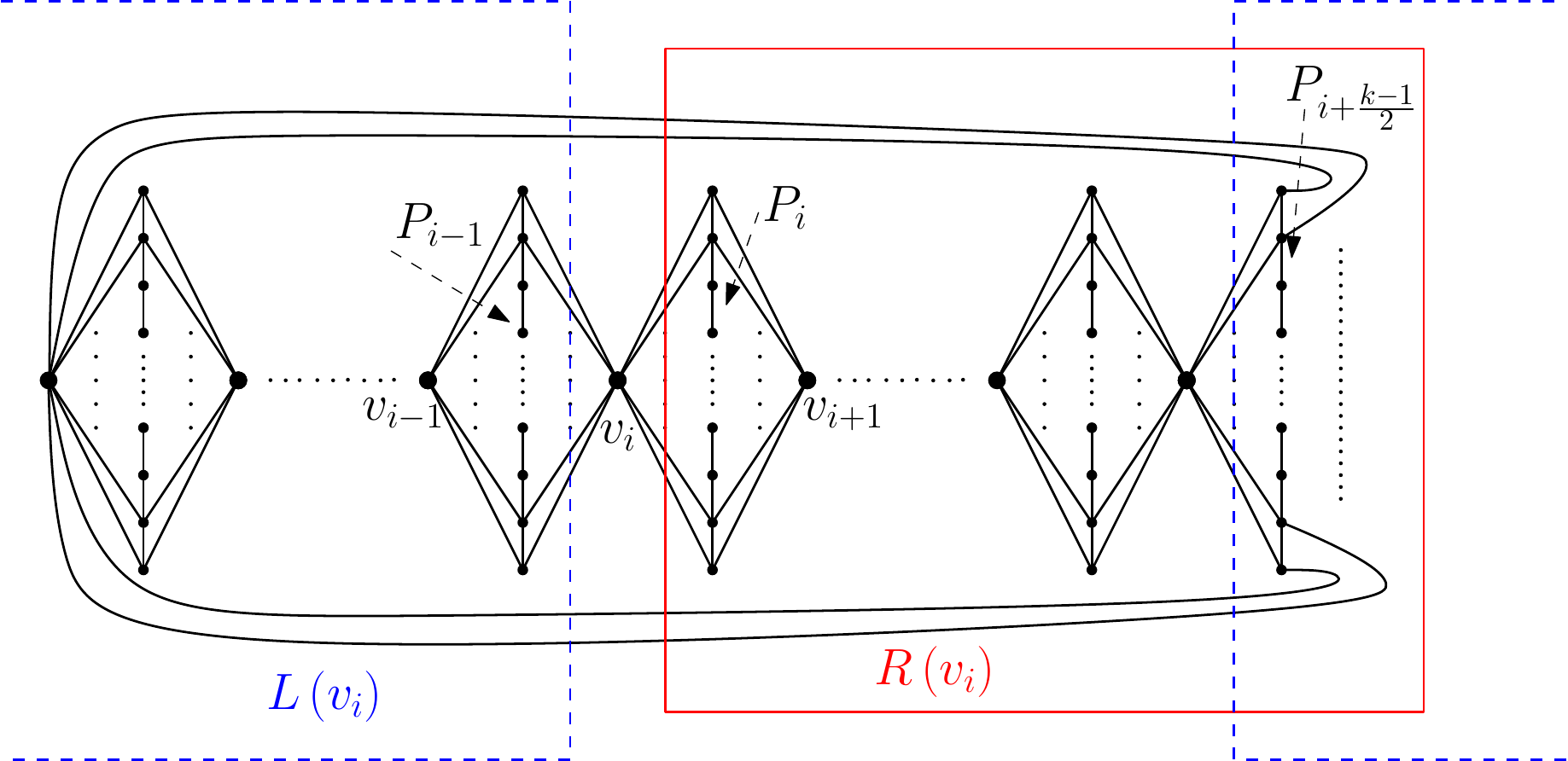}
			\par\end{centering}
		
		\caption{\label{fig: hemisphere ilus}An illustration of the partition of the
			graph to the right and the left hemispheres with respect to a terminal
			$v_{i}$. This example is for odd $k$. Note that the path $P_{i+\frac{k-1}{2}}$
			is both in the right and the left hemispheres.}
	\end{figure}

	For a vertex $u\in V$ and an index $i\in\left[0,k-1\right]$, we
	say that all the edges from $u$ to $P_{i}$, i.e., $\left\{ \left\{ u,z\right\} \ |\ z\in P_{i}\right\} ,$
	are of the \emph{same type}. In addition, the edge from $u$ to $v_{i}$
	has a unique type. (Note that each non-terminal vertex might have
	edges of $2k$ different types, while a terminal vertex can be incident
	to edges of $2k-1$ different types.) An edge $e=\left\{ u,z\right\} $
	such that $u,z$ are in the same path $P_{i}$ called an \textit{path-internal
		edge}. A vertex $u$ is called a \textit{same-path neighbor} of $a,b$
	in the tree $T$, if there are edges $\left\{ u,a\right\} ,\left\{ u,b\right\} $
	in $T$ and $a,b$ belong to the same path $P_{i}$. For a simple
	path $\pi$ in $\overline{G}_{k,n_{0},\dots,n_{k-1}}$ we say that
	$\pi$ is a \emph{one-sided} path if for any internal vertex $x$
	in $\pi$, the two edges $\left(x,y_{1}\right),\left(x,y_{2}\right)$
	which are incident on $x$ in $\pi$, connect $x$ to different hemispheres,
	i.e., e.g., $y_{1}$ is in the left hemisphere with respect to $x$,
	and $y_{2}$ is in the right one.
	
	The proof (that $\mu=0$) is by induction on $\sum_{i=1}^{k}n_{i}=A$.
	The base case where $\sum_{i=1}^{k}n_{i}=k$ (i.e., for all $i\in\left[0,k-1\right]$,
	$n_{i}=1$) follows by  \theoremref{thmLwrBndTrmStchMet}.
	
	The induction step: assume that the claim is true for $A$ and we
	will prove it for $A+1$. Let $T$ be some spanning tree of $\overline{G}_{k,n_{0},\dots,n_{k-1}}$
	with minimal weight among all the trees with $\gamma\left(T\right)=\mu$.
	By our assumption $\sum_{i=1}^{k}n_{i}=A+1$.
	\begin{claim}
		For any vertex $u$ (either terminal or a non-terminal one), if there
		exist two edges $\left\{ u,a\right\} $,$\left\{ u,b\right\} $ in
		$T$ that connect $u$ to two vertices $a,b\in R\left(u\right)$,
		then the two vertices $a$ and $b$ belong to the same path $P_{i}$.
		The same is true for the left hemisphere $L\left(u\right)$ of $u$
		as well. \end{claim}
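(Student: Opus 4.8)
The plan is to argue by contradiction, using the minimality of $T$: recall that $T$ was chosen of minimum weight among all spanning trees of $\overline{G}_{k,n_0,\dots,n_{k-1}}$ with $\gamma(T)=\mu$, where $\mu$ is the maximum of $\gamma$ over all spanning trees. So suppose $u$ has two tree edges $\{u,a\},\{u,b\}$ with $a,b\in R(u)$ lying in two distinct \emph{groups}, a group meaning one of the singletons $\{v_j\}$ or one of the paths $P_j$. Since $R(u)$ is a contiguous arc of the underlying cycle, its groups are linearly ordered by their position along that arc starting from $u$; without loss of generality $a$ is the one closer to $u$. I will produce a spanning tree $T'$ with $w(T')<w(T)$ and $\gamma(T')=\mu$, contradicting the choice of $T$. (Note that a common group of $a$ and $b$ cannot be a singleton, so ruling out the distinct-group case gives exactly that $a,b$ lie in a common path $P_i$.)

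The surgery is the obvious swap: delete $\{u,b\}$, which splits $T$ into the subtree $T_u$ containing $u$ — and hence $a$, since the edge $\{u,a\}$ survives — and the subtree $T_b$ containing $b$, and add the edge $\{a,b\}$ of the (complete) metric $\overline{G}_{k,n_0,\dots,n_{k-1}}$; since $\{u,a\},\{u,b\},\{a,b\}$ cannot all lie in a tree, $\{a,b\}\notin E(T)$ and $T'$ is again a spanning tree. The metric fact I would establish first is that $a$ lies on a shortest path from $u$ to $b$, i.e.\ $d_{\overline{G}}(u,a)+d_{\overline{G}}(a,b)=d_{\overline{G}}(u,b)$. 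This is a short case analysis according to whether $u$ is a terminal or an internal vertex and whether $a,b$ are terminals or path vertices: in $G_{k,n_0,\dots,n_{k-1}}$ every relevant shortest path hops from terminal to neighbouring terminal along the arc at cost $2w$ per hop, with at most one cost-$w$ detour into a path, and because $b\in R(u)$ the route that stays within the arc $R(u)$ realizes $d_{\overline{G}}(u,b)$ (no route around the cycle or past the antipode is shorter). As $u\ne a$, this identity also gives $d_{\overline{G}}(a,b)=d_{\overline{G}}(u,b)-d_{\overline{G}}(u,a)<d_{\overline{G}}(u,b)$, so $w(T')=w(T)-d_{\overline{G}}(u,b)+d_{\overline{G}}(a,b)<w(T)$.

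It remains to show $\gamma(T')\ge\mu$, which then forces $\gamma(T')=\mu$. For this I would prove $d_{T'}(x,y)\le d_T(x,y)$ for all $x,y$. When $x,y$ lie on the same side of the deleted edge this is an equality. When $x\in T_u$ and $y\in T_b$, $d_T(x,y)=d_{T_u}(x,u)+d_{\overline{G}}(u,b)+d_{T_b}(b,y)$ and $d_{T'}(x,y)=d_{T_u}(x,a)+d_{\overline{G}}(a,b)+d_{T_b}(b,y)$; since $\{u,a\}$ is an edge of $T_u$, $d_{T_u}(x,a)\le d_{T_u}(x,u)+d_{\overline{G}}(u,a)$, and together with the betweenness identity this gives $d_{T'}(x,y)-d_T(x,y)\le d_{\overline{G}}(u,a)+d_{\overline{G}}(a,b)-d_{\overline{G}}(u,b)=0$. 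Hence for every $m$ and every $z\in P_m$, both $d_{T'}(v_m,z)/d_{\overline{G}}(v_m,z)$ and $d_{T'}(v_{m+1},z)/d_{\overline{G}}(v_{m+1},z)$ are at most their values in $T$, so $t_{T',m}\le t_{T,m}$ for every $m$, and $\gamma(T')=\min_m(n_m-t_{T',m})\ge\mu$. The claim for the left hemisphere $L(u)$ follows from the reflection symmetry of $\overline{G}_{k,n_0,\dots,n_{k-1}}$ interchanging left and right.

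The step I expect to be the main obstacle is the first half of the middle paragraph: verifying $d_{\overline{G}}(u,a)+d_{\overline{G}}(a,b)=d_{\overline{G}}(u,b)$ uniformly across the case split, and in particular handling the antipodal end of $R(u)$ (and the slightly different shapes of $R(u)$ and $L(u)$ for $k$ even versus $k$ odd), where the two ways around the cycle can have equal length. Once this identity is in hand, the weight drop and the distance-monotonicity of the swap follow immediately from the triangle inequality in $T$.
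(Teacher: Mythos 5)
Your proposal is correct and follows essentially the same route as the paper's proof: delete the heavier edge $\{u,b\}$, add $\{a,b\}$, observe that $d_{\overline G}(u,a)+d_{\overline G}(a,b)=d_{\overline G}(u,b)$ forces a strict weight drop while tree distances (and hence distortions) only decrease, so $\gamma(T')\ge\mu$, contradicting the minimal-weight choice of $T$. The only difference is one of rigor — the paper asserts the weight decrease and the distance non-increase in a sentence each, whereas you spell out the betweenness identity and the triangle-inequality step in $T_u$; those details (including the antipodal-group check you flag) are sound.
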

	\begin{proof}
		Suppose for contradiction that there exist two edges $\left\{ u,a\right\} $,$\left\{ u,b\right\} $
		in $T$ as above (i.e., with $a,b\in R\left(u\right)$) and such that
		these two edges have different type. (In other words, $d_{\overline{G}}\left(u,a\right)\ne d_{\overline{G}}\left(u,b\right)$.)
		Without loss of generality $d_{\overline{G}}\left(u,a\right)<d_{\overline{G}}\left(u,b\right)$.
		We construct a new tree $T'$ by replacing $\left\{ u,b\right\} $
		by $\left\{ a,b\right\} .$ This change decreases the weight of the
		tree. Note that any path $\pi$ in $T$ that uses the edge $\left\{ u,b\right\} $
		can be replaced by a similar path that uses the edges $\left\{ u,a\right\} $
		and $\left\{ a,b\right\} $ instead of $\left\{ u,b\right\} $. Clearly,
		the length of the path does not change. Hence for every index $i$,
		the value $t_{T,i}$ does not increase. Hence $\gamma\left(T\right)=\min_{i}\left\{ n-t_{T,i}\right\} $
		does not decrease, and since $T$ is a tree with $\gamma\left(T\right)=\mu=\max_{T''}\left\{ \gamma\left(T''\right)\right\} $,
		it follows that $\gamma\left(T'\right)=\gamma\left(T\right)=\mu$.
		This is a contradiction to the minimality of the weight of $T$ among
		trees with $\gamma\left(\right)$ value equal to $\mu$. Obviously,
		the same argument also applies for $a,b\in L\left(u\right)$.
	\end{proof}
	We now continue proving \lemmaref{lem: Gknnn}. The rest of the proof
	splits into a number of cases which are characterized by existence
	of certain edges in the tree $T$.
	
	\begin{figure}
		\begin{centering}
			\includegraphics[scale=0.7]{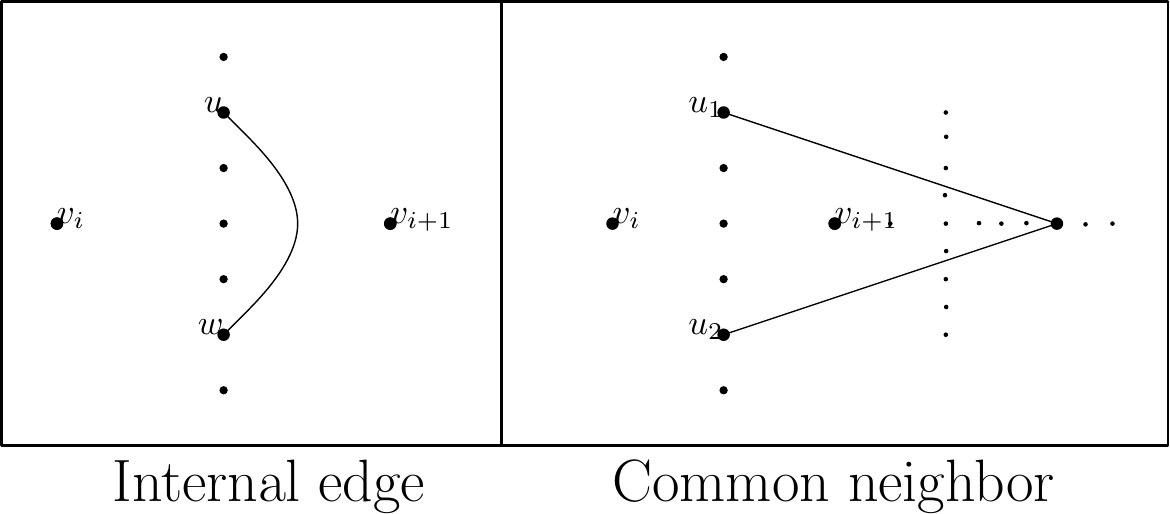}
			\par\end{centering}
		
		\caption{An illustration of the special edges that we use in the proof of \lemmaref{lem: Gknnn}.}
	\end{figure}
	\begin{figure}
		\begin{centering}
			\includegraphics[scale=0.7]{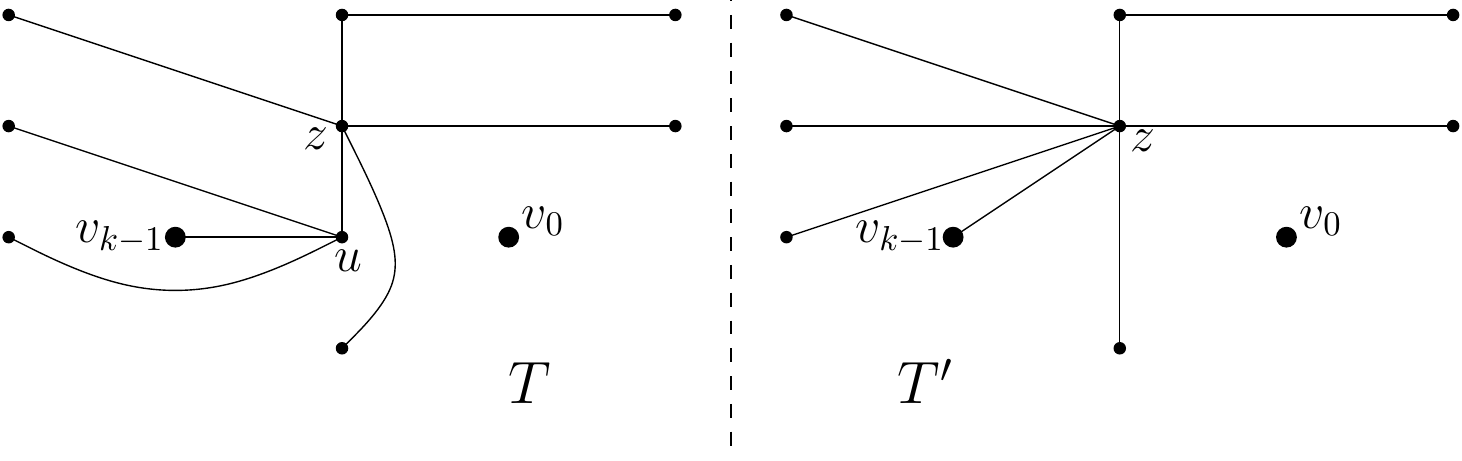}
			\par\end{centering}
		
		\caption{\label{fig:deliting edge}An illustration of deleting a path-internal
			edge $\left\{ u,z\right\} $ in the tree $T$.}
	\end{figure}
	
	\begin{itemize}
		\item Case 1: There is a path-internal edge in $T$. Assume there is a path-internal
		edge between vertices in a path $P_{j}$, for some index $j\in\left[0,k-1\right]$.
		Note that if all the vertices which have path-internal edges in $P_{j}$
		have more than one path-internal edge incident on them in $T$, then
		the graph induced by those vertices contains a cycle. However, this
		graph is a subgraph of $T$, contradiction. Hence there is a vertex
		$u\in P_{j}$ such that $u$ has exactly one path-internal edge $\left\{ u,z\right\} $
		incident on $u$ in $T$. Assume without loss of generality that $j=0$.
		Delete the vertex $u$ and replace each edge $\left\{ u,a\right\} $
		incident on $u$ in $T$ by an edge $\left\{ z,a\right\} $. (Note
		that $u$ has only one path-internal edge in $T$. Thus is the edge
		$\left\{ u,z\right\} $. Therefore vertex $a$ as above do not belong
		to $P_{j}=P_{0}$.) Denote the resulting tree by $T'$. Note that
		$T'$ is a spanning tree for $\overline{G}_{k,n_{0}-1,n_{1},\dots,n_{k-1}}$.
		(See \figureref{fig:deliting edge} for an illustration.) ($T'$
		is obviously connected and we reduce the number of edges by exactly
		1, hence it is a spanning tree of $\overline{G}_{k,n_{0}-1,n_{1},\dots,n_{k-1}}$.)
		Note that for every two vertices $a,b$ in $\overline{G}_{k,n_{0}-1,n_{1},\dots,n_{k-1}}$,
		we have $d_{T'}\left(a,b\right)\le d_{T}\left(a,b\right)$, to see
		this consider the shortest path $\pi$ from $a$ to $b$ in $T$.
		If the vertex $u$ is not taking part in $\pi$ then the claim is
		trivial, hence we assume that $u$ is taking part. If the path $\pi$
		include the edge $\left\{ u,z\right\} $, then obviously $\pi\backslash\left\{ \left\{ u,z\right\} \right\} $
		is a path from $a$ to $b$ in $T'$ and therefore $d_{T'}\left(a,b\right)\le d_{T}\left(a,b\right)$.
		Otherwise, the path $\pi$ does not include the edge $\left\{ u,z\right\} $,
		set $\pi=aw_{1}\dots w_{r}uw_{r+1}\dots w_{m}b$ where $w_{r},w_{r+1}\notin P_{0}$
		because $u$ has only one path-internal edge to $z$, therefore $\pi'=aw_{1}\dots w_{r}zw_{r+1}\dots w_{m}b$
		is path from $a$ to $b$ in $T'$ of length smaller or equal then
		$\pi$ in $T$ ($w_{r},w_{r+1}\notin P_{0}$ implies $d_{T}\left(w_{r},u\right)=d_{T'}\left(w_{r},z\right)$
		and $d_{T}\left(u,w_{r+1}\right)=d_{T'}\left(z,w_{r=1}\right)$, and
		all the other edges remain in $T$ with smaller or equal weight.)
		Moreover, for each $q\in P_{i}$, $d_{\overline{G}_{k,n_{0},\dots,n_{k-1}}}\left(v_{i},q\right)=d_{\overline{G}_{k,n_{0},\dots,n_{k-1}}}\left(v_{i+1},q\right)=d_{\overline{G}_{k,n_{0}-1,n_{1},\dots,n_{k-1}}}\left(v_{i},q\right)=d_{\overline{G}_{k,n_{0}-1,n_{1},\dots,n_{k-1}}}\left(v_{i+1},q\right)=w$,
		therefore $d_{T}\left(v_{i},q\right)\ge d_{T'}\left(v_{i},q\right)$
		and $d_{T}\left(v_{i+1},q\right)\ge d_{T'}\left(v_{i+1},q\right)$
		implies $\max\left\{ \frac{d_{T'}\left(v_{i},q\right)}{\overline{G}_{k,n_{0},\dots,n_{k-1}}\left(v_{i},q\right)},\frac{d_{T'}\left(v_{i+1},q\right)}{\overline{G}_{k,n_{0},\dots,n_{k-1}}\left(v_{i+1},q\right)}\right\} \ge\max\left\{ \frac{d_{T'}\left(v_{i},q\right)}{\overline{G}_{k,n_{0}-1,n_{1},\dots,n_{k-1}}\left(v_{i},q\right)},\frac{d_{T'}\left(v_{i+1},q\right)}{\overline{G}_{k,n_{0}-1,n_{1},\dots,n_{k-1}}\left(v_{i+1},q\right)}\right\} $.
		Denote $n'_{0}=n_{0}-1$, $n'_{i}=n_{i}$, for every $i\in\left[1,k-1\right].$
		By the induction hypothesis, there is an index $i$ such that $t_{T',i}=n'_{i}$.
		If $i\ne0$ then $t_{T',i}=n'_{i}$, and in particular by the previews
		argument $t_{T,i}\ge t_{T',i}$. Since, by definition, $t_{T,i}\le n_{i}$,
		it follows that $t_{T,i}=n_{i}$, as required. If $i=0$ , then $t_{T',0}=n'_{0}=n_{0}-1$.
		A vertex $a\in P_{0}\backslash\left\{ u,z\right\} $, contributes
		1 to $t_{T',0}$, and hence contributes 1 also to $t_{T,0}$. Moreover,
		$d_{T}\left(v_{0},u\right),d_{T}\left(v_{0},z\right)\ge d_{T'}\left(v_{0},z\right)$,
		and analogously $d_{T}\left(v_{1},u\right),d_{T}\left(v_{1},z\right)\ge d_{T'}\left(v_{1},z\right)$.
		Since $z$ contributes 1 to $t_{T',0}$ it follows that $u$ and $z$
		each contribute 1 to $t_{T,0}$ and so $t_{T,0}=n_{0}$, completing
		the proof of case 1.
		
		\item Case 2: There are no path-internal edges in $T$, and there is a path
		$P_{i}$, with two vertices $u_{1},u_{2}\in P_{i}$ which have a same-path
		neighbor $z\notin P_{i}$. Without loss of generality $i=0$. Similarly
		to the previous case, delete $u_{2}$ and replace any edge of the
		form $\left\{ u_{2},a\right\} $ by edge of the form $\left\{ u_{1},a\right\} $.
		Denote the resulting tree by $T'$. Note that $T'$ is a spanning
		tree for $\overline{G}_{k,n_{0}-1,n_{1},\dots,n_{k-1}}$. ($T'$ is
		obviously connected and we reduce the number of edges by exactly 1.)
		For each two vertices $a,b$ in $\overline{G}_{k,n_{0}-1,n_{1},\dots,n_{k-1}}$,
		$d_{T'}\left(a,b\right)\le d_{T}\left(a,b\right)$, to see this, consider
		the shortest path $\pi$ from $a$ to $b$ in $T$. If the vertex
		$u_{2}$ does not taking part in $\pi$ then the claim is trivial.
		Otherwise, set $\pi=aw_{1}\dots w_{r}u_{2}w_{r+1}\dots w_{m}b$, therefore
		$\pi'=aw_{1}\dots w_{r}u_{2}w_{r+1}\dots w_{m}b$ is path from $a$
		to $b$ in $T'$ of equal length to the path $\pi$ in $T$ (There
		are no path-internal edges hence $d_{T}\left(w_{r},u_{1}\right)=d_{T'}\left(w_{r},u_{2}\right)$
		and $d_{T}\left(u_{1},w_{r+1}\right)=d_{T'}\left(u_{2},w_{r=1}\right)$,
		and all the other edges remain with equal weight.). As in previews
		case, this argument implies for every $q\in P_{i}$, $\max\left\{ \frac{d_{T'}\left(v_{i},q\right)}{\overline{G}_{k,n_{0},\dots,n_{k-1}}\left(v_{i},q\right)},\frac{d_{T'}\left(v_{i+1},q\right)}{\overline{G}_{k,n_{0},\dots,n_{k-1}}\left(v_{i+1},q\right)}\right\} \ge\max\left\{ \frac{d_{T'}\left(v_{i},q\right)}{\overline{G}_{k,n_{0}-1,n_{1},\dots,n_{k-1}}\left(v_{i},q\right)},\frac{d_{T'}\left(v_{i+1},q\right)}{\overline{G}_{k,n_{0}-1,n_{1},\dots,n_{k-1}}\left(v_{i+1},q\right)}\right\} $.
		Denote $n'_{0}=n_{0}-1$, $n'_{i}=n_{i}$, for every $i\in\left[1,k-1\right].$
		By the induction hypothesis, there is a path $P_{i}^{'}$ such that
		$t_{T',i}=n'_{i}$. If $i\ne0$ then obviously $t_{T,i}=n'_{i}$.
		Otherwise, $i=0$, vertex $a\in P_{0}\backslash\left\{ u_{2}\right\} $,
		contributes 1 to $t_{T',0}$, and hence contributes 1 also to $t_{T,0}$.
		In particular $\max\left\{ d_{T}\left(u_{2},v_{0}\right),d_{T}\left(u_{2},v_{k-1}\right)\right\} \ge\max\left\{ d_{T'}\left(u_{1},v_{0}\right),d_{T'}\left(u_{1},v_{k-1}\right)\right\} $,
		hence $u_{2}$ also contributes 1 to $t_{T',0}$, we get $t_{T,0}=t_{T',0}+1=n'_{0}+1=n_{0}$
		as required.
		\item Case 3: There are no path-internal edges, and no same-path neighbors
		in $T$. In particular, any vertex has at most two edges: one edge
		to his right and left hemispheres (There is at most one type in each
		hemisphere, because two edges of the same type implies same-path neighbor.)
		Hence any path $\pi$ in $T$ is one-sided. There have to be two terminals
		$v_{i},v_{i+1}$ such that there is no path between them of length
		$2w$ (otherwise, there is a cycle). Hence for all $u\in P_{i}$ there
		are no edges to both $v_{i},v_{i+1}$. If without loss of generality
		$\left\{ u,v_{i}\right\} \notin T$, then the shortest one-sided path
		between $u$ and $v_{i}$ is of length at least $2k-1$, implying that
		the terminal distortion is at least $2k-1$, hence $t_{T,i}=0$ as required!
	\end{itemize}
\end{proof}
\begin{theorem}
	\label{thm: metric weight bound}For any positive integer parameters
	$k$, $n$ and $\epsilon>0$ such that $k\le\frac{\epsilon}{2}n$,
	any spanning tree $T$ of $\overline{G}_{k,n,n,\dots,n}$ where $w=\frac{k}{\epsilon}$,
	with terminal distortion at most $\left(2k-1\right)\left(1+\frac{\epsilon}{k^{2}}\right)$
	has lightness at least $\Omega\left(\frac{1}{\epsilon}\right)$. \end{theorem}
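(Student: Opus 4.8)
The plan is to reduce the lightness bound to a weight estimate and to derive the latter from \lemmaref{lem: Gknnn}. Observe first that $\overline{G}:=\overline{G}_{k,n,\dots,n}$ is the metric closure of $G_{k,n,\dots,n}$, so $w(MST(\overline{G}))=w(MST(G_{k,n,\dots,n}))=k(n-1)+(2k-1)w$, precisely the quantity appearing in the proof of \theoremref{thm:Lwr Bnd Wght Grph}. Hence it suffices to show that every spanning tree $T$ of $\overline{G}$ with terminal distortion at most $(2k-1)(1+\epsilon/k^2)$ has $w(T)=\Omega(nw)=\Omega(nk/\epsilon)$; the bound $\Psi(T)=\Omega(1/\epsilon)$ then follows by the same computation as in \theoremref{thm:Lwr Bnd Wght Grph}, using $k\le\tfrac{\epsilon}{2}n$ (equivalently $n\ge 2w$).

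\noindent\textbf{Step 1 (many long detours).} Apply \lemmaref{lem: Gknnn}, and relabel so that the resulting index is $0$: every $u\in P_0$ satisfies $d_T(v_0,u)\ge(2k-1)w$ or $d_T(v_1,u)\ge(2k-1)w$, since $d_{\overline{G}}(v_0,u)=d_{\overline{G}}(v_1,u)=w$. The distortion hypothesis gives $d_T(v_0,u),d_T(v_1,u)\le(2k-1)(1+\epsilon/k^2)w$, and since $w=k/\epsilon$ the additive slack $(2k-1)w\cdot(\epsilon/k^2)=(2k-1)/k$ is below $2$. So for each $u\in P_0$ one of its two adjacent-terminal tree distances lies in the narrow window $[(2k-1)w,(2k-1)w+2)$; by pigeonhole fix $b\in\{0,1\}$ and $S\subseteq P_0$ with $|S|\ge n/2$ such that $d_T(v_b,u)\in[(2k-1)w,(2k-1)w+2)$ for all $u\in S$, and relabel so $b=1$. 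Write $\pi_u$ for the tree path from $v_1$ to $u$; each $\pi_u$ with $u\in S$ has weight $\approx(2k-1)w$, a factor $\approx 2k-1$ above $d_{\overline{G}}(v_1,u)=w$.

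\noindent\textbf{Step 2 (the weight estimate).} Pass to the minimal subtree $\hat{T}\subseteq T$ spanning $\{v_1\}\cup S$, rooted at $v_1$, so that $w(T)\ge w(\hat{T})$, every leaf of $\hat{T}$ lies in $S\subseteq P_0$ at depth in $[(2k-1)w,(2k-1)w+2)$, and $\hat{T}$-depths agree with $T$-depths. Two facts about $\overline{G}$ drive the argument: (a) any edge of $T$ of weight $<w$ incident to a vertex of $P_0$ has both endpoints in $P_0$ (bridge edges weigh exactly $w$; all inter-path and terminal--terminal distances are $\ge 2w$), and inside $P_0$ a metric ball of radius $r<w$ is a path-interval, hence has $\le 2r+1$ vertices; (b) for leaves $u_1,u_2$ of $\hat{T}$ with least common ancestor $y$,
\[
\mathrm{depth}(y)=\tfrac12\bigl(d_T(v_1,u_1)+d_T(v_1,u_2)-d_{\hat{T}}(u_1,u_2)\bigr)\le(2k-1)w+2-\tfrac12\,d_{\overline{G}}(u_1,u_2).
\]
From (b), any subtree of $\hat{T}$ hanging below a vertex of depth $>(2k-2)w+2$ has all of its leaves at $P_0$-positions within $2w$ of one another, hence $O(w)$ leaves by (a); since these subtrees partition $S$, there are $\Omega(n/w)$ of them, and already $w(\hat{T})=\Omega(n)$ (one unit of weight per leaf edge). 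To upgrade this to $\Omega(nw)$ I would show that every leaf $u\in S$ carries an $\Omega(w)$-long \emph{private tail} of $\pi_u$ — a sub-path near $u$ below which only $O(1)$ leaves lie — by arguing that a shorter private tail forces either a heavy edge adjacent to $u$ or a short cheap chain into $P_0$, and that, via (a) together with the $O(1)$-window, neither option leaves room to route the remaining $\Omega(n)$ competing detours within the allowed distortion. Summing the private tails over $S$ gives $w(\hat{T})\ge\Omega(|S|\cdot w)=\Omega(nw)$, as required.

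\noindent\textbf{The main obstacle.} The hard part is exactly the upgrade from $\Omega(n)$ to $\Omega(nw)$: the metric closure a priori allows many vertices of $P_0$ to be strung together by unit path-internal edges, so one must genuinely exclude such cheap chains. The extra leverage is that \emph{every} vertex of \emph{every} block $P_j$ lies in an $O(1)$-additive window around $(2k-1)w$ with respect to \emph{both} of its adjacent terminals, combined with the fact — an instance of \theoremref{thmLwrBndTrmStchMet} applied to the $2k$-cycle formed by the terminals and the path-blocks — that some pair of consecutive terminals has tree distance $\ge(2k-2)w$ while being at $\overline{G}$-distance $2w$; this ``amplification'' turns any chain offset into a distortion violation for the opposite terminal. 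Interleaving this amplification with the ball-size bound of (a) across the whole tree is the technical core of the argument.
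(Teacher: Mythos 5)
Your proposal is incomplete: you yourself flag the upgrade from $w(\hat T)=\Omega(n)$ to $w(\hat T)=\Omega(nw)$ as ``the main obstacle'' and ``the technical core of the argument,'' and the ``private tail'' idea is only a plan, not a proof. Since $w(MST)=\Theta(nk)$ here, the bound you have actually established ($\Omega(n)$) gives lightness $\Omega(1/k)$, not $\Omega(1/\epsilon)$, so the theorem is not yet proved.

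The paper closes this gap with a much more local argument that your Steiner-subtree/LCA framework misses. After invoking \lemmaref{lem: Gknnn} to fix the bad block $P_i$ (your Step 1), it first sharpens the conclusion: for each $u\in P_i$ the tree distance to the ``bad'' terminal is \emph{exactly} $(2k-1)w$, because $(2k-1)w+1$ already breaks the distortion budget (for $k\ge 2$, since $\tfrac{1}{(2k-1)w}=\tfrac{\epsilon}{(2k-1)k}>\tfrac{\epsilon}{k^2}$). It then shows that \emph{every} $u\in P_i$ has an incident tree edge of weight $\ge w$ that leaves $P_i$, by contradiction: if not, both tree-neighbours $p$ (toward $v_i$) and $q$ (toward $v_{i+1}$) of $u$ lie in $P_i$; w.l.o.g.\ $d_T(v_i,u)=(2k-1)w$, so $d_T(v_i,p)<(2k-1)w$, forcing $d_T(v_{i+1},p)=(2k-1)w$; the path from $p$ to $v_{i+1}$ must then pass through $u$ (else $d_T(v_{i+1},u)>(2k-1)w$, which is disallowed), so $p\ne q$; but then the path from $q$ to $v_i$ passes through $u$, giving $d_T(v_i,q)=d_T(v_i,u)+w(\{u,q\})\ge(2k-1)w+1$ — a distortion violation. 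These $n$ heavy edges have distinct $P_i$-endpoints, so $w(T)> nw$ immediately, and the lightness computation finishes as in the graph case. Your pigeonhole to a half-size $S$ and the depth/LCA bookkeeping are not needed; the key missing ingredient in your write-up is the ``every $P_i$-vertex has a heavy incident edge'' claim and its neighbour-based contradiction, which replaces the entire ``private tail'' program in one stroke.
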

\begin{proof}
	By \lemmaref{lem: Gknnn} there is a path $P_{i}$ such that any
	$u\in P_{i}$ has distortion at least $2k-1$ from $v_{i}$ or $v_{i+1}$.
	We will show that each $u\in P_{i}$ is included in non path-internal
	edge of weight at least $w$. Some observations before we prove it:
	\begin{itemize}
		\item Each $u\in P_{i}$ is at distance exactly $\left(2k-1\right)\cdot w$
		from $v_{i}$ or $v_{i+1}$. Otherwise the distance is at least $\left(2k-1\right)w+1$
		which implies
		\begin{eqnarray*}
			\frac{d_{T}\left(u,v_{i}\right)}{d_{\overline{G}_{k,n,\dots,n}}\left(u,v_{i}\right)} & = & \frac{\left(2k-1\right)w+1}{w}=\left(2k-1\right)\left(1+\frac{1}{\left(2k-1\right)w}\right)\\
			& = & \left(2k-1\right)\left(1+\frac{\epsilon}{\left(2k-1\right)k}\right)>\left(2k-1\right)\left(1+\frac{\epsilon}{k^{2}}\right)
		\end{eqnarray*}
		in contradiction to the $\left(2k-1\right)\left(1+\frac{\epsilon}{k^{2}}\right)$
		terminal distortion.
		\item For two neighbors $u_{1},u_{2}$ in a tree, and any other vertex $v$.
		The shortest path from $u_{1}$ to $v$ goes trough $u_{2}$ or the
		shortest path from $u_{2}$ to $v$ goes trough $u_{1}$. Otherwise
		there is a cycle.
	\end{itemize}
	\begin{figure}[t]
		\begin{centering}
			\includegraphics[scale=0.55]{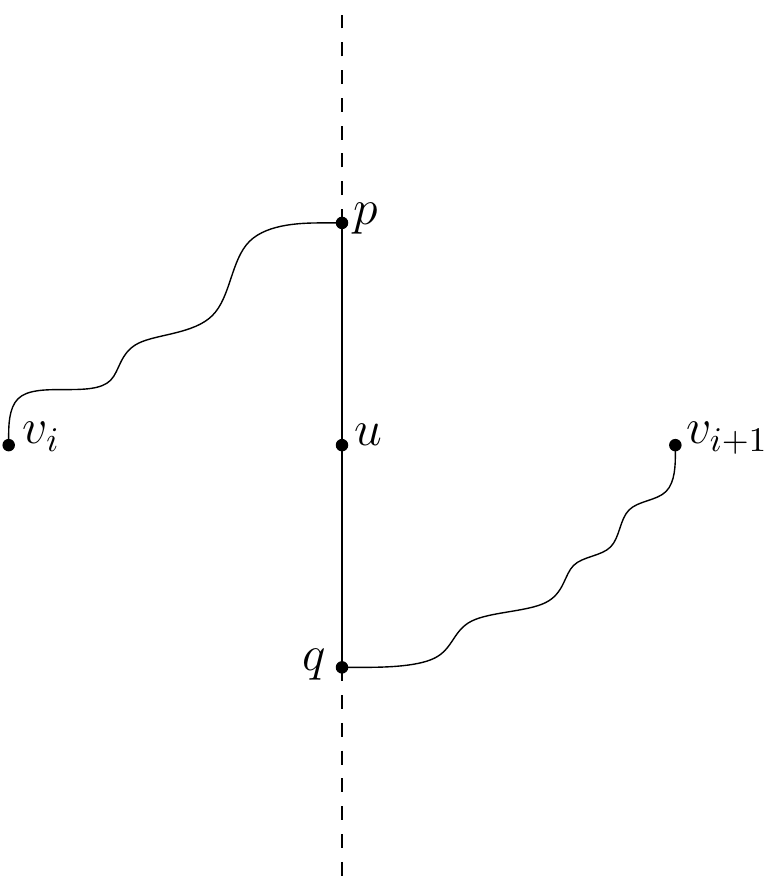}
			\par\end{centering}
		
		\caption{A visualization of the contradiction during the proof of \theoremref{thm: metric weight bound}.
			We assume that $u$ does not have outgoing edge of weight $w$. From
			the assumption that the distance from $u$ to $v_{i}$ is $\left(2k-1\right)w$
			we conclude that the distance from $q$ to $v_{i}$ is larger then
			$\left(2k-1\right)w$, a contradiction.}
	\end{figure}

	Assume by contradiction, that there is $u\in P_{i}$ with no outgoing
	edge of weight at least $w$. Let $p$ be the first vertex on the
	shortest path from $u$ to $v_{i}$, and $q$ be the first vertex
	on the shortest path from $u$ to $v_{i+1}$. W.l.o.g $d_{T}\left(v_{i},u\right)=\left(2k-1\right)w$.
	Hence $d_{T}\left(v_{i},p\right)<\left(2k-1\right)w$ which implies
	$d_{T}\left(v_{i+1},p\right)=\left(2k-1\right)w$. The shortest path
	from $p$ to $v_{i+1}$ goes through $u$, as otherwise $d_{T}\left(v_{i+1},u\right)>\left(2k-1\right)w$,
	contradiction. Hence $p\ne q$. Necessarily the shortest path from
	$q$ to $v_{i}$ goes through $u$, which implies $d_{T}\left(v_{i},q\right)>d_{T}\left(v_{i},u\right)=\left(2k-1\right)w$,
	contradiction.
	
	We conclude that $w\left(T\right)>n\cdot w$. Hence,
	\[
	\Psi\left(T\right)=\frac{w\left(T\right)}{w\left(MST\right)}\ge\frac{n\cdot w}{\left(2k-1\right)w+\left(n-1\right)k}=\frac{n\cdot\frac{k}{\epsilon}}{\left(2k-1\right)\frac{k}{\epsilon}+\left(n-1\right)k}>\frac{n}{\epsilon n+2k}.
	\]
	As $k\le\frac{\epsilon}{2}n$ we obtain $\Psi\left(T\right)\ge\frac{1}{2\epsilon}$
	as required.
\end{proof}
}

\section{Probabilistic Embedding into Ultrametrics with Strong Terminal Distortion}\label{sec:strong-ultra}

In this section we show our terminal variant of the result of \cite{FRT03}. We note that in a companion paper \cite{EFN15} a more general result is shown, and we give the full details here for completeness.

Recall that an ultrametric $\left(U,d\right)$ is a metric space satisfying a
strong form of the triangle inequality, that is, for all $x,y,z\in U$,
$d(x,z)\le\max\left\{ d(x,y),d(y,z)\right\} $. The following definition
is known to be an equivalent one (see \cite{DBLP:journals/dcg/BartalLMN05}).
\begin{definition}\label{def:ultra}
An ultrametric $U$ is a metric space $\left(U,d\right)$ whose elements
are the leaves of a rooted labeled tree $T$. Each $z\in T$ is associated
with a label $\Phi\left(z\right)\ge0$ such that if $q\in T$ is a
descendant of $z$ then $\Phi\left(q\right)\le\Phi\left(z\right)$
and $\Phi\left(q\right)=0$ iff $q$ is a leaf. The distance between
leaves $z,q\in U$ is defined as $d_{T}(z,q)=\Phi\left(\mbox{lca}\left(z,q\right)\right)$
where $\mbox{lca}\left(z,q\right)$ is the least common ancestor of
$z$ and $q$ in $T$.
\end{definition}

We now define probabilistic embeddings with terminal distortion. For a class of metrics ${\cal Y}$, a distribution ${\cal D}$ over embeddings $f_Y:X\to Y$ with $Y\in{\cal Y}$ has {\em expected terminal distortion} $\alpha$ if each $f_Y$ is non-contractive (that is, for all $u,w\in X$ and $Y\in\supp({\cal D})$, it holds that $d_X(u,w)\le d_Y(f_Y(u),f_Y(w))$), and for all $v\in K$ and $u\in X$,
\[
\E_{Y\sim{\cal D}}[d_Y(f_Y(v),f_Y(u))]\le \alpha\cdot d_X(v,u)~.
\]
The notion of strong terminal distortion translates to this setting in the obvious manner.

Let $(X,d)$ be a metric on $n$ points, and $K\subseteq X$ the set of terminals of size $|K|=k$.
\theoremref{thm:trans} combined with \cite{FRT03} gives an embedding to a distribution over ultrametrics with expected terminal distortion $O(\log k)$.
However, this does not give any guarantee on the distortion of two non-terminal points.\footnote{In fact, it is not hard to verify that this approach may lead to unbounded distortion.}
In order to obtain {\em strong} terminal distortion, we modify the FRT algorithm, so that it gives "preference" to the terminals. In particular, in the heart of the FRT algorithm is a construction of a random partitioning scheme, based on \cite{CKR01}, whose purpose is to decompose the metric to bounded diameter pieces, such that the probability of separating pairs is "sufficiently small". The partition is created by choosing a random permutation of $X$, a random radius in some specified interval, and then creating clusters as balls of the chosen radius, in the order given by the permutation. To obtain improved expected distortion guarantees for the terminals, we enforce them to be the first $k$ points of the permutation. We show that this restriction improves dramatically the expected distortion of {\em any pair} that contains a terminal, while all the other pairs suffer only a factor of 2 in the expected distortion bound.

\begin{theorem}\label{thm:strong-ultra}
Given a metric space $(X,d)$ of size $|X|=n$ and a subset of terminals $K\subseteq X$ of size $|K|=k$, there exists a distribution over embeddings of $X$ into ultrametrics with strong terminal distortion $\left(O(\log k),O(\log n)\right)$.
\end{theorem}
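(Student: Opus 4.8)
The plan is to rerun the hierarchical cut decomposition of \cite{FRT03} with a single modification: instead of a uniformly random permutation of $X$, we draw a permutation $\pi$ uniformly among those in which every terminal of $K$ precedes every non-terminal (the order inside each block being uniform); we also draw $\beta\in[1,2)$ uniformly, set the level-$i$ radius to $R_i=\beta\cdot 2^{i-1}$, and build the laminar family of clusters exactly as in FRT (from coarse to fine, recursing inside each cluster with the induced permutation), outputting the associated ultrametric $U$. Because the cluster diameters are controlled precisely as in FRT, every $U$ in the support still dominates $d$, so it remains only to bound expected distances, separately for pairs containing a terminal and for arbitrary pairs.

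By the standard FRT accounting it suffices to bound $\E[d_U(x,y)]\le O(1)\cdot\sum_i 2^i\cdot\Pr[x,y\text{ lie in distinct level-}i\text{ clusters}]$, and $x,y$ lie in distinct level-$i$ clusters exactly when the $\pi$-first point $z^\star$ of $B(x,R_i)\cup B(y,R_i)$ lies in exactly one of the two balls. Writing $\rho(z)=\min\{d(z,x),d(z,y)\}$, note that $B(x,R)\cup B(y,R)=\{z:\rho(z)\le R\}$, and if $z$ is a cutter at radius $R$ then $\rho(z)\le R<\max\{d(z,x),d(z,y)\}$, so by the triangle inequality the radii at which $z$ cuts form an interval $W(z)$ of length at most $d(x,y)$. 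I would then list the terminals as $k_1,k_2,\dots$ and the non-terminals as $y_1,y_2,\dots$, each list by increasing $\rho$. If $k_m$ is the cutter at radius $R$, then $\rho(k_1),\dots,\rho(k_{m-1})\le\rho(k_m)\le R$, so $k_1,\dots,k_m$ all lie in $B(x,R)\cup B(y,R)$; since these are all terminals and $k_m$ must be $\pi$-first in that set, it is in particular $\pi$-first among $k_1,\dots,k_m$, which has probability $\le 1/m$; the same reasoning gives $\le 1/m'$ for a non-terminal $y_{m'}$. Finally, since $R_i$ is uniform on an interval of length $2^{i-1}$, we get $2^i\Pr[R_i\in W]=2\,|W\cap[2^{i-1},2^i)|$, hence $\sum_i 2^i\Pr[R_i\in W(z)]\le 2|W(z)|\le 2d(x,y)$.

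Putting these together, $\E[d_U(x,y)]\le O(1)\cdot d(x,y)\cdot\bigl(\sum_m \tfrac1m+\sum_{m'}\tfrac1{m'}\bigr)$, where the first sum is over terminal cutters and the second over non-terminal cutters. For an arbitrary pair this is $O(H_k+H_{n-k})\cdot d(x,y)=O(\log n)\cdot d(x,y)$, the loss over plain FRT being only the constant factor $2$. For a pair containing a terminal $t$, the crucial observation is that $t\in B(t,R)$ for every $R\ge 0$, so $B(t,R)\cup B(u,R)$ always contains a terminal; since terminals precede non-terminals in $\pi$, the cutter $z^\star$ is then always a terminal, so only the first sum survives and $\E[d_U(t,u)]\le O(H_k)\cdot d(t,u)=O(\log k)\cdot d(t,u)$. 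Invoking domination together with these two expectation bounds yields the claimed strong terminal distortion $(O(\log k),O(\log n))$.

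The step I expect to be the main obstacle is the bound for arbitrary (non-terminal) pairs: a priori, pushing all terminals to the front of $\pi$ could let a single nearby terminal be the cutter of almost every such pair at many scales, which would blow up its expected distortion. The resolution is exactly the two-stream decomposition above — cutters split into a terminal list and a non-terminal list, and each list contributes only its own harmonic sum ($H_k$, resp. $H_{n-k}$), so the degradation relative to plain FRT stays a constant factor. Aligning the radius-window/level bookkeeping with FRT's recursive construction, and verifying domination under the modified permutation, are the remaining routine points.
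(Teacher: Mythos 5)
Your proposal is correct and follows essentially the same strategy as the paper: bias the FRT permutation so all terminals precede all non‑terminals, observe that a pair containing a terminal always has a terminal in the union of balls (so only terminals can cut such a pair, giving $H_k$), and for general pairs split the cutters into a terminal stream and a non‑terminal stream, each contributing its own harmonic sum. The paper's proof uses the density $\tfrac{1}{x\ln 2}$ for $\beta$ where you use a uniform $\beta$, but both yield the same $2^i\Pr[R_i\in W(z)]\lesssim|W(z)|$ bound, and your "cutter is $\pi$-first in the union of balls" characterization is equivalent to the paper's "settles/cuts" formulation.
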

\begin{proof}

Let $\Delta$ be the diameter of $X$. We assume w.l.o.g that the minimal
distance in $X$ is $1$, and let $\delta$ be the minimal integer so that $\Delta\le 2^{\delta}$. We shall create a hierarchical laminar partition, where for each $i\in\{0,1,\dots,\delta\}$, the clusters of level $i$ have diameter at most $2^i$, and each of them is contained in some level $i+1$ cluster. Recall \defref{def:ultra}. The ultrametric is built in the natural manner, the root corresponds to the level $\delta$ cluster which is $X$, and each cluster in level $i$ corresponds to an inner node of the ultrametric with label $2^i$, whose children correspond to the level $i-1$ clusters contained in it. The leaves correspond to singletons, that is, to the elements of $X$. Clearly, the ultrametric will dominate $(X,d)$.\footnote{A metric $d'$ on $X$ {\em dominates} $d$ if for all $x,y\in X$, $d(x,y)\le d'(x,y)$.}

In order to define the partition, we sample a permutation uniformly at random from the set of permutations that have the terminals as the preface (there are $k!\cdot(n-k)!$ such permutations),
and sample a number $\beta\in\left[1,2\right]$ uniformly at random.
In each step, each vertex $x$ chooses the vertex $u$ with minimal value according
to $\pi$ among the vertices of distance at most $\beta_{i}=\beta\cdot 2^{i-2}$
from $x$, and joins the cluster of $u$. Note that a vertex may not
belong to the cluster associated with it, and some clusters may be empty (which we can discard). The description of the hierarchical partition appears in \algref{alg:mod-frt}.

\begin{algorithm}[h]
\caption{\texttt{Modified FRT}$(X,K)$}\label{alg:mod-frt}
\begin{algorithmic}[1]
\STATE Choose a random permutation $\pi$ of $X$ such that $\pi^{-1}\left(K\right)=\left[k\right]$.
\STATE Choose $\beta\in\left[1,2\right]$ randomly by the distribution with the following probability density function $p\left(x\right)=\frac{1}{x\ln2}$.
\STATE Let $D_{\delta}=X$; $i\leftarrow\delta-1$.
\WHILE {$D_{i+1}$ has non-singleton clusters}
\STATE Set $\beta_{i}\leftarrow\beta\cdot 2^{i-2}$.
\FOR {$l=1,\dots,n$}
\FOR {every cluster $S$ in $D_{i+1}$}
\STATE Create a new cluster in $D_i$, consisting of all unassigned vertices in $S$ closer than $\beta_{i}$ to $\pi\left(l\right)$.
\ENDFOR
\ENDFOR
\STATE $i\leftarrow i-1$.
\ENDWHILE
\end{algorithmic}
\end{algorithm}

Let $T$ denote the (random) ultrametric created by the hierarchical partition of \algref{alg:mod-frt},
and $d_{T}\left(u,v\right)$ the distance between $u$ and $v$ in
$T$. Consider the
clustering step at some level $i$, where clusters in $D_{i+1}$ are picked for partitioning. In each iteration $l$, all unassigned
vertices $z$ such that $d\left(z,\pi(l)\right)\le\beta_{i}$,
assign themselves to the cluster of $\pi(l)$.
Fix an arbitrary pair $\left\{ v,u\right\} $. We say that center $w$ {\em settles} the pair
$\left\{ v,u\right\} $ at level $i$, if it is the first center so that at least one of $u$ and $v$ gets assigned to its cluster. Note that exactly
one center $w$ settles any pair $\left\{ v,u\right\}$ at any particular
level. Further, we say that a center $w$ {\em cuts} the pair $\left\{ v,u\right\} $
at level $i$, if it settles them at this level,
and exactly one of $u$ and $v$ is assigned to the cluster of $w$ at level $i$.
Whenever $w$ cuts a pair $\left\{ v,u\right\} $ at level $i$, $d_{T}\left(v,u\right)$
is set to be $2^{i+1}\le 8\beta_i$. We charge this length to the
vertex $w$ and define $d_T^w\left(v,u\right)$ to be $\sum_{i}\mathbf{1}\left(w\mbox{ cuts \ensuremath{\left\{ v,u\right\} } at level \ensuremath{i}}\right)\cdot 8\beta_i$
(where $\mathbf{1}\left(\cdot\right)$ denotes an indicator function). Clearly,
$d_{T}\left(v,u\right)\le\sum_{w\in X}d_T^w\left(v,u\right)$.

We now arrange the vertices of $K$ in non-decreasing order of their distance
from the pair $\{v,u\}$ (breaking ties arbitrarily).
Consider the $s$th terminal $w_{s}$
in this sequence. We now upper bound the expected value of $d_T^{w_s}\left(v,u\right)$
for an arbitrary $w_{s}$. W.l.o.g assume that $d_{X}\left(w_{s},v\right)\le d_{X}\left(w_{s},u\right)$.
For a center $w_{s}$ to cut $\left\{ v,u\right\} $, it must be the
case that:
\begin{enumerate}
\item $d_{X}\left(w_{s},v\right)\le\beta_{i}<d_{X}\left(w_{s},u\right)$
for some $i$.
\item $w_{s}$ settles $\{v,u\}$ at level $i$.
\end{enumerate}
Note that for each $x\in[d_{X}\left(w_{s},v\right),d_{X}\left(w_{s},u\right))$, the probability that $\beta_i\in[x,x+dx)$ is at most $\frac{dx}{x\cdot\ln 2}$. Conditioning on $\beta_i$ taking such a value $x$, any one of $w_1,\dots w_s$ can settle $\{v,u\}$, and the probability that $w_s$ is first in the permutation among $w_1,\dots w_s$ is $\frac{1}{s}$. Thus we obtain,
\begin{equation}\label{eq:w_s}
\E[d_T^{w_s}(v,u)]\le \int_{d_{X}(w_{s},v)}^{d_{X}(w_{s},u)}8x\cdot\frac{dx}{x\ln 2}\cdot\frac{1}{s}
=\frac{8(d_{X}(w_{s},u)-d_{X}(w_{s},v))}{s\cdot\ln 2}
\le\frac{16}{s}\cdot d_X(v,u)~.
\end{equation}

The crucial observation is that if $y\in X\setminus K$, and at least one of $v,u$ is a terminal, w.l.o.g $v\in K$, then $y$ cannot settle $\{v,u\}$. The reason is that $v$ always appears before $y$ in $\pi$, so $v$ will surely be assigned to a cluster when it is the turn of $y$ to create a cluster. This leads to the conclusion that for all $v\in K$ and $u\in X$
\[
\E[d_T(v,u)]\le\sum_{s=1}^k\E[d_T^{w_s}(v,u)]\stackrel{\eqref{eq:w_s}}{\le}16d_X(v,u)\sum_{s=1}^k\frac{1}{s}=O(\log k)\cdot d_X(v,u)~.
\]

It remains to bound the expected distortion of non-terminal pairs, so fix some $v,u\in X\setminus K$. Arrange the vertices $y_1,\dots,y_{n-k}$ of $X\setminus K$ according to their distance from $\{v,u\}$, in non-decreasing order. A similar reasoning as above gives the bound of \eqref{eq:w_s} for the $s$th vertex $y_s$ in this ordering to cut $\{v,u\}$ (recall that the permutation over the non-terminals is uniformly distributed). In fact, the bound is even slightly loose, because we disregard the possibility that a terminal will settle $\{v,u\}$. We conclude that
\begin{eqnarray*}
\E[d_T(v,u)]&\le&\sum_{s=1}^k\E[d_T^{w_s}(v,u)] + \sum_{s=1}^{n-k}\E[d_T^{y_s}(v,u)]\\ &\le&16d_X(v,u)\left(\sum_{s=1}^k\frac{1}{s}+\sum_{s=1}^{n-k}\frac{1}{s}\right)\\
&=&O(\log k+\log n)\cdot d_X(v,u)\\
&=&O(\log n)\cdot d_X(v,u)~.
\end{eqnarray*}

\end{proof}

\section{Probabilistic Embedding into Trees with Terminal Congestion}\label{sec:cong}

In this section we focus on embeddings into trees that approximate  capacities of cuts, rather than distances between vertices. This framework was introduced by R{\"a}cke \cite{R02} (for a single tree), and in \cite{R08} he showed how to obtain capacity preserving probabilistic embedding from a distance preserving one, such as the ones given by \cite{FRT03}. Later, \cite{AF09} showed a complete equivalence between these notions in random tree embeddings. Here we show our terminal variant of these results. Informally, we construct a distribution over {\em capacity-dominating} trees (each cut in each tree is at least as large as the corresponding cut in the original graph), and for each edge, its expected congestion is bounded accordingly, with an improved bound for edges {\em containing a terminal}.

We next elaborate on the notions of capacity and congestion, and their relation to distance and distortion, following the notation of \cite{AF09}.
Given a graph $G=(V,E)$, let ${\cal P}$ be a collection of multisets of edges in $G$.
A map $M:E\rightarrow {\cal P}$, where $M(e)$ is a path between the endpoints of $e$, is called a {\em path mapping} (the path is not necessarily simple).
Denote by $M_{e'}(e)$ the number of appearances of $e'$ in $M(e)$.

The path mapping relevant to the rest of this section is constructed as follows: given a tree $T=\left(V,E_{T}\right)$ (not necessarily a subgraph),
for each edge $e'\in E_{T}$ let $P_{G}(e')$ be a shortest path
between the endpoints of $e'$ in $G$ (breaking ties arbitrarily), and similarly for $e\in E$,
let $P_{T}(e)$ be the unique path between the endpoints of $e$
in $T$. Then for an edge $e\in E$, where $P_{T}(e)=e'_{1}e'_{2}\dots e'_{r}$,
the path $M(e)$ is defined as $M(e)=P_{G}(e'_{1})\circ P_{G}(e'_{2})\circ\cdots\circ P_{G}(e'_{r})$ (where $\circ$ denotes concatenation).
In what follows fix a tree $T$, and let $M$ be the path mapping of $T$.

Fix a weight function $w:E\rightarrow\mathbb{R}_{+}$, and a capacity function
$c:E\rightarrow\mathbb{R}_{+}$. For an edge $e\in E$, $\mbox{dist}_T\left(e\right)=\sum_{e'\in E}M_{e'}(e)\cdot w\left(e'\right)$
is the weight of the path $M(e)$, and $\mbox{load}_{T}\left(e\right)=\sum_{e'\in E}M_{e}(e')\cdot c\left(e'\right)$
is the sum (with multiplicities) of the capacities of all the edges whose path is using $e$.
Define $\str_T(e)=\frac{\mbox{dist}_{T}\left(e\right)}{w(e)}$ to be the distortion of $e$ in $T$, and $\con_T(e)=\frac{\mbox{load}_{T}\left(e\right)}{c(e)}$ is the congestion of $e$.
Note that if $T$ is a subgraph of $G$, then
$\mbox{dist}_{T}\left(e\right)$ is the length of the unique path
between the endpoints of $e$, while $\mbox{load}_{T}\left(e\right)$
is the total capacity of all the edges of $E$ that are in the cut obtained
by deleting $e$ from $T$ (for $e\notin T$, $\mbox{load}_{T}(e)=0$).

\begin{definition}
Let $K\subseteq V$ be a set of terminals of size $k$, and let $E_K\subseteq E$ be the set of edges that contain a terminal. We say that a distribution ${\cal D}$ over trees has strong terminal congestion $(\alpha,\beta)$ if for every $e\in E_K$.
\[
\con_{\cal D}(e):=\E_{T\sim {\cal D}}[\con_T(e)]\le \alpha~,
\]
and for any $e\in E$, $\con_{\cal D}(e)\le \beta$.
\end{definition}

A tight connection between distance preserving and capacity preserving mappings was shown in \cite{AF09}. We generalize their theorem to the terminal setting in the following manner.
\begin{theorem}\label{thm:cong}
The following statements are equivalent for a graph $G$:
\begin{itemize}
\item For every possible {\em weight} assignment $G$ admits a probabilistic embedding into trees with strong terminal {\em distortion} $(\alpha,\beta)$.

\item For every possible {\em capacity} assignment $G$ admits a probabilistic embedding into trees with strong terminal {\em congestion} $(\alpha,\beta)$.
\end{itemize}
\end{theorem}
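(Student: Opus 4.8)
The plan is to replay the argument of \cite{AF09}, but with the uniform per-edge bound replaced by the non-uniform bound $b:E\to\{\alpha,\beta\}$ given by $b(e)=\alpha$ for $e\in E_K$ and $b(e)=\beta$ otherwise. Since $b$ enters every expression only as a multiplicative factor in a denominator, the symmetry between distances and capacities that drives \cite{AF09} is untouched. For a weight function $w$ set $\Phi_w(G)=\min_{\cal D}\max_{e\in E}\frac{\str_{\cal D}(e)}{b(e)}$, where $\cal D$ ranges over distributions on dominating trees and $\str_{\cal D}(e)=\E_{T\sim{\cal D}}[\str_T(e)]$; then the first bullet holds iff $\sup_w\Phi_w(G)\le 1$. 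Analogously put $\Psi_c(G)=\min_{\cal D}\max_{e\in E}\frac{\con_{\cal D}(e)}{b(e)}$, so that the second bullet holds iff $\sup_c\Psi_c(G)\le 1$. Hence it suffices to prove the quantitative identity $\sup_w\Phi_w(G)=\sup_c\Psi_c(G)$, which gives both implications at once.

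First I would apply an LP-duality (minimax) argument to the inner expression $\min_{\cal D}\max_e(\cdots)$: expand $\max_e$ into a maximum over distributions $\lambda$ on $E$, swap the two optimizations, and use that the objective is linear in $\cal D$ (an expectation over trees) so that the inner minimum is attained at a single tree. This yields $\Phi_w(G)=\max_\lambda\min_T\sum_e\frac{\lambda(e)}{b(e)w(e)}\mathrm{dist}_T(e)$ and likewise $\Psi_c(G)=\max_\mu\min_T\sum_e\frac{\mu(e)}{b(e)c(e)}\mathrm{load}_T(e)$. As in \cite{AF09}, this step is legitimate because one may restrict to a finite family of trees (hierarchical cut decompositions with dyadic scales), over which both the set of distributions $\cal D$ and the set of dual distributions are simplices.

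Next I would substitute $\mathrm{dist}_T(e)=\sum_{e'}M_{e'}(e)\,w(e')$ and $\mathrm{load}_T(e)=\sum_{e'}M_e(e')\,c(e')$, rearrange the resulting double sums, and change variables, putting $c(e):=\lambda(e)/(b(e)w(e))$ in the distortion expression and $w(e):=\mu(e)/(b(e)c(e))$ in the congestion expression. Using scale invariance in $\lambda$ and $\mu$, this rewrites the two quantities as
\[
\sup_w\Phi_w(G)=\sup_{w,c\ge 0}\frac{\min_T\sum_{e,e'}c(e)\,w(e')\,M_{e'}(e)}{\sum_e b(e)\,w(e)\,c(e)},\qquad
\sup_c\Psi_c(G)=\sup_{w,c\ge 0}\frac{\min_T\sum_{e,e'}w(e)\,c(e')\,M_e(e')}{\sum_e b(e)\,w(e)\,c(e)}.
\]
Finally, for any fixed tree $T$ with its path mapping $M$, relabelling the summation indices $e\leftrightarrow e'$ turns $\sum_{e,e'}c(e)\,w(e')\,M_{e'}(e)$ into $\sum_{e,e'}w(e)\,c(e')\,M_e(e')$; since the denominators and the feasible families of trees coincide on both sides, the two suprema are equal, which proves the theorem.

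I expect the main obstacle to be the treatment of the domination requirement — the trees must be non-contractive for the distortion formulation and capacity-dominating for the congestion formulation — together with the compactness needed to justify the minimax swap. One must verify that these two restrictions carve out the same family of trees, so that $\min_T$ runs over the same set on both sides, and that this family may be taken finite. These are exactly the technical points established in \cite{AF09}, and since $b(\cdot)$ appears only as a symmetric multiplicative factor in the objective, their resolution transfers verbatim to the terminal setting.
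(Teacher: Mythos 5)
Your proposal is correct and takes essentially the same route as the paper: both arguments apply the minimax principle to reduce to a bilinear quantity over a single tree, perform a change of variables $w(e)\leftrightarrow\lambda(e)/(b(e)c(e))$, and observe that after interchanging summation indices $e\leftrightarrow e'$ the distortion-side and congestion-side expressions coincide. The paper presents this as a one-directional implication (first bullet $\Rightarrow$ second) followed by a remark that the converse is symmetric, whereas you phrase it as a single quantitative identity $\sup_w\Phi_w(G)=\sup_c\Psi_c(G)$ that yields both directions at once; this is cosmetically cleaner but not a substantively different argument.
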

A corollary of \theoremref{thm:cong}, achieved by applying \theoremref{thm:strong-ultra} using the algorithmic framework described in \cite{AF09}, is:\footnote{Even though the embedding of \theoremref{thm:strong-ultra} is into ultrametrics, which contain Steiner vertices, these can be removed while increasing the distortion of each pair by at most a factor of 8 \cite{G01}.}
\begin{corollary}\label{cor:cong}
For any graph $G=(V,E)$ on $n$ vertices, a set $K\subseteq V$ of $k$ terminals, and any capacity function, there exists a distribution over trees with strong terminal congestion $(O(\log k),O(\log n))$. Moreover, there is an efficient procedure to sample from the distribution.
\end{corollary}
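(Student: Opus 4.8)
The plan is to combine \theoremref{thm:strong-ultra} and \theoremref{thm:cong} almost as black boxes, with one bookkeeping step to pass from ultrametrics to ordinary trees, plus an extra argument for the efficiency claim. First I would apply \theoremref{thm:strong-ultra} to the shortest-path metric of $G$ under an arbitrary weight function $w\colon E\to\R_+$; this yields a distribution over ultrametrics, each dominating $(V,d_G)$, with strong terminal distortion $(O(\log k),O(\log n))$. The ultrametrics carry Steiner vertices (the internal nodes of \defref{def:ultra}), whereas \theoremref{thm:cong} is phrased for Steiner-free trees, so I invoke the Steiner-point removal of Gupta \cite{G01}: it replaces each sampled ultrametric by a tree on $V$ that is still non-contractive and inflates every pairwise distance by at most a factor of $8$. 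Since $8\cdot O(\log k)=O(\log k)$ and $8\cdot O(\log n)=O(\log n)$, we conclude that for \emph{every} weight assignment $G$ admits a probabilistic embedding into trees with strong terminal distortion $(O(\log k),O(\log n))$ --- this is exactly the first bullet of \theoremref{thm:cong}. Hence the second bullet follows: for \emph{every} capacity assignment $G$ admits a probabilistic embedding into trees with strong terminal congestion $(O(\log k),O(\log n))$, which is the claimed bound.

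For the efficiency statement I would open up the proof of \theoremref{thm:cong} rather than use it as a black box. Following the algorithmic version of the \cite{AF09} equivalence, the capacity-preserving distribution is obtained from a multiplicative-weights (no-regret / Lagrangian) procedure that maintains a weight function on the edges and, in each round, queries a ``distortion oracle'': given the current weights, it must return a tree whose per-edge distortion meets the target, with the improved bound on edges incident to terminals. \algref{alg:mod-frt}, together with the polynomial-time Steiner-removal step, is precisely such an oracle and runs in polynomial time; after the polynomially many rounds dictated by the MWU analysis we obtain an explicit distribution (a convex combination of polynomially many trees) with the stated strong terminal congestion, from which one can sample in polynomial time.

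The step I expect to be the main obstacle is this efficiency argument: one must check that the \cite{AF09}-style reduction, carried out in the \emph{terminal} setting of \theoremref{thm:cong}, still reduces to polynomially many calls of an efficient distortion oracle, and --- more delicately --- that the resulting per-edge congestion guarantees genuinely split as $(O(\log k),O(\log n))$ rather than collapsing to a uniform $O(\log n)$. Both points should go through because the LP/duality argument underlying \theoremref{thm:cong} is edgewise: the congestion bound achieved for an edge $e$ is governed by the distortion bound achievable for that same edge, and \theoremref{thm:strong-ultra} (after Steiner removal) certifies an $O(\log k)$ distortion bound for every edge incident to a terminal and an $O(\log n)$ bound for the rest.
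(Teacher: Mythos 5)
Your proposal is correct and follows the same route the paper takes: apply \theoremref{thm:strong-ultra} to the shortest-path metric under an arbitrary weight assignment, remove Steiner nodes via \cite{G01} at a constant-factor cost, invoke the equivalence of \theoremref{thm:cong}, and make the sampling efficient via the multiplicative-weights/MWU implementation of the \cite{AF09} framework. The paper's own proof is exactly this chain (stated in one line with the Steiner-removal factor-of-$8$ relegated to a footnote), so there is no substantive difference.
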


\begin{proof}[Proof of \theoremref{thm:cong}]
Assuming the first item holds we prove the second. Let $\kappa(e)=\left\{\begin{array}{ccc} 1/\alpha & e\in E_K\\
1/\beta & \text{otherwise} \end{array} \right.$.
Given any capacity function $c:E\to\R_+$, we would like to show that there exists a distribution ${\cal D}'$ such that for any $e\in E$, $\E_{T\sim {\cal D}'}[\kappa(e)\cdot\con_T(e)]\le 1$.
By applying the minimax principle (as in \cite{AKPW95}), it suffices to show that for any coefficients $\{\lambda_{e}\}_{e\in E}$ with $\lambda_e\ge 0$ and $\sum_{e\in E}\lambda_e=1$, there exists a single tree $T$ such that
\begin{equation}\label{eq:ssho}
\sum_{e\in E}\lambda_e\cdot\kappa(e)\cdot\con_T(e)\le 1~.
\end{equation}
To this end, define the weights $w(e)=\kappa(e)\cdot\frac{\lambda_e}{c(e)}$, and by the first assertion there exists a distribution ${\cal D}$ over trees such that for any $e\in E$,
\[
\E_{T\sim {\cal D}}[\kappa(e)\cdot\str_T(e)]\le 1~.
\]
By applying the minimax again, there exists a single tree $T$ such that
\[
\sum_{e\in E}\lambda_e\cdot\kappa(e)\cdot\str_T(e)\le 1~.
\]

Now,

\begin{eqnarray*}
1&\ge&\sum_{e\in E}\lambda_e\cdot\kappa(e)\cdot\str_T(e)\\
&=&\sum_{e\in E}\lambda_e\cdot\kappa(e)\cdot \frac{\sum_{e'\in E}M_{e'}(e)\cdot w\left(e'\right)}{w(e)}\\
&=&\sum_{e\in E}\lambda_e\cdot\kappa(e)\cdot \frac{\sum_{e'\in E}M_{e'}(e)\cdot \kappa(e')\cdot\lambda_{e'}/c(e')}{\kappa(e)\cdot\lambda_e/c(e)}\\
&=&\sum_{e'\in E}\lambda_{e'}\cdot\kappa(e')\cdot \frac{\sum_{e\in E}M_{e'}(e)\cdot c(e)}{c(e')}\\
&=&\sum_{e'\in E}\lambda_{e'}\cdot\kappa(e')\cdot \con_T(e')~,
\end{eqnarray*}
which concludes the proof of \eqref{eq:ssho}. The second direction is symmetric.
\end{proof}

{\bf Capacity Domination Property.} As \cite{R08,AF09} showed, under the natural capacity assignment, any tree $T$ supported by the distribution of \theoremref{thm:cong} has the following property: Any multi-commodity flow in $G$ can be routed in $T$ with no larger congestion. We would like to show this explicitly, using the language of cuts, as this will be useful for the algorithmic applications.

Fix some tree $T=(V,E_T)$, and for any edge $e'\in E_T$ let $S_{T,e'}\subseteq V$ be the cut obtained by deleting $e'$ from $T$. Define the capacities $C_T:E_T\to\R_+$ by
\[
C_T(e')=\sum_{e\in E(S_{T,e'},\bar{S}_{T,e'})}c(e)~,
\]
where $E(S,\bar{S})$ denotes the set of edges in the graph crossing the cut $S$.
(Observe that for spanning trees, $C_T(e)={\rm load}_T(e)$.)
\begin{lemma}\label{lem:capacities inequality}
For any graph $G=(V,E)$ and tree $T=(V,E_T)$ with capacities as defined above, for any set $S\subseteq V$ it holds that
\begin{equation}
\sum_{e\in E\left(S,\bar{S}\right)}c(e)\le\sum_{e'\in E_{T}\left(S,\bar{S}\right)}C_T(e')\le\sum_{e\in E\left(S,\bar{S}\right)}{\rm load}_{T}(e)~.\label{eq:capacities inequality}
\end{equation}
\end{lemma}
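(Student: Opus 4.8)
The plan is to establish the two inequalities of \eqref{eq:capacities inequality} separately. Both are double-counting arguments whose only non-formal ingredient is the elementary parity fact that a walk joining the two sides of a cut crosses that cut an odd number of times; in particular, at least once. Throughout, all capacities are nonnegative, so terms may be freely dropped.

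\emph{First inequality, $\sum_{e\in E(S,\bar S)}c(e)\le\sum_{e'\in E_T(S,\bar S)}C_T(e')$.} For each $e=\{u,v\}\in E(S,\bar S)$, the unique tree path $P_T(e)$ joins $u\in S$ to $v\in\bar S$, hence contains at least one tree edge crossing the cut; fix one such edge and call it $\phi(e)\in E_T(S,\bar S)$. Since $\phi(e)$ lies on $P_T(e)$, deleting it from $T$ separates $u$ from $v$, so $e\in E(S_{T,\phi(e)},\bar S_{T,\phi(e)})$. Now for a fixed $e'\in E_T(S,\bar S)$, the $G$-edges $e$ with $\phi(e)=e'$ are distinct and all lie in $E(S_{T,e'},\bar S_{T,e'})$, whence $\sum_{e:\phi(e)=e'}c(e)\le C_T(e')$. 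Summing over $e'\in E_T(S,\bar S)$ and using that $\phi$ sends $E(S,\bar S)$ into $E_T(S,\bar S)$ gives $\sum_{e\in E(S,\bar S)}c(e)\le\sum_{e'\in E_T(S,\bar S)}C_T(e')$.

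\emph{Second inequality, $\sum_{e'\in E_T(S,\bar S)}C_T(e')\le\sum_{e\in E(S,\bar S)}\mathrm{load}_T(e)$.} Unwind the path mapping: for a $G$-edge $e$ and a tree edge $f$ let $n_e(f)$ be the number of appearances of $e$ in $P_G(f)$; then, since the tree path $P_T(e')$ is simple, $M_e(e')=\sum_{f\in P_T(e')}n_e(f)$, the sum being over tree edges $f$ on $P_T(e')$. Substituting and swapping the order of summation,
\[
\sum_{e\in E(S,\bar S)}\mathrm{load}_T(e)
=\sum_{e\in E(S,\bar S)}\sum_{e'\in E}c(e')\!\!\sum_{f\in P_T(e')}\!\! n_e(f)
=\sum_{f\in E_T}\Big(\sum_{e'\in E:\,f\in P_T(e')}\!\!c(e')\Big)\Big(\sum_{e\in E(S,\bar S)}n_e(f)\Big).
\]
For the first parenthesized factor, $f\in P_T(e')$ exactly when deleting $f$ separates the endpoints of $e'$, i.e.\ when $e'\in E(S_{T,f},\bar S_{T,f})$, so this factor equals $C_T(f)\ge 0$. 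For the second factor, $\sum_{e\in E(S,\bar S)}n_e(f)$ counts (with multiplicity) the crossings of the walk $P_G(f)$ over the cut $(S,\bar S)$; when $f\in E_T(S,\bar S)$ the endpoints of $f$ — which are the endpoints of $P_G(f)$ — lie on opposite sides of the cut, so this count is at least $1$. Discarding the nonnegative contributions from $f\notin E_T(S,\bar S)$ yields $\sum_{e\in E(S,\bar S)}\mathrm{load}_T(e)\ge\sum_{f\in E_T(S,\bar S)}C_T(f)$, which is the desired bound.

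\emph{Main obstacle.} There is no real obstacle: the statement is a bookkeeping identity decorated with two invocations of the odd-crossing principle. The only care needed is to apply that principle to the correct object in each case — the tree path $P_T(e)$ for the first inequality and the graph walk $P_G(f)$ for the second — and to check that every term that gets dropped is genuinely nonnegative, which holds because $c\ge 0$.
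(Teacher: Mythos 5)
Your proof is correct and follows essentially the same route as the paper's: for the left inequality you pick, for each cut edge of $G$, a tree edge of $P_T(e)$ crossing the cut and note that $c(e)$ is one of the terms summed in $C_T$ of that edge; for the right inequality you unwind $M_e(e')$ along $P_T(e')$, swap summation order to isolate the factor $C_T(f)$, and observe that every $f\in E_T(S,\bar S)$ has $P_G(f)$ crossing the cut at least once (this is precisely the paper's intermediate identity ${\rm load}_T(e)=\sum_{\{f\in E_T\,:\,e\in P_G(f)\}}C_T(f)$ followed by the same counting). The only cosmetic difference is that you make the map $\phi$ and the change of summation order explicit, which tightens the bookkeeping but does not change the argument.
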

\begin{proof}
We begin with the left inequality. For any graph edge $e\in E(S,\bar{S})$, there exists a tree edge $e'\in E_T(S,\bar{S})$ such that $e'\in P_T(e)$, because the path $P_T(e)$ must cross the cut. Since removing $e'$ from $T$ separates the endpoints of $e$, $C_T(e')$ will contain the term $c(e)$. We conclude that
\[
\sum_{e\in E\left(S,\bar{S}\right)}c(e)\le\sum_{e'\in E_{T}\left(S,\bar{S}\right)}C_T(e')~.
\]

For the right inequality, consider an edge $e\in E$, and note that for any tree edge $e'\in E_T$ such that $e\in P_G(e')$, every edge $e''\in E(S_{T,e'},\bar{S}_{T,e'})$ will have $e\in M(e'')$ and thus contribute to ${\rm load}_T(e)$ (perhaps multiple times, due to different $e'$). This implies that
\begin{equation}\label{eq:loadd}
{\rm load}_T(e)=\sum_{\{e'\in E_T~:~e\in P_G(e')\}}C_T(e')~.
\end{equation}
Next, observe that any tree edge $e'\in E_T(S,\bar{S})$ must have at least one graph edge $e\in E(S,\bar{S})$ such that $e\in P_G(e')$. This suggests that
\begin{eqnarray*}
\sum_{e\in E(S,\bar{S})}{\rm load}_T(e)&\stackrel{\eqref{eq:loadd}}{=}&\sum_{e\in E(S,\bar{S})}\sum_{\{e'\in E_T:e\in P_G(e')\}}C_T(e')\\
&=&\sum_{e'\in E_T}|E(S,\bar{S})\cap P_G(e')|\cdot C_T(e')\\
&\ge&\sum_{e'\in E_T(S,\bar{S})}C_T(e')~.
\end{eqnarray*}

\end{proof}

\section{Strong Terminal Embedding of Graphs into a Distribution of Spanning Trees}\label{sec:Distrbution spanninig tree}

In this section we consider strong terminal embedding of graphs into a distribution over their spanning trees. We will show $(\tilde{O}(\log k),\tilde{O}(\log n))$-strong terminal distortion as promised in the introduction.

\begin{theorem}\label{thm:strong-spanning}
Given a weighted graph $G=\left(V,E,w\right)$ on $n$ vertices, and a subset of $k$ terminals $K\subseteq V$, there exists a distribution over spanning trees of $G$ with strong terminal distortion $(\tilde{O}(\log k),\tilde{O}(\log n))$.
\end{theorem}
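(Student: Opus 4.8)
The plan is to adapt the algorithm of Abraham--Neiman \cite{AN12}, which produces a distribution over spanning trees with $\tilde{O}(\log n)$ expected distortion via a recursive \emph{petal decomposition} of the graph. Recall that this algorithm processes a cluster together with a designated center $x_0$ and a ``target'' vertex $t$ inherited from the parent, and carves it into a central ball $X_0\ni x_0$ together with a sequence of petals $Y_1,\dots,Y_s$, each petal being a cone-like region with its own center and target; the radii are chosen so that the radius blow-up per level is bounded, the recursion terminates, and the probability that a fixed edge $e$ has its endpoints separated into distinct pieces at a given level of the hierarchy is $O\!\left(\frac{w(e)}{\Delta}\cdot\mathrm{polylog}(m)\right)$, where $\Delta$ is the cluster's radius and $m$ its size. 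Recursing and telescoping the geometrically decaying radii against these cut probabilities yields the $\tilde{O}(\log n)$ distortion for \emph{every} pair, and non-contraction is automatic since the output $T$ is a spanning subgraph.

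To obtain strong terminal distortion I would modify the decomposition to give priority to terminals, in the spirit of the ``terminals as the preface of the permutation'' trick of \theoremref{thm:strong-ultra}. Concretely: whenever a cluster contains a terminal, (i) pick its center $x_0$ to be a terminal; (ii) when forming the petals, carve out all petals whose center/target is a terminal before carving any petal around a non-terminal; and (iii) hand a terminal down as the target to each sub-cluster that still contains terminals. Reordering the carving this way does not change the geometric guarantees of a single petal (cost-versus-radius trade-off, radius-increase budget, termination), only which piece a vertex ends up in, so the full AN12 analysis goes through verbatim and every pair still suffers expected distortion $\tilde{O}(\log n)$.

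The real work is the improved bound for a pair $\{t,u\}$ with $t\in K$. The key observation is that such a pair can only be separated by a ``terminal choice'': as long as $t$ has not been assigned to a cluster/petal, $t$ is carved before any non-terminal of its current cluster, so a non-terminal center can never be the one that first splits $t$ from $u$. This lets one replace the ball-growth / union-bound argument that produces the $\mathrm{polylog}(m)$ factor by one in which only \emph{terminals} are counted, giving a per-level cut probability of $O\!\left(\frac{w(e)}{\Delta}\cdot\mathrm{polylog}(k)\right)$ for terminal pairs; and, more importantly, one shows that a terminal pair can be separated in at most $O(\log k)$ levels of the hierarchy, because each relevant level shrinks the ``terminal radius'' (equivalently, drives down the number of terminals that could still settle the pair) by a constant factor, so only $O(\log k)$ scales are dangerous rather than the $O(\log(\Delta_{\max}/w(e)))$ scales an arbitrary pair must survive. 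Telescoping these two facts against the geometric radii yields expected distortion $\tilde{O}(\log k)$ for all pairs touching $K$, as required.

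The main obstacle is exactly the last step. The petal decomposition is far more delicate than the CKR-style partition underlying the ultrametric result: petals are carved with a continuous radius parameter governed by a careful cost function, a single petal can cut many pairs at once, and the radius-increase budget is tight. One must therefore verify that (a) forcing terminals to be centers and carving them first never exceeds this budget, and (b) the ``danger zone'' of a terminal pair is genuinely confined to scales controlled by $k$ rather than $n$ --- this needs a dedicated invariant, maintained through the recursion, on how quickly the terminal content of the cluster containing $t$ decays, and a proof that choices made for non-terminal petals (which are carved later, and hence after the terminal structure around $t$ is fixed) do not reintroduce cut probability at those scales. Establishing this invariant, and threading it through all the cases of the AN12 construction, is the technical core of the proof.
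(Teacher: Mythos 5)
You correctly identify the right framework (adapting the petal decomposition of \cite{AN12}) and the right high-level idea (give terminals priority in the carving so that pairs touching a terminal are only ``in danger'' at $O(\log k)$ scales). This matches the paper's plan. However, the step you flag as ``the technical core'' is genuinely incomplete in your sketch, and the mechanism you gesture at is not the one the paper uses, nor does it obviously work.

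First, your step~(i) — ``pick its center $x_0$ to be a terminal'' — is not available: $x_0$ is inherited from the parent's petal and cannot be freely chosen. The paper instead prioritizes terminals as \emph{targets} of petals. More importantly, priority-ordering alone does \emph{not} give the key claim that ``a non-terminal center can never be the one that first splits $t$ from $u$.'' After all terminal petals are carved, the pair can still be split by a non-terminal petal or by the central piece $X_0$. The paper closes this hole with a dedicated \texttt{Reform-Interval} procedure and by shrinking the non-terminal radius interval (and raising the stopping threshold for $X_0$ to $7\Delta/8$), which together force the ball of radius $\Theta(\Delta)$ around any terminal to be cuttable \emph{only} at levels where terminals are actually separated (\claimref{Clm reform effect}). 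This deterministic ``safety'' statement is what your proposal is silently relying on, and it does not come for free.

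Second, your argument that only $O(\log k)$ levels are dangerous — ``each relevant level drives down the number of terminals that could still settle the pair by a constant factor'' — does not hold in the petal decomposition (nothing guarantees geometric decay of terminal counts per level), and it is not the paper's argument. The paper instead contracts balls of radius $\Delta/k^3$ around terminals at any level where terminals get separated (and balls of radius $\Delta/n^3$ around all vertices at every level), so that a ball $B_G(x,d(x,y))$ can be cut only at levels where the cluster radius lies in $[d(x,y),k^3 d(x,y)]$; combined with the geometric radius decay of \factref{FactPetal3/4Radius} this gives $O(\log k)$ dangerous scales, and a separate accounting shows the contractions blow up the tree radius by only a $1+o(1)$ factor. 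Without some version of this contraction device, your geometric series over dangerous scales does not telescope in terms of $k$ on weighted inputs. In short: the approach is right, but the two mechanisms you would need to supply — the interval-reformation that confines terminal danger to terminal-separating levels, and the contraction bounding the number of those levels — are missing, and the informal substitute you offer for the second is not sound.
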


We will use the framework of \cite{AN12}, in particular their petal decomposition structure in order to obtain the distribution over spanning trees.
Roughly speaking, it is an iterative method to build a spanning tree. In each level, the current graph is partitioned into smaller diameter pieces, called petals, and a single central piece, which are then connected by edges in a tree structure. Each of the petals is a ball in a certain metric. The main advantage of this framework, is that it produces a spanning tree whose diameter is proportional to the diameter of the graph, but allows very large freedom for the choice of radii of the petals. Specifically, if the graph diameter is $\Delta$, each radius can be chosen in an interval of length $\approx\Delta$. Intuitively, if a pair is separated by the petal decomposition, then its distance in the tree will be $O(\Delta)$. So we would like a method to choose the radii, that will give the appropriate bounds both on pairs containing a terminal, and for all other pairs, simultaneously.
We note that in the star-partition framework of \cite{EEST05} (used also in \cite{ABN08}), the tree radius increases introduces a factor to the distortion that inherently depends on $n$, which does not seem to allow terminal distortion that depends on $k$ alone.

In order to "take care" of pairs containing a terminal, we will need to somehow give the terminals a "preference" in the decomposition. Since the \cite{CKR01,FRT03} style partitioning scheme cannot work in a graph setting (because it does not produce strong diameter clusters, and the cluster's center may not be contained in it), we turn to the partitions based on truncated exponential distribution as in \cite{B96,B04,ABN08}. To implement the "terminal preference" idea, we first build petals with the terminals as centers, and only then build petals for the remaining points.
There are few technical subtleties needed to assure that pairs containing a terminal suffer small expected distortion.
By a careful choice for the petal's center and the interval from which to choose the radius, we guarantee that in a level where no terminals are separated, none of the relevant balls around terminals are in danger of being cut. Using that the radius of clusters decreases geometrically, we conclude that each ball is "at risk" in at most $O(\log k)$ levels.

\subsection{Petal decomposition description}

In this section we present the petal decomposition algorithm, and quote some of its properties.
We do not provide full proofs of these properties, these can be found in \cite{AN12}. The petals are built using an alternative metric on the graph.
\begin{definition}[Cone metric] Given a graph $G=(V,E,w)$, a subset $X\subseteq V$
and points $x,y\in X$, define the $\emph{cone-metric}$ $\rho=\rho(X,x,y):X^{2}\to\mathbb{R}^{+}$
as $\rho(u,v)=\left|\left(d_{X}(x,u)-d_{X}(y,u)\right)-\left(d_{X}(x,v)-d_{X}(y,v)\right)\right|$.
\end{definition}
This is actually a pseudo metric. For a cone metric $\rho=\rho(X,x,y)$,
\[
\rho(y,u)=\left|\left(d_{X}(x,u)-d_{X}(y,u)\right)-\left(d_{X}(x,y)-d_{X}(y,y)\right)\right|=d_{X}(x,y)+d_{X}(y,u)-d_{X}(x,u)
\]
is the difference between the shortest path from $x$ to $u$ in $G[X]$
and the shortest path from $x$ to $u$ in $G[X]$ that goes through $y$.\footnote{For $X\subseteq V$, $G[X]$ is the induced graph on the vertices of $X$.}
Therefore, the ball $B_{(X,\rho)}(y,r)$ in the cone metric
$\rho=\rho(X,x,y)$ centered in $y$ with radius $r$, contains all such vertices $u$. Denote by $P_{xy}(G)$ the shortest\footnote{For simplicity, we assume that for each pair of vertices there is a unique shortest path in $G$.} path between $x$ and $y$ in $G$. If $Y$ is a subset of the vertices of $G$, we denote by $P_{xy}(Y)$ the shortest path from $x$ to $y$ in $G[Y]$.

A petal is a union of balls in the cone metric. In the $\texttt{create-petal}$
algorithm, while working in a subgraph $G[Y]$ with two specified vertices: a center $x_{0}$
and a target $t$, we define $W_{r}\left(Y,x_{0},t\right)=\bigcup_{p\in P_{x_{0}t}:\ d_{Y}(p,t)\le r}B_{(Y,\rho(Y,x_{0},p))}(p,\frac{r-d_{Y}(p,t)}{2})$
which is union of balls in the cone metric, where any vertex $p$
in the shortest path from $x_{0}$ to $t$ of distance at most $r$
from $t$ is a center of a ball with radius $\frac{r-d_{Y}(p,t)}{2}$. We will often omit the parameters and write just $W_{r}$ if $Y,x_{0}$
and $t$ are clear from the context. The next claim, which is implicit in \cite{AN12}, states that $W_{r}$
is monotone (in $r$) and that $W_{r+4l}$ contains the ball with
radius $l$ around $W_{r}$ (where the ball is taken in the shortest path metric in $G[Y]$).
\begin{claim}
\label{ClmW_rProp}For $W_{r}\left(Y,x_{0},t\right)=\bigcup_{p\in P_{x_{0}t}:\ d_{Y}(p,t)\le r}B_{(Y,\rho(Y,x_{0},p))}(p,(r-d_{Y}(p,t))/2)$
\begin{enumerate}
\item $W_r$ is monotone in $r$, i.e., for $r\le r'$ it holds that $W_{r}\subseteq W_{r'}$.
\item For every $y\in W_{r}$ and $l$, the ball $B_Y(y,l)$ contained in $W_{r+4l}$, i.e., $\forall z,~d_Y\left(y,z\right)\le l\Rightarrow z\in W_{r+4l}$
\end{enumerate}
\end{claim}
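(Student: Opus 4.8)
The plan is to verify both items directly from the definition of $W_r=W_r(Y,x_0,t)$, using only the explicit form of the cone metric; no auxiliary machinery should be needed. Throughout I will abbreviate $\rho_p:=\rho(Y,x_0,p)$ for $p\in P_{x_0t}$, and use the identity recorded above, $\rho_p(p,u)=d_Y(x_0,p)+d_Y(p,u)-d_Y(x_0,u)$.

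For the first item I would argue by inclusion of the defining balls. Fix $r\le r'$ and $y\in W_r$; by definition there is $p\in P_{x_0t}$ with $d_Y(p,t)\le r$ and $\rho_p(p,y)\le (r-d_Y(p,t))/2$. Since $r\le r'$, this same $p$ still satisfies $d_Y(p,t)\le r'$, so it is an admissible center in the union defining $W_{r'}$, and the radius attached to it only grows, $(r-d_Y(p,t))/2\le(r'-d_Y(p,t))/2$. Hence $y\in B_{(Y,\rho_p)}\bigl(p,(r'-d_Y(p,t))/2\bigr)\subseteq W_{r'}$.

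For the second item I would keep exactly the same center $p$. Fix $y\in W_r$, take $p\in P_{x_0t}$ as above (so $d_Y(p,t)\le r$ and $\rho_p(p,y)\le (r-d_Y(p,t))/2$), and let $z$ be any vertex with $d_Y(y,z)\le l$. Using the explicit formula for $\rho_p$ twice together with the triangle inequality for $d_Y$,
\[
\rho_p(p,z)-\rho_p(p,y)=\bigl(d_Y(p,z)-d_Y(p,y)\bigr)+\bigl(d_Y(x_0,y)-d_Y(x_0,z)\bigr)\le 2\,d_Y(y,z)\le 2l,
\]
so $\rho_p(p,z)\le (r-d_Y(p,t))/2+2l=\bigl((r+4l)-d_Y(p,t)\bigr)/2$. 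Since also $d_Y(p,t)\le r\le r+4l$, the vertex $p$ is admissible as a center in the union defining $W_{r+4l}$, and $z$ lies in the corresponding ball $B_{(Y,\rho_p)}\bigl(p,((r+4l)-d_Y(p,t))/2\bigr)$; therefore $z\in W_{r+4l}$.

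I do not expect a genuine obstacle here; both proofs are a few lines of bookkeeping. The only point requiring care is the constant $4$ in the second item: replacing $r$ by $r+4l$ enlarges the radius of $p$'s ball by exactly $2l$, which is precisely the slack $\rho_p(p,z)-\rho_p(p,y)\le 2l$ produced when passing from $y$ to the nearby vertex $z$. It would be a mistake to try to re-center at a point of $P_{x_0t}$ closer to $z$; keeping $p$ fixed is both sufficient and simplest.
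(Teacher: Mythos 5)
Your proof is correct, and since the paper only cites \cite{AN12} for this claim without spelling out a proof, your direct verification is precisely the kind of bookkeeping argument the authors relied on. Both parts follow the expected route: the monotonicity in item 1 is immediate because both the set of admissible centers and the radii attached to them are monotone in $r$; for item 2 the key computation is the cancellation in
\[
\rho_p(p,z)-\rho_p(p,y)=\bigl(d_Y(p,z)-d_Y(p,y)\bigr)+\bigl(d_Y(x_0,y)-d_Y(x_0,z)\bigr)\le 2d_Y(y,z),
\]
which shows that moving $z$ away from $y$ by $l$ in the graph metric moves it away from $p$ by at most $2l$ in the cone pseudometric $\rho_p$, and the factor $4$ in $r+4l$ is then exactly what is needed because the ball radius is $(r-d_Y(p,t))/2$ rather than $r-d_Y(p,t)$. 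Your remark that re-centering at a different point of $P_{x_0t}$ is unnecessary (and would only complicate things) is also right; keeping $p$ fixed is the intended and simplest choice.
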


Next, we quote some properties of the petal decomposition algorithm proved in \cite{AN12}.
Their proofs remain valid even after our modification, because the algorithm allows for arbitrary choice of targets after the first petal is built, and allows any radius in the appropriate interval. For a subset $X\subseteq V$, let $\Delta_{X,x_0}$ denote the radius of $G[X]$ with respect to $x_0$ (the maximal distance from $x_0$ in $G[X]$). We often omit the subscript $x_0$ if it is clear from the context.
\begin{fact}
\label{FactPetal3/4Radius} For a graph $G$, a vertex set $X$,
and some vertices $x_{0},t\in X$,  \textup{$\texttt{petal-decomposition }(G\left[X\right],x_{0},t,K)$}
returns as output the sequence $(X_{0},\dots,X_{s},\left\{ y_{1},x_{1}\right\} ,\dots,\left\{ y_{s},x_{s}\right\} ,t_{0},\dots,t_{s})$
such that for each $j\le s$, $\Delta_{X_j,x_j}\le 7\Delta_{X,x_0}/8$.

\end{fact}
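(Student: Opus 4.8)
The plan is to closely follow the petal-decomposition construction of \cite{AN12}, tracking the radius of each output cluster with respect to its designated center, and then to observe that the terminal modification does not touch this particular estimate. Write $\Delta=\Delta_{X,x_0}$. The first thing I would record are two structural invariants maintained by \texttt{petal-decomposition} and \texttt{create-petal}: (i) deleting a petal from the current residual graph never lengthens a shortest path from $x_0$ to a surviving vertex, so that $d_{Y_j}(x_0,u)=d_X(x_0,u)\le\Delta$ for every vertex $u$ of the residual graph $Y_j$ present when petal $X_j$ is carved; and (ii) each petal $X_j$ is geodesically convex in $G[Y_j]$ toward its center $x_j$, so that $d_{X_j}(x_j,u)=d_{Y_j}(x_j,u)$ for $u\in X_j$, and hence $\Delta_{X_j,x_j}=\max_{u\in X_j}d_{Y_j}(x_j,u)$. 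These let me argue purely with distances in the residual graphs, which are controlled, rather than with distances in the smaller (and potentially distance-inflating) induced subgraph $G[X_j]$.

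With these the case $j=0$ is immediate: the main loop of \texttt{petal-decomposition} keeps carving a petal around every surviving vertex at distance more than $\Delta/2$ from $x_0$ until none remains, so by (i) every $u\in X_0$ has $d_{X_0}(x_0,u)=d_X(x_0,u)\le\Delta/2\le 7\Delta/8$.

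The real content is the bound for a petal $X_j=W_{r_j}(Y_j,x_0,t_j)$ with $j\ge1$. Its center $x_j$ is the ``highest'' vertex of the shortest path $P_{x_0t_j}(Y_j)$ still contained in $W_{r_j}$, so $d_{Y_j}(x_j,t_j)\le r_j$ and $d_{Y_j}(x_0,x_j)=d_{Y_j}(x_0,t_j)-d_{Y_j}(x_j,t_j)$. For an arbitrary $u\in X_j$ there is a path vertex $p$ with $d_{Y_j}(p,t_j)\le r_j$ and $u\in B_{(Y_j,\rho(Y_j,x_0,p))}(p,(r_j-d_{Y_j}(p,t_j))/2)$; expanding the cone-metric ball inequality, using that $p$ lies on a shortest $x_0$--$t_j$ path, and invoking (i) to cap $d_{Y_j}(x_0,u)$ by $\Delta$, gives an estimate of the shape $d_{Y_j}(x_j,u)\le\Delta-d_{Y_j}(x_0,t_j)+O(r_j)$. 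I would then substitute the radius window that \texttt{create-petal} uses — for the first petal anchored around $\tfrac12 d_{Y_j}(x_0,t_j)$, and for every later petal exploiting that its target sits at distance more than $\Delta/2$ from $x_0$ — and check that the window is calibrated exactly so that this expression never exceeds $7\Delta/8$. With (ii) this gives $\Delta_{X_j,x_j}\le 7\Delta/8$. Finally I would note that the terminal variant only changes which vertex plays the role of a petal's target and leaves the radius windows intact, so nothing in the above is affected; \claimref{ClmW_rProp} supplies the monotonicity and ball-containment facts used en route.

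The step I expect to be the genuine obstacle is establishing invariants (i) and (ii) cleanly — that cone-ball removal preserves $x_0$-distances, and that petals are geodesically convex toward their centers. These are exactly the properties that the cone metric (as opposed to the ordinary graph metric) is introduced to guarantee, and nailing them down is the technical heart of \cite{AN12}; once they are in hand, the radius estimate is just bookkeeping with the explicit radius windows of \texttt{create-petal} and \claimref{ClmW_rProp}.
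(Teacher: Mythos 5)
Your overall approach — lean on the cone-metric machinery of \cite{AN12}, establish that petal removal preserves $x_0$-distances and that petals are geodesically convex toward their centers, then track radii through the explicit windows of \texttt{create-petal} — is the right strategy, and it is essentially what the paper does implicitly: the paper does not prove this fact at all, but simply quotes it from \cite{AN12} with the remark that the modifications do not disturb it. However, your sketch has a concrete error that shows the modified algorithm was not read closely, and it is precisely the error that would make you prove the wrong constant.

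You claim the stigma loop ``keeps carving a petal around every surviving vertex at distance more than $\Delta/2$ from $x_0$ until none remains, so every $u\in X_0$ has $d_{X_0}(x_0,u)\le \Delta/2$.'' That threshold is wrong. In the original \cite{AN12} algorithm the loop carves until $Y_{j-1}\setminus B_X(x_0,3\Delta/4)=\emptyset$ (giving a $3\Delta/4$ bound), and the present paper \emph{weakens} this to $Y_{j-1}\setminus B_X(x_0,7\Delta/8)=\emptyset$ in the non-terminal petal loop (the terminal loop still uses $3\Delta/4$). This weakening is one of the five modifications the paper lists explicitly, introduced so that balls around terminals inside $X_0$ are not cut. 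Consequently $X_0$ can have radius arbitrarily close to $7\Delta/8$, and the $7/8$ in the statement of the fact is not slack over some smaller true bound — it is exactly the bound forced by this modification. Your assertion that the terminal variant ``leaves the radius windows intact'' makes the same oversight: the non-terminal target threshold moves from $3\Delta/4$ to $7\Delta/8$, the non-terminal radius interval shrinks from $[0,\Delta/8]$ to $[0,\Delta/32]$, and \texttt{reform-interval} shrinks the terminal interval further. Those changes happen to keep the petal radii under $7\Delta/8$ (the paper's remark is that since the new intervals are sub-intervals of the old ones, the \cite{AN12} argument still goes through), but that is something to verify, not to wave away as ``intact.'' Finally, the bound for the petals $j\ge1$ is where the content is, and your treatment (``check that the window is calibrated exactly so that this expression never exceeds $7\Delta/8$'') is a placeholder rather than an argument; in particular the first special petal with radius interval $[d_X(x_0,t)-5\Delta/8,\,d_X(x_0,t)-\Delta/2]$ needs its own short computation, which you do not address.
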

\begin{fact}
\label{Fact:PetalDecoReturnsTree} The $\texttt{hierarchical-petal-decomposition}$
algorithm returns a tree.

\end{fact}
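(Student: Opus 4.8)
The plan is to prove, by induction on $|X|$, the stronger statement that \texttt{hierarchical-petal-decomposition}$(G[X],x_0,t,K)$ returns a spanning tree of $G[X]$; the claim then follows by taking $X=V$ at the top-level call. The base case $|X|=1$ is trivial, the single vertex being a degenerate tree.

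For the inductive step I would unpack one level of the recursion: the call first runs \texttt{petal-decomposition}$(G[X],x_0,t,K)$, obtaining $(X_0,\dots,X_s,\{y_1,x_1\},\dots,\{y_s,x_s\},t_0,\dots,t_s)$, then recurses on each $G[X_j]$ with center $x_j$ and target $t_j$, and finally glues the returned trees together using the edges $\{y_j,x_j\}$ for $j\in[s]$. I would invoke the following structural properties of \texttt{petal-decomposition} from \cite{AN12}: (i) $\{X_0,\dots,X_s\}$ is a partition of $X$ into nonempty parts; (ii) for each $j\in[s]$, $y_j\in X_j$, $x_j\in X_0\cup\dots\cup X_{j-1}$, and $\{x_j,y_j\}\in E$; and (iii) each $|X_j|<|X|$ whenever $|X|>1$, which follows from \factref{FactPetal3/4Radius}, since the radius of $G[X_j]$ around $x_j$ drops to at most $\tfrac78\Delta_{X,x_0}$, so in particular $X_0\ne X$. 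Then, by (iii) and the induction hypothesis, each recursive call returns a spanning tree $T_j$ of $G[X_j]$, and I would set $T=\bigcup_{j=0}^{s}T_j\cup\{\{x_j,y_j\}:j\in[s]\}$. It remains to verify three things about $T$: every edge of $T$ lies in $G[X]$ (immediate from the induction hypothesis together with (ii)); $T$ is connected (a short induction on $j$ shows $\bigcup_{i\le j}X_i$ is connected in $T$, using that $T_0$ spans $X_0$, and that for $j\ge1$ the edge $\{x_j,y_j\}$ attaches $y_j\in X_j$ to $x_j$ in the already-connected $\bigcup_{i<j}X_i$, while (i) gives $\bigcup_i X_i=X$); and $T$ has $\sum_{j=0}^{s}(|X_j|-1)+s=|X|-(s+1)+s=|X|-1$ edges. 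A connected graph on $|X|$ vertices with $|X|-1$ edges is a tree, so $T$ is a spanning tree of $G[X]$, closing the induction.

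The one point I would be careful about is that properties (i)--(iii), and more basically the fact that \texttt{create-petal} returns, on each invocation, a connected vertex set containing its designated center, are stated in \cite{AN12} for their algorithm, whereas here we run the modified version that gives terminals preference. The modification, however, only restricts the \emph{order} in which candidate petal centers are considered (terminals first) and narrows the radius intervals to sub-intervals of those already allowed in \cite{AN12}; it does not change the behaviour of \texttt{create-petal} on a given center/target/radius, nor the monotonicity and robustness of $W_r$ on which \texttt{create-petal} relies (cf.\ \claimref{ClmW_rProp}). Hence all the cited structural properties carry over verbatim, and the only genuinely new content of the argument above is the routine connectivity check and edge count. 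I do not expect this to be a real obstacle, but it is the place where one must explicitly confirm that the modification is innocuous for the combinatorial structure of the decomposition; the rest is bookkeeping.
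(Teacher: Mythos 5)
Your overall strategy is sound, and it is in fact more detailed than what the paper itself does: the paper does not prove this statement, but quotes it from \cite{AN12} and only remarks that the modifications (terminal-first choice of targets, reformed sub-intervals, changed thresholds) do not affect the structural properties, since the original algorithm tolerates an arbitrary order of targets and any radius within the allowed interval. Your closing paragraph identifies exactly this as the point needing verification, so there you match the paper's reasoning, and your induction (partition of $X$ into strictly smaller parts, one connector edge per petal, edge count $\sum_{j}(|X_j|-1)+s=|X|-1$, connected plus $|X|-1$ edges implies tree) is the argument that the paper leaves to \cite{AN12} spelled out.

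One step as written is wrong, though. You state property (ii) as $y_j\in X_j$ and $x_j\in X_0\cup\dots\cup X_{j-1}$, and your connectivity induction on $\bigcup_{i\le j}X_i$ leans on this. In the algorithm the roles are reversed, and moreover the outside endpoint does not lie in the \emph{earlier} petals: \texttt{create-petal} returns $x_j=p_{r'}$, a vertex of $P_{x_0t_j}$ inside the petal, so $x_j\in X_j$, while $y_j$, the neighbour of $x_j$ on $P_{x_0t_j}$ closer to $x_0$, lies outside $W_{r_j}$, i.e.\ $y_j\in Y_j=X\setminus(X_1\cup\dots\cup X_j)=X_0\cup X_{j+1}\cup\dots\cup X_s$. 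Consequently $\bigcup_{i\le j}X_i$ need not be connected at intermediate stages, and the induction as you phrase it fails. The repair is routine: run the induction downward, showing $U_j=X_0\cup X_j\cup\dots\cup X_s$ is connected for $j=s+1,s,\dots,1$ (base: $U_{s+1}=X_0$ is spanned by $T_0$; step: $T_j$ spans $X_j$ and the edge $\{x_j,y_j\}$ attaches it to $y_j\in Y_j=U_{j+1}$), or equivalently note that contracting each part gives a connected graph on $s+1$ nodes with $s$ edges. Also, for $|X_j|<|X|$ the cleaner reason for $j\ge1$ is that petals never contain $x_0$ (and each created petal contains its target, hence is nonempty), while $X_0\subsetneq X$ because at least one petal is created whenever $\Delta_{X,x_0}>0$; the radius bound of \factref{FactPetal3/4Radius} alone is a slightly indirect route. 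With these corrections the proof goes through and is consistent with the paper's (cited) justification.
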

\begin{fact}
\label{Fact PetalDec reduce wght once} Every edge $e\in E$ can
have its weight multiplied by $\frac{1}{2}$ at most once throughout
the execution of the algorithm. \footnote{This multiplication may happen only for edges in the special paths $P_{x_j,t_j}$ for a petal $X_j$ which is not the first special petal. Note that after the creation of $X_j$, $t_j$ will be the target for the first special petal in the decomposition of $X_j$ which will ensure that the weights of the edges in $P_{x_j,t_j}$ won't decrease again.}

\end{fact}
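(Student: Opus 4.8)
The plan is to reconstruct the bookkeeping argument of \cite{AN12} -- whose validity our terminal-preference modification does not affect, since that modification only changes which vertices become petal centers and targets and which radii are chosen from the prescribed intervals, and never touches the rule by which an edge's weight is rescaled. I would fix an edge $e=\{a,b\}\in E$ and follow it through the hierarchical petal-decomposition recursion, establishing the invariant that the weight of $e$ can drop at most once.

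First I would isolate the only mechanism that causes a weight to drop: when a cluster $X$ with center $x_0$ and target $t$ is decomposed into $(X_0,X_1,\dots,X_s)$, the edges lying on the \emph{special path} $P_{x_j,t_j}(G[X_j])$ get their weights multiplied by $\tfrac12$ precisely for the petals $X_j$ that are \emph{not} the first (special) petal $X_1$; the first petal, built toward the designated target $t$, incurs no rescaling. Hence $e$ can be halved only at a recursion node where $e$ lies on the special path of a non-first petal $X_j$, and in particular $e\subseteq X_j$; the recursion then proceeds on $G[X_j]$ with center $x_j$ and target $t_j$, which is a genuine recursive step by \factref{FactPetal3/4Radius} ($\Delta_{X_j,x_j}\le\tfrac78\Delta_{X,x_0}$). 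I would then argue that from this point on $e$ is \textbf{protected}. The first petal of the decomposition of $G[X_j]$ is a union of cone-balls $W_r(X_j,x_j,t_j)$ strung along $P_{x_j,t_j}(G[X_j])$, and by \claimref{ClmW_rProp} together with the radius conventions of \cite{AN12} -- which are arranged exactly so that $t_j$, and hence all of $P_{x_j,t_j}(G[X_j])$, is swallowed by this first petal -- the edge $e$ belongs to that first petal $X_1'$. An edge inside $X_1'$ cannot lie on the special path of any \emph{later} petal of $X_j$ (those are vertex-disjoint from $X_1'$), and $X_1'$, being itself a first petal, triggers no rescaling; moreover the target passed to the recursion on $X_1'$ is again $t_j$, so the special path of $X_1'$ is just the relevant sub-path of $P_{x_j,t_j}$, which again is never halved. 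Iterating this down the chain of first petals, with the trivial base case of leaf clusters, $e$ is never halved a second time.

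The step I expect to be the main obstacle is exactly this \textbf{containment/protection claim}: showing that, once $e$ sits on a special path, it cannot later migrate into a non-first petal or into a central cluster whose own non-first petals carry a special path re-including $e$. Making this rigorous requires (i) combining the monotonicity-and-thickening properties of $W_r$ from \claimref{ClmW_rProp} with the precise radius intervals of \cite{AN12} for the special petal, to certify that the first petal of $G[X_j]$ engulfs the whole of $P_{x_j,t_j}(G[X_j])$; and (ii) checking that the recursion on that first petal inherits $t_j$ as its target -- the point flagged in the footnote -- and, crucially for us, that this handling of the first/special petal is untouched by the terminal-preference modification (which may reorder the \emph{later} petals and change their centers, but is designed to leave the first petal and its target alone). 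Once (i) and (ii) are in place, the ``at most once'' invariant closes by induction on the recursion depth.
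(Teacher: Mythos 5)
Your overall strategy --- isolating the halving mechanism (the rescaling on $P_{x_j t_j}$ for non-first petals in lines 15 and 24 of \texttt{petal-decomposition}), establishing a protection invariant once $e$ has been halved, and inducting down the recursion --- is the right shape of argument and matches what the paper's footnote gestures at. (The paper itself does not prove this Fact; it quotes it from \cite{AN12} and observes only that the terminal-preference modifications do not disturb the \cite{AN12} analysis, which is indeed the case.) You also correctly identify that the crux is the target-passing rule ($t_1=t$, line~6 of \texttt{petal-decomposition}), so that the first special petal built inside $G[X_j]$ is aimed at the very same target $t_j$ whose path was just halved.

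However, the step you yourself flag as the main obstacle is exactly where the argument breaks. The claim that the first petal of $G[X_j]$ engulfs all of $P_{x_j,t_j}(G[X_j])$ is false. In line~3 the first petal's radius is drawn from $[d_{X_j}(x_j,t_j)-5\Delta_{X_j}/8,\ d_{X_j}(x_j,t_j)-\Delta_{X_j}/2]$, hence $r\le d_{X_j}(x_j,t_j)-\Delta_{X_j}/2<d_{X_j}(x_j,t_j)$, and $W_r$ consists of cone-balls around only those path points $p$ with $d(p,t_j)\le r$. The portion of $P_{x_j,t_j}$ within distance $\Delta_{X_j}/2$ of $x_j$ --- including $x_j$ itself --- is deliberately \emph{excluded} from the first petal; that exclusion is precisely why there is a nontrivial central cluster $X_0$. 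So $e$ need not land in $X_1'$; it may remain in $Y_1$, and your ``chain of first petals'' never visits that branch. (That $t_j\in X_1'$ is true but does not imply the whole path is.)

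The invariant that actually closes the induction is that $e$ lies on the center-to-target path of whatever cluster the recursion currently processes, and this is preserved in \emph{both} branches. If $e\in X_1'$, the recursion on $X_1'$ has center $x_1'$ and target $t_1'=t_j$, so $e$ is on the new special path, which is never halved at that level. If instead $e$ stays on the $x_j$-side, one must first show (using \factref{FactPetals dont cut shortest path to x0} together with the fact that vertices captured by the non-first petals at lines 13 and 22 lie far from $x_j$, unlike $y_1'$) that this segment survives into $X_0$; then the recursion on $X_0$ has center $x_j$ and target $t_0'=y_1'$, which also lies on $P_{x_j,t_j}$, so $e$ is on the new special path $P_{x_j,y_1'}$. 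Your write-up covers only the first branch, so the gap is real and the protection claim, as stated, does not hold.
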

\begin{fact}
\label{FactPetalTreeRadiusBound4} For a graph $G$ and tree a $T$
created by the $\texttt{hierarchical-petal-decomposition}$,
\[
\Delta_{T}\le4\Delta_{G}.
\]
\end{fact}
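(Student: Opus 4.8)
In any tree the diameter is at most twice the eccentricity of any fixed vertex, so it suffices to bound $\mathrm{ecc}_{T}(x_{0})$, where $x_{0}$ is the root with which $\texttt{hierarchical-petal-decomposition}$ is invoked (so that $\Delta_{G}=\Delta_{V,x_{0}}$ is the radius of $G$ with respect to $x_{0}$). Concretely, the plan is to prove the sharper claim that for the tree $T_{X}$ returned by the recursive call on $(G[X],x_{0},t,K)$ one has $d_{T_{X}}(x_{0},v)\le 2\Delta_{X,x_{0}}$ for every $v\in X$. Applied at the top level this gives $\mathrm{ecc}_{T}(x_{0})\le 2\Delta_{G}$, hence $\Delta_{T}\le 2\,\mathrm{ecc}_{T}(x_{0})\le 4\Delta_{G}$. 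This is exactly the argument of \cite{AN12}; since we only need to confirm that it survives the terminal-oriented modifications, I will recall its structure and point out the one place where a modification could conceivably interfere.

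\textbf{The induction.} I would argue by induction on the depth of the recursion, the base case $|X|=1$ being trivial. For the step, recall that the call on $(G[X],x_{0},t,K)$ produces a central cluster $X_{0}\ni x_{0}$ together with petals $X_{1},\dots,X_{s}$, centers $x_{1},\dots,x_{s}$, and connecting edges $\{y_{j},x_{j}\}$ for $j\ge1$ (with $y_{j}\in X\setminus(X_{1}\cup\cdots\cup X_{j})$), and then recurses on each $G[X_{i}]$ to obtain $T_{X_{i}}$, so that $T_{X}=\bigcup_{i=0}^{s}T_{X_{i}}\cup\{\{y_{j},x_{j}\}:j\ge1\}$ (a tree by \factref{Fact:PetalDecoReturnsTree}). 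Fix $v\in X$ and let $X_{j}$ be the child cluster containing it. If $j=0$, the induction hypothesis on $X_{0}$ together with $\Delta_{X_{0},x_{0}}\le\Delta_{X,x_{0}}$ gives $d_{T_{X}}(x_{0},v)=d_{T_{X_{0}}}(x_{0},v)\le 2\Delta_{X_{0},x_{0}}\le 2\Delta_{X,x_{0}}$. If $j\ge1$, then using the connecting edge and the induction hypothesis on $X_{j}$,
\[
d_{T_{X}}(x_{0},v)\ \le\ d_{T_{X}}(x_{0},x_{j})+d_{T_{X_{j}}}(x_{j},v)\ \le\ d_{T_{X}}(x_{0},x_{j})+2\Delta_{X_{j},x_{j}}.
\]
So everything reduces to the key inequality
\[
d_{T_{X}}(x_{0},x_{j})+2\Delta_{X_{j},x_{j}}\ \le\ 2\Delta_{X,x_{0}}\qquad\text{for every petal }X_{j},
\]
whose intuitive content is that a petal of radius $\Delta_{X_{j},x_{j}}$, attached at tree-distance $d_{T_{X}}(x_{0},x_{j})$ from the root, still lies within $2\Delta_{X,x_{0}}$ of the root, since inside $T_{X_{j}}$ distances from $x_{j}$ are at most $2\Delta_{X_{j},x_{j}}$.

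\textbf{Proving the key inequality.} This is the technical heart, and it is where the petal geometry enters. The ingredients, all from \cite{AN12}, are: (i) \factref{FactPetal3/4Radius} gives $\Delta_{X_{j},x_{j}}\le\tfrac{7}{8}\Delta_{X,x_{0}}$, leaving a budget $2(\Delta_{X,x_{0}}-\Delta_{X_{j},x_{j}})\ge\tfrac14\Delta_{X,x_{0}}>0$ for $d_{T_{X}}(x_{0},x_{j})$; (ii) the petal center $x_{j}$ and its attachment vertex $y_{j}$ both lie on the shortest path $P_{x_{0}t_{j}}$ in the working subgraph, so that edge is a shortest-path edge and, writing $r_{j}$ for the chosen petal radius and using that $x_{j}$ sits at distance $r_{j}$ from $t_{j}$ along $P_{x_{0}t_{j}}$, one gets $d_{G[X]}(x_{0},x_{j})+r_{j}\le\Delta_{X,x_{0}}$; (iii) by \claimref{ClmW_rProp} and the cone-metric definition of $W_{r_{j}}$, the quantity $\Delta_{X_{j},x_{j}}$ is controlled by $r_{j}$ up to a constant, and the algorithm draws $r_{j}$ from exactly the interval used in \cite{AN12}; and (iv) the recursion on the central cluster does not inflate distances from $x_{0}$ by more than a constant factor, so $d_{T_{X}}(x_{0},x_{j})=d_{T_{X_{0}}}(x_{0},y_{j})+w(y_{j},x_{j})$ is bounded in terms of $d_{G[X]}(x_{0},x_{j})$. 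Feeding (ii)--(iv) into (i) balances the two sides and yields the displayed inequality; the precise bookkeeping of constants is exactly the part carried out in \cite{AN12}, and I would transcribe it, verifying at each step only that it uses $t_{j}$ and $r_{j}$ solely through their permitted ranges.

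\textbf{Why the modifications are harmless, and the main obstacle.} The modified algorithm differs from \cite{AN12} only in (a) offering the terminals as the first candidates when a petal center is picked, and (b) choosing targets (after the first petal) and radii (within their prescribed intervals) so as to favor terminals. None of these touch the facts invoked above: \factref{FactPetal3/4Radius} is stated (here and in \cite{AN12}) for an arbitrary legal choice of targets and radii; \claimref{ClmW_rProp} is a purely geometric property of $W_{r}$, independent of any choices; and \factref{Fact:PetalDecoReturnsTree} depends only on the connection pattern, which is unchanged. Hence the induction goes through verbatim. The step I expect to be the real obstacle is ingredient (iv): controlling $d_{T_{X_{0}}}(x_{0},y_{j})$ against $d_{G[X]}(x_{0},y_{j})$, i.e.\ showing that carving the petals out of $G[X]$ does not stretch the backbone from the root inside the central cluster (this relies on the fact that petal boundaries, being cone-balls, do not disconnect shortest paths to $x_{0}$ within $X_{0}$). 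It is inherited wholesale from \cite{AN12}, but it is the piece one must read carefully to be certain that nothing in the terminal modification has quietly spoiled the shortest-path preservation of the central cluster.
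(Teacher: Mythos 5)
There is a genuine gap here, and it is worth noting first that the paper itself does not prove this statement: it is quoted verbatim from \cite{AN12}, with the one-line justification (given just before the list of facts) that the terminal-oriented modifications only fix the order of targets and restrict radii to sub-intervals, choices the original analysis already permits. Your closing discussion of why the modifications are harmless matches that justification, but the from-scratch induction you put in place of the cited proof does not work. First, a reading issue that matters: by the paper's convention $\Delta_{X,x_0}$ (subscript omitted) denotes the \emph{radius} with respect to the root, so the fact asserts $\mathrm{rad}_{x_0}(T)\le 4\,\mathrm{rad}_{x_0}(G)$ --- which is why the sentence following it deduces tree diameter at most $8\Delta$. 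You instead set out to prove the strictly stronger invariant $d_{T_X}(x_0,v)\le 2\Delta_{X,x_0}$, and your entire argument hinges on the ``key inequality'' $d_{T_X}(x_0,x_j)+2\Delta_{X_j,x_j}\le 2\Delta_{X,x_0}$. That inequality is not implied by the ingredients you list and fails in exactly the regime you must handle: \factref{FactPetal3/4Radius} only caps $\Delta_{X_j,x_j}$ at $\tfrac78\Delta_{X,x_0}$, leaving a budget of $\Delta/4$ for $d_{T_X}(x_0,x_j)$, yet every petal center is by construction placed near the far boundary of the cluster --- the special petal's center lies at graph distance at least $\Delta/2$ from $x_0$, terminal petals (target beyond $3\Delta/4$, radius at most $\Delta/8$) at least $5\Delta/8$, and non-terminal petals (target beyond $7\Delta/8$, radius at most $\Delta/32$) at least $27\Delta/32$ --- and tree distances only exceed graph distances. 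None of the quoted facts provides the tradeoff ``far center $\Rightarrow$ small petal radius'' that would be needed to close the budget, so the induction does not go through.

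The deeper problem is that the bound in \cite{AN12} is not obtained by the naive recursion ``tree distance to the child's center plus the child's tree radius'': with contraction factor $7/8$ that recursion telescopes to $8\Delta$ at best, and $d_{T_X}(x_0,x_j)$ is itself a tree distance whose path may thread through other petals, so it is not controlled by $d_{G[X]}(x_0,x_j)$ without further argument (your ingredient (iv) is not a small technicality about $X_0$ alone). The actual mechanism that yields the factor $4$ is precisely what your sketch never invokes: the special first petal built around the target $t$ (lines 2--6 of \texttt{petal-decomposition}, with $t$ handed down as the target of the recursive call) together with the halving of the edge weights on the paths $P_{x_jt_j}$ (lines 15 and 24, cf.\ \factref{Fact PetalDec reduce wght once}), which makes the detour incurred at each level telescope instead of accumulating geometrically. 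Since your plan for the ``bookkeeping'' is to transcribe \cite{AN12}, note that their bookkeeping establishes the $4\Delta$ radius bound, not the $2\Delta$ eccentricity bound your reduction requires, so the transcription cannot rescue the proposed invariant either.
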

\begin{fact}
\label{FactPetals dont cut shortest path to x0} Let $1\le j\le s$
be an integer and $z\in Y_{j}$,\footnote{$Y_{j}$ is defined in the \texttt{petal-decomposition} procedure.} then $P_{x_{0}z}\left(X\right)\subseteq G\left[Y_{j}\right]$.
\end{fact}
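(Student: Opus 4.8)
The plan is to argue by induction on $j$, exploiting the fact that each petal is a union of balls in a cone metric with apex $x_0$, and that any such ball -- hence any such union -- is ``downward closed'' under shortest paths emanating from $x_0$. Recall the structure of the petal decomposition of \cite{AN12} run on $(G[X],x_0,t,K)$: it processes a shrinking chain of vertex sets $Y_1=X\supseteq Y_2\supseteq\dots\supseteq Y_s$, where at step $j$ a petal $X_j=W_{r_j}(Y_j,x_0,t_j)$ is carved out for the current target $t_j$ and a radius $r_j$ (the first, ``special'', petal, whose target is $t$, fits the same template), and the remaining graph becomes $G[Y_{j+1}]$ with $Y_{j+1}=Y_j\setminus X_j$; the apex $x_0$ is never placed into any petal, so $x_0\in Y_j$ for every $j$. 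I will prove, by induction on $j$, that for every $z\in Y_j$ one has $P_{x_0z}(X)\subseteq G[Y_j]$; since $G[Y_j]$ is an induced subgraph of $G[X]$ and shortest paths are assumed unique, this also yields $P_{x_0z}(X)=P_{x_0z}(Y_j)$.

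The base case $j=1$ is immediate as $Y_1=X$. For the inductive step, fix $z\in Y_{j+1}=Y_j\setminus X_j$; by the induction hypothesis $\pi:=P_{x_0z}(X)=P_{x_0z}(Y_j)$ is a shortest path inside $G[Y_j]$, so it suffices to show that $\pi$ is disjoint from $X_j$, as then $\pi\subseteq G[Y_{j+1}]$ and uniqueness gives $P_{x_0z}(X)=P_{x_0z}(Y_{j+1})$. Assume towards contradiction that some vertex $w$ of $\pi$ lies in $X_j=W_{r_j}(Y_j,x_0,t_j)$. Unfolding the definition of $W_{r_j}$, there is a vertex $p$ on the shortest path from $x_0$ to $t_j$ in $G[Y_j]$ with $d_{Y_j}(p,t_j)\le r_j$ such that $w\in B_{(Y_j,\rho(Y_j,x_0,p))}(p,\ell)$, i.e.\ $\rho(p,w)=d_{Y_j}(x_0,p)+d_{Y_j}(p,w)-d_{Y_j}(x_0,w)\le\ell:=(r_j-d_{Y_j}(p,t_j))/2$. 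Since $w$ lies on the shortest path $\pi$ from $x_0$ to $z$ in $G[Y_j]$ we have $d_{Y_j}(x_0,z)=d_{Y_j}(x_0,w)+d_{Y_j}(w,z)$, and combining this with $d_{Y_j}(p,z)\le d_{Y_j}(p,w)+d_{Y_j}(w,z)$ gives
\[
\rho(p,z)=d_{Y_j}(x_0,p)+d_{Y_j}(p,z)-d_{Y_j}(x_0,z)\le d_{Y_j}(x_0,p)+d_{Y_j}(p,w)-d_{Y_j}(x_0,w)=\rho(p,w)\le\ell.
\]
Hence $z\in B_{(Y_j,\rho(Y_j,x_0,p))}(p,\ell)\subseteq W_{r_j}(Y_j,x_0,t_j)=X_j$, contradicting $z\in Y_{j+1}$; thus $\pi$ avoids $X_j$, and the induction is complete.

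The only place where I expect to need care is the bookkeeping, not the geometry: one has to verify from the descriptions of \texttt{petal-decomposition} and \texttt{create-petal} that every petal (including the first special one) really is a union of cone-balls $B_{(Y_j,\rho(Y_j,x_0,\cdot))}(\cdot,\cdot)$ taken in the \emph{current} remaining graph $G[Y_j]$ and sharing the \emph{same} apex $x_0$, and that $x_0$ never enters a petal (guaranteed by the radius choices). Once this is granted, the displayed computation above -- which is just the single-ball case, and extends for free to unions because a union of downward-closed sets is downward-closed -- is the entire argument; it is morally the monotonicity already recorded in \claimref{ClmW_rProp}.
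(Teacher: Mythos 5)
Your argument is correct: the computation $\rho(p,z)\le\rho(p,w)$ for $w$ on the shortest $x_0$--$z$ path shows each cone ball (hence each petal $W_r$, being a union of such balls with the common apex $x_0$) cannot contain an interior vertex of that path without containing $z$ itself, and the induction over the carving steps then gives the fact. The paper does not prove this statement itself but quotes it from \cite{AN12}, and your proof is essentially the standard argument used there (the same closure property underlying \claimref{ClmW_rProp}); the only discrepancy is a harmless index shift, since in \texttt{petal-decomposition} the petal $X_j$ is carved from $Y_{j-1}$ and $Y_j=Y_{j-1}\setminus X_j$, whereas you carve $X_j$ from $Y_j$.
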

From \factref{FactPetalTreeRadiusBound4} we can deduce that if the
radius of $G$ was $\Delta$ (with respect to some vertex $x_{0}$)
then the diameter of the tree created by the hierarchical petal decomposition
is bounded by $8\Delta$. In lines 15 and 24 of the $\texttt{petal-decomposition}$
procedure we divide the weight of some edges by 2. By \factref{Fact PetalDec reduce wght once}
it can happen to any edge at most once, so in the analysis
of the distortion we will ignore this factor (for simplicity).

For completeness, we present the full petal decomposition algorithm. We illustrate below the main changes compared to the version presented in \cite{AN12}.
\begin{enumerate}
\item The order of choosing targets in $\texttt{petal-decomposition}$.
In the original algorithm the targets were chosen arbitrarily, while
in our version we first choose terminals, and only when there are
no terminals left sufficiently far from $x_0$, we choose the other vertices. The purpose of this
ordering, is to ensure that only the terminal petals may
cut certain balls around the other terminals.
\item The segments sent to create petal procedure. In line 22 in $\texttt{petal-decomposition}$ 
    we send an interval of a different size from the original algorithm. The purpose of this change is to ensure that petals which are created in line 22 are far away from any non-covered terminal.
\item We stop constructing petals when all points outside $B_X(x_0,7\Delta/8)$ have been clustered (rather than $B_{X}(x_{0},3\Delta/4)$). The purpose is to ensure that small balls around terminals in $X_0$ are not cut.
\item The choice of the radius in the $\texttt{create-petal}$ procedure
is done by sampling according to a truncated exponential distribution with parameter $\lambda$. The parameter is computed using the density of terminals or other vertices around the target depending on whether the petal will contain a terminal.
\item
The interval in which we choose the radius is determined using the $\texttt{Reform-Interval}$
procedure. The purpose of this is to ensure that certain balls around terminals may be cut only at levels when some terminals are separated (note that there can be at most $k$ such levels). The new interval returned is guaranteed to be contained in the original one.
\end{enumerate}

\begin{algorithm}[h]
\caption{$T=\texttt{hierarchical-petal-decomposition}(G[X],x_{0},t,K)$}
\begin{algorithmic}[1]
\IF {$|X|=1$}
\RETURN $G[X]$.
\ENDIF
\STATE Let $(X_{0},\dots,X_{s},(y_{1},x_{1}),\dots,(y_{s},x_{s}),t_{0},\dots,t_{s})=\texttt{petal-decomposition}(G[X],x_{0},t,K)$;
\FOR {each $j\in[0,\dots,s]$}
\STATE $T_{j}=\texttt{hierarchical-petal-decomposition}(G[X_{j}],x_{j},t_{j},K\cap X_{j})$;
\ENDFOR
\STATE Let $T$ be the tree formed by connecting $T_{0},\dots,T_{s}$ using the edges $\{y_{1},x_{1}\},\dots,\{y_{s},x_{s}\}$;
\end{algorithmic}

\end{algorithm}

\begin{algorithm}[h]
\caption{$(X_{0},\dots,X_{s},\{y_{1},x_{1}\},\dots,\{y_{s},x_{s}\},t_{0},
\dots,t_{s})=\texttt{petal-decomposition}(G[X],x_{0},t,K)$}
\begin{algorithmic}[1]
\STATE Let $\Delta=\Delta_{X,x_0}$; Let $Y_0=X$; Set $j=1$;
\IF {$d_{X}(x_{0},t)\ge5\Delta/8$}
\STATE Let $(X_{1},x_{1})=\texttt{create-petal}(X,t,x_{0},[d_{X}(x_{0},t)-5\Delta/8,d_{X}(x_{0},t)-\Delta/2],K)$;
\STATE Let $Y_{1}=Y_{0}\setminus X_{1}$;
\STATE Let $y_{1}$ be the neighbor of $x_{1}$ on $P_{x_{0}t}$ (the one closer to $x_{0}$);
\STATE Set $t_{0}=y_{1}$, $t_{1}=t$; $j=2$;
\ELSE
\STATE set $t_{0}=t$.
\ENDIF
\STATE {\em Creating the terminal petals:}
\WHILE {$K\cap Y_{j-1}\setminus B_{X}(x_{0},3\Delta/4)\neq\emptyset$}
\STATE Let $t_{j}\in Y_{j-1}\cap K$ be an arbitrary terminal satisfying $d_{X}(x_{0},t_{j})>3\Delta/4$;
\STATE Let $(X_{j},x_{j})=\texttt{create-petal}(Y_{j-1},t_{j},x_{0},[0,\Delta/8],K)$;
\STATE $Y_{j}=Y_{j-1}\setminus X_{j}$;
\STATE For each edge $e\in P_{x_{j}t_{j}}$, set its weight to be $w(e)/2$;
\STATE Let $y_{j}$ be the neighbor of $x_{j}$ on $P_{x_{0}t_{j}}$ (the one closer to $x_{0}$);
\STATE Let $j=j+1;$
\ENDWHILE

\STATE {\em Creating the non-terminal petals:}
\WHILE {$Y_{j-1}\setminus B_{X}(x_{0},7\Delta/8)\neq\emptyset$}
\STATE Let $t_{j}\in Y_{j-1}$ be an arbitrary vertex satisfying $d_{X}(x_{0},t_{j})>7\Delta/8$;
\STATE Let $(X_{j},x_{j})=\texttt{create-petal}(Y_{j-1},t_{j},x_{0},[0,\Delta/32],\emptyset)$;
\STATE $Y_{j}=Y_{j-1}\setminus X_{j}$;
\STATE For each edge $e\in P_{x_{j}t_{j}}$, set its weight to be $w(e)/2$;
\STATE Let $y_{j}$ be the neighbor of $x_{j}$ on $P_{x_{0}t_{j}}$ (the one closer to $x_{0}$);
\STATE Let $j=j+1;$
\ENDWHILE
\STATE Let $s=j-1$;
\STATE {\em Creating the stigma $X_{0}$:}
\STATE Let $X_{0}=Y_{s}$;
\end{algorithmic}
\end{algorithm}

\begin{algorithm}[h]
\caption{$(W,x)=\texttt{create-petal}(X,Y,t,x_{0},[lo,hi],K)$}
\begin{algorithmic}[1]
\STATE $W_{r}=\bigcup_{p\in P_{x_{0}t}:\ d_{Y}(p,t)\le r}B_{(Y,\rho(Y,x_{0},p))}(p,(r-d_{Y}(p,t))/2)$;
\IF {$K\neq\emptyset$}
    \STATE $[lo,hi]=\texttt{Reform-Interval}(Y,t,x_{0},[lo,hi],K)$;
    \STATE $L_k=\lceil\log\log k\rceil$; $R=lo-hi$;
    \STATE Let $1\le q\le L_k$ be the minimal integer satisfying $\left|W_{lo+qR/L_k}\right|_{k}\le\frac{2\left|X\right|_{k}}{2^{\log^{1-q/L_{k}}k}}$; Set $a=lo+(q-1)R/L_k$, $b'=a+R/2L_k$, $\chi=\frac{|X|_k+1}{\left|W_{a}\right|_{k}}$, $\hat\chi=\max\{\chi,e\}$;
    \STATE Set $\lambda=(2\ln\hat{\chi})/(b'-a)=(4L_{k}\ln\hat{\chi})/R=(160L_{k}\ln\hat{\chi})/\Delta$;
\ELSE
    \STATE $L_n=\lceil\log\log n\rceil$; $R=lo-hi$;
    \STATE Let $1\le q\le L_n$ be the minimal integer satisfying $\left|W_{lo+qR/L}\right|\le\frac{2\left|X\right|_{n}}{2^{\log^{1-q/L_{n}}n}}$; Set $a=lo+(q-1)R/L_n$, $b'=a+R/2L_n$, $\chi=\frac{|X|+1}{\left|W_{a}\right|}$, $\hat\chi=\max\{\chi,e\}$;
    \STATE Set $\lambda=(2\ln\hat{\chi})/(b'-a)=(4L_{n}\ln\hat{\chi})/R=(2^{7}L_{n}\ln\hat{\chi})/\Delta$;
\ENDIF

\STATE Sample $r\in\left[a,b'\right]$ according to the truncated exponential distribution with parameter $\lambda$, which has the following density function:
$f\left(r\right)=\frac{\lambda\cdot e^{-\lambda r}}{e^{-\lambda\cdot a}-e^{-\lambda\cdot b'}}$;
\STATE Let $r'\le r$ be the maximum value such that there exists a point $p_{r'}$
of distance $r'$ from $t$ on $P_{x_{0}t}$;
\RETURN $(W_{r},p_{r'})$;
\end{algorithmic}
\end{algorithm}

\begin{algorithm}[h]
\caption{$[lo',hi']=\texttt{Reform-Interval}(Y,t,x_{0},[lo,hi],K)$}
\begin{algorithmic}[1]
\STATE Let $W_{r}=\bigcup_{p\in P_{x_{0}t}:\ d_{Y}(p,t)\le r}B_{(Y,\rho(Y,x_{0},p))}(p,(r-d_{Y}(p,t))/2)$;
\STATE Set $\alpha\leftarrow hi-lo$;
\IF {$K\subseteq W_{lo+3\alpha/5}$}
\RETURN $[lo+4\alpha/5,hi]$;
\ENDIF
\IF {$K\cap W_{lo+2\alpha/5}=\emptyset$}
\RETURN $[lo,lo+\alpha/5]$;
\ENDIF
\RETURN $[lo+2\alpha/5,lo+3\alpha/5]$;
\end{algorithmic}

\end{algorithm}

\subsection{Analysis}

In what follows, fix a cluster $X$ with center $x_0$, a target $t$, and the set of terminals $K$ contained in $X$. Let $(X_{0},\dots,X_{s},\{y_{1},x_{1}\},\dots,\{y_{s},x_{s}\},t_{0},\dots,t_{s})$ be the result of applying \texttt{petal-decomposition}$(G[X],x_0,t,K)$.
The following claim will be useful for bounding the number of levels in which a relevant ball
around a terminal is in danger of being cut.

\begin{claim}
\label{Clm reform effect}
The following assertions hold true:
\begin{itemize}
\item If $X_1$ is a petal created at line 3 of $\texttt{petal-decomposition}$ and $K\cap X_1=\emptyset$, then for all $v\in K$, $B_X(v,\Delta/160)\cap X_1=\emptyset$.
\item If $X_j$ is a petal created at lines 3 or 13 of $\texttt{petal-decomposition}$ and $K\subseteq X_j$, then for all $v\in K$, $B_X(v,\Delta/160)\subseteq X_j$.
\item If $X_j$ is a petal created at line 22 of $\texttt{petal-decomposition}$, then for all $v\in K\cap Y_{j-1}$, $B_X(v,\Delta/16)\cap X_j=\emptyset$.
\end{itemize}
\end{claim}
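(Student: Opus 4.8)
The plan is to prove each of the three parts by identifying which branch of \texttt{Reform-Interval} must have been executed and then invoking the monotonicity and the $4\ell$-blow-up property of the sets $W_r$ stated in \claimref{ClmW_rProp}. Write $\alpha=hi-lo$ for the width of the interval passed to \texttt{create-petal}; inspecting \texttt{petal-decomposition}, $\alpha=\Delta/8$ for the first petal (line~3) and for every terminal petal (line~13), while $\alpha=\Delta/32$ for a non-terminal petal (line~22), where moreover \texttt{Reform-Interval} is not invoked at all (its terminal argument is empty). In each call the sampled radius $r$ lies in the sub-interval $[a,b']$ of the (possibly reformed) interval, hence in whatever interval \texttt{Reform-Interval} returns. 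I will use two auxiliary facts repeatedly: (i)~by \factref{FactPetals dont cut shortest path to x0}, every $u$ in the current working set $Y_{j-1}$ satisfies $d_{Y_{j-1}}(x_0,u)=d_X(x_0,u)$; and (ii)~if $B_X(v,\ell)$ is disjoint from a petal $P$ already removed from $X$, then the $d_X$-shortest path from $v$ to any $z\in B_X(v,\ell)$ avoids $P$ --- a vertex of $P$ on a prefix of that path would itself lie in $B_X(v,\ell)\cap P$ --- so $B_X(v,\ell)=B_{X\setminus P}(v,\ell)$.

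\emph{First part.} Assume $K\neq\emptyset$ (otherwise the statement is vacuous) and $K\cap X_1=\emptyset$. If \texttt{Reform-Interval} had returned its first branch then $r\ge lo+4\alpha/5$ and $X_1=W_r\supseteq W_{lo+3\alpha/5}\supseteq K$; if its third branch, then $r\ge lo+2\alpha/5$ and $X_1\supseteq W_{lo+2\alpha/5}$, which meets $K$ by the very condition defining that branch. Both contradict $K\cap X_1=\emptyset$, so the second branch was taken: $K\cap W_{lo+2\alpha/5}=\emptyset$ and $r\le lo+\alpha/5$, hence $X_1\subseteq W_{lo+\alpha/5}$ by monotonicity. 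For $v\in K$ and $z\in B_X(v,\Delta/160)$, were $z\in X_1\subseteq W_{lo+\alpha/5}$, then by part~2 of \claimref{ClmW_rProp} (and $\alpha/5=\Delta/40$) we would get $v\in B_{Y_0}(z,\Delta/160)\subseteq W_{lo+\alpha/5+4(\Delta/160)}=W_{lo+2\alpha/5}$, contradicting $K\cap W_{lo+2\alpha/5}=\emptyset$. Hence $B_X(v,\Delta/160)\cap X_1=\emptyset$.

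\emph{Second part.} Assume $K\neq\emptyset$, $K\subseteq X_j$, with $X_j$ created at line~3 or line~13, and let $Y=Y_{j-1}$ be the working set at that point (so $W_r=W_r(Y,x_0,t_j)$). A symmetric check excludes the second branch of \texttt{Reform-Interval} (there $X_j\subseteq W_{lo+\alpha/5}\subseteq W_{lo+2\alpha/5}$ is disjoint from $K$) and its third branch (there $X_j\subseteq W_{lo+3\alpha/5}\not\supseteq K$), so the first branch was taken: $K\subseteq W_{lo+3\alpha/5}$ and $r\ge lo+4\alpha/5$. For $v\in K$, part~2 of \claimref{ClmW_rProp} gives $B_Y(v,\Delta/160)\subseteq W_{lo+3\alpha/5+4(\Delta/160)}=W_{lo+4\alpha/5}\subseteq W_r=X_j$, so it remains to see $B_X(v,\Delta/160)=B_Y(v,\Delta/160)$. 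This is automatic if $Y=X$, in particular for the line-3 petal. If $X_j$ is a terminal petal it must be the \emph{first} one, since an earlier terminal petal would contain its own target terminal (every $W_r$ with $r\ge0$ contains its target) and that terminal could not lie in $X_j$; thus $Y$ is $X$, or $X\setminus X_1$ with $X_1$ the line-3 petal, which --- by the same argument --- contains no terminal, so the first part applies and gives $B_X(v,\Delta/160)\cap X_1=\emptyset$, whence auxiliary fact~(ii) yields $B_X(v,\Delta/160)=B_Y(v,\Delta/160)$.

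\emph{Third part.} Now $X_j$ is a non-terminal petal around a target $t_j$ with $d_X(x_0,t_j)\ge 7\Delta/8$, built with radius $r\le\Delta/32$. Each $z\in X_j=W_r$ lies in a cone-ball around some $p$ on $P_{x_0t_j}(Y_{j-1})$ of radius $(r-d_{Y_{j-1}}(p,t_j))/2\le\Delta/64$; since $p$ lies on that shortest path, $d_{Y_{j-1}}(x_0,p)=d_X(x_0,t_j)-d_{Y_{j-1}}(p,t_j)\ge 7\Delta/8-\Delta/32=27\Delta/32$, and expanding $\rho(p,z)=d_{Y_{j-1}}(x_0,p)+d_{Y_{j-1}}(p,z)-d_{Y_{j-1}}(x_0,z)\le\Delta/64$, together with auxiliary fact~(i), gives $d_X(x_0,z)\ge 27\Delta/32-\Delta/64=53\Delta/64$. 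When the non-terminal loop runs the terminal loop has already finished, so every $v\in K\cap Y_{j-1}$ has $d_X(x_0,v)\le 3\Delta/4=48\Delta/64$; therefore $d_X(v,z)\ge 5\Delta/64>\Delta/16$, i.e.\ $z\notin B_X(v,\Delta/16)$, proving $B_X(v,\Delta/16)\cap X_j=\emptyset$. The part I expect to be most delicate is exactly this bookkeeping with working sets: a petal $W_r$ is defined inside $G[Y_{j-1}]$, not $G[X]$, and one must justify --- via facts~(i), (ii) and the first part --- that the balls $B_X(v,\cdot)$ appearing in the statement can be transported into $G[Y_{j-1}]$ without change.
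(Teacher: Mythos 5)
Your proof is correct and follows essentially the same route as the paper's: identify the branch of \texttt{Reform-Interval} that must have fired, then use monotonicity and the $r\mapsto r+4\ell$ blow-up property from \claimref{ClmW_rProp}, together with \factref{FactPetals dont cut shortest path to x0} for the third part. You are somewhat more careful than the paper in two minor respects: your first part argues directly that a point $z\in B_X(v,\Delta/160)$ cannot lie in $X_1$ (so the boundary case $d_X(u,v)=\Delta/160$ is handled cleanly, whereas the paper's stated bound $d_X(u,v)\ge\Delta/160$ is an off-by-epsilon away from what is needed), and you make explicit the ``ball-transport'' facts (i) and (ii) justifying why $B_X(v,\cdot)$ can be replaced by $B_{Y_{j-1}}(v,\cdot)$, which the paper compresses into a parenthetical remark. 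These are presentation improvements, not a different argument.
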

\begin{proof}
Let $W_r$ be as defined in line 1 of \texttt{create-petal} when forming the petal $X_j$.
For the first assertion, note that the condition of line 6 in \texttt{reform-interval} must be satisfied, no terminals are in $W_{lo+2\alpha/5}$, and thus $hi'$ is set to $lo+\alpha/5$ (we use $\alpha$ as in \texttt{reform-interval}, the size of the interval sent by \texttt{create-petal}.) Now, for every $u\in X_1\subseteq W_{hi'}$ and $v\in K$, $d_X(u,v)\ge \alpha/20 = \Delta/160$, as otherwise, \claimref{ClmW_rProp} implies that $v\in W_{hi'+4\cdot \alpha/20} = W_{lo+2\alpha/5}$, contradiction.

For the second assertion, observe that $j\in\{1,2\}$, because $X_2$ certainly contains a terminal since we are at the "creating terminal petals" stage. As $K\subseteq X_j$, it must be that the condition of line 3 in \texttt{reform-interval} is satisfied, and in line 4 we set $lo'$ to be $lo+4\alpha/5$.
We conclude that if $u\in B_X(v,\Delta/160)$ for some $v\in K$, then $d_{Y_{j-1}}(u,v)\le\Delta/160$ (because by the first assertion, this ball around $v$ was not cut by the first petal). By \claimref{ClmW_rProp}, $u\in W_{lo+3\alpha/5+\Delta/40}= W_{lo+4\alpha/5}\subseteq X_j$.

Finally we prove the third assertion. By the termination condition in line 11 of \texttt{petal-decomposition}, it must be that $v\in B_X(x_0,3\Delta/4)$. By \factref{FactPetals dont cut shortest path to x0}, all distances in $Y_{j-1}$ to $x_0$ remain the same as in $X$. Note that the parameter $r$ of the petal $X_j=W_r$ is at most $\Delta/32$, and that $d_X(x_0,t_j)\ge 7\Delta/8$. By the definitions of a petal and of the cone-metric, for any $p\in P_{x_0t}$ with $d_X(p,t)\le\Delta/32$ we have that if $u\in B_{(Y_{j-1},\rho(Y_{j-1},x_0,p))}(p,(r-d_{Y_{j-1}}(p,t))/2)$ then
\begin{eqnarray*}
  d_X(x_0,u) &\ge& d_X(x_0,p)+d_{Y_{j-1}}(p,u)-(r-d_{Y_{j-1}}(p,t))/2> d_X(x_0,p)-r \\
  &\ge& d_X(x_0,t)-d_X(p,t)-\Delta/32 \ge 7\Delta/8-2\Delta/32\ge 13\Delta/16~.
\end{eqnarray*}
By the triangle inequality we obtain that $d_X(v,u)\ge d_X(x_0,u)-d_X(x_0,v)>\Delta/16$, which concludes the proof.
\end{proof}

By \claimref{Clm reform effect}, given a terminal $v$ in cluster $X$ of radius $\Delta$, and a ball $B=B_X(v,l)$ of radius at most $l\le\Delta/160$, the ball $B$ might be cut in the petal decomposition of $X$ only if the terminal $v$ is separated from some other terminal. \texttt{petal-decomposition} picks terminal targets arbitrary but deterministically. Therefore we will be more specific, and decide that \texttt{petal-decomposition} pick terminal targets by increasing order of the remaining terminals. If the first petal is built in line 3, then the first target is given, and by the \texttt{reform-interval} procedure, it is deterministically determined whether the first petal will contain all, part, or none of the terminals. If there are terminals remaining, and the first special petal (if exists) contains none of the terminals, then the second target is once again picked deterministically, and again \texttt{reform-interval} deterministically determines whether the petal will contain all or only part of the terminals. In any case, the issue of whether all the terminals will be in single petal or not, is resolved deterministically.
Hence, given a cluster $X$, root $x_0$ and target $t$, it is known if there is some danger that $B$ may be cut. Moreover, there are at most $k$ levels in which some terminals are separated,
and hence involving a danger of cutting balls around terminals. (If $l>\Delta/160$ then we do not care whether $B$ is cut or not, because the distortion induced on $v$ and points outside of $B$ will be a constant.)

\subsection{Expected Distortion}

When dealing with weighted graphs, it could be that a tiny ball participates in many recursive levels, and thus is "threatened" many times. In order to avoid such situations, we shall change the algorithm slightly, and when performing a petal decomposition on a cluster $X$ of radius $\Delta$, we shall contract for each vertex $u\in X$, the ball of radius $\Delta/n^3$ around $u$. In addition, if terminals are separated (which is determined deterministically) then for each terminal $v\in K$, we contract the ball of radius $\Delta/k^3$ around $v$. These contractions are done sequentially; when a vertex contracts a ball of radius $l$ it becomes a "supernode" of all the vertices within distance $l$, and for any edge leaving a vertex in the ball to some vertex $z$, we will have a corresponding edge with the same weight from the supernode to $z$.
These contractions yield a cluster $X'$, on which we run the \texttt{petal-decomposition} algorithm (note that $X'$ may contains supernodes - vertices that correspond to several original vertices, and that $X'$ induces a multi-graph). After the partition of $X'$ to $X_0,\dots,X_s$ is determined, we expand back the contracted balls, so that each vertex belongs to the cluster of its supernode.

This guarantees that a ball of radius $l$ around each terminal can participate only at partitions of radii in the range $[l,k^3\cdot l]$, and by \factref{FactPetal3/4Radius}, the radii decrease by a factor of at least $7/8$ every level, so there are at most $16\log k$ levels in which each such ball participates. Similarly, a ball of radius $l'$ around some vertex $u\in X$, can participate only at partitions of radii in the range $[l',n^3\cdot l']$, so there are at most $16\log n$ levels in which each such ball participates.
This contraction, while saving small balls from being cut, may have an effect on the radius of the tree when we expand back the vertices. We claim that this will be a minor increase. To see this, note that in a particular level, expanding back the balls around terminals can increase distances by at most $2\Delta/k^2$ (because every contracted ball has diameter at most $2\Delta/k^3$, and there can be at most $k$ contracted balls). Similarly, expanding back all the other contracted balls may increase distances by at most an additional term of $2\Delta/n^2$.
Now, since there are at most $k$ iterations in which terminals are separated (only in such levels the balls around them are contracted), even if the radius is increased by a factor of $1+2/k^2$ in every one of them, the total increase is at most $(1+2/k^2)^k<e^{2/k}$. Similarly, there are at most $n$ iterations all in all (since $\emptyset\subsetneq X_0\subsetneq X$), and even if the radius is increased by a factor of $1+2/n^2$ in every one of them, the total increase is at most $(1+2/n^2)^n<e^{2/n}$. Henceforth we shall ignore this minor increase, as it affects the distortion of every pair by at most a factor of $e^{2/k+2/n}$.

\sloppy We will show that the distribution generated by the \texttt{hierarchical-petal-decomposition}$(G,x_0,x_0,K)$ algorithm has strong terminal distortion $(\tilde{O}(\log k),\tilde{O}(\log n))$ (where $x_0$ is an arbitrary vertex).
The hierarchical partition naturally corresponds to a laminar family of clusters, that are arranged in a hierarchical tree structure denoted by ${\cal T}$ (with $V$ as the root, and each cluster $X$ which is partitioned by \texttt{petal-decomposition} to $X_0,\dots,X_s$, has them as its children in ${\cal T}$). The {\em level} of a cluster is its distance in ${\cal T}$ to the root. Note that there might be several trees that correspond to a single hierarchical partition (this depends on the precise edges connecting the different petals).
For a pair of vertices ${x,y}$ and hierarchical tree $\mathcal{T}$, let $d_\mathcal{T}(x,y)$ be the maximal distance between $x$ and $y$ in a tree $T$ that corresponds to the hierarchical tree $\mathcal{T}$. Note that if $x$ is separated from $y$ in the hierarchical tree $\mathcal{T}$ in cluster $X$ of radius $\Delta_X$, then by \factref{FactPetalTreeRadiusBound4}, $d_\mathcal{T}(x,y)\le 8\Delta_X$.

Fix any two points $x,y\in V$ and denote by $B_x=B_G(x,d(x,y))$. For $j\in[s]$, 
we say that $B_x$ is {\em cut} by $X_j$ if $B_x\cap X_j\notin\{\emptyset,B_x\}$. Further, $B_x$ is cut in $X$ if there exists some petal $X_j$ that cuts $B_x$. 
For $i\ge0$, let $S_{X,i}$ be the event that $B_x\subseteq X$ and $X$ is a cluster in level $i$. When calling \texttt{petal-decomposition} on the cluster $X$, a sequence of clusters $X_1,\dots,X_s,X_0$ is generated, with $Y_j$ as defined in the algorithm. For $Y\subseteq X$ and integer $j\ge 1$, let $S_{X,i,Y,j}$ denote the event that $S_{X,i}$ holds, $B_x\subseteq Y$, and $Y=Y_{j-1}$.
Define the following events
\begin{alignat*}{2}
{\cal C}_{X,i,j}&=\{B_x\cap X_j\notin\{\emptyset,B_x\}\} & \qquad &
\text{``$B_x$ cut in level $i$ at iteration $j$"}, \\
{\cal F}_{X,i,j}&=\{B_x\cap X_j=\emptyset\} & \qquad &
\text{``$B_x$ is not cut in level $i$ iteration $j$''}.
\end{alignat*}
Denote by ${\cal F}_{X,i,(< j)}$ the event $\bigwedge_{0<h<j}{\cal F}_{X,i,h}$, by $\bar{\cal F}_{X,i,j}=\{B_x\cap X_j\neq\emptyset\}$ the event that $B_x$ is either cut or contained in $X_j$, and let
\[
{\cal E}_{X,i,j} = \{{\cal C}_{X,i,j} \land {\cal F}_{X,i,(<j)}\} \qquad
\text{``$B_x$ is cut in level $i$ iteration $j$, but not before''}.
\]
Finally, define ${\cal E}_{X,i}=\left\{{\cal S}_{X,i}\wedge\bigcup_j{\cal E}_{X,i,j}\right\}$, as the event that $B_x$ is cut for the first time in level $i$.
With our new notations, we can bound the expected distance, as follows:

\begin{eqnarray}\label{eq:main}
\E[d_T(x,y)]
&\le&\sum_{{\cal T}}\Pr[{\cal T}]\cdot d_{\cal T}(x,y)\nonumber\\\nonumber
&=&\sum_{{\cal T}}\sum_{i\in\N}\sum_{X\subseteq V}\Pr[{\cal E}_{X,i}]\cdot\Pr[{\cal T}\mid{\cal E}_{X,i}]\cdot
d_{\cal T}(x,y)\\\nonumber
&=&\sum_{X,i}\Pr[{\cal E}_{X,i}]\sum_{\cal T}\Pr[{\cal T}\mid{\cal E}_{X,i}]\cdot
d_{\cal T}(x,y)\\\nonumber
&\le&8\sum_{X,i}\Pr[{\cal E}_{X,i}]\cdot\Delta_X\cdot\sum_{\cal T}\Pr[{\cal T}\mid{\cal E}_{X,i}]\\
&=&8\sum_{X,i}\Pr[{\cal E}_{X,i}]\cdot\Delta_X~.
\end{eqnarray}

Recall the constants defined in \texttt{create-petal}: $L_{k}=\left\lceil \log\log k\right\rceil$ and $L_{n}=\left\lceil \log\log n\right\rceil$. For $Y\subseteq X\subseteq V$ define $\varphi_k(X,Y,x)=\max\left\{\frac{|X|_k}{\left|B_{Y}\left(x,
\Delta_{X}/(2^{8}L_{k})\right)\right|_{k}},e\right\}$, and similarly $\varphi_n(X,Y,x)=\max\left\{\frac{|X|}{\left|B_{Y}\left(x,
\Delta_{X}/(2^{8}L_{n}\right))\right|},e\right\}$.  The following lemma bounds the probability that a ball around a terminal or an arbitrary vertex, is cut. We defer its proof to \sectionref{sec:lemma-4}.

\begin{lemma}
\label{lem:Cut Prob} For any integer $i$, and a cluster $X$ with radius $\Delta_X$ (from some root vertex), it holds that if $x$ is a terminal ($x\in K$) then (where $x,y$ are as defined earlier)
\begin{equation}\label{Cut Prob terminal}
\Pr[{\cal E}_{X,i}]\Delta_{X}\le2^{14}d(x,y)L_{k}\sum_{j\le k+1}\sum_{Y\subseteq X}\Pr\left[\mathcal{S}_{X,i,Y,j}\wedge\overline{\mathcal{F}}_{X,i,j}\right]\ln\left(\varphi_k\left(X,Y,x\right)\right)~.
\end{equation}
Moreover, without dependence wether $x$ is a terminal, it holds that
\begin{eqnarray}\label{Cut Prob non terminal}
\Pr[{\cal E}_{X,i}]\Delta_{X}
&\le&2^{14}d(x,y)L_{k}\sum_{j\le k+1}\sum_{Y\subseteq X}\Pr\left[\mathcal{S}_{X,i,Y,j}\wedge\overline{\mathcal{F}}_{X,i,j}\right]\ln\left(\varphi_k\left(X,Y,x\right)\right)\nonumber\\
&&~+ 2^{14}d(x,y) L_{n}\sum_{j}\sum_{Y\subseteq X}\Pr\left[\mathcal{S}_{X,i,Y,j}\wedge\overline{\mathcal{F}}_{X,i,j}\right]\ln\left(\varphi_n\left(X,Y,x\right)\right).
\end{eqnarray}
\end{lemma}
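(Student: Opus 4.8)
The plan is to prove the bound iteration-by-iteration along the run of \texttt{petal-decomposition}$(G[X],x_0,t,K)$. First decompose
\[
\Pr[\mathcal{E}_{X,i}]\cdot\Delta_X=\Delta_X\sum_{j}\Pr[\mathcal{S}_{X,i}\wedge\mathcal{E}_{X,i,j}]=\Delta_X\sum_{j}\sum_{Y\subseteq X}\Pr[\mathcal{S}_{X,i,Y,j}\wedge\mathcal{C}_{X,i,j}],
\]
using that $\mathcal{S}_{X,i,Y,j}$ already encodes $\mathcal{F}_{X,i,(<j)}$ (if $B_x\subseteq Y_{j-1}$ then no earlier petal could have met $B_x$). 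I would dispose of the regime $d(x,y)>\Delta_X/160$ crudely: there $\Delta_X<160\,d(x,y)$, and since a cutting petal has $B_x\cap X_j\neq\emptyset$ we get $\Pr[\mathcal{E}_{X,i}]\le\sum_{j,Y}\Pr[\mathcal{S}_{X,i,Y,j}\wedge\bar{\mathcal F}_{X,i,j}]$, so both \eqref{Cut Prob terminal} and \eqref{Cut Prob non terminal} hold (with room to spare, as $L_k,L_n\ge1$ and $\ln\varphi_k,\ln\varphi_n\ge1$), invoking \claimref{Clm reform effect} to see that in this regime the relevant petals for a terminal $x$ are still among the first $k+1$. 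From now on assume $d(x,y)\le\Delta_X/160$.

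Fix $j$ and $Y$ and condition on $\mathcal{S}_{X,i,Y,j}$: everything up to iteration $j$ is determined, $Y_{j-1}=Y$, $B_x\subseteq Y$, and $X_j=W_r$ is formed from target $t_j$ with $r$ drawn from the truncated exponential on $[a,b']$ with parameter $\lambda$. Let $r_1=\inf\{r:W_r\cap B_x\neq\emptyset\}$ and $r_2=\inf\{r:B_x\subseteq W_r\}$. By the monotonicity and ball-growth properties of \claimref{ClmW_rProp}, the set of radii for which $W_r$ separates $B_x$ is an interval of length $\le 8\,d(x,y)$, i.e.\ $r_2\le r_1+8\,d(x,y)$, and moreover $\bar{\mathcal F}_{X,i,j}=\{r\ge r_1\}$ while $\mathcal{C}_{X,i,j}=\{r_1\le r<r_2\}$ (when these meet $[a,b']$). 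Hence I must compare the mass the truncated exponential puts on the window $[r_1,r_2)$ with the mass it puts on the tail $[r_1,\infty)$. Since the density is decreasing and $\lambda(b'-a)=2\ln\hat\chi\ge2$, a short computation (with a small case split when the window abuts $b'$, where one just uses $\Pr[\mathcal C]\le\Pr[\bar{\mathcal F}]$) gives
\[
\Pr[\mathcal{C}_{X,i,j}\mid\mathcal{S}_{X,i,Y,j}]\ \le\ 16\,\lambda\,d(x,y)\cdot\Pr[\bar{\mathcal F}_{X,i,j}\mid\mathcal{S}_{X,i,Y,j}].
\]

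Now substitute $\lambda$. In the terminal branch of \texttt{create-petal} (lines 3 and 13) $\lambda=(160L_k\ln\hat\chi)/\Delta_X$, and in the non-terminal branch (line 22) $\lambda=(2^7L_n\ln\hat\chi)/\Delta_X$. It remains to bound $\ln\hat\chi$ by $\ln\varphi_k(X,Y,x)$ (resp.\ $\ln\varphi_n(X,Y,x)$): since $t_j\in W_a$ the quantity $\hat\chi=\max\{(|X|_k+1)/|W_a|_k,e\}$ is finite, and using \claimref{ClmW_rProp}, the minimal‑$q$ rule defining $a$, and the event $\bar{\mathcal F}_{X,i,j}$ (which places $B_x$ inside a controlled enlargement of the petal region) one shows $W_a\supseteq B_Y\!\bigl(x,\Delta_X/(2^8L_k)\bigr)$, whence $|W_a|_k\ge|B_Y(x,\Delta_X/(2^8L_k))|_k$ and $\ln\hat\chi\le\ln\varphi_k(X,Y,x)$ (identically for $n$). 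Plugging back, each iteration contributes $O\bigl(d(x,y)\,(L_k/\Delta_X)\ln\varphi_k(X,Y,x)\bigr)\cdot\Pr[\mathcal{S}_{X,i,Y,j}\wedge\bar{\mathcal F}_{X,i,j}]\cdot\Delta_X$; summing over $Y,j$ and absorbing constants into $2^{14}$ gives the first sum. For \eqref{Cut Prob terminal} ($x\in K$) I additionally use \claimref{Clm reform effect} and the determinism discussion after it: when $d(x,y)\le\Delta_X/160$, non-terminal petals keep a $\Delta_X/16$-margin from $x$ and the first special petal either misses or engulfs $B_x$, so only the $\le k+1$ petals that split $K$ can have $\mathcal{C}_{X,i,j}$ (or even $\bar{\mathcal F}_{X,i,j}$ with $B_x\subseteq Y_{j-1}$), all in the terminal branch; this yields \eqref{Cut Prob terminal}. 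For \eqref{Cut Prob non terminal} with general $x$, terminal petals feed the first ($\le k+1$) sum and the remaining non-terminal petals the second sum, finishing the proof.

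The step I expect to be the main obstacle is transferring $\ln\hat\chi$ — defined through the size of the cone-metric region $W_a$ around the target $t_j$ — into $\ln\varphi$, defined through a genuine shortest-path ball around $x$ in $G[Y]$; this needs the interplay between the minimal‑$q$ choice of $a$ (which only lower-bounds $|W_a|$ relative to $|X|$), the event $\bar{\mathcal F}_{X,i,j}$ (which is what ties the region $W_r$, and hence $W_a$ after the $q$-shift, to the location of $x$), and the metric properties of $W_r$ in \claimref{ClmW_rProp}; making the radius constant $\Delta_X/(2^8L_k)$ come out correctly (together with the handling of the contracted graph used to avoid tiny balls being threatened too often) is the delicate point. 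The truncated-exponential comparison and the reduction to terminal petals via \claimref{Clm reform effect} are comparatively routine.
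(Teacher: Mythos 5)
Your plan shadows the structure of the paper's own proof — decompose $\Pr[\mathcal{E}_{X,i}]$ over iterations $(Y,j)$, compare the cut and touch probabilities under the truncated exponential, transfer $\lambda_j$ (hence $\ln\hat\chi_j$) to $\ln\varphi_k$ or $\ln\varphi_n$, and use \claimref{Clm reform effect} to restrict to $j\le k+1$ when $x\in K$ — but the two technical steps you characterize as routine or merely delicate both contain genuine errors. The pointwise inequality $\Pr[\mathcal{C}_{X,i,j}\mid\mathcal{S}_{X,i,Y,j}]\le 16\lambda_j\,d(x,y)\cdot\Pr[\bar{\mathcal{F}}_{X,i,j}\mid\mathcal{S}_{X,i,Y,j}]$ is simply false near the right end of the sampling interval: writing $r_1$ for the smallest radius with $W_{r_1}\cap B_x\neq\emptyset$, when $r_1$ is within $8d(x,y)$ of $b'_j$ the ratio $\Pr[\mathcal{C}_j]/\Pr[\bar{\mathcal{F}}_j]$ tends to $1$, which exceeds $16\lambda_j d(x,y)=\Theta(L_k\ln\hat\chi_j\cdot d(x,y)/\Delta_X)$ precisely in the non-trivial regime $d(x,y)\ll\Delta_X/(L_k\ln\hat\chi_j)$. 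Your case-split fallback $\Pr[\mathcal{C}_j]\le\Pr[\bar{\mathcal{F}}_j]$ cannot be folded back into the $16\lambda_j d(x,y)\cdot(\cdot)$ form, so it does not repair this. The paper instead proves the additive estimate $\Pr[\mathcal{C}_j\mid\mathcal{Z}_{Y,j}]\le(1-\delta_j)\bigl(\Pr[\bar{\mathcal{F}}_j\mid\mathcal{Z}_{Y,j}]+2/\hat\chi_j^2\bigr)$ (\claimref{Clm singlePetalCutProb}), and the extra tail term is controlled by a separate summation argument: $1/\hat\chi_j\le|W_{a_j}|_k/|X|_k\le|X_j|_k/|X|_k$ and the disjointness of the petals, $\sum_j|X_j|_k/|X|_k\le 1$. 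That second summation — not a case split — is what closes the bound, and it has no counterpart in your sketch.

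The other step has a related gap. You assert the containment $W_{a_j}\supseteq B_Y\!\bigl(x,\Delta_X/(2^8L_k)\bigr)$ to get $|W_{a_j}|_k$ large and hence $\ln\hat\chi_j\le\ln\varphi_k(X,Y,x)$, but that inclusion is not true: the event $\bar{\mathcal{F}}_{X,i,j}$ only gives $B_x\cap W_{b'_j}\neq\emptyset$, so $x$ need not even lie in $W_{a_j}$. What \claimref{ClmW_rProp} actually yields is the containment $B\bigl(x,\Delta_X/(2^9L_k)\bigr)\subseteq W_{b_j}$ at the \emph{upper} end of the $q$-subinterval. Passing from $|W_{b_j}|_k$ back to $|W_{a_j}|_k$ is exactly the job of the minimal-$q$ rule and inequality \eqref{eq:Wa le Wb}: from $\hat\chi_j\le 2|X|_k/|W_{a_j}|_k$ and $\log\bigl(2|X|_k/|W_{a_j}|_k\bigr)\le\log^{1-(q-1)/L_k}k\le 3\log^{1-q/L_k}k\le 3\log\bigl(2|X|_k/|W_{b_j}|_k\bigr)$ one gets $\log\hat\chi_j\le 6\log\varphi_k(X,Y,x)$, at the cost of a constant factor. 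Treating the $q$-shift as if it directly forces a set containment misidentifies the mechanism and omits the computation that actually carries the weight.
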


The algorithm returns us a hierarchical tree ${\cal T}$. We refer to the base level (where there are only one cluster that contains all the vertices) as level $0$, to the next level as level $1$ and so on.
For a hierarchical tree ${\cal T}$, let $A_{\cal T}\subseteq \N$ be the set of levels $i$, in which $B_x$ is included in a cluster $X$ with radius $\Delta_X$, such that $d(x,y)$ is larger than $\Delta_X/k^{3}$. Note that if $x$ is a terminal, due to the contractions, if $d(x,y)\le\Delta_X/k^{3}$, then the ball $B_x$ could not be cut, hence ${\cal E}_{X,i}=\emptyset$.
By \factref{FactPetal3/4Radius}, $A_{\cal T}\subseteq \N$ is of size at most $\log_{8/7}(2\cdot k^{3})+1< 22\log k$. (The radius has to be in the range $[d(u,v)/2,k^{3}d(u,v)]$, once the radius is less than $d(u,v)/2$, $B_x$ will surely be cut.) For any level $i$, denote by $A_{{\cal T},i}=\{i'\in A_{\cal T}~:~i'\ge i\}$.
Similarly, let $Q_{{\cal T}}$ be the set of levels $i$, in which $B_x$ is included in a cluster $X$ with diameter at most $\Delta_X\le n^3\cdot d(x,y)$. For every vertex $x$ due to the contraction, if $d(x,y)\le\Delta_X/n^{3}$, ${\cal E}_{X,i}=\emptyset$. By \factref{FactPetal3/4Radius}, $Q_{\cal T}\subseteq \N$ is of size at most $22\log n$. For any level $i$, denote by $Q_{{\cal T},i}=\{i'\in Q_{\cal T}~:~i'\ge i\}$.

Define $l_k=\left\lceil \log_{8/7}(2^{9}L_{k})\right\rceil$, and $l_n=\left\lceil \log_{8/7}(2^{9}L_{n})\right\rceil$. Consider the partition of the indices of all the levels in the hierarchical tree to the sets $I_m=\{i\in\N~:~ i=m~(\text{mod } l_k)\}$ for $m\in\{0,1,\dots,l_k-1\}$. Similarly, for $g\in\{0,1,\dots,l_n-1\}$, $J_g=\{i\in\N~:~ i=g~(\text{mod } l_n)\}$. The next lemma, combined with \lemmaref{lem:Cut Prob}, is used to bound the cut probability.

\begin{claim}\label{Clm bound on the bound of cut}
For every positive integers $m,g$ and $t$, it holds that:
\begin{equation}\label{eq:indd}
\sum_{i\in I_{m}:~i\ge t}\sum_{X,j,Y}\Pr\left[\mathcal{S}_{X,i,Y,j}\wedge\overline{\mathcal{F}}_{X,i,j}\right]\ln\left(\varphi_{k}\left(X,Y,x\right)\right)\le\sum_{X,{\cal T}}\Pr[{\cal S}_{X,t}\wedge{\cal T}]\left(\ln|X|_{k}+|A_{{\cal T},t}|\right)
\end{equation}

\begin{equation}\label{eq:indd2}
\sum_{i\in J_{g}:~i\ge t}\sum_{X,j,Y}\Pr\left[\mathcal{S}_{X,i,Y,j}\wedge\overline{\mathcal{F}}_{X,i,j}\right]\ln\left(\varphi_{n}\left(X,Y,x\right)\right)\le\sum_{X,{\cal T}}\Pr[{\cal S}_{X,t}\wedge{\cal T}]\left(\ln|X|+|Q_{{\cal T},t}|\right)
\end{equation}
\end{claim}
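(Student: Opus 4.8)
The plan is to prove both inequalities by a potential/telescoping argument run separately on each hierarchical tree $\mathcal{T}$ produced by \texttt{hierarchical-petal-decomposition}, so I first condition on $\mathcal{T}$ and bound its contribution to each side. Since the clusters form a laminar family rooted at $V\supseteq B_x=B_G(x,d(x,y))$, for every level $i$ strictly below the level $i^\star$ at which $B_x$ is first cut there is a unique cluster $X^{(i)}$ with $B_x\subseteq X^{(i)}$ (and $i^\star=\infty$ forces $x=y$, in which case the claim is trivial). If $i^\star<t$ then $\mathcal{T}$ contributes $0$ to both sides, so assume $i^\star\ge t$ and set $X:=X^{(t)}$. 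On the left-hand side of \eqref{eq:indd}, at level $i$ only $X^{(i)}$ can contribute, and only if at some iteration $j$ of its \texttt{petal-decomposition} the event $\overline{\mathcal F}_{X^{(i)},i,j}$ holds, i.e. $B_x$ first meets a petal at iteration $j$; then $Y=Y_{j-1}$ and the contributed term is $\ln\varphi_k(X^{(i)},Y_{j-1},x)$. Let $G_\mathcal{T}\subseteq I_m\cap[t,i^\star]$ be the (ordered) set of such levels.

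The geometric engine is \factref{FactPetal3/4Radius}: each step down the hierarchy multiplies the cluster radius by at most $7/8$. By the choice $l_k=\lceil\log_{8/7}(2^{9}L_k)\rceil$, for $i\in G_\mathcal{T}$ with $i+l_k\le i^\star$ we get $\Delta_{X^{(i+l_k)}}\le(7/8)^{l_k}\Delta_{X^{(i)}}\le\Delta_{X^{(i)}}/(2^{9}L_k)$, so $G[X^{(i+l_k)}]$ has diameter at most $\Delta_{X^{(i)}}/(2^{8}L_k)$; since $x\in X^{(i+l_k)}\subseteq X^{(i+1)}\subseteq Y_{j-1}$ and distances only grow when passing to an induced subgraph, $X^{(i+l_k)}\subseteq B_{Y_{j-1}}\!\bigl(x,\Delta_{X^{(i)}}/(2^{8}L_k)\bigr)$. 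Hence $\bigl|B_{Y_{j-1}}\!\bigl(x,\Delta_{X^{(i)}}/(2^{8}L_k)\bigr)\bigr|_k\ge|X^{(i+l_k)}|_k\ge|X^{(i')}|_k$, where $i'$ is the next element of $G_\mathcal{T}$ (and at the last element one uses the trivial bound $\ln\varphi_k\le\ln|X^{(i)}|_k+1$). Therefore, for each $i\in G_\mathcal{T}$,
\[
\ln\varphi_k\bigl(X^{(i)},Y_{j-1},x\bigr)\ \le\ \max\bigl\{\,\ln|X^{(i)}|_k-\ln|X^{(i')}|_k,\ 1\,\bigr\}.
\]
Summing over $G_\mathcal{T}$, the logarithmic parts telescope to at most $\ln|X^{(t)}|_k=\ln|X|_k$, and what remains is one unit for every level of $G_\mathcal{T}$ where the terminal count fails to drop by a factor $e$ over $l_k$ levels. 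This is where the radius rule of \texttt{create-petal} is used: I claim such a ``flat'' level $i$ must lie in $A_\mathcal{T}$. Indeed, if $\Delta_{X^{(i)}}\ge k^{3}d(x,y)$ then $B_x$ is either close enough to the current center $x_0$ to fall inside the stigma $X_0$ — in which case $i\notin G_\mathcal{T}$, so no unit is charged — or it is swallowed by a terminal (or special) petal $X_j=W_r$, for which the minimality of $q$ and $r\le b'\le lo+qR/L_k$ give $|X^{(i+l_k)}|_k\le|X_j|_k\le 2|X^{(i)}|_k\cdot 2^{-\log^{1-q/L_k}k}\le|X^{(i)}|_k/e$, so the log-ratio at level $i$ is at least $1$ and again no unit is charged. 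Thus the number of units is at most $|A_\mathcal{T}\cap[t,\infty)|=|A_{\mathcal{T},t}|$, so the $\mathcal{T}$-contribution to the left side of \eqref{eq:indd} is at most $\ln|X|_k+|A_{\mathcal{T},t}|$. Multiplying by $\Pr[\mathcal{T}]$, using that for $\mathcal{T}$ with $i^\star\ge t$ the only cluster satisfying ${\cal S}_{X,t}$ is $X^{(t)}$ (with conditional probability $1$), and summing over $\mathcal{T}$ yields \eqref{eq:indd}.

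The second inequality \eqref{eq:indd2} is obtained by the very same argument with the substitutions $k\mapsto n$, $A_\mathcal{T}\mapsto Q_\mathcal{T}$, $L_k\mapsto L_n$, $l_k\mapsto l_n$, $\varphi_k\mapsto\varphi_n$, and with the terminal-petal size bound replaced by the size bound $|W_r|\le 2|X|\cdot 2^{-\log^{1-q/L_n}n}$ from the $K=\emptyset$ branch of \texttt{create-petal}; the contraction of the radius-$\Delta/n^{3}$ balls performed before each decomposition is exactly what guarantees that $B_x$ is relevant in only $O(\log n)$ levels, matching the bound $|Q_\mathcal{T}|\le 22\log n$. I expect the genuinely delicate step to be the charging argument in the third paragraph: verifying that every ``flat'' level lies in $A_{\mathcal{T},t}$ (resp. $Q_{\mathcal{T},t}$) requires a careful case split over which petal type absorbs $B_x$, invoking \claimref{Clm reform effect} and the explicit radius interval fed to \texttt{create-petal}, together with the observation that by the time the cluster radius has shrunk below $k^{3}d(x,y)$ the ball $B_x$ is in genuine danger of being cut and those remaining levels are precisely the ones counted by $A_{\mathcal{T},t}$.
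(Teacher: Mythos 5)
Your overall strategy — condition on the hierarchical tree $\mathcal{T}$ and telescope $\ln\varphi_k$ along the chain of clusters $X^{(t)}\supseteq X^{(t+l_k)}\supseteq\cdots$ containing $B_x$, using \factref{FactPetal3/4Radius} and the choice of $l_k$ to get $X^{(i+l_k)}\subseteq B_{Y_{j-1}}(x,\Delta_{X^{(i)}}/(2^8L_k))$ — is mathematically equivalent to the paper's reverse induction on $t\in I_m$, and the geometric core is the same. But your charging argument in the third paragraph has a concrete gap, and it is exactly the step you flagged as delicate.

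You claim that whenever $\Delta_{X^{(i)}}\ge k^3 d(x,y)$ and $B_x$ is swallowed by a terminal/special petal $X_j$, the minimality of $q$ gives $|X_j|_k\le 2|X^{(i)}|_k\cdot 2^{-\log^{1-q/L_k}k}\le |X^{(i)}|_k/e$, and hence the log-ratio is at least $1$. The second inequality requires $2^{\log^{1-q/L_k}k}\ge 2e$. Since $L_k=\lceil\log\log k\rceil$ one has $\log^{1/L_k}k\approx 2$, so for $q=L_k$ the left side is $2^{\log^0 k}=2$, and for $q=L_k-1$ it is $\approx 2^2=4$, both strictly below $2e\approx 5.44$. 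So the asserted drop by a factor $e$ simply fails for the largest values of $q$, which is precisely the ``densest-annulus'' regime where $\varphi_k$ tends to equal $e$. Consequently your proof that a flat level must lie in $A_{\mathcal{T},t}$ does not go through. In addition, your case split (stigma vs.\ terminal/special petal) omits the possibility that $B_x$ is absorbed by a \emph{non-terminal} petal $X_j$ with $j>k+1$ created at line~22 — the claim's left-hand sum in \eqref{eq:indd} runs over all $j$, not only $j\le k+1$ — and for those petals the size bounds in \texttt{create-petal} involve $|\cdot|$ and $L_n$, not $|\cdot|_k$ and $L_k$, so the argument would need a separate case there as well.

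The paper avoids this by not trying to classify flat levels at all. In its inductive step it establishes the single inequality $\ln|B_Y(x,\Delta_X/(2^8L_k))|_k+|A_{\mathcal{T},t+l_k}|+\ln\varphi_k(X,Y,x)\le\ln|X|_k+|A_{\mathcal{T},t}|$ (split into the two cases $\varphi_k>e$ and $\varphi_k=e$), and lets the induction hypothesis carry the rest. To repair your approach you would need a correct proof that a flat level (one where $\ln|X^{(i)}|_k-\ln|X^{(i')}|_k<1$) must lie in $A_{\mathcal{T},t}$; the petal-size bound you invoked is not strong enough, and the mechanism that actually limits flat levels is the ball-contraction preprocessing (balls of radius $\Delta/k^3$ around terminals) combined with \claimref{Clm reform effect}, not the $q$-minimality in \texttt{create-petal}.
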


\begin{proof}
Fix any such $m$, we will show \eqref{eq:indd} by (reverse) induction on $t\in I_m$.

For the base case, note that when $t$ is sufficiently large, $x$ and $y$ must have been separated. E.g. if $t=\log_{8/7}(2\diam(G)/d(x,y))$, at levels $i\ge t$ the radius of any cluster will be less then $d(x,y)/2$ (using that the radius drops by at least $7/8$ at every level). We get that $\Pr[{\cal E}_{X,i}]=0$ for any possible $X$ and $i\ge t$. Hence $\sum_{i\in I_{m}:~i\ge t}\sum_{X,j,Y}\Pr\left[\mathcal{S}_{X,i,Y,j}\wedge\overline{\mathcal{F}}_{X,i,j}\right]=0$ and  \eqref{eq:indd} holds.

Assume \eqref{eq:indd} holds for $t+l_k\in I_m$, and prove for $t$. Observe that if $Z$ is a cluster in level $t+l_k$ such that event ${\cal S}_{Z,t+l_k}$ holds, then there must be a cluster $X$ at level $t$ (the ancestor of $Z$) such that ${\cal S}_{X,t}$ holds, and there are also $Y$ and $j$ such that $Y=Y_{j-1}$, event $\bar{\cal F}_{X,i,j}$ holds, and $Z\subseteq Y\subseteq X$ ($X_j$ is the ancestor of $Z$ at level $t+1$). Also note that $\Delta_{X}\ge (8/7)^{l_k}\Delta_{Z}=2^9L_k \cdot\Delta_{Z}$, therefore $Z\subseteq B_{Z}(x,2\Delta_{Z})\subseteq B_{Y}(x,\Delta_{X}/(2^{8}L_{k}))$. It follows that
\begin{eqnarray}\label{eq:eww}
\lefteqn{\sum_{Z,{\cal T}}\Pr[{\cal S}_{Z,t+l_{k}}\wedge{\cal T}]\left(\ln|Z|_{k}+|A_{{\cal T},t+l_{k}}|\right)}\nonumber\\
&=&\sum_{{\cal T},X,Y,Z,j~:~Z\subseteq Y\subseteq X}\Pr[{\cal S}_{X,t,Y,j}\wedge\bar{{\cal F}}_{X,t,j}\wedge{\cal T}]\cdot\Pr[{\cal S}_{Z,t+l_{k}}\mid{\cal S}_{X,t,Y,j}\wedge\bar{{\cal F}}_{X,t,j}\wedge{\cal T}]\cdot\nonumber\\
&&~~~~~~~~~~~~~~~\cdot\left(\ln|Z|_{k}+|A_{{\cal T},t+l_{k}}|\right)\nonumber\\
&\le&\sum_{{\cal T},X,Y,Z,j~:~Z\subseteq Y\subseteq X}\Pr[{\cal S}_{X,t,Y,j}\wedge\bar{{\cal F}}_{X,t,j}\wedge{\cal T}]\cdot\nonumber\\
&& ~~~~~~~~~~~~~~~\cdot\Pr[{\cal S}_{Z,t+l_{k}}\mid{\cal S}_{X,t,Y,j}\wedge\bar{{\cal F}}_{X,t,j}\wedge{\cal T}]\left(\ln|B_{Y}(x,\frac{\Delta_{X}}{2^{8}L_{k}})|_{k}+|A_{{\cal T},t+l_{k}}|\right)\nonumber\\
&\le&\sum_{{\cal T},X,Y,j: Y\subseteq X}\Pr[{\cal S}_{X,t,Y,j}\wedge\bar{{\cal F}}_{X,t,j}\wedge{\cal T}]\left(\ln|B_{Y}(x,\frac{\Delta_{X}}{2^{8}L_{k}})|_{k}+|A_{{\cal T},t+l_k}|\right)~.
\end{eqnarray}

Where the last inequality is by changing the order of summation and the fact that $\sum_{Z\subseteq Y}\Pr[{\cal S}_{Z,t+l_k}\mid{\cal S}_{X,t,Y,j}\wedge\bar{{\cal F}}_{X,t,j}\wedge{\cal T}]\le 1$.
Now, if in the hierarchical tree ${\cal T}$, $t\in A_{{\cal T}}$, then $|A_{{\cal T},t}|\ge |A_{{\cal T},t+l_k}|+1$, so that
\begin{eqnarray}\label{eq:kkp}
\lefteqn{\sum_{i\in I_{m},~i\ge t}\sum_{X,j,Y}\Pr\left[\mathcal{S}_{X,i,Y,j}\wedge\overline{\mathcal{F}}_{X,i,j}\right]\ln\left(\varphi_k\left(X,Y,x\right)\right)}\nonumber\\
&=&\sum_{i\in I_{m},~i\ge t+l_{k}}\sum_{X,j,Y}\Pr\left[\mathcal{S}_{X,i,Y,j}\wedge\overline{\mathcal{F}}_{X,i,j}\right]\ln\left(\varphi_k\left(X,Y,x\right)\right)\nonumber\\
&&+~~\sum_{X,j,Y}\Pr\left[\mathcal{S}_{X,t,Y,j}\wedge\overline{\mathcal{F}}_{X,t,j}\right]\ln\left(\varphi_k\left(X,Y,x\right)\right)\nonumber\\
&\stackrel{\eqref{eq:indd}}{\le}& \sum_{X,{\cal T}}\Pr[{\cal S}_{X,t}\wedge{\cal T}]\left(\ln|X|_{k}+|A_{{\cal T},t+l_k}|\right) + \!\!\!\! \sum_{X,j,Y_{j-1}}\!\!\!\!\Pr\left[\mathcal{S}_{X,t,Y,j}\wedge\overline{\mathcal{F}}_{X,t,j}\right]\ln\left(\varphi_k\left(X,Y_{j-1},x\right)\right)\nonumber\\
&\stackrel{\eqref{eq:eww}}{\le}&\sum_{{\cal T},X,Y,j: Y\subseteq X}\Pr[{\cal S}_{X,t,Y,j}\wedge\bar{{\cal F}}_{X,t,j}\wedge{\cal T}]\left(\ln|B_{Y}(x,\frac{\Delta_{X}}{2^{8}L_{k}})|_{k}+|A_{{\cal T},t+l_k}|\right)\nonumber\\
&&+~~\sum_{{\cal T},X,Y,j}\Pr[{\cal S}_{X,t,Y,j}\wedge\bar{{\cal F}}_{X,t,j}\wedge{\cal T}]\ln{\varphi_{k}}(X,Y,x)\nonumber\\
&\le&\sum_{{\cal T},X,Y,j}\Pr[{\cal S}_{X,t,Y,j}\wedge\bar{{\cal F}}_{X,t,j}\wedge{\cal T}]\left(\ln|X|_{k}+|A_{{\cal T},t}|\right)~.
\end{eqnarray}

In the last inequality we used that if $\varphi_{k}(X,Y,x)=\frac{|X|_{k}}{|B_{Y}(x,\Delta_{X}/2^{8}L_{k})|_k}$, then
\[
\ln|B_{Y}(x,\Delta_{X}/2^{8}L_{k})|_{k}+|A_{{\cal T},t+l}|+\ln\varphi_{k}(X,Y,x)=\ln|X|_{k}+|A_{{\cal T},t+l}|\le\ln|X|_{k}+|A_{{\cal T},t}|~,
\]
and if $\varphi_k(X,Y,x)=e$, then
\begin{eqnarray*}
\ln|B_{Y}(x,\Delta_{X}/2^{8}L)|_{k}+|A_{{\cal T},t+l}|+\ln\varphi_{k}(X,Y,x)
&=&\ln|B_{Y}(x,\Delta_{X}/2^{8}L|_{k}+|A_{{\cal T},t+l}|+1\\
&\le&\ln|X|_k+|A_t|~.
\end{eqnarray*}

In addition note that
\begin{eqnarray}\label{eq:Change index in summution}
\lefteqn{\sum_{{\cal T},X,Y,j}\Pr[{\cal S}_{X,t,Y,j}\wedge\bar{{\cal F}}_{X,t,j}\wedge{\cal T}]\left(\ln|X|_{k}+|A_{{\cal T},t}|\right)}\nonumber\\
&=& \sum_{{\cal T},X,Y,j}\Pr[{\cal S}_{X,t}\wedge{\cal T}]\cdot\Pr[{\cal S}_{X,t,Y,j}\wedge\bar{{\cal F}}_{X,t,j}\mid{\cal S}_{X,t}\wedge{\cal T}]\left(\ln|X|_{k}+|A_{{\cal T},t}|\right) \nonumber\\
&=& \sum_{{\cal T},X}\Pr[{\cal S}_{X,t}\wedge{\cal T}]\left(\ln|X|_{k}+|A_{{\cal T},t}|\right)\sum_{Y,j}\Pr[{\cal S}_{X,t,Y,j}\wedge\bar{{\cal F}}_{X,t,j}\mid{\cal S}_{X,t}\wedge{\cal T}] \nonumber\\
&=& \sum_{{\cal T},X}\Pr[{\cal S}_{X,t}\wedge{\cal T}]\left(\ln|X|_{k}+|A_{{\cal T},t}|\right)
\end{eqnarray}
where the last equality follows by $\sum_{Y,j}\Pr[{\cal S}_{X,t,Y,j}\wedge\bar{{\cal F}}_{X,t,j}\mid{\cal S}_{X,t}\wedge{\cal T}]=1 $, which holds because for every different $Y,j$, the events ${\cal S}_{X,t,Y,j}\wedge\bar{{\cal F}}_{X,t,j}$ are disjoint.
This concludes the proof of \eqref{eq:indd}.

The proof of \eqref{eq:indd2} is fully symmetric. Fix some g, we will show \eqref{eq:indd2} by (reverse) induction on $t\in J_g$. The base case is trivial (for every $X$ and $t\ge\log_{8/7}(2\diam(G)/d(x,y))$, $\Pr[{\cal S}_{X,i}]=0$ and therefore $\sum_{i\in J_{g}:~i\ge t}\sum_{X,j,Y}\Pr\left[\mathcal{S}_{X,i,Y,j}\wedge\overline{\mathcal{F}}_{X,i,j}\right]=0$.)
Assume \eqref{eq:indd2} holds for $t+l_n\in J_g$, and prove for $t$. For cluster $Z$ such that event ${\cal S}_{Z,t+l_n}$ holds, there are  unique clusters $X,Y$ and an index $j$, such that the events ${\cal S}_{X,t},\bar{{\cal F}}_{X,i,j}$ holds, $Y=X_j$, and $Z\subseteq B_{Z}(x,2\Delta_{Z})\subseteq B_{Y}(x,\Delta_{X}/(2^{8}L_{n}))$. Similarly to equation \eqref{eq:eww} we get
\begin{eqnarray}\label{eq:eww2}
\lefteqn{\sum_{Z,{\cal T}}\Pr[{\cal S}_{Z,t+l_{n}}\wedge{\cal T}]\left(\ln|Z|+|Q_{{\cal T},t+l_{n}}|\right)}\nonumber\\
&=&\sum_{{\cal T},X,Y,Z,j~:~Z\subseteq Y\subseteq X}\Pr[{\cal S}_{X,t,Y,j}\wedge\bar{{\cal F}}_{X,t,j}\wedge{\cal T}]\cdot\Pr[{\cal S}_{Z,t+l_{n}}\mid{\cal S}_{X,t,Y,j}\wedge\bar{{\cal F}}_{X,t,j}\wedge{\cal T}]\cdot\nonumber\\
&&~~~~~~~~~~~~~~~~~~\cdot\left(\ln|Z|+|Q_{{\cal T},t+l_{n}}|\right)\nonumber\\
&\le&\sum_{{\cal T},X,Y,j: Y\subseteq X}\Pr[{\cal S}_{X,t,Y,j}\wedge\bar{{\cal F}}_{X,t,j}\wedge{\cal T}]\left(\ln|B_{Y}(x,\frac{\Delta_{X}}{2^{8}L_{n}})|+|Q_{{\cal T},t+l_{n}}|\right)~.
\end{eqnarray}
Now we make the induction step:
\begin{eqnarray*}
\lefteqn{\sum_{i\in J_{g},~i\ge t}\sum_{X}\sum_{j,Y}\Pr\left[\mathcal{S}_{X,i,Y,j}\wedge\overline{\mathcal{F}}_{X,i,j}\right]\ln\left(\varphi_n\left(X,Y,x\right)\right)
}\\
&=& \sum_{i\in J_{g},~i\ge t+l_n}\sum_{X}\sum_{j,Y}\Pr\left[\mathcal{S}_{X,i,Y,j}\wedge\overline{\mathcal{F}}_{X,i,j}\right]\ln\left(\varphi_n\left(X,Y,x\right)\right)\\
&&+~\sum_{X}\sum_{j,Y}\Pr\left[\mathcal{S}_{X,t,Y,j}\wedge\overline{\mathcal{F}}_{X,t,j}\right]\ln\left(\varphi_n\left(X,Y,x\right)\right)\\
&\stackrel{\eqref{eq:indd2}}{\le}&\sum_{X,{\cal T}}\Pr[{\cal S}_{X,t+l_{n}}\wedge{\cal T}]\left(\ln|X|+|Q_{{\cal T},t+l_{n}}|\right)\\
&&+~\sum_{X}\sum_{j,Y}\Pr\left[\mathcal{S}_{X,t,Y,j}\wedge\overline{\mathcal{F}}_{X,t,j}\right]\ln\left(\varphi_n\left(X,Y,x\right)\right)\\
&\stackrel{\eqref{eq:eww2}}{\le}& \sum_{{\cal T},X,Y,j}\Pr[{\cal S}_{X,t,Y,j}\wedge\bar{{\cal F}}_{X,t,j}\wedge{\cal T}]\left(\ln|B_{Y}(x,\frac{\Delta_{X}}{2^{8}L_{k}})|+|Q_{{\cal T},t+l_{n}}|\right)\\
&&+~\sum_{{\cal T},X,Y,j}\Pr\left[\mathcal{S}_{X,t,Y,j}\wedge\overline{\mathcal{F}}_{X,t,j}\wedge{\cal T}\right]\ln\left(\varphi_n\left(X,Y_{j-1},x\right)\right)\\
&\le&\sum_{{\cal T},X,Y,j}\Pr[{\cal S}_{X,t,Y,j}\wedge\bar{{\cal F}}_{X,t,j}\wedge{\cal T}]\left(\ln|X|+|Q_{{\cal T},t}|\right)~.
\end{eqnarray*}
By the same calculation as in equation \eqref{eq:Change index in summution} we have $\sum_{{\cal T},X,Y,j}\Pr[{\cal S}_{X,t,Y,j}\wedge\bar{{\cal F}}_{X,t,j}\wedge{\cal T}]\left(\ln|X|+|Q_{{\cal T},t}|\right)=\sum_{X,{\cal T}}\Pr[{\cal S}_{X,t}\wedge{\cal T}]\left(\ln|X|+|Q_{{\cal T},t}|\right)$, Which concludes the proof of \eqref{eq:indd2}.
\end{proof}

\begin{proof}[Proof of \theoremref{thm:strong-spanning}]
The proof of \theoremref{thm:strong-spanning} follows from \lemmaref{lem:Cut Prob}, \claimref{Clm bound on the bound of cut} and equation \eqref{eq:main}. We start by proving the first assertion of  \theoremref{thm:strong-spanning}. Recall that for every hierarchical tree ${\cal T}$, $|A_{\cal T}|\le 22\log k$. Note that $|X|_k\le k$.

\begin{eqnarray*}
\E[d_T(x,y)]
&\stackrel{\eqref{eq:main}}{\le}& 8\sum_{X,i}\Pr[{\cal E}_{X,i}]\cdot\Delta_X\\
&=& 8\sum_{m=0}^{l_k-1}\sum_{i\in I_{m}}\sum_{X}\Pr[{\cal E}_{X,i}]\cdot\Delta_X\\
&\stackrel{\eqref{Cut Prob terminal}}{\le}&2^{17}L_{k}\cdot d(x,y)\sum_{m=0}^{l_{k}-1}\sum_{i\in I_{m}}\sum_{X,j\le k+1,Y}\Pr\left[\mathcal{S}_{X,i,Y,j}\wedge\overline{\mathcal{F}}_{X,i,j}\right]\ln\left(\varphi_{k}\left(X,Y,x\right)\right)\\
&\stackrel{\eqref{eq:indd}}{\le}&2^{17}d(x,y)L_{k}\sum_{m=0}^{l_{k}-1}\sum_{X,{\cal T}}\Pr[{\cal S}_{X,0}\wedge{\cal T}]\left(\ln|X|_{k}+|A_{{\cal T},0}|\right)\\
&\le& 2^{17}L_{k}\cdot d(x,y)\left(\ln k+22\log k\right)l_{k}\sum_{{\cal T}}\Pr[{\cal S}_{V,0}\wedge{\cal T}]\\
&\le&2^{29}\log k\log\log k\log\log\log k\cdot d(x,y)=\tilde{O}\left(\log k\right)d(x,y)~.
\end{eqnarray*}
Similarly, for the second assertion of \theoremref{thm:strong-spanning}, recall that for every hierarchical tree ${\cal T}$, $|Q_{\cal T}|\le 22\log n$. Note that $|X|\le n$.
\begin{eqnarray*}
\lefteqn{\E[d_T(x,y)]}\\
&\stackrel{\eqref{eq:main}}{\le}& 8\sum_{X,i}\Pr[{\cal E}_{X,i}]\cdot\Delta_X\\
&\stackrel{\eqref{Cut Prob non terminal}}{\le}&8\sum_{X,i}2^{14}d(x,y)L_{k}\sum_{j\le k}\sum_{Y}\Pr\left[\mathcal{S}_{X,i,Y,j}\wedge\overline{\mathcal{F}}_{X,i,j}\right]\ln\left(\varphi_k\left(X,Y,x\right)\right)\nonumber\\
&&+~ 8\sum_{X,i}2^{14}d(x,y)L_{n}\sum_{j}\sum_{Y}\Pr\left[\mathcal{S}_{X,i,Y,j}\wedge\overline{\mathcal{F}}_{X,i,j}\right]\ln\left(\varphi_n\left(X,Y,x\right)\right)\\
&\le&2^{17}L_{k}\cdot d(x,y)\sum_{m=0}^{l_{k}-1}\sum_{i\in I_{m}}\sum_{X,j\le k,Y}\Pr\left[\mathcal{S}_{X,i,Y,j}\wedge\overline{\mathcal{F}}_{X,i,j}\right]\ln\left(\varphi_k\left(X,Y,x\right)\right)\nonumber\\
&&+~ 2^{17}d(x,y)L_{n}\sum_{g=0}^{l_{n}-1}\sum_{i\in J_{g}}\sum_{X,j,Y}\Pr\left[\mathcal{S}_{X,i,Y,j}\wedge\overline{\mathcal{F}}_{X,i,j}\right]\ln\left(\varphi_n\left(X,Y,x\right)\right)\\
&\stackrel{\eqref{eq:indd}\wedge\eqref{eq:indd2}}{\le}&2^{17}d(x,y)L_{k}\sum_{m=0}^{l_{k}-1}\sum_{i\in I_{m}}\sum_{X,{\cal T}}\Pr[{\cal S}_{X,0}\wedge{\cal T}]\left(\ln\left|X\right|_k+\left|A_{{\cal T},0}\right|\right)\\
&&+~2^{17}d(x,y)L_{n}\sum_{g=0}^{l_{n}-1}\sum_{i\in J_{g}}\sum_{X,{\cal T}}\Pr[{\cal S}_{X,0}\wedge{\cal T}]\left(\ln|X|+\left|Q_{{\cal T},0}\right|\right)\\
&\le& \left(2^{29}\log k\log\log k\log\log\log k+2^{29}\log n\log\log n\log\log\log n\right)d(x,y)\\
&=&\tilde{O}\left(\log n\right)d(x,y)~.
\end{eqnarray*}
\end{proof}

\subsubsection{Proof of \lemmaref{lem:Cut Prob}}\label{sec:lemma-4}

Let $X$ be the vertex set of the graph in the current level $i$ of the petal decomposition, with arbitrary center $x_0$, target $t$ and radius $\Delta$ (with respect to $x_0$). Recall the vertices $x,y\in X$, and the ball $B_x=B_G(x,d(x,y))$. Set $\gamma=d(x,y)/\Delta$. 
As $X$ and $i$ are fixed, and all the probabilistic events in the statement of the lemma are contained in ${\cal S}_{X,i}$, we will implicitly condition all the probabilistic events during the proof  on ${\cal S}_{X,i}$ (i.e. our sample space is restricted to ${\cal S}_{X,i}$). We shall also omit $X$ and $i$ from the subscript of the probabilistic events (i.e., we will write ${\cal C}_{j},{\cal F}_{j},{\cal F}_{(< j)},{\cal E}_{j},{\cal E}$ and we will write ${\cal Z}_{Y,j}$ instead of ${\cal S}_{X,i,Y,j}$).

The petal decomposition algorithm returns a partition $(X_{0},\dots,X_{s},(y_{1},x_{1}),\dots,(y_{s},x_{s}),t_{0},\dots,t_{s})$. We make a small change in the numbering of the created petals:
let $t_{1},\dots,t_{k+1}$ be the terminal targets chosen in lines 6 or 12 of \texttt{petal-decomposition}, and $t_{k+2},\dots,t_{n+1}$
be the non-terminal targets chosen in line 21. Observe that in that notation we always have exactly $n+1$ petals, while there might be an index $j$ such that $X_j=\emptyset$, in that case we say that $X_j$ is an imaginary petal. Note that there are at most $k+1$ terminal targets because there are just $k$ terminals
(in addition to the first special petal whose target is not necessarily a terminal).

Recall the definitions from $\texttt{create-petal}$ procedure, some of them depend on the type of petal (terminal or non-terminal petal), and some of them are actually random variables (which depend on the previously created petals). We will write these with an index $j$ to clarify which petal they correspond to.
In the terminal petal case ($j\le k+1$), $L_k=\lceil\log\log k\rceil$, $R_j=hi-lo=\Delta/40$ (after reformation),
$a_j=lo+(q_j-1)R_j/L_k$, $b'_j=a_j+R_j/(2L_k)$, $b_j=a_j+R_j/L_k$, where $q_j$ is chosen in such a way that

 \begin{equation}\label{eq:Wa le Wb}
\frac{2|X|_{k}}{2^{\log^{1-(q-1)/L_{k}}k}}\le|W_{a_{j}}|_{k}\le|W_{b_{j}}|_{k}\le
 \frac{2|X|_{k}}{2^{\log^{1-q/L_{k}}k}}~.
\end{equation}
Also, $\hat{\chi}_{j}=\max\{\frac{|X|_{k}+1}{\left|W_{a}\right|_{k}},e\}$,
$\lambda_{j}=\frac{2\ln\hat{\chi}_{j}}{b_{j}'-a_{j}}=\frac{4L_{k}\ln\hat{\chi}_{j}}{R_{j}}=\frac{160L_{k}\ln\hat{\chi}_{j}}{\Delta}$. The radius $r_j$ of the petal $X_j=W_{r_j}$, is chosen from $[a_j,b'_j]$ with density function $f\left(r\right)=\frac{\lambda_{j}\cdot e^{-\lambda_{j}r}}
{e^{-\lambda_{j}\cdot a_{j}}-e^{-\lambda_{j}\cdot b'_{j}}}$.
Analogously, in the non-terminal case ($j>k+1$):
$L_n=\lceil\log\log n\rceil$, $R_j=hi-lo=\Delta/32$ (no reformation), $a_j=lo+(q_j-1)R_j/L_n$, $b'_j=a_j+R_j/(2L_n)$, $b_j=a_j+R_j/L_n$, where $q_j$ is chosen in such a way that $\frac{2|X|}{2^{\log^{1-(q_{j}-1)/L_{n}}n}}\le|W_{a_{j}}|\le|W_{b_{j}}|\le
\frac{2|X|}{2^{\log^{1-q_{j}/L_{n}}n}}$.
Also, $\hat\chi_j=\max\{\frac{|X|+1}{\left|W_{a}\right|},e\}$, $\lambda_j=\frac{2\ln \hat\chi_j}{b'_j-a_j}=\frac{4L_n\ln\hat\chi_j}{R_j}=\frac{128L_n\ln\hat\chi_j}{\Delta}$. The radius $r_j$ of the petal $X_j=W_{r_j}$, is chosen from $[a_j,b'_j]$ with density function $f\left(r\right)=\frac{\lambda_{j}\cdot e^{-\lambda_{j}r}}
{e^{-\lambda_{j}\cdot a_{j}}-e^{-\lambda_{j}\cdot b'_{j}}}$.

Let $\delta_j=e^{-8\lambda_j\gamma\Delta}$.
Towards the proof, we assume that $\gamma\le\frac{1}{2^{10}L_k}$ as otherwise the assertions of the lemma are trivial (as $\Pr[{\cal E}_{X,i}]\le\sum_{j}\sum_{Y\subseteq X}\Pr\left[\mathcal{S}_{X,i,Y,j}\wedge\overline{\mathcal{F}}_{X,i,j}\right]$).
\begin{claim}\label{Clm singlePetalCutProb}
For every $1\le j\le n+1$, $\Pr\left[C_{j}\mid \mathcal{Z}_{Y,j}\right]\le\left(1-\delta_j\right)\left(\Pr\left[
\overline{\mathcal{F}_{j}}\mid\mathcal{Z}_{Y,j}\right]+\frac{2}{\hat{\chi}_{j}^2}\right)$.
\end{claim}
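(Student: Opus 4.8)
The plan is to condition on the event $\mathcal{Z}_{Y,j}$ and exploit that this conditioning freezes everything relevant to iteration $j$ except the sampled radius $r_j$. Given $Y=Y_{j-1}$, the target $t_j$ is picked deterministically (terminals in increasing order, then the remaining vertices), so the cone structure, the reformed interval $[a_j,b'_j]$, the index $q_j$, and the parameters $\lambda_j,\hat\chi_j$ are all determined by $Y$; and by the Markovian nature of \texttt{create-petal}, the conditional law of $r_j$ given $\mathcal{Z}_{Y,j}$ is exactly the stated truncated exponential on $[a_j,b'_j]$. I would then set $r^-=\inf\{r\ge 0:\,W_r\cap B_x\neq\emptyset\}$ (with $W_r=W_r(Y,x_0,t_j)$), which is a deterministic constant under the conditioning, and recall $\gamma\Delta=d(x,y)$, so $B_x$ has radius $\gamma\Delta$ and diameter at most $2\gamma\Delta$.

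The heart of the argument is to translate the events ${\cal F}_j$ and ${\cal C}_j$ into statements about $r_j$ versus $r^-$. By monotonicity of $W_r$ (item 1 of \claimref{ClmW_rProp}), up to a null event $\overline{\mathcal{F}_{j}}=\{B_x\cap X_j\neq\emptyset\}=\{r_j\ge r^-\}$. For the other direction I would use item 2 of \claimref{ClmW_rProp}: if $r_j\ge r^-+8\gamma\Delta$, then for $r'$ slightly above $r^-$ there is $u\in W_{r'}\cap B_x$, every $v\in B_x$ lies within distance $2\gamma\Delta$ of $u$, so $v\in W_{r'+8\gamma\Delta}\subseteq W_{r_j}=X_j$; hence $B_x\subseteq X_j$ and the pair is not cut. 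Therefore ${\cal C}_j\subseteq\{r^-\le r_j<r^-+8\gamma\Delta\}$, and writing $r_0=\max(r^-,a_j)$, $r_1=\min(r^-+8\gamma\Delta,b'_j)$ we get $\Pr[{\cal C}_j\mid\mathcal{Z}_{Y,j}]\le \Pr[r_0\le r_j<r_1\mid\mathcal{Z}_{Y,j}]$ (and $0$ if $r_0\ge r_1$).

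From here it is a short computation with the truncated-exponential density, using the defining identity $\lambda_j(b'_j-a_j)=2\ln\hat\chi_j$, hence $e^{-\lambda_j(b'_j-a_j)}=\hat\chi_j^{-2}$, together with $\hat\chi_j\ge e$ (so $\hat\chi_j^2\ge 2$, which yields both $1/(1-\hat\chi_j^{-2})\le 1+2/\hat\chi_j^2$ and $1/(\hat\chi_j^2-1)\le 2/\hat\chi_j^2$). Since $r_1-r_0\le 8\gamma\Delta$ one bounds the numerator $e^{-\lambda_j r_0}-e^{-\lambda_j r_1}\le e^{-\lambda_j r_0}(1-\delta_j)$ with $\delta_j=e^{-8\lambda_j\gamma\Delta}$; then one splits on the location of $r^-$: if $r^-\le a_j$ then $r_0=a_j$ and $\Pr[\overline{\mathcal{F}_{j}}]=1$, giving $\Pr[{\cal C}_j]\le (1-\delta_j)/(1-\hat\chi_j^{-2})\le(1-\delta_j)(1+2/\hat\chi_j^2)$; if $a_j<r^-\le b'_j$ then $r_0=r^-$ and a one-line rearrangement gives $e^{-\lambda_j r_0}/(e^{-\lambda_j a_j}-e^{-\lambda_j b'_j})=\Pr[\overline{\mathcal{F}_{j}}]+1/(\hat\chi_j^2-1)\le\Pr[\overline{\mathcal{F}_{j}}]+2/\hat\chi_j^2$; and if $r^->b'_j$ then ${\cal C}_j$ is impossible. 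The non-terminal case $j>k+1$ is verbatim the same with $L_n$ in place of $L_k$ (the identity $\lambda_j(b'_j-a_j)=2\ln\hat\chi_j$ and $\hat\chi_j\ge e$ hold there as well), and imaginary petals $X_j=\emptyset$ are trivial. I expect the only genuinely delicate point to be the upper-containment step $\{r_j\ge r^-+8\gamma\Delta\}\subseteq\{B_x\subseteq X_j\}$: one must be careful that the relevant distances are measured in the induced graph $G[Y]$ (using $B_x\subseteq Y$ from $\mathcal{Z}_{Y,j}$) so that \claimref{ClmW_rProp} is applicable; the rest is bookkeeping and elementary estimates on the truncated exponential.
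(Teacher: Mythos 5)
Your proposal is correct and follows essentially the same route as the paper's proof: condition on $\mathcal{Z}_{Y,j}$ so that the truncated-exponential law of $r_j$ is pinned down, locate the threshold radius at which $B_x$ first meets $W_r$, use \claimref{ClmW_rProp} to confine the "cut" event to a window of width at most $8\gamma\Delta$, and then integrate the density and invoke $\lambda_j(b'_j-a_j)=2\ln\hat\chi_j$ together with $\hat\chi_j\ge e$. The only cosmetic difference is that you split explicitly on $r^-\le a_j$, $a_j<r^-\le b'_j$, $r^->b'_j$ (using the bound $1/(1-\hat\chi_j^{-2})\le 1+2/\hat\chi_j^2$ in the first case), whereas the paper works directly with $\rho=\max(r^-,a_j)$ and a single integral formula that covers the first two cases uniformly — algebraically these are the same estimate.
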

\begin{proof}
For ease of notation we write simply $\lambda$ ,$\delta$, $\chi$, $a$ and $b'$ instead of $\lambda_j,\delta_j,\chi_j,a_j,b'_j$, as the proof is the same for both of the cases.\footnote{Note that if $X_j=\emptyset$ is imaginary petal created only for clarity in notation, then $\Pr\left[C_{j}\mid \mathcal{Z}_{Y,j}\right]=0$ and the claim is obvious.}
Let $\rho$ be the minimal number greater than $a$ such that $W_{\rho}\cap B_x\ne\emptyset$.
If $\rho\ge b'$, then trivially $\Pr\left[\overline{\mathcal{F}_{j}}\mid\mathcal{Z}_{Y,j}\right]=1$
and hence $\mathcal{P}_{r}\left[C_{j}\mid\mathcal{Z}_{Y,j}\right]=0$
and we are done. Otherwise, $\rho\in[a,b']$, the probability
that $B_x$ intersects $X_{j}$ is
\[
\Pr\left[\overline{\mathcal{F}_{j}}\mid\mathcal{Z}_{Y,j}\right]
=\intop_{\rho}^{b'}f\left(r\right)dr=\intop_{\rho}^{b'}\frac{\lambda\cdot e^{-\lambda r}}{e^{-\lambda\cdot a}-e^{-\lambda\cdot b'}}dr=\frac{e^{-\lambda \rho}-e^{-\lambda b'}}{e^{-\lambda\cdot a}-e^{-\lambda\cdot b'}}~.
\]
The ball $B_x$ is of diameter at most $2\gamma\Delta$, therefore by \claimref{ClmW_rProp}, $B_x\subseteq W_{\rho+8\gamma\Delta}$. Hence
\begin{eqnarray*}
\Pr\left[C_{j}\mid \mathcal{Z}_{Y,j}\right] & \le & \intop_{\rho}^{\rho+8\gamma\Delta}f\left(r\right)dr=
    \intop_{\rho}^{\rho+8\gamma\Delta}\frac{\lambda\cdot e^{-\lambda r}}{e^{-\lambda\cdot a}-e^{-\lambda\cdot b'}}dr=
    \frac{e^{-\lambda \rho}-e^{-\lambda\left(\rho+8\gamma\Delta\right)}}{e^{-\lambda\cdot a}-e^{-\lambda\cdot b'}}\\
 & = & \frac{e^{-\lambda \rho}}{e^{-\lambda\cdot a}-e^{-\lambda\cdot b'}}\left(1-e^{-\lambda8\gamma\Delta}\right)=
 \left(1-\delta\right)\frac{e^{-\lambda \rho}-e^{-\lambda\cdot b'}+e^{-\lambda\cdot b'}}{e^{-\lambda\cdot a}-e^{-\lambda\cdot b'}}\\
 & = & \left(1-\delta\right)\left(\frac{e^{-\lambda \rho}-e^{-\lambda\cdot b'}}{e^{-\lambda\cdot a}-e^{-\lambda\cdot b'}}+\frac{e^{-\lambda\cdot b'}}{e^{-\lambda\cdot a}-e^{-\lambda\cdot b'}}\right)\\
 & = & \left(1-\delta\right)\left(\Pr\left[\overline{\mathcal{F}_{j}}|\ \mathcal{Z}_{Y,j}\right]+\frac{1}{e^{\lambda\cdot\left(b'-a\right)}-1}\right)\\ & \le & (1-\delta)\left(\Pr\left[\overline{\mathcal{F}_{j}}|\ \mathcal{Z}_{Y,j}\right]+\frac{2}{\hat{\chi}^2}\right).
\end{eqnarray*}
Where the last inequality follows since $\hat\chi\ge2$.
\end{proof}

We are now ready to bound the probability that one of the terminal petals cuts $B_x$.
\begin{eqnarray}\label{eq:cutt}
\Pr\left[\bigvee_{j\le k+1}C_{j}\right]
& =& \sum_{j\le k+1}\sum_{Y\subseteq X}\Pr\left[C_{j}\mid\mathcal{Z}_{Y,j}\right]\cdot\Pr[\mathcal{Z}_{Y,j}]\\
 & \le &\sum_{j\le k+1}\sum_{Y}\left(1-\delta_{j}\right)\left(\Pr\left[\overline{\mathcal{F}_{j}}\mid\mathcal{Z}_{Y,j}\right]+\frac{2}{\hat{\chi}_{j}^{2}}\right)\cdot\Pr[\mathcal{Z}_{Y,j}]\nonumber\\
 & = &\sum_{j\le k+1}\sum_{Y}\left(1-\delta_{j}\right)\Pr\left[\overline{\mathcal{F}_{j}}\wedge\mathcal{Z}_{Y,j}\right]+\sum_{j\le k+1}\sum_{Y}\left(1-\delta_{j}\right)\frac{2}{\hat{\chi}_{j}^{2}}\cdot\Pr[\mathcal{Z}_{Y,j}]~.\nonumber
\end{eqnarray}
To see how the first equality follows, note that the probability that $B_x$ is cut by the first $k+1$ petals ($\Pr\left[\bigvee_{j\le k+1}C_{j}\right]$) is equal to the sum of the probabilities that $B_x$ cut at the first time by petal $j$ ($\sum_{j\le k+1}\Pr\left[\mathcal{C}_{j}\wedge\bigwedge_{l<j}\overline{C_{l}}\right]$). While for each $j$, the probability that $B_x$ is cut at the first time by petal $j$ is equal to the probability that $B_x$ is active at iteration $j$ (i.e. ${\cal F}_{(< j)}$), and is indeed cut by petal $j$ ($\sum_{Y\subseteq X}\Pr\left[\mathcal{C}_{j}\mid\mathcal{Z}_{Y,j}\right]\cdot\Pr[\mathcal{Z}_{Y,j}]$).

Note that for $j\le k+1$, $1-\delta_j=1-e^{-8\lambda_j\gamma\Delta}=
1-e^{-8\cdot 160\gamma L_k\ln\hat{\chi}_{j}}<
2^{11}\gamma L_k\ln\hat{\chi}_{j}$.
For a set $Y=Y_{j-1}$  such that $\Pr\left[\overline{\mathcal{F}_{j}}\wedge\mathcal{Z}_{Y,j}\right]\ne0$, necessarily a target $t_j$ and $q_j$ are chosen so that $B_x\cap W_{b'_j}\neq\emptyset$ (otherwise it is no possible that $X_j$ intersects $B_x$). As $B_x$ is a ball of radius $\gamma\Delta\le \Delta/(2^{10}L_k)$, increasing the petal radius by $4\gamma\Delta$ we have $x\in W_{b_{j}'+4\gamma\Delta}$, and  moreover $B\left(x,\frac{\Delta}{2^{9}L_{k}}\right)\subseteq W_{b_{j}'+4\gamma\Delta+\frac{\Delta}{2^{7}L_{k}}}\subseteq W_{b_{j}'+\frac{\Delta}{2^{8}L_{k}}+\frac{\Delta}{2^{7}L_{k}}}\subseteq W_{b_{j}'+\frac{R_j}{2L_{k}}}=W_{b_{j}}$.

We will show that for such a $Y$, (for which $\Pr\left[\overline{\mathcal{F}_{j}}\wedge\mathcal{Z}_{Y,j}\right]\ne0$ ), $\log\left(\hat{\chi}_{j}\right)\le6\log\left(\varphi_k\left(X,Y,x\right)\right)$.\footnote{
Recall that $\varphi_k\left(X,Y,x\right)=\max\left\{\frac{|X|_k}{\left|B_{Y}\left(x,
\Delta_{X}/(2^{8}L_{k})\right)\right|_{k}},e\right\}$.}
 We may assume that $\hat{\chi}_j=\frac{|X|_k+1}{\left|W_{a}\right|_{k}}$,
 as otherwise ($\hat{\chi}_j=e$) the bound is trivial. Using that
 $\log^{1/L_{k}}k\le3$ we get:

\begin{eqnarray*}
\log{\hat{\chi}_j}
&=& \log\left(\frac{|X|_{k}+1}{\left|W_{a_j}\right|_{k}}\right)\nonumber\\
&\le& \log\left(\frac{2|X|_{k}}{\left|W_{a_j}\right|_{k}}\right)\nonumber\\
&\stackrel{\eqref{eq:Wa le Wb}}{\le}&   \log^{1-(q-1)/L_{k}}k\nonumber\\
&\le& 3\log^{1-q/L_{k}}k\nonumber\\
&\stackrel{\eqref{eq:Wa le Wb}}{\le}& 3\log\left(\frac{2|X|_{k}}{\left|W_{b_j}\right|_{k}}\right)\nonumber\\
&\le& 3\log\left(\frac{2|X|_{k}}{\left|
    B\left(x,\Delta/2^{9}L_{k}\right)\right|_{k}}\right)\nonumber\\
&=& 3\left(\log\left(\varphi_k\left(X,Y,x\right)\right)+1\right)\nonumber\\
&\le& 6\log\left(\varphi_k\left(X,Y,x\right)\right)~.
\end{eqnarray*}

\sloppy Note that in particular it is true that $\ln{\hat{\chi}_j}\le 6\ln\left(\varphi_k\left(X,Y,x\right)\right)$. Hence we can bound the first component of the summation in
 $\eqref{eq:cutt}$:
 \begin{eqnarray}\label{eq:xi(X,Y,x)}
   \sum_{j\le k+1}\sum_{Y}\left(1-\delta_{j}\right)\Pr\left[\overline{\mathcal{F}_{j}}\wedge\mathcal{Z}_{Y,j}\right] &\le& 2^{11}\gamma L_{k}\sum_{j\le k+1}\sum_{Y}\ln\left(\hat{\chi}_{j}\right)\Pr\left[\overline{\mathcal{F}_{j}}\wedge\mathcal{Z}_{Y,j}\right] \\\nonumber
   &\le& 6\cdot2^{11}\gamma L_{k}\sum_{j\le k+1}\sum_{Y}\ln\left(\varphi_k\left(X,Y,x\right)\right)\Pr\left[\overline{\mathcal{F}_{j}}\wedge\mathcal{Z}_{Y,j}\right]
 \end{eqnarray}

For bounding the second component ($\sum_{j\le k+1}\sum_{Y}\left(1-\delta_{j}\right)\frac{2}{\hat{\chi}_{j}^{2}}\cdot\Pr[\mathcal{Z}_{Y,j}]$), note that $\frac{1}{\hat{\chi}_{j}}=
  \min\left\{ \frac{\left|W_{a_{j}}\right|_{k}}{\left|X\right|_{k}+1},
  \frac{1}{e}\right\} \le\frac{\left|W_{a_{j}}\right|_{k}}{\left|X\right|_{k}}$.
In addition, the probability of the event
${S}_{Y,j}$ is equal to the sum
of probabilities of sequences $X_0,\dots,X_s$,
over all sequences for which $Y=(X_j\cup\dots\cup X_s\cup X_0)$ (while we abuse notation, $\Pr[X_{0},\dots,X_{s}]$ is the probability that the petal decomposition returned the partition ${X_{0},\dots,X_{s}}$). Note also that for each such partition, $X_{0},\dots,X_{s}$, $\left|W_{a_{j}}\right|_{k}\le\left|X_{j}\right|_{k}$ and $\sum_{j\le k+1}\frac{\left|X_{j}\right|_{k}}{\left|X\right|_{k}}\le1$, because all the petals, 
$X_1,\dots,X_{k+1}$, , are pairwise disjoint. We get

\begin{eqnarray*}
\lefteqn{\sum_{j\le k+1}\sum_{Y}\frac{\Pr[{S}_{Y,j}]}{\hat{\chi_j}}}\\
&=&\sum_{j\le k+1}\sum_{Y:~B_{x}\subseteq Y}\sum_{(X_{0},\dots,X_{s}):~Y=(X_{j}\cup\dots\cup X_{s}\cup X_{0})}\Pr[X_{0},\dots,X_{s}]\frac{1}{\hat{\chi}_{j}}\nonumber\\
&\le&\sum_{j\le k+1}\sum_{Y:~B_{x}\subseteq Y}\sum_{(X_{0},\dots,X_{s})~:~Y=(X_{j}\cup\dots\cup X_{s}\cup X_{0})}\Pr[X_{0},\dots,X_{s}]\frac{\left|W_{a_{j}}\right|_{k}}{\left|X\right|_{k}}\nonumber\\
&\le&\sum_{j\le k+1}\sum_{\left(X_{0},\dots,X_{s}\right)}\sum_{Y:~B_{x}\subseteq Y,Y=(X_{j}\cup\dots\cup X_{s}\cup X_{0})}\Pr[X_{0},\dots,X_{s}]\frac{\left|X_{j}\right|_{k}}{\left|X\right|_{k}}\nonumber\\
&\le&\sum_{j\le k+1}\sum_{(X_{0},\dots,X_{s})}\Pr[X_{0},\dots,X_{s}]\frac{\left|X_{j}\right|_{k}}{\left|X\right|_{k}}\nonumber\\
&=& \sum_{(X_{0},\dots,X_{s})}\Pr[X_{0},\dots,X_{s}]\sum_{j\le k+1}\frac{\left|X_{j}\right|_{k}}{\left|X\right|_{k}}\nonumber\\
&\le&\sum_{(X_{0},\dots,X_{s})}\Pr[X_{0},\dots,X_{s}]=1~.
\end{eqnarray*}
Where the third inequalitie follows by the fact that for a fixed $j$ and a particular partition ${X_{0},\dots,X_{s}}$ of $X$ there might be only a single set $Y$ such that $B_x\subseteq Y=X_{j}\cup\dots\cup X_{s}\cup X_{0}$.
As $\frac{\ln\hat{\chi}_{j}}{\hat{\chi}_{j}}\le1$  we can bound the second component of the summation in $\eqref{eq:cutt}$:
 \begin{eqnarray}\label{secondPartBound}
   \sum_{j\le k+1}\sum_{Y}\left(1-\delta_{j}\right)\frac{2}{\hat{\chi}_{j}^{2}}\cdot\Pr[\mathcal{Z}_{Y,j}] &\le& 2^{11}\gamma L_{k}\sum_{j\le k+1}\sum_{Y}\ln\hat{\chi}_{j}\frac{2}{\hat{\chi}_{j}^{2}}\cdot\Pr[\mathcal{Z}_{Y,j}]\nonumber \\
   &\le& 2^{12}\gamma L_{k}\sum_{j\le k+1}\sum_{Y}\frac{\Pr[\mathcal{Z}_{Y,j}]}{\hat{\chi}_{j}}\nonumber\\
   &\le& 2^{12}\gamma L_{k}\nonumber\\
   &\le&2^{12}\gamma L_{k}\sum_{Y,j}\Pr\left[\overline{\mathcal{F}_{j}}\wedge\mathcal{Z}_{Y,j}\right]\ln\varphi_{k}(X,Y,x)~.
 \end{eqnarray}
Where the last inequality follows by $\sum_{Y,j}\Pr\left[\overline{\mathcal{F}_{j}}\wedge\mathcal{Z}_{Y,j}\right]\ln\varphi_{k}(X,Y,x)\ge\sum_{Y,j}
\Pr\left[\overline{\mathcal{F}_{j}}\wedge\mathcal{Z}_{Y,j}\right]=1$ (recall that all the probabilities are implicitly conditioned on ${\cal S}_{X,i}$, and ${\overline{\mathcal{F}_{j}}}$ has to hold for some $j$).
We conclude:
\begin{eqnarray*}
  \lefteqn{\Pr\left[\bigvee_{j\le k+1}C_{j}\right]}\\
  &\stackrel{\eqref{eq:cutt}}{\le}&\!\!\sum_{j\le k+1}\sum_{Y}\left(1-\delta_{j}\right)\Pr\left[\overline{\mathcal{F}_{j}}\wedge\mathcal{Z}_{Y,j}\right]+\!\!\sum_{j\le k+1}\sum_{Y}\left(1-\delta_{j}\right)\frac{2}{\hat{\chi}_{j}^{2}}\cdot\Pr[\mathcal{Z}_{Y,j}] \\
  &\stackrel{\eqref{eq:xi(X,Y,x)}\wedge\eqref{secondPartBound}}{\le}&\!\! \left(6\cdot2^{11}+2^{12}\right)\gamma L_{k}\sum_{j\le k+1}\sum_{Y}\Pr\left[\overline{\mathcal{F}_{j}}\wedge\mathcal{Z}_{Y,j}\right]\ln\left(\varphi_k\left(X,Y,v\right)\right)\\
  &=&\!\!2^{14}\gamma L_{k}\sum_{j\le k+1}\sum_{Y}\Pr\left[\overline{\mathcal{F}_{j}}\wedge\mathcal{Z}_{Y,j}\right]\ln\left(\varphi_k\left(X,Y,v\right)\right) ~.
\end{eqnarray*}
\begin{claim}
\label{Clm no trm petal no danger} If $x\in K$, then for $j>k+1$, it holds that $\Pr\left[C_{j}\mid \mathcal{Z}_{Y,j}\right]=0$.
\end{claim}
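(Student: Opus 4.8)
The claim states that if $x$ is a terminal, then non-terminal petals (those with index $j > k+1$) cannot cut the ball $B_x$. The plan is to leverage the structural results already established, in particular \claimref{Clm reform effect} which controls how petals at different lines of \texttt{petal-decomposition} interact with balls around terminals.

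First I would recall the setup: we are conditioning on ${\cal S}_{X,i}$, meaning $B_x \subseteq X$ where $X$ is the current cluster of radius $\Delta$. Since $x \in K$, and we have assumed $\gamma = d(x,y)/\Delta \le \frac{1}{2^{10}L_k}$, the ball $B_x = B_G(x, d(x,y))$ has radius at most $\Delta/(2^{10}L_k) \le \Delta/160$ (for $k$ sufficiently large; the small-$k$ case is covered by the triviality of the lemma). The key observation is that a non-terminal petal $X_j$ with $j > k+1$ is created at line 22 of \texttt{petal-decomposition}. By the third assertion of \claimref{Clm reform effect}, for every terminal $v \in K \cap Y_{j-1}$ we have $B_X(v, \Delta/16) \cap X_j = \emptyset$. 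Thus I would like to argue that $x \in Y_{j-1}$ (so that $x$ is a terminal still "available" when petal $X_j$ is formed) and that $B_x \subseteq B_X(x, \Delta/16)$, which would immediately give $B_x \cap X_j = \emptyset$, hence $\Pr[{\cal C}_j \mid {\cal Z}_{Y,j}] = 0$.

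The main point to verify is that when we are conditioning on ${\cal Z}_{Y,j}$ with $j > k+1$, the ball $B_x$ has not already been removed — i.e., $B_x \subseteq Y = Y_{j-1}$, which is exactly part of the definition of ${\cal Z}_{Y,j} = {\cal S}_{X,i,Y,j}$ (the event requires $B_x \subseteq Y$ and $Y = Y_{j-1}$). In particular, since $x \in B_x \subseteq Y_{j-1}$, the terminal $x$ has survived all the terminal petals and all earlier non-terminal petals, so $x \in K \cap Y_{j-1}$. Then since $B_x = B_G(x, d(x,y))$ and $d(x,y) = \gamma\Delta \le \Delta/16$, we get $B_x \subseteq B_X(x, \Delta/16)$ (using that distances in $Y_{j-1}$ are at least distances in $X$, but here we just need the ball in $X$ contains $B_x$; more carefully, by \factref{FactPetals dont cut shortest path to x0} distances to relevant points are preserved, but the cleanest route is: any $u \in B_x$ has $d_G(x,u) \le d(x,y) \le \Delta/16$, and since $u,x \in X$, this places $u \in B_X(x,\Delta/16)$). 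Applying the third bullet of \claimref{Clm reform effect} then yields $B_X(x, \Delta/16) \cap X_j = \emptyset$, so $B_x \cap X_j = \emptyset$, i.e., event ${\cal F}_j$ holds with certainty and ${\cal C}_j$ has probability zero.

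The main obstacle, such as it is, is making sure the radius bookkeeping is airtight: I need $\gamma\Delta \le \Delta/16$, which follows from $\gamma \le \frac{1}{2^{10}L_k} \le \frac{1}{16}$ since $L_k \ge 1$, and I need to confirm that the ball $B_x$ referenced throughout (defined in $G$, the full graph) coincides with or is contained in the corresponding ball computed within $X$ or $Y_{j-1}$ — this is where one invokes that the petal decomposition only shrinks the vertex set but does not decrease distances, together with the conditioning $B_x \subseteq Y_{j-1}$ already built into ${\cal Z}_{Y,j}$. Once these are noted, the proof is essentially a two-line deduction from \claimref{Clm reform effect}.

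\begin{proof}
We condition, as throughout, on ${\cal S}_{X,i}$, and fix $Y = Y_{j-1}$ with $j > k+1$, assuming $\Pr[{\cal Z}_{Y,j}] > 0$ (otherwise there is nothing to prove). By the assumption $\gamma = d(x,y)/\Delta \le \frac{1}{2^{10}L_{k}} \le \frac{1}{16}$, so every $u \in B_x = B_G(x,d(x,y))$ satisfies $d_G(x,u) \le \gamma\Delta \le \Delta/16$; since $x,u \in X$, this gives $B_x \subseteq B_X(x,\Delta/16)$. The event ${\cal Z}_{Y,j} = {\cal S}_{X,i,Y,j}$ includes $B_x \subseteq Y = Y_{j-1}$, and in particular $x \in Y_{j-1}$; as $x \in K$, we have $x \in K \cap Y_{j-1}$. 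The petal $X_j$ with index $j > k+1$ is created at line 22 of \texttt{petal-decomposition}, so the third assertion of \claimref{Clm reform effect} applies and yields $B_X(x,\Delta/16) \cap X_j = \emptyset$. Hence $B_x \cap X_j = \emptyset$, so the event ${\cal F}_j$ holds with certainty conditioned on ${\cal Z}_{Y,j}$, and therefore $\Pr[{\cal C}_j \mid {\cal Z}_{Y,j}] = 0$.
\end{proof}
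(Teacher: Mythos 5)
Your proof is correct and takes essentially the same approach as the paper's: both condition on ${\cal Z}_{Y,j}$ to get $B_x \subseteq Y_{j-1}$ (hence $x \in K \cap Y_{j-1}$), use $\gamma < 1/16$ to get $B_x \subseteq B_X(x,\Delta/16)$, and then apply the third assertion of \claimref{Clm reform effect} to conclude $B_x \cap X_j = \emptyset$. You have merely spelled out the bookkeeping that the paper's two-line proof leaves implicit.
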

\begin{proof}
Since we condition on $\mathcal{Z}_{Y,j}$, it implies that $B_x\subseteq Y_{j-1}$. As $\gamma<1/16$, the third assertion of \claimref{Clm reform effect} implies that $X_j\cap B_x=\emptyset$.
\end{proof}
We are ready to prove the first statement of \lemmaref{lem:Cut Prob}. For terminal $v\in K$, Using \claimref{Clm no trm petal no danger},
\begin{eqnarray*}
  \Pr[{\cal E}]&=& \Pr\Big[\bigvee_{j\le n+1}C_{j}\Big]  = \Pr\Big[\bigvee_{j\le k+1}C_{j}\big]\\
  &\le& 2^{14}\gamma L_{k}\sum_{j\le k+1}\sum_{Y}\Pr\left[\overline{\mathcal{F}_{j}}\wedge\mathcal{Z}_{Y,j}\right]\ln\left(\varphi_k\left(X,Y,v\right)\right)~.
\end{eqnarray*}

Using a symmetric argument, we can also show the second assertion of \lemmaref{lem:Cut Prob}. Set $\mu=\Pr\left(\bigvee_{j\le k+1}C_{j}\right)$ the probability that $B_x$ is cut by terminal petal. Let $I_n={k+2,\dots,n+1}$ be all the indices of the non-terminal petals. Note that for $j\in I_n$, $1-\delta_{j}=1-e^{-8\lambda_{j}\gamma\Delta}=1-e^{-8\cdot128\cdot\gamma L_{n}\ln\hat{\chi}_{j}}\le2^{10}\gamma L_{n}\ln\hat{\chi}_{j}$. Hence for every vertex $x\in X$,
\begin{eqnarray*}
  \Pr[{\cal E}]  &=& \Pr\left[\bigvee_{j\le n+1}C_{j}\right] \\
  &=& \sum_{j\le n+1}\sum_{Y\subseteq X}\Pr\left[C_{j}\mid\mathcal{Z}_{Y,j}\right]\cdot\Pr[\mathcal{Z}_{Y,j}] \\
  &=& \mu + \sum_{j\in I_n}\sum_{Y\subseteq X}\Pr\left[C_{j}\mid\mathcal{Z}_{Y,j}\right]\cdot\Pr[\mathcal{Z}_{Y,j}]\\
  &\le& \mu+\sum_{j\in I_n}\sum_{Y}\left(1-\delta_{j}\right)\left(\Pr\left[
  \overline{\mathcal{F}_{j}}\mid\mathcal{Z}_{Y,j}\right]+
  \frac{2}{\hat{\chi}_{j}^{2}}\right)\cdot\Pr[\mathcal{Z}_{Y,j}]\\
  &\le& \mu+2^{10}\gamma L_{n}\left(\sum_{j\in I_{n}}\sum_{Y}\ln\hat{\chi}_{j}\cdot\Pr\left[\overline{\mathcal{F}_{j}}\wedge\mathcal{Z}_{Y,j}\right]+\sum_{j\in I_{n}}\sum_{Y}\frac{2}{\hat{\chi}_{j}}\cdot\Pr[\mathcal{Z}_{Y,j}]\right)
\end{eqnarray*}
where the first inequality is by \claimref{Clm singlePetalCutProb}. For $Y$ such that $\Pr\left[\overline{\mathcal{F}_{j}}\wedge\mathcal{Z}_{Y,j}\right]\ne0$ it holds that $B_x\cap W_{b'_j}\neq\emptyset$. As $\gamma\le\frac{1}{2^{10}L_k}\le \frac{1}{2^{10}L_n}$, by \claimref{ClmW_rProp} we have that $B_x\left(x,\frac{\Delta}{2^{9}L_{n}}\right)\subseteq W_{b_{j}'+4\gamma\Delta+\frac{\Delta}{2^{7}L_{n}}}\subseteq W_{b_{j}'+\frac{\Delta}{2^{8}L_{n}}+\frac{\Delta}{2^{7}L_{n}}}\subseteq W_{b_{j}'+\frac{R_j}{2L_{n}}}=W_{b_{j}}$.
We will show that $\ln\hat{\chi}_{j}\le6\ln\varphi_n\left(X,Y,x\right)$. If $\hat{\chi}_{j}=e$ this is trivial, hence we will assume that  $\hat{\chi}_{j}=\frac{|X|+1}{\left|W_{a}\right|}$.
By the maximality of $q_j$ ($\frac{2|X|}{2^{\log^{1-(q-1)/L_{n}}n}}\le|W_{a_{j}}|\le|W_{b_{j}}|\le\frac{2|X|}{2^{\log^{1-q/L_{n}}n}}$) we get
$\log\hat{\chi}_{j}=\log\left(\frac{|X|+1}{\left|W_{a_{j}}\right|}\right)\le\log\left(\frac{2|X|}{\left|W_{a_{j}}\right|}\right)\le\log^{1-(q-1)/L_{n}}n\le3\log^{1-q/L_{n}}n\le3\log\left(\frac{2|X|}{\left|W_{b}\right|}\right)\le3\log\left(\frac{2|X|}{\left|B\left(v,\Delta/2^{9}L_{n}\right)\right|}\right)\le3\left(\log\left(\varphi_{n}\left(X,Y,x\right)\right)+1\right)\le6\log\left(\varphi_{n}\left(X,Y,x\right)\right)$.

By similar arguments (and definitions) to \eqref{secondPartBound} we get
\begin{eqnarray*}
\sum_{j\in I_n}\sum_{Y}\frac{\Pr[{S}_{Y,j}]}{\hat{\chi_j}}
&\le& \sum_{j\in I_n}\sum_{Y:~B_{x}\subseteq Y}\sum_{(X_{0},\dots,X_{s})~:~Y=(X_{j}\cup\dots\cup X_{s}\cup X_{0})}\Pr[X_{0},\dots,X_{s}]\frac{\left|W_{a_{j}}\right|}{\left|X\right|}\\
&\le&\sum_{j\in I_n}\sum_{(X_{0},\dots,X_{s})}\Pr[X_{0},\dots,X_{s}]\frac{\left|X_{j}\right|}{\left|X\right|}\\
&=& \sum_{(X_{0},\dots,X_{s})}\Pr[X_{0},\dots,X_{s}]\sum_{j\le k+1}\frac{\left|X_{j}\right|}{\left|X\right|}~~\le~~1~.
\end{eqnarray*}
As $1\le\sum_{Y,j}\Pr\left[\overline{\mathcal{F}_{j}}\wedge\mathcal{Z}_{Y,j}\right]\ln\varphi_{n}(X,Y,x)$ we get
\begin{eqnarray*}
  \Pr[{\cal E}]
  &\le& \mu+2^{10}\gamma L_{n}\left(6\sum_{j\in I_{n}}\sum_{Y}\ln\left(\varphi_k\left(X,Y,x\right)\right)\cdot\Pr\left[\overline{\mathcal{F}_{j}}\wedge\mathcal{Z}_{Y,j}\right]+2\right)\\
  &\le& \mu+2^{13}\gamma L_{n}\sum_{j\in I_{n}}\sum_{Y}\ln\left(\varphi_n\left(X,Y,x\right)\right)\cdot\Pr\left[\overline{\mathcal{F}_{j}}\wedge\mathcal{Z}_{Y,j}\right]~.
\end{eqnarray*}
Hence the second assertion of \lemmaref{lem:Cut Prob} follows as well.

\section{Applications}\label{app:app}
In this section we illustrate several algorithmic applications of our techniques. Some of our applications are suitable for graphs with a small vertex cover.
Recall that for a graph $G=\left(V,E\right)$, a set $A\subseteq V$ is a vertex cover of
$G$, if for any edge $e\in E$, at least one of its endpoints
is in $A$. A polynomial time $2$-approximation algorithm
to this problem is folklore.

\subsection{Sparsest Cut }\label{app}

In the sparsest-cut problem we are given a graph $G=\left(V,E\right)$
with capacities on the edges $c:E\rightarrow\mathbb{R}_{+}$, and a collection of pairs $(s_1,t_1),\dots,(s_r,t_r)$ along with their demands $D_1,\dots,D_r$. The goal is to find a cut $S\subseteq V$ that minimizes the ratio between capacity and demand across the cut:
\[
\phi(S)=\frac{\sum_{\{u,v\}\in E}c(u,v)|\1_S(u)-\1_S(v)|}{\sum_{i=1}^rD_i|\1_S(s_i)-\1_S(t_i)|}~,
\]
where $\1_S(\cdot)$ is the indicator for membership in $S$.
Arora et. al. \cite{ALN08} present an $\tilde{O}\left(\sqrt{\log r}\right)$ approximation algorithm
to this problem. Our contribution is the following.
\begin{theorem}\label{thm:sparsest}
	If there exists a set $K\subseteq V$ of size $k$ such that any demand pair contains a vertex of $K$, then
	there exists a $\tilde{O}\left(\sqrt{\log k}\right)$ approximation algorithm for the sparsest-cut problem.
\end{theorem}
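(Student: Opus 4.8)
The plan is to follow the SDP-rounding framework of \cite{ALN08}, but to replace their $\ell_1$-embedding of negative-type metrics by our \emph{terminal} embedding, exploiting the hypothesis that every demand pair is incident to a vertex of $K$. First I would write down the standard semidefinite relaxation of sparsest cut: minimize $\sum_{\{u,w\}\in E} c(u,w)\, d(u,w)$ over all metrics $d$ of negative type on $V$, subject to the normalization $\sum_i D_i\, d(s_i,t_i)=1$. (Recall that $d$ is of negative type exactly when $d(u,w)=\|\phi_u-\phi_w\|_2^2$ for some map $\phi$, which is precisely the family of constraints coming from the vector program with triangle inequalities on squared distances; this SDP is solvable in polynomial time up to any accuracy.) Since the cut metric $\delta_S(u,w)=|\1_S(u)-\1_S(w)|$ of any $S\subseteq V$ is of negative type, the optimal cut $S^\ast$ yields, after rescaling, a feasible solution, so the SDP value $\mathrm{SDP}$ is a lower bound on $\mathrm{OPT}=\min_S\phi(S)$; and writing $d$ for an (almost-)optimal solution we have $\mathrm{SDP}=\frac{\sum_E c(u,w)d(u,w)}{\sum_i D_i d(s_i,t_i)}$.

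Next I would apply the terminal embedding of negative-type metrics into $\ell_1$ with terminal distortion $\tilde O(\sqrt{\log k})$ --- this is item~(\ref{item:neg}) of \corollaryref{cor:implications} (using that $\ell_2$ embeds isometrically into $\ell_1$), and crucially the underlying map produced by \theoremref{thm:lp-strong} is non-expansive up to a universal constant. Concretely, with $K$ as the terminal set one obtains $F:V\to\ell_1$ and a constant $c>0$ such that $\|F(u)-F(w)\|_1\le c\, d(u,w)$ for \emph{all} $u,w\in V$, while $\|F(v)-F(u)\|_1\ge \frac{c}{\tilde O(\sqrt{\log k})}\, d(v,u)$ for every $v\in K$ and $u\in V$. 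Since each demand pair $(s_i,t_i)$ contains a vertex of $K$, this lower bound applies to every demand pair.

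Finally I would round in the usual way: $\|F(\cdot)-F(\cdot)\|_1$ is an $\ell_1$ metric, hence a nonnegative combination $\sum_S\lambda_S\,\delta_S$ of cut metrics, so by an averaging argument there is a single cut $S_0$ with
\[
\phi(S_0)\ \le\ \frac{\sum_{\{u,w\}\in E}c(u,w)\,\|F(u)-F(w)\|_1}{\sum_i D_i\,\|F(s_i)-F(t_i)\|_1}\ \le\ \frac{c\sum_E c(u,w)d(u,w)}{\tfrac{c}{\tilde O(\sqrt{\log k})}\sum_i D_i d(s_i,t_i)}\ =\ \tilde O(\sqrt{\log k})\cdot\mathrm{SDP}\ \le\ \tilde O(\sqrt{\log k})\cdot\mathrm{OPT}.
\]
All steps --- solving the SDP, computing $F$ (the embedding of \cite{ALN07,ALN08} and the transformation of \theoremref{thm:lp-strong} are constructive), decomposing the resulting $\ell_1$ metric into polynomially many cut metrics, and selecting the cheapest of them --- run in polynomial time, yielding the claimed approximation algorithm.

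The main obstacle is that the numerator must not blow up: the embedding has to be (essentially) non-expansive on \emph{all} pairs while contracting each terminal (= demand) pair by only $\tilde O(\sqrt{\log k})$. An ordinary terminal embedding as in \theoremref{thm:trans} can badly expand non-terminal pairs, so one must use the non-expansive strong-terminal construction of \theoremref{thm:lp-strong}, which in turn relies on the ALN embedding of negative-type metrics being Fr\'echet / Lipschitz extendable --- a fact already recorded in the discussion preceding \theoremref{thm:lp-strong}. The remaining steps are routine adaptations of the argument of \cite{ALN08}, with $k$ playing the role of $r$.
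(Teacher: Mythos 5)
Your proposal matches the paper's proof essentially step for step: the same SDP relaxation with triangle inequalities, the same use of item~(\ref{item:neg}) of \corollaryref{cor:implications} (the strong terminal embedding of negative-type metrics built via \theoremref{thm:lp-strong} on top of the Fr\'{e}chet/Lipschitz-extendable ALN embedding), the same observation that non-expansiveness on all pairs plus $\tilde O(\sqrt{\log k})$-contraction on terminal pairs is exactly what the rounding needs, and the same decomposition of the resulting $\ell_1$ metric into cut metrics followed by averaging. Two small points the paper is more careful about: it explicitly notes that a suitable terminal set (a $2$-approximate vertex cover of the demand graph, of size at most $2k$) can be found in polynomial time, and it cites \cite{FLM77} to pass from the $\ell_2$ embedding to $\ell_1$ efficiently via a randomized constant-distortion embedding rather than appealing to an isometric embedding; neither affects the correctness of your argument.
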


The key ingredient of the algorithm of \cite{ALN08}
is a non-expansive embedding from $\ell_{2}^{2}$ (negative-type metrics) into $\ell_{1}$,
which has $\tilde{O}\left(\sqrt{\log r}\right)$ contraction for all demand pairs.
We will use the strong terminal embedding for negative type metrics given in item (7) of \corollaryref{cor:implications} to improve the distortion to $\tilde{O}\left(\sqrt{\log k}\right)$.

We now elaborate on how to use the embedding of $\ell_2^2$ into $\ell_1$ to obtain an approximation algorithm for the sparsest-cut, all the details can be found in \cite{LLR95,ARV09,ALN08}, and we provide them just for completeness. First, write down the following SDP relaxation with triangle inequalities:

\begin{algorithm}[h]
	\caption{Sparsest Cut SDP Relaxation}

	$\min\sum_{\{u,v\}\in E}c(u,v)\cdot\Vert\bar{u}-\bar{v}\Vert_{2}^{2}$
	
	s.t.  $\sum_{i=1}^r D_i\cdot\Vert\bar{s_i}-\bar{t_i}\Vert_{2}^{2}=1$
	
	~~~~	For all $u,v,w\in V$, $\Vert\bar{u}-\bar{v}\Vert_{2}^{2}+\Vert\bar{v}-\bar{w}\Vert_{2}^{2}\ge\Vert\bar{u}-\bar{w}\Vert_{2}^{2}$
	
	~~~~ For all $u\in V$, $\bar{u}\in\R^n$
\end{algorithm}
Note that this is indeed a relaxation: if $S$ is the optimal cut, set $\rho=\sum_{i=1}^rD_i\cdot|\1_S(s_i)-\1_S(t_i)|$; for $u\in S$ set $\bar{u}=(\frac{1}{\sqrt{\rho}},0,...,0)$, and for $u\notin S$, set $\bar{u}=(0,...,0)$. It can be checked to be a feasible solution of value equal to that of the cut $S$.

Let $K\subseteq V$ be a vertex cover of the demand graph $(V,\{\{s_i,t_i\}_{i=1}^r\})$ of size at most $2k$ (recall that
we can find such a cover in polynomial time). Let $X=\left\{ \bar{v}\in\R^n\mid v\in V\right\} $
be an optimal solution to the SDP (it can be computed in polynomial time), which is in particular an $\ell_{2}^{2}$ (pseudo) metric. By \corollaryref{cor:implications}  there exists a non-expansive embedding $f:X\rightarrow\ell_{1}$
with terminal distortion $\tilde{O}\left(\sqrt{\log k}\right)$ (where $K$
is the terminal set).\footnote{The embedding of \corollaryref{cor:implications} is in fact into $\ell_2$, but there is an efficient randomized algorithm to embed $\ell_2$ into $\ell_1$ with constant distortion \cite{FLM77}.}  This implies that for any $u,v\in V$ and any $1\le i\le r$,

\begin{eqnarray}\label{eq:llff}
	&&\nonumber\Vert\bar{u}-\bar{v}\Vert_{2}^{2} \ge \Vert f(\bar{v})-f\left(\bar{u}\right)\Vert_{1}\\
	&&\Vert\bar{s_i}-\bar{t_i}\Vert_{2}^{2}\le \tilde{O}(\sqrt{\log k})\cdot \Vert f(\bar{s_i})-f(\bar{t_i})\Vert_1~.
\end{eqnarray}
Let $\Vert f(\bar{v})-f\left(\bar{u}\right)\Vert_{1}=\sum_{S\subseteq V}\alpha_{S}\left|\1_{S}(v)-\1_{S}(u)\right|$
be a representation of the $\ell_1$ metric as a nonnegative linear combination of cut metrics (it is well known that there is such a representation with polynomially many cuts $S$ having $\alpha_S>0$). We conclude
\begin{eqnarray*}
	{\rm opt(SDP)}&=&		\sum_{\{u,v\}\in E}c(u,v)\cdot\Vert\bar{u}-\bar{v}\Vert_{2}^{2}\\ & = & \frac{\sum_{\{u,v\}\in E}c(u,v)\cdot\Vert\bar{u}-\bar{v}\Vert_{2}^{2}}{\sum_{i=1}^r D_i\cdot\Vert\bar{s_i}-\bar{t_i}\Vert_{2}^{2}}\\
	&\stackrel{\eqref{eq:llff}}{\ge} & \frac{\sum_{\{u,v\}\in E}c(u,v)\cdot\Vert f(\bar{v})-f\left(\bar{u}\right)\Vert_{1}}{\sum_{i=1}^r D_i\cdot \tilde{O}\left(\sqrt{\log k}\right)\cdot\Vert f(\bar{s_i})-f\left(\bar{t_i}\right)\Vert_{1}}\\
	& = & \frac{1}{\tilde{O}\left(\sqrt{\log k}\right)}\cdot\frac{\sum_{\{u,v\}\in E}c(u,v)\cdot\sum_{S\subsetneq V}\alpha_{S}\left|\1_{S}(v)-\1_{S}(u)\right|}{\sum_{i=1}^r D_i\cdot\sum_{S\subsetneq V}\alpha_{S}\left|\1_{S}(s_i)-\1_{S}(t_i)\right|}\\
	& \ge & \frac{1}{\tilde{O}\left(\sqrt{\log k}\right)}\min_{S: \alpha_S>0}\frac{\sum_{\{u,v\}\in E}c(u,v)\cdot\left|\1_{S}(v)-\1_{S}(u)\right|}{\sum_{i=1}^r D_i\cdot\left|\1_{S}(s_i)-\1_{S}(t_i)\right|}\\
	&=&\min_{S: \alpha_S>0} \frac{\phi(S)}{\tilde{O}\left(\sqrt{\log k}\right)}~.
\end{eqnarray*}
In particular, among the polynomially many sets $S\subseteq V$ with $\alpha_S>0$, there exists one which has sparsity at most $\tilde{O}(\sqrt{\log k})$ times larger than the optimal one.

\subsection{Min Bisection}

In the min-bisection problem, we are given a graph $G=\left(V,E\right)$ on an even number $n$ of vertices,
with capacities $c:E\rightarrow\mathbb{R}_{+}$. The purpose
is to find a partition of $V$ into two equal parts $S\subseteq V$ and $\bar{S}=V\setminus S$, that minimizes $\sum_{e\in E(S,\bar{S})}c(e)$.
This problem is NP-hard, and the best known approximation is $O\left(\log n\right)$
by \cite{R08}. We obtain the following generalization.

\begin{theorem}
There exists a $O(\log k)$ approximation algorithm for min-bisection, where $k$ is the size of a minimal vertex cover of the input graph.
\end{theorem}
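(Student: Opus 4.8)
The plan is to carry out a R\"acke-style reduction of min-bisection to the same problem on a tree, using our terminal congestion-preserving trees (\corollaryref{cor:cong}) instead of the classical ones of \cite{R08}. First I would compute in polynomial time a vertex cover $K\subseteq V$ with $|K|\le 2k$ (the folklore $2$-approximation). The point of the hypothesis is that, since $K$ is a vertex cover, \emph{every} edge of $G$ is incident on a terminal, i.e.\ $E_K=E$; hence the distribution ${\cal D}$ over trees furnished by \corollaryref{cor:cong} for the given capacity function $c$ satisfies $\con_{\cal D}(e)\le O(\log k)$ for \emph{all} $e\in E$, not just for those touching a terminal.

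The algorithm is then: sample a tree $T=(V,E_T)\sim{\cal D}$; assign to each tree edge $e'\in E_T$ the capacity $C_T(e')=\sum_{e\in E(S_{T,e'},\bar S_{T,e'})}c(e)$ from \sectionref{sec:cong}; compute an \emph{optimal} min-bisection $S_T$ of the weighted tree $(T,C_T)$ (min-bisection on a tree with arbitrary edge capacities is polynomial-time solvable by a standard knapsack-type dynamic program over rooted subtrees that tracks how many vertices lie on each side of the partition); and output $S_T$ as a bisection of $G$. Every step is polynomial: sampling from ${\cal D}$ is efficient by \corollaryref{cor:cong}, and computing the $C_T(e')$'s and running the tree DP are immediate.

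For the analysis, let $S^*$ be an optimal bisection of $G$ and $\mathrm{OPT}=\sum_{e\in E(S^*,\bar S^*)}c(e)$. The left inequality of \lemmaref{lem:capacities inequality} gives, for the output cut, $\sum_{e\in E(S_T,\bar S_T)}c(e)\le\sum_{e'\in E_T(S_T,\bar S_T)}C_T(e')$; since $S_T$ is an optimal (hence balanced) bisection of $T$ with respect to $C_T$, the right-hand side is at most $\sum_{e'\in E_T(S^*,\bar S^*)}C_T(e')$. Taking expectation over $T\sim{\cal D}$ and using the right inequality of \lemmaref{lem:capacities inequality} together with $\E_{T\sim{\cal D}}[{\rm load}_T(e)]=c(e)\cdot\con_{\cal D}(e)\le O(\log k)\cdot c(e)$, we obtain
\[
\E_{T}\Big[\sum_{e\in E(S_T,\bar S_T)}c(e)\Big]\le\E_{T}\Big[\sum_{e'\in E_T(S^*,\bar S^*)}C_T(e')\Big]\le\sum_{e\in E(S^*,\bar S^*)}\E_T[{\rm load}_T(e)]\le O(\log k)\cdot\mathrm{OPT}.
\]
Thus the algorithm outputs a bisection of expected cost $O(\log k)\cdot\mathrm{OPT}$; repeating it $O(\log n)$ times and keeping the cheapest cut (or a single run plus Markov's inequality) makes this hold with high probability, which gives the claimed approximation.

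The only genuinely non-mechanical point is the one above --- upgrading the \emph{terminal} congestion bound $O(\log k)$ to a bound on \emph{all} edges via $E_K=E$, which is exactly where the small-vertex-cover assumption enters. The remaining care is to apply the two inequalities of \lemmaref{lem:capacities inequality} in the correct direction (a cut cheap on the tree is cheap in $G$, while the $G$-optimum becomes only an $O(\log k)$ factor more expensive on the tree in expectation) and to recall that min-bisection is polynomial on trees.
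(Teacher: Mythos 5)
Your proposal is correct and follows essentially the same route as the paper's proof: take a $2$-approximate vertex cover $K$ as the terminal set, sample from the distribution of \corollaryref{cor:cong}, solve min-bisection optimally on the sampled tree by dynamic programming, and chain the two inequalities of \lemmaref{lem:capacities inequality} around the optimality of $S_T$ on $T$, using that $E_K=E$ upgrades the $O(\log k)$ expected-congestion guarantee to all edges. The only additions beyond the paper are cosmetic (the Markov/repetition remark to convert the expectation bound into a high-probability one, and the explicit description of the knapsack-type tree DP), both of which are standard.
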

\begin{proof}
Our algorithm follows closely the algorithm of \cite{R08}, the major difference is that we use our embedding into trees with terminal congestion. Let $K\subseteq V$ be the set of terminals, which is a vertex cover of size at most $2k$, and ${\cal D}$ a distribution over trees with strong terminal congestion $(O(\log k), O(\log n))$ given by \corollaryref{cor:cong}. The algorithm will sample a tree $T=(V,E_T)$ from ${\cal D}$, find an optimal bisection in $T$ and return it. We refer the reader to \sectionref{sec:cong} for details on notation and on the definition of capacities $C_T:E_T\to\R_+$ for $T$. We note that there is polynomial time algorithm (by dynamic programming) to find a min-bisection in trees.

It remains to analyze the algorithm. Let $S\subseteq V$ be the optimal solution in $G$, and $S_T$ be the optimal bisection for the tree $T$. The expected cost of using $S_T$ in $G$ can be bounded using \lemmaref{lem:capacities inequality} as follows
\begin{eqnarray*}
\sum_{T\in\mbox{supp}\left(\mathcal{D}\right)}\Pr\left[T\right]\sum_{e\in E\left(S_{T},\bar{S}_{T}\right)}c\left(e\right)
		&\stackrel{\eqref{eq:capacities inequality}}{\le}&\sum_{T}\Pr\left[T\right]\sum_{e'\in E_{T}\left(S_{T},\bar{S}_{T}\right)}C_T\left(e'\right)\\
&\le& \sum_{T}\Pr\left[T\right]\sum_{e'\in E_{T}\left(S,\bar{S}\right)}C_{T}\left(e'\right)\\
& \stackrel{\eqref{eq:capacities inequality}}{\le} & \sum_{T}\Pr\left[T\right]\sum_{e\in E\left(S,\bar{S}\right)}\mbox{load}_{T}\left(e\right)\\
		& = & \sum_{e\in E\left(S,\bar{S}\right)}\mathbb{E}_{T\sim\mathcal{D}}\left[\mbox{load}_{T}\left(e\right)\right] \\
		& \le & \sum_{e\in E\left(S,\bar{S}\right)}O\left(\log k\right)\cdot c(e)\\
&=&O\left(\log k\right)\cdot\mbox{opt}\left(G\right)\ ~,
	\end{eqnarray*}
where the last inequality uses that every edge touches a terminal, so its expected congestion is $O(\log k)$.
	
\end{proof}

\subsection{Online Algorithms: Constrained File Migration}

We illustrate the usefulness of our probabilistic terminal embedding into ultrametric via the constrained file migration problem. This is an online problem, in which we are given a graph $G=(V,E)$ representing a network, each node $v\in V$ has a memory capacity $m_v$, and a parameter $D\ge 1$. There is some set of files that reside at the nodes, at most $m_v$ files may be stored at node $v$ in any given time. The cost of accessing a file that currently lies at $v$ from node $u$ is $d_G(u,v)$ (no copies of files are allowed). Files can also be migrated from one node to another, this costs $D$ times the distance. When a sequence of file requests arrives online, the goal is to minimize the cost of serving all requests. The {\em competitive ratio} of an online algorithm is the maximal ratio between its cost to the cost of an optimal (offline) solution. For randomized algorithms the expected cost is used.

We consider the case where there exists a small set of vertices which are allowed to store files
(i.e. $m_{v}>0$). One may think about these vertices as servers who store files, while allowing file requests from all end users.
Let $K\subseteq V$ be the set of terminal vertices that are allowed to store files, with $\left|K\right|=k$. Our result is captured by the following theorem.
\begin{theorem}
There is a randomized algorithm for the constrained file migration problem with competitive ratio $O(\log m\cdot\log k)$, where $k$ vertices can store files and $m$ is the total memory available.
\end{theorem}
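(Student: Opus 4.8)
The plan is to follow the classical two-step reduction of \cite{BFR95,B96}: first solve constrained file migration on hierarchically well-separated trees, and then reduce the general case to the tree case via a probabilistic embedding --- except that here I would replace the distance-preserving embedding of \cite{FRT03} used by \cite{B96} with the \emph{terminal} embedding into ultrametrics of \theoremref{thm:strong-ultra}. Concretely, I would invoke the known fact (the tree subroutine underlying the $O(\log m\cdot\log n)$ bound of \cite{B96}, going back to \cite{BFR95}) that on a $2$-HST there is an $O(\log m)$-competitive randomized online algorithm for constrained file migration, where $m$ is the total memory; the memory capacities $m_v$ are intrinsic to the vertices and carry over verbatim to the leaves of the ultrametric, and the Steiner (internal) nodes of the HST play no role, since files always reside at original vertices and an HST algorithm never needs to place a file at a zero-capacity node.

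The observation that makes \emph{terminal} distortion suffice is that every cost term incurred by any strategy touches a terminal: an access cost is $d_G(u,v)$ where $v\in K$ is the server currently holding the requested file, and a migration cost is $D\cdot d_G(v,v')$ with $v,v'\in K$ both servers (the current and the new location). Hence only distances of the form $d_G(v,\cdot)$ with $v\in K$ ever enter the objective. I would therefore sample, offline and before any request arrives, a random ultrametric $U$ together with the non-contractive embedding $f:X\to U$ of \theoremref{thm:strong-ultra}, so that $d_G(u,v)\le d_U(f(u),f(v))$ for all pairs and $\E[d_U(f(v),f(u))]\le O(\log k)\cdot d_G(v,u)$ whenever $v\in K$. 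Running the HST algorithm on $U$ and interpreting its decisions in $G$ (configurations on $U$ are configurations on $G$), the cost paid in $G$ on any request sequence is at most the cost paid on $U$, since non-contractiveness shrinks every individual access/migration term; thus $\E[\mathrm{ALG}_G]\le\E[\mathrm{ALG}_U]\le O(\log m)\cdot\E[\mathrm{OPT}_U]$.

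It remains to bound $\E[\mathrm{OPT}_U]$: take a fixed optimal offline strategy for $G$ and run it verbatim on $U$ (it is feasible there, as the memory constraints are unchanged). Its cost on $U$ is a sum of terms $d_U(f(v),f(\cdot))$ with $v\in K$, each of expectation at most $O(\log k)$ times the corresponding term in $G$, whence $\E[\mathrm{OPT}_U]\le O(\log k)\cdot\mathrm{OPT}_G$. Chaining the three inequalities gives $\E[\mathrm{ALG}_G]\le O(\log m\cdot\log k)\cdot\mathrm{OPT}_G$, as required; since $U$ is drawn obliviously and the HST algorithm is online, the composed algorithm is a legitimate randomized online algorithm against an oblivious adversary. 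The step that needs the most care is the first one --- quoting and, if necessary, adapting the HST subroutine of \cite{BFR95,B96}: one should verify that the $O(\log m)$ guarantee holds for the particular separation produced by \theoremref{thm:strong-ultra} (a $2$-HST), coarsening the hierarchy to a $c$-HST for a suitable constant $c$ at the cost of only a constant factor in distortion if a larger separation is needed, and that restricting positive capacities to the $k$ terminals does not interfere with the tree algorithm, which it does not.
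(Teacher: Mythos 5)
Your proposal is correct and follows essentially the same route as the paper: both invoke the $O(\log m)$ HST subroutine of \cite{B96} (\theoremref{thm:Distributed paging in a ultrametric}), obliviously sample the ultrametric from \theoremref{thm:strong-ultra} (whose support consists of $2$-HSTs), observe that every access/migration term is incurred at a distance from a terminal so that the expected terminal distortion $O(\log k)$ suffices, and chain the same three inequalities ($\mathrm{OPT}_G \gtrsim \E[\mathrm{OPT}_U]/\log k$, $\E[\mathrm{OPT}_U]\gtrsim \E[\mathrm{ALG}_U]/\log m$, and $\E[\mathrm{ALG}_U]\ge \E[\mathrm{ALG}_G]$ by non-contractiveness). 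The caveats you flag at the end are exactly the ones the paper dispatches.
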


This theorem generalizes a result of \cite{B96}, who showed an algorithm with competitive ratio $O(\log m\cdot\log n)$ for arbitrary graphs on $n$ nodes. Both results are based on the following theorem. (Recall that a 2-HST is an ultrametric (see \defref{def:ultra}) such that the ratio between the label of a node to any of its children's label is at least 2.)
\begin{theorem}[\cite{B96}]
	\label{thm:Distributed paging in a ultrametric} For any
	2-HST, there is a randomized algorithm with competitive ratio $O\left(\log m\right)$
	for constrained file migration with total memory $m$.
\end{theorem}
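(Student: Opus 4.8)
The plan is to build a recursive randomized online algorithm that runs, at every internal node of the $2$-HST, a copy of an optimal $O(\log K)$-competitive randomized caching algorithm (e.g.\ the \emph{marking} algorithm) whose ``cache size'' $K$ is the amount of memory available in that node's subtree, and then to argue with a single weighted potential that these per-level guarantees compose to $O(\log m)$ overall --- rather than $O(\log m)$ per level. By rounding edge weights we may first assume the $2$-HST $T$ has well-defined levels, with subtree diameters dropping by a factor of at least $2$ from a node to its children, at the cost of only a constant factor; we also pad the file set to have exactly $m$ files and assume each capacity $m_v\in\N$. For an internal node $u$, let $M_u=\sum_{v\in T_u}m_v$; this upper bounds the number of files that may simultaneously reside in $T_u$, it is non-increasing from parent to child, and $M_r=m$ at the root $r$.

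The algorithm maintains, for every internal node $u$ and every child $c$, the set of files currently inside $T_c$. To serve a request for file $f$ at leaf $\ell$, let $P=(r=u_0,\dots,u_t=\ell)$ be the root--leaf path and let $u_j$ be the deepest node on $P$ whose subtree already contains $f$. For $i=j,\dots,t-1$ in order, if $f$ is not yet inside $c=u_{i+1}$, the caching instance $A_{u_i}$ --- an $O(\log M_{u_{i+1}})$-competitive randomized caching algorithm whose pages are the files and whose cache is $T_{u_{i+1}}$ with $M_{u_{i+1}}$ slots --- evicts one file from $T_{u_{i+1}}$ (re-placed in a sibling subtree or in $u_i$ itself, using the slack $M_u\ge M_c$) and admits $f$; the eviction is pushed down recursively. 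Migration versus remote access is resolved lazily by a rent-or-buy rule: the request is served by remote access, and $f$ is actually migrated one level closer to $\ell$ only once the accesses paid since $f$ last moved at that level total at least $D$ times the relevant subtree diameter. This loses only a constant factor and makes the dynamics ``move-to-front''-like.

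For the analysis, fix an optimal offline solution and, via a routine offline-reduction lemma, read it in the recursive coordinates (which files it keeps in which subtree); this induces, at each node $u$, a legitimate caching instance for $A_u$. Define the potential $\Phi=\sum_u 2^{\mathrm{level}(u)}\Phi_u$, where $\Phi_u$ is the standard caching potential of $A_u$, measuring the weighted discrepancy between the algorithm's and the optimum's contents of $T_u$; $\Phi\ge 0$ and $\Phi=0$ initially. Charging each request to the level of its ``miss'' node $u_j$, the algorithm's cost on that step plus $\Delta\Phi$ is bounded by $\sum_{i'\le \mathrm{level}(u_j)}2^{i'}\cdot O(\log M_{u_j})=O(2^{\mathrm{level}(u_j)}\log m)$ --- a geometric sum dominated by its top term --- while the optimum's induced caching cost at that level, amortized over the sequence, dominates $2^{\mathrm{level}(u_j)}$ times the number of such misses up to the $A_u$-competitive factor. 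Summing over all requests yields expected algorithm cost $O(\log m)\cdot\mathrm{OPT}$. This is exactly where the $2$-HST separation enters: without the geometric decay of diameters one would incur an extra factor equal to the depth of $T$.

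The delicate point is the composition of the per-node caching analyses: an eviction forced at $u$ triggers relocations \emph{inside} child subtrees, so the instances are not independent. One must choose the eviction-placement rule so that each $A_u$ still faces a bona fide caching instance whose potential accounting is not corrupted by these forced moves, and verify that the increments of the global $\Phi$ are bounded, level by level, by the local caching increments scaled by $2^{\mathrm{level}(u)}$. Making this bookkeeping precise --- together with the offline reduction --- is the crux; the rest is geometric-series arithmetic.
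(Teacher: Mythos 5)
The paper does not prove this statement at all: it is imported verbatim as a black box from \cite{B96} (building on the constrained file migration framework of \cite{BFR95}), and the only thing the present paper does with it is combine it with the terminal embedding of \theoremref{thm:strong-ultra}. So your proposal is not being measured against a proof in this paper but against Bartal's original argument, and as a self-contained proof it has a genuine gap --- one you yourself flag as ``the crux'' and then do not carry out. Concretely: (i) when an admission at an ancestor $u_i$ forces evictions and relocations inside descendant subtrees, the request sequence seen by each per-node instance $A_u$ is no longer generated by an oblivious adversary --- it depends on the random choices of the ancestors' marking algorithms --- so the $O(\log K)$ randomized caching guarantee cannot simply be invoked node by node; (ii) the central accounting claim, that the per-step algorithm cost plus $\Delta\Phi$ is bounded by a geometric sum $\sum_{i'\le \mathrm{level}(u_j)} 2^{i'}\cdot O(\log M_{u_j})$ whose logarithmic factors \emph{add only at the top level} rather than compound across levels, is exactly the statement that distinguishes an $O(\log m)$ bound from $O(\log m\cdot\mathrm{depth})$ (or worse), and it is asserted rather than established; and (iii) the ``routine offline-reduction lemma'' is not routine: the optimum may reshuffle files across sibling subtrees in ways that do not decompose into per-node caching misses, so lower-bounding $\mathrm{OPT}$ level by level against the induced caching optima requires a real argument, as does the placement rule when all sibling subtrees are full and an eviction must cascade upward and be paid for at the top-level diameter.

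In short, the architecture you describe (hierarchical composition of uniform-metric/caching-type algorithms over the HST, with the factor-$2$ separation making the level costs geometric) is in the spirit of how \cite{B96,BFR95} obtain the $O(\log m)$ bound, but the three items above are precisely the content of that proof, and your sketch defers all of them. As written it establishes the theorem only modulo its hardest steps; to use it here you would either have to supply those arguments or, as the paper does, cite \cite{B96} directly.
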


By \theoremref{thm:strong-ultra}  there is a distribution $\mathcal{D}$ over embeddings
of $G$ into ultrametrics with expected terminal distortion $O\left(\log k\right)$, but in fact every tree in that distribution is a 2-HST.
Assume that in the optimal (offline) solution there are $s_{uv}$ times a file residing on $v$ was accessed by $u$, and $t_{uv}$ files were migrated from $v$ to $u$. Let $c_{uv}=s_{uv}+D\cdot t_{uv}$ be the total cost of file traffic from $v$ to $u$. Note that as $m_{v}=0$ for any $v\notin K$, then for any $u\in V$ we have $c_{uv}=0$.
Using the fact that the terminal distortion guarantee of $\mathcal{D}$ applies to all of the relevant distances, we obtain that
	\begin{eqnarray}\label{eq:frrf}
		{\rm opt}_G  &=&\sum_{u\in V, v\in K}c_{uv}\cdot d_G(u,v)\\\nonumber
& \ge & \frac{1}{O(\log k)}\cdot\sum_{u\in V, v\in K}c_{uv}\cdot\mathbb{E}_{T\sim\mathcal{D}}[d_{T}(u,v)]\\ \label{eq:opt G KDPP}
		& = & \frac{1}{O(\log k)}\cdot\mathbb{E}_{T\sim\mathcal{D}}\Big[\sum_{u\in V, v\in K}c_{uv}\cdot d_{T}(u,v)\Big]~ .\nonumber
	\end{eqnarray}

Observe that for any tree $T\in\mbox{supp}\left(\mathcal{D}\right)$ we could have served the request sequence in the same manner as the optimal algorithm, which would have the cost $\sum_{u\in V, v\in K}c_{uv}\cdot d_{T}(u,v)$. In particular, the optimal solution ${\rm opt}_T$ for the same requests with the input graph $T$ cannot be larger than that, i.e.
	\begin{equation}
		\sum_{u\in V, v\in K}c_{uv}\cdot d_{T}(u,v)\ge{\rm opt}_T~.\label{eq:opt T KDPP}
	\end{equation}
	Our algorithm will operate as follows: Pick a random tree according
	to the distribution $\mathcal{D}$, pick a random strategy $S$ for
	transmitting files in $T$ according to the distribution $\mathcal{S}(T)$
	guaranteed to exists by \theoremref{thm:Distributed paging in a ultrametric}, and serve the requests according to $S$. Denote by ${\rm cost}_H(S)$ the cost of applying strategy $S$ with distances taken in the graph $H$. For any possible $T\in\mbox{supp}\left(\mathcal{D}\right)$ it holds that
	\begin{equation}
		{\rm opt}_T\ge\frac{\mathbb{E}_{S\sim\mathcal{S}(T)}[\mbox{cost}_{T}\left(S\right)]}{O(\log m)}\ge\frac{\mathbb{E}_{S\sim\mathcal{S}(T)}[\mbox{cost}_{G}\left(S\right)]}{O(\log m)} ,\label{eq: avg cost of stategy}
	\end{equation}
	where the last inequality holds since $T$ dominates $G$ (i.e. $d_T(u,v)\ge d_G(u,v)$ for all $u,v\in V$). Combining equations \eqref{eq:frrf}, \eqref{eq:opt G KDPP}
	and \eqref{eq: avg cost of stategy} we get that
	\[
	{\rm opt}_G\ge\frac{\mathbb{E}_{T\sim\mathcal{D}}\mathbb{E}_{S\sim\mathcal{S}(T)}[\mbox{cost}_{G}\left(S\right)]}{O(\log k\log m)}~.
	\]
	Hence our randomized algorithm has $O\left(\log m\log k\right)$ competitive
	ratio, as promised.

\section{Acknowledgements}
We would like to thank Robert Krauthgamer, Yair Bartal and Manor Mendel for fruitful discussions, and to an anonymous referee for an idea leading to \theoremref{thm:lp-strong}.

{\small
	\bibliographystyle{alpha}
	\bibliography{bib-extended}
}

\clearpage
\pagenumbering{roman}
\appendix
\centerline{\LARGE\bf Appendix}
\APPENDMETRICLB

\end{document}